\newif\ifnotes
% \notestrue
\notesfalse
\newif\ifsubmission
% \submissiontrue
\submissionfalse

\ifsubmission
\notesfalse
\else
\fi

\def\showtableofcontents{1}

\documentclass[11pt]{article}
\usepackage{fullpage}

\usepackage[utf8]{inputenc}
\usepackage{multirow}

\usepackage[shortlabels]{enumitem}
\usepackage{amsmath}
\usepackage{amssymb}
\usepackage{amsthm}
\usepackage[dvipsnames]{xcolor}
\usepackage{mathtools}
\usepackage[utf8]{inputenc}
\usepackage{amsfonts}
\usepackage{graphicx}
\usepackage{mathrsfs}
\usepackage{physics}
\usepackage{soul}
\usepackage{bm} % allows for writing mixed states \rho in bold
\usepackage[ruled,linesnumbered]{algorithm2e}
\usepackage[T1]{fontenc}
\usepackage{quantikz} % for making quantum circuits
\usepackage[splitrule]{footmisc}
\usepackage[normalem]{ulem} % for striking out text with \sout{text}

\usepackage{beton}

\definecolor{DarkBlue}{RGB}{0,0,150}
\definecolor{DarkRed}{RGB}{150,0,0}
\definecolor{DarkGreen}{RGB}{0,150,0}
\usepackage[colorlinks,linkcolor=Red,citecolor=DarkBlue]{hyperref}
\usepackage[capitalize]{cleveref}
  \crefname{step}{Step}{Steps}
\usepackage[hyperpageref]{backref}

\newcommand{\authnote}[3]{\textcolor{#3}{[{\footnotesize {\bf #1:} { {#2}}}]}}
\newcommand{\fermi}[1]{\ifnotes \authnote{Fermi}{#1}{BlueViolet} \fi}

\usepackage{tikz}
\usetikzlibrary{patterns}

%%%%%%%%%%%% Defining theorem-like environments %%%%%%%%%%

\newtheorem{theorem}{Theorem}[section]

\newtheorem{claim}[theorem]{Claim}
\newtheorem{lemma}[theorem]{Lemma}
\newtheorem{conjecture}{Conjecture}
\newtheorem{corollary}[theorem]{Corollary}
\newtheorem{property}[theorem]{Property}

\newtheorem{remark}[theorem]{Remark}

\newtheorem{definition}[theorem]{Definition}

\newenvironment{theorem-restated}[1]{
  \theorem}{\endtheorem\addtocounter{theorem}{-1}}
\newenvironment{definition-restated}[1]{
  \definition}{\endtheorem\addtocounter{theorem}{-1}}
\newenvironment{corollary-restated}[1]{
  \corollary}{\endtheorem\addtocounter{theorem}{-1}}
\newenvironment{lemma-restated}[1]{
  \lemma}{\endtheorem\addtocounter{theorem}{-1}}

\Crefname{importedtheorem}{Imported Theorem}{Imported Theorems}
\Crefname{theorem}{Theorem}{Theorems}
\Crefname{proposition}{Proposition}{Propositions}
\Crefname{claim}{Claim}{Claims}
\Crefname{lemma}{Lemma}{Lemmas}
\Crefname{conjecture}{Conjecture}{Conjectures}
\Crefname{corollary}{Corollary}{Corollaries}
\Crefname{construction}{Construction}{Constructions}
\Crefname{property}{Property}{Properties}
\Crefname{game}{Game}{Games}
\Crefname{item}{Item}{Items}
\Crefname{algorithm}{Algorithm}{Algorithms}

\theoremstyle{definition}

\Crefname{definition}{Definition}{Definitions}
\Crefname{assumption}{Assumption}{Assumptions}
\Crefname{notation}{Notation}{Notations}

\theoremstyle{remark}

\Crefname{question}{Question}{Questions}
\Crefname{remark}{Remark}{Remarks}
\Crefname{comment}{Comment}{Comments}
\Crefname{fact}{Fact}{Facts}

\providecommand{\Dec}{\mathsf{Dec}}

\providecommand{\Ext}{\mathsf{Ext}}
\providecommand{\from}{\leftarrow}
%%%%%%%%%%%%%%%%%%%%%%%%%%%%%%%%%%%%%%%%%%%%%%%%%%%%%%%%%%%%%%%%%%%%%%%%%%%%%%%%
% Calligraphic and blackboard type letters.

\def\cA{{\cal A}}
\def\cB{{\cal B}}
\def\cC{{\cal C}}
\def\cD{{\cal D}}

\def\cH{{\cal H}}

\def\cM{{\cal M}}

\def\cO{{\cal O}}

\def\cX{{\cal X}}
\def\cY{{\cal Y}}

\providecommand{\bfD}{\mathbf{D}}

\providecommand{\bfS}{\mathbf{S}}

% Quantum Registers
\providecommand{\RegA}{\mathsf{A}}
\providecommand{\RegB}{\mathsf{B}}
\providecommand{\RegC}{\mathsf{C}}
\providecommand{\RegD}{\mathsf{D}}
\providecommand{\RegE}{\mathsf{E}}
\providecommand{\RegF}{\mathsf{F}}

\providecommand{\RegH}{\mathsf{H}}

\providecommand{\RegK}{\mathsf{K}}

\providecommand{\RegM}{\mathsf{M}}

\providecommand{\RegP}{\mathsf{P}}
\providecommand{\RegQ}{\mathsf{Q}}
\providecommand{\RegR}{\mathsf{R}}
\providecommand{\RegS}{\mathsf{S}}

\providecommand{\RegV}{\mathsf{V}}
\providecommand{\RegW}{\mathsf{W}}
\providecommand{\RegX}{\mathsf{X}}
\providecommand{\RegY}{\mathsf{Y}}
\providecommand{\RegZ}{\mathsf{Z}}

\providecommand{\sfPi}{\mathsf{\Pi}}

%%%%%%%%%%%%%%%%%%%%%%%%%%%%%%%%%%%%%%%%%%%%%%%%%%%%%%%%%%%%%%%%%%%%%%%%%%%
% Typography

%%%%%%%%%%%%%%%%%%%%%%%%%%%%%%%%%%%%%%%%%%%%%%%%%%%%%%%%%%%%%%%%%%%%%%%%%%%%%%%%
% Complexity classes

\providecommand{\QMA}{\textup{QMA}}
\providecommand{\BQP}{\textup{BQP}}
\providecommand{\NP}{\mathsf{NP}}

%%%%%%%%%%%%%%%%%%%%%%%%%%%%%%%%%%%%%%%%%%%%%%%%%%%%%%%%%%%%%%%%%%%%%%%%%%%%%%%%
% Cryptography notation

\providecommand{\secp}{\lambda}
\providecommand{\poly}{\mathsf{poly}}
\providecommand{\polylog}{\mathsf{polylog}}
\providecommand{\negl}{\mathsf{negl}}

\providecommand{\fail}{\mathsf{fail}}

\providecommand{\Recover}{\mathsf{Recover}}

\providecommand{\SwapRecover}{\mathsf{SwapRecover}}
\providecommand{\SwapDiff}{\mathsf{SwapDiff}}
\providecommand{\sibling}{\mathsf{sib}}
\providecommand{\parent}{\mathsf{par}}

\providecommand{\Adv}{\mathsf{Adv}}

\providecommand{\Com}{\mathsf{Com}}
\providecommand{\QSC}{\mathsf{QSC}}
\providecommand{\SQSC}{\mathsf{sQSC}}
\providecommand{\QBC}{\mathsf{QBC}}

\providecommand{\PCP}{\mathsf{PCP}}
\providecommand{\ZKPCP}{\mathsf{zkPCP}}

\providecommand{\D}{\mathsf{D}}

\providecommand{\Extract}{\mathsf{Extract}}

\providecommand{\Reduction}{\mathsf{Reduction}}
\providecommand{\HybExtract}{\mathsf{HybExtract}}
\providecommand{\KnowledgeExt}{\mathsf{KnowledgeExt}}
\providecommand{\BitExtract}{\mathsf{BitExtract}}
\providecommand{\Repair}{\mathsf{Repair}}

\providecommand{\QEnc}{\mathsf{QEnc}}

\providecommand{\Path}{\mathsf{Path}}

\providecommand{\decom}{\mathsf{decom}}
\providecommand{\Sym}{\mathsf{Sym}}
\providecommand{\Antisym}{\mathsf{Antisym}}

\providecommand{\Prep}{\mathsf{Prep}}

\providecommand{\anc}{\mathsf{anc}}
\providecommand{\flag}{\mathsf{flag}}

\providecommand{\SQUARG}{\mathsf{SQUARG}}
\providecommand{\QSigma}{\mathsf{QSigma}}

\providecommand{\unif}{\mathrm{unif}}
\providecommand{\win}{\mathrm{win}}
\providecommand{\fail}{\mathrm{fail}}
\providecommand{\accept}{\mathrm{accept}}

%%%%%%%%%%%%%%%%%%%%%%%%%%%%%%%%%%%%%%%%%%%%%%%%%%%%%%%%%%%%%%%%%%%%%%%%%%%%%%%%
% Quantum
\providecommand{\SWAP}{\mathsf{SWAP}}

\providecommand{\Id}{\mathbb{I}}

\providecommand{\Sch}{\mathsf{Sch}}
\providecommand{\USch}{\mathbf{U}_{\Sch}}
\providecommand{\Expand}{\mathsf{Expand}}

\providecommand{\ctl}{\mathsf{ctl}}
\providecommand{\MD}{\mathsf{MD}}
% Density matrices
\providecommand{\brho}{\bm{\rho}}
\providecommand{\bpi}{\bm{\pi}}
\providecommand{\bsigma}{\bm{\sigma}}
\providecommand{\btau}{\bm{\tau}}

\providecommand{\tensor}{\otimes}

\DeclareMathSymbol{\mh}{\mathord}{operators}{`\-}

%%%%%%%%%%%%%%%%%%%%%%%%%%%%%%%%%%%%%%%%%%%%%%%%%%%%%%%%%%%%%%%%%%%%%%%%%%%%%%%%
% Probability

% \DeclareMathOperator*{\E}{\mathbb{E}}

%%%%%%%%%%%%%%%%%%%%%%%%%%%%%%%%%%%%%%%%%%%%%%%%%%%%%%%%%%%%%%%%%%%%%%%%%%%%%%%%
% Misc Crypto

\providecommand{\Est}{\mathsf{Est}}

%%%%%%%%%%%%%%%%%%%%%%%%%%%%%%%%%%%%%%%%%%%%%%%%%%%%%%%%%%%%%%%%%%%%%%%%%%%%%%%%
% Probability

\DeclareMathOperator*{\image}{\mathsf{image}}

\title{Commitments to Quantum States}

\ifsubmission
\author{}
\else
\author{Sam Gunn\thanks{UC Berkeley. Email: \texttt{gunn@berkeley.edu}. Supported by a Google PhD Fellowship.} \and Nathan Ju\thanks{UC Berkeley. Email: \texttt{nju@berkeley.edu}.} \and Fermi Ma\thanks{Simons Institute $\&$ UC Berkeley. Email: \texttt{fermima@alum.mit.edu}.} \and Mark Zhandry\thanks{NTT Research $\&$ Princeton University. Email: \texttt{mzhandry@gmail.com}.}}
\fi
\date{\today}

\begin{document}

\maketitle

\begin{abstract} 

What does it mean to commit to a quantum state? In this work, we propose a simple answer: a commitment to quantum messages is \emph{binding} if, after the commit phase, the committed state is \emph{hidden} from the sender's view. We accompany this new definition with several instantiations. We build the first non-interactive \emph{succinct} quantum state commitments, which can be seen as an analogue of collision-resistant hashing for quantum messages. We also show that hiding quantum state commitments (QSCs) are implied by any commitment scheme for classical messages. All of our constructions can be based on quantum-cryptographic assumptions that are implied by \emph{but are potentially weaker than} one-way functions.

Commitments to quantum states open the door to many new cryptographic possibilities. Our flagship application of a succinct QSC is a quantum-communication version of Kilian's succinct arguments for any language that has quantum PCPs with constant error and polylogarithmic locality. Plugging in the PCP theorem, this yields succinct arguments for $\NP$ under significantly weaker assumptions than required classically; moreover, if the quantum PCP conjecture holds, this extends to $\QMA$. At the heart of our security proof is a new rewinding technique for extracting quantum information.

\end{abstract}

\ifnum\showtableofcontents=1
{
\thispagestyle{empty}
\newpage
\pagenumbering{roman}
\setcounter{tocdepth}{2}
{\small\tableofcontents}
\newpage
\pagenumbering{arabic}
}

\section{Introduction}
\label{sec:intro}

Is it possible to commit to a quantum state? There is a simple ``folklore'' proposal: apply a quantum one-time pad $X^rZ^s$ to the state $\brho$, and send the resulting state along with a commitment to the classical string $(r,s)$. While this scheme has appeared implicitly in several quantum cryptographic protocols~\cite{FOCS:BJSW16,C:ColVidZha20,FOCS:BroGri20}, our understanding of commitments to quantum states is extremely limited. For example, consider the following basic questions:

\begin{itemize}
    \item How should binding even be \emph{defined} for commitments to quantum messages? A proper definition of binding for quantum messages should capture the semantic meaning of \emph{binding}, while also being useful for cryptographic applications.
    
    \item The commitment in the folklore construction is always larger than the original message. Is it possible to \emph{succinctly} commit to a quantum state? (For example, is it possible to commit to a $2n$-qubit state with an $n$-qubit commitment?)
    
    \item What kinds of cryptographic protocols are enabled by commitments to quantum states?
\end{itemize}

More broadly, commitments are central to cryptography and appear in many different forms and contexts. Does an analogous picture exist in the quantum setting?

\paragraph{This work.} We initiate a formal study of quantum-state commitments (QSCs) and their applications. We develop general techniques to answer the above questions and position QSCs to play a vital role in quantum cryptography. Our contributions are the following:
\begin{itemize}
    \item \textbf{Definitions.} We provide a definition of binding for QSCs: informally, committing to a quantum state $\brho$ should \emph{erase} it from the sender's view. Our definition captures computational and statistical security, composes across multiple commitments, and we show that it naturally generalizes Unruh's collapse-binding definition for (quantum-secure) \emph{classical} commitments~\cite{EC:Unruh16}. Perhaps most surprisingly, we show that the notions of binding and hiding for QSCs satisfy a \emph{duality} that has no known analogue in classical cryptography. 
    \item \textbf{Constructions.} Our new definition directly enables new constructions. We build the first non-interactive \emph{succinct} QSCs, which can be seen as an analogue of collision-resistant hashing for quantum messages. We also formalize the ``folklore'' proposal to show that hiding and binding QSCs exist if and only if hiding and binding quantum bit commitments (QBCs) exist (i.e., commitments to \emph{classical} messages). All of our constructions can be based on assumptions that are implied by \emph{but are potentially weaker than one-way functions}.
    \item \textbf{Applications.} Finally, we use QSCs to build protocols. Our flagship application of a succinct QSC is a quantum version of Kilian's succinct arguments~\cite{STOC:Kilian92,ARXIV:CheMov21} for any language that has quantum PCPs with constant error and polylogarithmic locality. Plugging in the PCP theorem, this yields succinct arguments for $\NP$ from significantly weaker assumptions than in the classical setting; moreover, if the quantum PCP conjecture holds, this result extends to all of $\QMA$. Proving security is the most technical component of this work. Our proof develops a new rewinding technique for extracting quantum information, combining ideas from~\cite{FOCS:CMSZ21} (which handles quantum attacks on \emph{classical} protocols) with new tools and abstractions designed for quantum protocols.

\end{itemize}

We now give a more detailed description of our results.

\subsection{Our results}

\subsubsection{Definitions}
\label{subsubsec:intro-defs}

\paragraph{Syntax for quantum state commitments.} Before we present our binding definition, we establish some basic syntax for \emph{non-interactive} QSCs:

\begin{itemize}
    \item The sender commits to a state $\ket*{\psi}$ by applying a public unitary $\Com$ to $\ket{\psi} \ket*{0^\lambda}$, obtaining a state on registers $(\RegC, \RegD)$. The commitment is the state on register $\RegC$.
    \item The sender can later open the commitment by sending the register $\RegD$. The receiver verifies by applying $\Com^\dagger$ and checking that the last $\lambda$ qubits are $0$. If verification succeeds, the receiver recovers the committed state $\ket{\psi}$ from the unmeasured registers.\footnote{A similar \emph{syntax} appeared in prior work of Chen and Movassagh~\cite{ARXIV:CheMov21}, who proposed applying a Haar random unitary to $\ket{\psi}\ket{0^{\lambda}}$. However, since their goal was to build quantum tree commitments (see~\cref{subsec:related} for further discussion), they did not formalize this syntax for individual commitments.}
\end{itemize}

There is a simple transformation that turns any interactive QSC into a non-interactive one: define $\Com$ to run the honest interactive commit phase coherently (see~\cref{sec:non-interactive} for more details).\footnote{In the setting of quantum bit commitments, i.e., quantum commitments to \emph{one-bit classical messages}, a similar transformation and syntax for non-interactive commitments previously appeared in~\cite{AC:Yan22}.} We will therefore focus our attention on non-interactive QSCs.

\paragraph{Swap binding: a definition for quantum messages.} Our new binding definition --- which we call \emph{swap binding} --- requires that once the adversary has sent the commitment $\RegC$, it can no longer distinguish its original decommitment from one where the committed message has been swapped with junk. In other words, sending $\RegC$ erases the committed message from the sender's view. We make this concrete with the following security game:

\begin{enumerate}
    \item The adversary sends $(\RegC, \RegD)$ to the challenger.
    \item The challenger applies $\Com^\dagger$ and verifies that the last $\lambda$ qubits are $0$, and if not, aborts.\footnote{Since a successful verification measurement completely collapses the last $\lambda$ qubits, it would be equivalent to consider a security game where, instead of sending $(\RegC,\RegD)$, the adversary sends the challenger the committed message in the clear; indeed, this fact is crucial for our results on the ``hiding-binding duality.'' However, allowing the adversary to specify $(\RegC,\RegD)$ yields a more robust binding definition, since it would also capture schemes where verification does not completely collapse the last $\lambda$ qubits.} Next, it samples a random bit $b \gets \{0,1\}$ and does one of the following:
    \begin{itemize}
        \item If $b = 0$, it applies $\Com$ and sends $\RegD$ back to the adversary.
        \item If $b = 1$, it initializes an ancilla register to $\ket{0}$, and applies $\SWAP$ to replace the committed message with $\ket{0}$.\footnote{Recall that $\SWAP$ maps $\ket{\psi}\ket{\phi}$ to $\ket{\phi}\ket{\psi}$ for all pure states $\ket{\psi}$, $\ket{\phi}$.} Then it applies $\Com$ and sends $\RegD$ back to the adversary.
    \end{itemize}
    \item Finally, the adversary wins if it can guess $b$.
\end{enumerate}

\begin{definition}[Swap binding; see~\cref{def:swap-binding}]
A non-interactive QSC is computationally (resp. statistically) binding if no efficient (resp. inefficient) adversary can guess $b$ with advantage greater than $1/2 + \negl(\lambda)$.\footnote{Our swap-binding definition is stated for \emph{non-interactive} QSCs. Despite the fact that any interactive QSC can be made non-interactive, we believe it could still be useful to have a definition that directly handles interactive QSCs. In~\cref{subsec:pauli-binding}, we give an alternative definition of binding for QSCs that we call ``Pauli binding'' and show that (1) Pauli binding is equivalent to swap binding for non-interactive QSCs and (2) Pauli binding captures both interactive and non-interactive QSCs. However, it is unclear how to use interactive QSCs that satisfy Pauli binding to build secure interactive protocols. We leave this as an open question for future work.}
\end{definition}

Swap binding satisfies a number of desirable properties:
\begin{itemize}
    \item It handles both computational and statistical binding.
    \item It works for message spaces of any dimension.
    \item It composes in parallel and in sequence across multiple commitments (see \cref{subsec:parallel,subsec:domain-extension,subsec:quantum-merkle}).
    \item It encompasses binding for classical messages. In fact, we show that our definition can be viewed as a natural generalization of Unruh's collapse-binding definition~\cite{EC:Unruh16}, which was originally defined for post-quantum-secure \emph{classical commitments} (see the discussion in \cref{subsubsec:consequences}).
\end{itemize}

A particularly interesting property of swap binding is that it guarantees that QSCs \emph{respect entanglement across multiple commitments}. For instance, if a sender prepares an EPR pair $(\ket{00} + \ket{11})/\sqrt{2}$ and commits to the two qubits with two independent QSCs, our definition ensures that if the sender opens both commitments, it must be to the same entangled state $(\ket{00} + \ket{11})/\sqrt{2}$. This ``entanglement-respecting'' property will be crucial for all of our applications, which involve committing to a large global state using many commitments.

\paragraph{Hiding-binding duality.} A surprising property of swap binding is that it is \emph{dual} to existing notions of \emph{hiding} for quantum messages (hiding was previously defined for quantum messages in prior works on quantum encryption, e.g.,~\cite[Definition 3.3]{C:BroJef15}) in the following sense: hiding requires that $\RegC$ alone reveals no information about the committed message $\ket{\psi}$, while binding requires that $\RegD$ alone reveals no information about $\ket{\psi}$. 

This has several interesting consequences that have no analogue in classical cryptography. For instance, given a computationally binding, statistically hiding $\QSC$ with commitment $\RegC$ and decommitment $\RegD$, there is a simple ``dual'' scheme satisfying statistical binding and computational hiding: just send $\RegD$ as the commitment and use $\RegC$ as the opening! We give further details in~\cref{subsec:duality}. We also point out a duality between hiding and binding for quantum commitments to \emph{classical messages} (see~\cref{subsubsec:qbc-duality}).

\subsubsection{Constructions}

Our new definition directly enables new constructions of QSCs. We build the first \emph{succinct} QSC, which allows a sender to commit to a $\poly(\lambda)$-qubit quantum state with a $\lambda$-qubit commitment $\RegC$. These succinct QSCs behave like a quantum analogue of collision-resistant hash functions (CRHFs) for quantum states; for instance, we show that classical techniques for composing CRHFs such as Merkle-Damgård and Merkle tree commitments naturally generalize to succinct QSCs. 

However, unlike CRHFs, our succinct QSCs are un-keyed and thus are \emph{completely non-interactive}.\footnote{While un-keyed CRHFs are impossible classically due to non-uniform attacks, such attacks can be ruled out in the quantum setting by monogamy of entanglement!} Moreover, the assumptions required for succinct QSCs appear to be much weaker than those required for CRHFs. 

\begin{theorem}[Succinct QSCs from encryption; see~\cref{sec:succinct-constructions}]
Assume the existence of one-time secure private-key encryption for $(\lambda+1)$-qubit quantum messages with $\lambda$-bit classical keys. Then for any $k = \poly(\lambda)$, there exists a non-interactive succinct $\QSC$ for committing to $k$-qubit quantum messages with commitments of only $\lambda$ qubits.

Such one-time secure encryption schemes are implied by one-way functions, and even potentially weaker assumptions such as (one-time secure) pseudorandom unitaries.
\end{theorem}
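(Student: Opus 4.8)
The plan is to first reduce the multi-qubit succinct QSC to a single ``compressing'' building block, and then build that block from one-time-secure quantum encryption. Concretely, I would first construct a ``$2$-to-$1$'' succinct QSC that commits to a $(\lambda+1)$-qubit message with a $\lambda$-qubit commitment, and then iterate/compose it via the Merkle-Damg\aa{}rd-style composition for succinct QSCs (which the excerpt promises is available) to handle any $k = \poly(\lambda)$ by repeatedly absorbing one qubit at a time, each time shrinking $k$-qubits-plus-commitment down by one. Since swap binding composes in sequence (cited in the excerpt), the composed scheme inherits binding from the atomic block; hiding would follow analogously from the hiding of the atomic block plus composition. So the crux is the atomic $(\lambda+1) \to \lambda$ construction.

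For the atomic block, the natural candidate is the ``folklore''-style construction adapted to use the given one-time encryption scheme $(\QEnc, \mathsf{QDec})$ with $\lambda$-bit key and $(\lambda+1)$-qubit message space. Define $\Com$ on input $\ket{\psi} \ket{0^\lambda}$ (here $\ket{\psi}$ is the $(\lambda+1)$-qubit message register and $\ket{0^\lambda}$ is the ancilla, which we will use as the key register) to: (i) create a uniform superposition over keys $k \in \{0,1\}^\lambda$ on the ancilla, (ii) coherently apply $\QEnc_k$ to the message register controlled on $k$, producing a $(\lambda+1)$-qubit ciphertext, and then — crucially — (iii) apply a fixed unitary that ``folds'' the $(\lambda+1)$-qubit ciphertext together with the $\lambda$-qubit key register so that the \emph{commitment} register $\RegC$ is the $\lambda$-qubit ciphertext-minus-one-qubit and the \emph{decommitment} register $\RegD$ holds the remaining qubit together with the key superposition. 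The point is that $\RegD$ together with $\RegC$ lets one run $\Com^\dagger = \QEnc_k^\dagger$ coherently and recover $\ket{\psi}\ket{0^\lambda}$, so correctness and the opening syntax hold by construction; meanwhile $\RegC$ alone is $\lambda$ qubits, giving succinctness.

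The main work — and the main obstacle — is verifying \emph{swap binding}: after the adversary sends $(\RegC, \RegD)$, and after the challenger applies $\Com^\dagger$ and checks the last $\lambda$ qubits are $0$ (thereby projecting onto honestly-formed states and, as the excerpt footnote notes, effectively collapsing the key and handing the challenger the committed message in the clear), the adversary cannot tell whether $\RegD$ was re-encrypted from the true message or from $\ket 0$. By the hiding-binding duality highlighted in the excerpt, this is exactly the statement that $\RegD$ alone hides $\ket\psi$ — which should reduce to the one-time security of $\QEnc$: $\RegD$ essentially contains $\QEnc_k(\cdot)$ under a freshly random key $k$ (in superposition), and one-time security says this ciphertext is (computationally, or statistically) independent of the plaintext, hence swapping in $\ket 0$ is undetectable. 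I would make this a reduction: an adversary breaking swap binding with advantage $\varepsilon$ yields a distinguisher against one-time encryption with advantage $\Omega(\varepsilon)$, by having the reduction play the role of the challenger, using its encryption oracle/challenge to produce $\RegD$. The subtle points to get right are (a) handling the \emph{coherent} key superposition rather than a sampled key — one argues that the verification measurement on the last $\lambda$ qubits of $\Com^\dagger(\RegC,\RegD)$ collapses the key, so a standard hybrid over the collapsed value goes through — and (b) making sure the ``fold'' unitary in step (iii) does not leak message information into $\RegC$; this needs that the ciphertext, being a one-time-pad-like object, is marginally independent of the message, so any fixed partition of its qubits into $\RegC$ and part of $\RegD$ is message-independent. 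Finally, the ``potentially weaker assumptions'' clause is handled by noting one-way functions imply quantum one-time-pad-with-authentication-free encryption (indeed a classical key suffices via the Clifford/one-time-pad encoding), and a one-time pseudorandom unitary directly yields a one-time-secure encryption of quantum messages; I would state these implications and defer their (known) proofs.
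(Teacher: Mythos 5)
Your high-level plan---build a $(\lambda+1)\to\lambda$ atomic succinct QSC from one-time encryption and then iterate via Merkle--Damg\aa{}rd domain extension---matches the paper, but your atomic construction has the register partition backwards, and this breaks binding. You place $\lambda$ qubits of the \emph{ciphertext} in $\RegC$ and the remaining ciphertext qubit together with the key register in $\RegD$; the paper does the opposite, setting $\RegC \coloneqq \RegW$ (the $\lambda$-qubit key register, put in the Hadamard basis) and $\RegD \coloneqq \RegM$ (the full $(\lambda+1)$-qubit ciphertext $U_k\ket\psi$), i.e.\ $\Com = \sum_k (\ketbra{k}{k} H^{\otimes\lambda})_\RegW \otimes (U_k)_\RegM$. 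With your partition, the decommitment returned to the adversary contains the key register plus one ciphertext qubit. Tracing out $\RegC$ collapses the key to a classically random $k$ from the adversary's viewpoint, but the adversary still \emph{holds} $k$ along with that ciphertext qubit and can simply decrypt: with $U_k = X^{G(k)}$ for a PRG $G$, the adversary reads $(k,b)$ off $\RegD$ and recovers $\psi_{\lambda+1} = b \oplus G(k)_{\lambda+1}$, distinguishing the swap-binding challenge bit with advantage close to $1/2$. Your binding argument---``$\RegD$ essentially contains $\QEnc_k(\cdot)$ under a freshly random key $k$''---is inconsistent with your own step (iii), where $\RegD$ holds the key and only \emph{one} ciphertext qubit. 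One-time security says the ciphertext \emph{without} the key hides the plaintext; it says nothing once a ciphertext fragment is packaged \emph{together with} the key, which is exactly what enables decryption.

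Two smaller points. Hiding is not required: a succinct QSC is defined to need only binding and succinctness, and indeed the paper's construction is \emph{not} hiding (the key register $\RegC$ becomes correlated with $\ket\psi$), so your appeal to hiding in the first paragraph is a red herring. And for the ``potentially weaker assumptions'' clause, a PRU family on $n$ qubits has an $n$-bit key, so it does not directly give short keys; the paper needs a separate expansion argument (via the Schur transform) that trades many-time PRU security for one-time encryption on a strictly larger message space with the same key length.
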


We also use our definition to formalize security of the ``folklore'' scheme described above. As a consequence, hiding and binding QSCs exist as long as hiding and binding quantum \emph{bit} commitments (QBCs) --- i.e., quantum commitments to classical messages --- exist. Since QSCs encompass commitments to classical messages, this implies that the existence of QBCs and QSCs are equivalent assumptions.

\begin{theorem}[\cref{sec:folklore-construction}]
Quantum state commitments exist (satisfying hiding and binding for quantum messages) if and only if quantum bit commitments exist (satisfying hiding and binding for classical messages).

Quantum bit commitments are implied by one-way functions, and even potentially weaker assumptions~\cite{AC:Yan22,Kretschmer21,C:AnaQiaYue22,C:MorYam22,EPRINT:BraCanQia22}.
\end{theorem}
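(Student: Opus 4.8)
The plan is to prove the two directions of the equivalence separately, with the forward direction (QSC $\Rightarrow$ QBC) being essentially trivial and the backward direction (QBC $\Rightarrow$ QSC) being the substantive content realized by the folklore construction. For the forward direction: a hiding-and-binding QSC for quantum messages in particular commits to computational-basis states $\ket{b}$ for $b \in \{0,1\}$, and swap binding restricted to these classical inputs should directly recover (or imply) Unruh's collapse-binding notion for classical bit commitments — this is exactly the ``encompasses binding for classical messages'' remark flagged in \cref{subsubsec:consequences}, so I would just invoke that, noting that hiding for quantum messages trivially restricts to hiding for classical messages. So the only real work is: given hiding-and-binding QBCs, construct hiding-and-binding QSCs.

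For the backward direction, I would formalize the folklore scheme. Let $\QBC$ be a non-interactive quantum bit commitment (we may assume non-interactive by the transformation mentioned in the excerpt for QBCs, citing~\cite{AC:Yan22}), with commit unitary acting on a message bit plus ancilla. To commit to an $n$-qubit state $\brho$ on register $\RegM$: (1) sample classical strings $r, s \in \{0,1\}^n$ and apply the Pauli one-time pad $X^r Z^s$ to $\RegM$, obtaining register $\RegM'$; (2) commit bit-by-bit to each of the $2n$ bits of $(r,s)$ using independent invocations of $\QBC$, collecting the commitment registers into $\RegC_{\mathrm{cl}}$ and the decommitment registers into $\RegD_{\mathrm{cl}}$. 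The overall commitment register is $\RegC := (\RegM', \RegC_{\mathrm{cl}})$ and the decommitment register is $\RegD := \RegD_{\mathrm{cl}}$. Since all steps are unitary (the Pauli sampling is done coherently, in superposition, on fresh ancillas that become part of the decommitment), this fits the non-interactive QSC syntax with $\Com$ the composed unitary. Verification applies $\Com^\dagger$: it opens all the classical commitments, and if they all verify, it reads off $(r,s)$ and undoes $X^r Z^s$.

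The two properties then need to be checked. Hiding: conditioned on $\RegC_{\mathrm{cl}}$ revealing nothing about $(r,s)$ — which follows from the hiding of $\QBC$ applied $2n$ times via a standard hybrid argument — the register $\RegM'$ is a perfectly mixed (maximally mixed) encryption of $\brho$ by the quantum one-time pad, hence independent of $\brho$; I would phrase this as a hybrid where each classical commitment is replaced by a commitment to $0$, after which $\RegC$ is manifestly independent of the message, matching the quantum-hiding definition of~\cite{C:BroJef15}. Binding (swap binding): I need to show that after the adversary sends $(\RegC,\RegD)$ and the challenger has verified, swapping $\RegM$ with $\ket{0}$ and re-applying $\Com$ is indistinguishable from just re-applying $\Com$. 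The key point is that once verification (i.e., opening all the classical commitments) succeeds, collapse-binding of $\QBC$ ensures the $(r,s)$ registers are effectively collapsed/classical from the adversary's perspective, so the decommitment register $\RegD$ carries essentially no information about the quantum message on $\RegM$ — the message lives entirely in $\RegM'$ which is inside $\RegC$, not $\RegD$. I would make this precise by a hybrid sequence: first use collapse-binding of each $\QBC$ (again a $2n$-fold hybrid) to argue that measuring the $(r,s)$ values after verification does not disturb the adversary's state, so we may condition on classical $(r,s)$; then for fixed classical $(r,s)$, the honest decommitment operation and the swap-then-decommit operation produce identical states on the adversary's registers, because $X^r Z^s$ is a fixed unitary and swapping the pre-image $\brho$ for $\ket{0}$ before applying a fixed unitary and then re-committing is reversible in a way that leaves $\RegD$ untouched.

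I expect the main obstacle to be the binding proof, specifically getting the hybrid argument with collapse-binding to interface cleanly with the \emph{swap}-binding game: collapse-binding is naturally phrased as ``measuring vs.\ not measuring the committed value is undetectable,'' whereas swap binding is phrased as ``swapping the committed register for junk is undetectable,'' and bridging these requires carefully tracking which registers are held by whom and arguing that the coherent Pauli correction commutes appropriately with the swap. The footnote in the excerpt observing that a successful verification measurement collapses the last $\lambda$ qubits — so that one may equivalently imagine the adversary sending the committed message in the clear — is exactly the tool that simplifies this, and I would lean on it, plus the parallel-composition property of swap binding (\cref{subsec:parallel}) applied to the $2n$ classical commitments, to reduce to a single clean invocation of collapse-binding. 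I would also need to double check that ``hiding and binding QBC'' in the hypothesis indeed refers to a \emph{quantum-secure} classical commitment whose binding notion is collapse-binding (or something implying it), since plain computational binding for classical commitments is too weak to support the coherent manipulations here; I would state this as part of the QBC syntax we adopt, consistent with the cited works.
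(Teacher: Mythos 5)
Your proposal matches the paper's proof of \cref{theorem:folklore-construction}: the same folklore construction in non-interactive form, the same hybrid for hiding (replace the commitments to $(r,s)$ with commitments to $0$ via $\Com_{\QBC}$ hiding, then observe the padded message is maximally mixed), and the same key step for binding (invoke collapse-binding of $\Com_{\QBC}$ to make $(r,s)$ effectively classical, then observe the swap on $\RegM$ is perfectly undetectable since the adversary's $\RegD$ depends only on $(r,s)$). The one notational slip is that your decommitment register $\RegD$ must also include the fresh ancillas holding the coherent superposition over $(r,s)$ (the paper's $\RegK$), not just $\RegD_{\mathrm{cl}}$ --- but your parenthetical makes clear that is what you intended.
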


\subsubsection{Applications}

Our flagship application is a three-message quantum succinct argument based on succinct QSCs. The protocol is a direct quantum analogue of Kilian's PCP-based succinct arguments~\cite{STOC:Kilian92}. Recall that Kilian's protocol assumes the existence of a succinct classical commitment (i.e., a collision-resistant/collapsing hash function) and can be instantiated for any language that has classical PCPs with constant error and polylogarithmic locality, i.e., any $\NP$ language~\cite{STOC:BFLS91,FOCS:FGLSS91,AroraS98,AroraLMSS98}. Correspondingly, our quantum Kilian protocol assumes the existence of a succinct QSC and can be instantiated for any language that has quantum PCPs\footnote{A quantum PCP is a quantum proof that can be probabilistically checked by measuring a few qubits.} with constant error and polylogarithmic locality.\footnote{Our protocol is inspired by a proposal of Chen and Movassagh~\cite{ARXIV:CheMov21}, who showed that Kilian's protocol has a \emph{syntactic} quantum analogue and conjecture (but do not prove) its security in the ``Haar Random Oracle Model.''} The set of such languages includes $\NP$, and is famously conjectured to include all of $\QMA$~\cite{AALV, AAV13}.

% --- biggest open questions in quantum complexity theory.

% and can be instantiated for any language that has quantum PCPs (i.e., a quantum proof string that can be probabilistically checked by measuring a few qubits) with constant error and polylogarithmic locality.\footnote{Our protocol is inspired by a proposal of Chen and Movassagh~\cite{ARXIV:CheMov21}, who showed that Kilian's protocol has a \emph{syntactic} quantum analogue and conjecture (but do not prove) its security in the ``Haar Random Oracle Model.''}

\begin{theorem}[Quantum succinct arguments; see~\cref{theorem:kilian-security}]
\label{theorem:intro-kilian}
        Assuming the existence of succinct QSCs, there is a three-message quantum-communication succinct argument for any language that has quantum PCPs with constant error and polylogarithmic locality. 
\end{theorem}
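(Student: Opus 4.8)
The plan is to mimic Kilian's protocol step by step, replacing the classical succinct commitment with a succinct QSC and the classical PCP with a quantum PCP. Given a language $L$ with a quantum PCP having a poly-size proof $\brho_x$, constant soundness error, and polylogarithmic query locality, the protocol is: (1) the prover prepares $\brho_x$ (or, for $\QMA$, uses its witness to prepare an accepting PCP state), splits it qubit-by-qubit into many registers, and commits to each qubit using an independent succinct QSC, sending only the commitment registers $\RegC_1, \dots, \RegC_m$ (these are succinct, so total first-message length is $m \cdot \lambda$, which is fine); (2) the verifier samples the quantum-PCP query set and sends it; (3) the prover opens the queried commitments by sending the corresponding decommitment registers $\RegD_i$; (4) the verifier runs $\Com^\dagger$ on each, checks the ancilla is $0^\lambda$, recovers the claimed qubits, and runs the quantum-PCP verifier on them, accepting iff all checks pass. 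Completeness is immediate from correctness of the QSC and completeness of the quantum PCP; succinctness follows because only polylogarithmically many decommitments are opened and each query set / PCP-verifier-circuit description is polylogarithmic. The real content is soundness.

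For soundness, I would fix a cheating prover $P^*$ for an instance $x \notin L$ and argue that its success probability is at most $\negl(\lambda)$ above the quantum-PCP soundness error (which is a constant; a constant number of sequential repetitions then amplifies to negligible, or one first drives the PCP error down by standard parallel repetition before compiling). The heart of the argument is an extraction step: from $P^*$'s first message $(\RegC_1,\dots,\RegC_m)$ together with its residual state, we want to extract a single $m$-qubit "committed state" $\btau$ such that, for \emph{every} query set the verifier might send, $P^*$'s probability of producing accepting openings that are consistent with the verifier's local check is bounded by the probability that the quantum-PCP verifier accepts $\btau$ on that query set (up to $\negl$). The swap-binding / entanglement-respecting property of the QSC is what makes such a global $\btau$ well-defined: binding says the decommitment register $\RegD_i$ alone carries no information about the $i$th committed qubit, so once the commitments $\RegC_i$ are sent, the committed qubits live (up to negligible error) in a register disjoint from what $P^*$ still controls, and composition across the $m$ independent commitments means they jointly form a fixed (sub-normalized) state $\btau$ respecting all entanglement among the committed qubits.

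The main obstacle — and, as the abstract flags, the most technical part of the paper — is that $P^*$ is adaptive: it chooses which qubits to "really" open only after seeing the verifier's query set, and a quantum adversary's openings to different (overlapping) query sets need not be jointly consistent, so one cannot simply union over queries. This is exactly the situation handled in the classical-adversary-on-classical-protocols setting by the Chen–Movassagh–Spielman–Zhandry rewinding/extraction machinery from \cite{FOCS:CMSZ21}, and the plan is to port it to the quantum-message setting using new tools built for QSCs. Concretely I would: define a measurement/projector for each query set that tests "the opened decommitments verify and pass the PCP local check"; use a value-estimation procedure ($\ValEst$) to approximately measure $P^*$'s success probability; and then run a rewinding extractor that repeatedly queries random query sets, each time repairing the adversary's state to near its pre-measurement form (a $\Repair$/$\SwapRecover$-type step, leveraging that applying $\Com^\dagger$ then $\SWAP$-ing in junk then $\Com$ is undetectable by binding), thereby reading out the committed qubit on each queried location without disturbing the others. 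Collating these reads produces the global $\btau$; feeding $\btau$ to the quantum-PCP soundness guarantee for $x\notin L$ bounds $P^*$'s acceptance probability. The delicate points are (i) making the "state collation" across exponentially many possible query sets coherent — handled by a careful hybrid argument over commitments that never materializes all of $\btau$ at once, only the polylogarithmically-many qubits any single query touches — and (ii) controlling the accumulation of $\negl$ errors across the polynomially-many rewinding rounds so the final bound stays within $\negl(\lambda)$ of the PCP error. I would isolate the rewinding-and-extraction step as a standalone lemma about succinct QSCs (analogous to a "quantum state extractability" statement), prove it using swap binding plus the CMSZ-style alternating-projector analysis, and then derive Kilian soundness as a short corollary by plugging in quantum-PCP soundness.
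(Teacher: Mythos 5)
Your protocol is not succinct, which is the defining requirement of the theorem. You commit to each of the $m$ PCP qubits independently and send all $m$ commitment registers as the first message, asserting that a total length of $m\cdot\lambda$ "is fine." It is not: $m = \poly(n)$, so the first message has polynomial size in the instance, exactly what a succinct argument must avoid. (There is also a definitional mismatch: a ``succinct QSC'' compresses, so it cannot be applied to a single-qubit message --- succinctness means the commitment is \emph{shorter} than the message.) The paper's protocol instead builds a \emph{quantum tree commitment} (\cref{subsec:quantum-merkle}): the PCP is split into $s$-qubit blocks, these are committed pairwise up a binary tree using a succinct QSC mapping $s$ qubits to $s/2$ qubits, and only the $O(\lambda)$-qubit root register is sent. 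Openings are then root-to-leaf paths of polylogarithmic size. Your flat construction is essentially the paper's $\QSigma$ protocol of \cref{sec:quantum-sigma}, which is deliberately \emph{not} succinct and serves a different purpose (zero-knowledge sigma protocols). Note also that even with a classical PCP, the tree forces you to commit to \emph{quantum} intermediate nodes (commitments to commitments), so the whole structure genuinely depends on quantum-message commitments in a way your flat version does not surface.

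On the soundness sketch: the CMSZ-style ``measure, then repair'' plan is the right starting point, but it hides the central obstruction the paper has to overcome. Once you \emph{swap out} a committed qubit into an extractor register, the prover has lost that qubit, and restoring it to a high-success state is information-theoretically impossible (a repaired prover could be run again to produce a second copy, cloning the PCP). The paper resolves this by running state repair not on the prover alone but on a \emph{swap-augmented prover} whose state includes the already-extracted registers and whose responses transparently re-insert them; one then argues the augmented prover's success probability stays high, and the extracted registers eventually carry a convincing PCP. A second obstruction you do not address: when invoking swap binding inside the rewinding reduction, the reduction must hand the commitment register $\RegC$ to the challenger, yet the repair procedure needs to act on $\RegC$. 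The paper introduces and proves \emph{oracle swap binding} (\cref{sec:oracle-security}) precisely so the reduction can perform the repair-relevant operations through a carefully restricted oracle without holding $\RegC$. Both pieces are essential to making your ``repair to pre-measurement form'' step rigorous.
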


% We highlight two important corollaries of~\cref{theorem:intro-kilian} for $\NP$ and $\QMA$.

\begin{corollary}
\label{corollary:succinct-np}
Assuming the existence of succinct QSCs, there is a three-message quantum succinct argument for $\NP$.\footnote{This follows by combining~\cref{theorem:intro-kilian} with the classical PCP theorem~\cite{STOC:BFLS91,FOCS:FGLSS91,AroraS98,AroraLMSS98}, since quantum PCPs encompass classical PCPs.}
\end{corollary}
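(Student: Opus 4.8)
The plan is to obtain the corollary as an immediate instantiation of \cref{theorem:intro-kilian}, by showing that every language in $\NP$ admits a quantum PCP with constant error and polylogarithmic locality. The only ingredient beyond \cref{theorem:intro-kilian} is the classical PCP theorem~\cite{STOC:BFLS91,FOCS:FGLSS91,AroraS98,AroraLMSS98}: for each $L \in \NP$ it gives, on inputs of length $n$, a proof string $\pi \in \{0,1\}^{m}$ with $m = \poly(n)$ and a verifier that uses $O(\log n)$ random bits, reads $O(1)$ bits of $\pi$, accepts the honest proof for every $x \in L$ with probability $1$, and accepts every $x \notin L$ with probability at most $1/2$ no matter the proof.

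First I would reinterpret this as a quantum PCP in the obvious way: the proof is the $m$-qubit computational-basis state $\ket{\pi}$, and the quantum verifier, given its randomness, measures in the computational basis the $O(1)$ proof qubits that the classical verifier would query and then applies the classical decision predicate to the outcomes. The locality is $O(1) \subseteq \polylog(\lambda)$ and the soundness error is the constant $1/2$, so the hypotheses of \cref{theorem:intro-kilian} are satisfied, and completeness is inherited verbatim from the classical PCP. Applying \cref{theorem:intro-kilian} (which we may, since succinct QSCs are assumed to exist) then yields a three-message quantum-communication succinct argument for $L$, and as $L \in \NP$ was arbitrary this is the corollary.

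The single point that requires an argument --- and the step I expect to be the only real content --- is that soundness survives when the cheating prover is allowed to submit an arbitrary $m$-qubit state $\bsigma$ (entangled with its own workspace, possibly mixed) instead of a basis state. Here I would use that every operation the verifier performs on the proof register is, without loss of generality, a computational-basis measurement on a constant number of its qubits; since such measurements are diagonal, the verifier's acceptance probability on $\bsigma$ is unchanged if we first measure the entire proof register in the computational basis. That collapses $\bsigma$ to a distribution over classical proof strings, so for $x \notin L$ the acceptance probability is a convex combination of classical-proof acceptance probabilities, each at most $1/2$, hence at most $1/2$. Thus classical PCP soundness upgrades for free, no new cryptographic or quantum-information tools are needed, and the corollary follows.
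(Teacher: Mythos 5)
Your proposal is correct and matches the paper's own argument: the footnote invokes the classical PCP theorem and the observation that classical PCPs are quantum PCPs, and the detailed justification in~\cref{subsubsec:pcps} gives exactly your reasoning — the verifier's projections are diagonal in the computational basis, so a malicious quantum proof's acceptance probability is a convex combination of classical proofs' acceptance probabilities, hence bounded by the classical soundness error. No gaps; same route.
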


\begin{corollary}
If the quantum PCP conjecture holds~\cite{AALV, AAV13}, then assuming the existence of succinct QSCs, there is a three-message quantum succinct argument for $\QMA$.
\end{corollary}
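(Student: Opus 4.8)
The plan is a one-line deduction from \cref{theorem:intro-kilian}, so essentially all the work is in unpacking what the hypothesis supplies. First I would recall the precise content of the quantum PCP conjecture. In its standard constraint / local-Hamiltonian formulation~\cite{AALV,AAV13}, it asserts that every $L \in \QMA$ admits a verification procedure of the following shape: there is a $\poly(\secp)$-size family of local measurements, each acting on $O(1)$ qubits of a $\poly(\secp)$-qubit proof register, such that for $x \in L$ some proof state is accepted with probability at least $1 - \alpha$ for a constant $\alpha$, while for $x \notin L$ every proof state is accepted with probability at most some constant strictly below $1-\alpha$ (say $1/2$). This is exactly a quantum PCP with constant error and $O(1)$ locality, and in particular a quantum PCP with constant error and polylogarithmic locality — matching the hypothesis of \cref{theorem:intro-kilian} verbatim.

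Second, I would check the two auxiliary properties that ``succinct argument'' demands but that are sometimes elided in purely complexity-theoretic statements of the conjecture: (i) the verifier's choice of which local check to run, and the check itself, is computable in $\poly(\secp)$ time; and (ii) for $x \in L$ the accepting proof state can be prepared in $\poly(\secp)$ time given a $\QMA$ witness for $x$. Property (i) is immediate, since the local Hamiltonian / constraint description is output by a polynomial-time reduction. For (ii), the standard route is that the canonical $\QMA$-hard instance arises from a circuit-to-Hamiltonian construction whose low-energy ``yes'' state is the history state of the verifier run on the witness, which is efficiently preparable from the witness; the gap-amplification transformations underlying the conjecture are taken to preserve efficient preparability of the honest proof. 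I would simply state that the conjecture is used in this (standard) form, so that it yields a quantum PCP that additionally has an efficient honest prover.

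Third, with such a quantum PCP in hand, I would invoke \cref{theorem:intro-kilian}: instantiated with it (and with the succinct QSC assumed to exist), the quantum Kilian protocol is a three-message quantum-communication succinct argument for $L$. Since $L \in \QMA$ was arbitrary, this gives a three-message quantum succinct argument for all of $\QMA$ — exactly the passage from \cref{theorem:intro-kilian} to \cref{corollary:succinct-np}, but with the classical PCP theorem replaced by the (conjectured) quantum PCP theorem.

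The only real obstacle is the bookkeeping in the second paragraph: confirming that ``quantum PCP'' as used in \cref{theorem:intro-kilian} — a quantum state that a verifier probabilistically checks by measuring a few of its qubits, together with an efficient honest prover producing that state from a witness — is precisely the object delivered by the quantum PCP conjecture, rather than a superficially different one (e.g., a bare statement about hardness of approximating ground energies). This is a definitional alignment rather than a mathematical difficulty; once the two notions are matched up, the corollary is immediate.
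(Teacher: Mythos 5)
Your proposal is correct and follows the same route as the paper: the quantum PCP conjecture (stated in the paper as \cref{conj:quantum-pcp}) delivers, for every $\QMA$ language, a quantum PCP with constant gap and $O(\log n)$ query complexity, which is exactly the object \cref{theorem:intro-kilian} requires, so the corollary is immediate. Your extra care about efficient preparability of the honest proof from a $\QMA$ witness is a reasonable bookkeeping point that the paper leaves implicit (the $\SQUARG$ prover is simply \emph{given} the PCP as input), but it does not change the substance of the argument.
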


Even in the context of succinct arguments for $\NP$, \cref{corollary:succinct-np} lowers the round complexity\footnote{\cite{STOC:BitKalPan18} constructs three message succinct arguments for $\NP$ under a non-standard assumption called ``keyless multi-collision-resistant hash functions'' and the hardness of learning with errors.} and relies on weaker assumptions than Kilian's result. 

\paragraph{Swap-based rewinding for quantum protocols.} Proving \cref{theorem:kilian-security} is the most technical component of this work. At a high level, proving soundness requires extracting \emph{quantum information} from the malicious prover. This is qualitatively different from prior works on extracting from quantum adversaries~\cite{EC:Unruh12,EC:Unruh16,FOCS:CMSZ21} which only needed to extract classical information and relied heavily on the ability to record (i.e., copy) the adversary's responses.

Our rewinding procedure works by swapping out the messages underlying the adversary's responses. This explains how QSCs make cryptographic applications possible: \emph{swap binding enables undetectable extraction of quantum information in a rewinding-based security analysis.}

While the swapping idea is simple, it introduces a fundamental problem that was not present in the classical setting. In our setting, by swapping out the adversary's message, we have implicitly forced the adversary to \emph{forget} part of its committed PCP. As a result, \cite{FOCS:CMSZ21}-style state repair --- the only known technique for rewinding succinct protocols --- is now \emph{information-theoretically impossible}. We explain our new rewinding approach in~\cref{subsec:to-applications}.

\paragraph{Quantum sigma protocols.} Finally, as an additional application of our techniques, we prove in~\cref{sec:quantum-sigma} that any~\cite{FOCS:GolMicWig86,FOCS:BroGri20}-style quantum sigma protocol is sound when instantiated with our hiding and binding QSCs. We formalize this using a quantum version of the zero-knowledge PCP framework of~\cite{STOC:IKOS07}. This also provides a proof of computational soundness for a quantum sigma protocol due to~\cite{FOCS:BroGri20}, which was missing a security analysis (see~\cref{subsec:related} for additional details).

\newpage

\section{Technical overview}
\label{sec:tech-overview}

\subsection{Definitions}
\label{subsec:to-definitions}

\paragraph{Syntax.} Recall our syntax for non-interactive quantum state commitments from~\cref{subsubsec:intro-defs}:
\begin{itemize}
    \item The sender commits to a state $\ket{\psi}$ by applying a public unitary $\Com$ to $\ket{\psi} \ket*{0^\lambda}$, obtaining a state on registers $(\RegC, \RegD)$. The commitment is the register $\RegC$.
    \item The sender can later open the commitment by sending the register $\RegD$. The receiver verifies by applying $\Com^\dagger$ and checking that the last $\lambda$ qubits are $0$. If verification succeeds, the receiver recovers the committed message $\ket{\psi}$ from the unmeasured registers. 
\end{itemize}

We emphasize one important difference from the classical setting: classically, the decommitment $d$ can always be assumed to contain $m$ in the clear, i.e., $d= (m,r)$ where $m$ is the committed message and $r$ is the randomness/opening information. For QSCs, however, the $\RegD$ register crucially \emph{does not} contain the committed message in the clear.

\subsubsection{What does binding mean?}

Our first goal in this overview is to answer the following question:
\begin{center}
\emph{What does binding mean for a commitment to quantum messages?}
\end{center}

To understand the subtleties that arise, we first recall the classical definition of binding.

\begin{definition}[Classical binding, informal]
A commitment scheme is binding if an adversary cannot generate a commitment $c$ along with two valid decommitments $d_1 = (m_1,r_1)$ and $d_2 = (m_2,r_2)$ for two different messages $m_1 \neq m_2$.
\end{definition}

What happens if we try to use this definition for commitments to quantum states? Two closely related issues arise:
\begin{itemize}
    \item First, this definition requires the challenger to verify that the two messages are actually different. If the messages are arbitrary quantum states, how should the challenger implement this check? A natural idea is to run a swap test\footnote{The swap test on two quantum states $\ket{\psi}, \ket{\phi}$ outputs $1$ with probability $\frac{1}{2} + \frac{1}{2}\abs{\braket{\psi}{\phi}}^2$.}, but performing the swap test on message registers $\RegM_1,\RegM_2$ requires the challenger to obtain two openings \emph{simultaneously}. This brings us to our second issue.
    \item A quantum adversary does not have to produce two openings simultaneously. Consider a quantum adversary that prepares a valid commitment-decommitment pair $(\RegC,\RegD)$ corresponding to a message $\ket{\psi}$, and suppose it has the ability to modify the state on $\RegD$ so that the commitment opens to a completely different message $\ket{\psi^\perp}$. This adversary clearly violates any reasonable notion of binding, but since it produces the two openings sequentially rather than in parallel, it is not captured by the classical-style definition.\footnote{This issue was first observed by~\cite{FOCS:AmbRosUnr14,EC:Unruh16} in the setting of classical commitments secure against quantum attacks.}
    
    % And yet, this adversary completely breaks a iadversary  produce openings to $\ket{\psi}$ and $\ket{\psi^\perp}$ \emph{simultaneously}, and yet it completely any reasonable notion of binding.
    
    % and suppose  opens to a quantum message $\ket{\psi}$,  to a following quant to be able to  be able to change the committed message without being able to produ adversary might be able to change the committed message, without being able to produce \emph{two different openings simultaneously}. \footnote{This issue was first observed by~\cite{FOCS:AmbRosUnr14,EC:Unruh16} in the setting of classical commitments secure against quantum attacks.} Instead, a quantum adversary 
\end{itemize}

At first glance, it may seem impossible to handle both issues at once. How can a challenger possibly verify that two arbitrary quantum states are different if it only gets to see one state at a time?

We propose the following solution. In the security game, the adversary will open the commitment twice \emph{in sequence}, i.e., it sends registers $(\RegC,\RegD)$ to the challenger (corresponding to the first opening), the challenger later returns register $\RegD$ to the adversary, and finally the adversary sends the challenger a new state on $\RegD$ (corresponding to a second opening). Meanwhile, the challenger picks one of the openings at random and swaps out the committed message $\RegM$ into an internal register $\RegM'$, replacing the committed message with a junk $\ket{0}$ state. At the end of the experiment, the challenger sends $\RegM'$ to the adversary and asks the adversary to distinguish whether it came from its first opening or its second opening. If the adversary can distinguish the two cases, the challenger can be convinced the two messages are different!

% In the security game, the adversary opens the commitment twice \emph{in sequence}, i.e., it sends a state on $(\RegC,\RegD)$ to the challenger, the challenger returns a state on $\RegD$ to the adversary, and finally the adversary sends a second opening on register $\RegD$ to the challenger.

% to the  only needs to produce the second opening after the challenger sends back its first opening.challenger will return the first opening $\RegD$ (but potentially modified) to the adversary before asking it to produce a second opening. The challenger will pick one of the openings at random, swap out the underlying message into an internal register $\RegM'$, send $\RegM'$ to the adversary at the end of the experiment, and ask the adversary to distinguish whether $\RegM'$ was its first message or its second. If the adversary can distinguish the two cases, the challenger can be convinced two messages are different!

\begin{remark} This idea is reminiscent of the~\cite{C:GolMicWig86} zero-knowledge protocol for graph non-isomorphism (in which the prover demonstrates that two graphs are not isomorphic by showing that it can distinguish them), except that we are using it to define a security property rather than to build a protocol.
\end{remark}

In slightly more detail, the security game is the following:

\begin{itemize}
    \item[] $\texttt{DoubleOpenExpt}$:
    \item[] \begin{enumerate}
    \item The adversary sends  $(\RegC, \RegD)$.
    \item The challenger checks that the decommitment is valid by applying $\Com^\dagger$ and measuring the last $\lambda$ qubits (and aborts if it's not $0^\lambda$). The challenger samples a bit $b \gets \{0,1\}$ and does one the following:
    \begin{itemize}
        \item (If $b = 0$) It applies $\Com$ to recompute the commitment/decommitment.
        \item (If $b = 1$) It applies $\SWAP_{\RegM,\RegM'}$ where $\RegM$ is the opened message, and $\RegM'$ is initialized to $\ket{0}$. It then applies $\Com$ to recompute the commitment/decommitment.
    \end{itemize}
    Finally, it sends $\RegD$ back to the adversary (but not $\RegC$).
    \item The adversary sends another decommitment on $\RegD$ back to the challenger.
    \item The challenger checks that the decommitment is valid by applying $\Com^\dagger$ and measuring the last $\lambda$ qubits (and aborts if it's not $0^\lambda$). Then:
    \begin{itemize}
        \item (If $b = 0$) It applies $\SWAP_{\RegM,\RegM'}$ where $\RegM$ is the opened message, and $\RegM'$ is initialized to $\ket{0}$. It then applies $\Com$ to recompute the commitment/decommitment.
        \item (If $b = 1$) It applies $\Com$ to recompute the commitment/decommitment.
    \end{itemize}
    Finally, it sends all of its registers (including $\RegM'$) back to the adversary.
    \item The adversary outputs a guess $b'$ and wins if $b' = b$.
\end{enumerate}
\end{itemize}

\begin{definition}[Binding for QSCs, Attempt 1]
\label{def:binding-attempt-1} A quantum state commitment is binding if it is hard to win $\emph{\texttt{DoubleOpenExpt}}$ with advantage better than $1/2 + \negl(\lambda)$.
\end{definition}

Intuitively, this definition captures the \emph{semantic meaning} of binding for quantum states. If an adversary can change its committed quantum state across two openings, it should also be able to distinguish between the two opened messages. But this definition requires several rounds of back-and-forth interaction --- can we make it any simpler?

\paragraph{A simpler definition: swap binding.}

It turns out that the second opening in $\texttt{DoubleOpenExpt}$ is unnecessary, and a simpler ``single opening'' game suffices:

\vbox{
\begin{itemize}
    \item[] $\texttt{BindExpt}$:
    \item[] \begin{enumerate}
    \item The adversary sends the registers $(\RegC, \RegD)$.
    \item The challenger checks that the decommitment is valid by applying $\Com^\dagger$ and measuring the last $\lambda$ qubits (and aborts if it's not $0^\lambda$). The challenger samples a bit $b \gets \{0,1\}$ and does one of the following:
    \begin{itemize}
        \item (If $b = 0$) It applies $\Com$ to recompute the commitment/decommitment.
        \item (If $b = 1$) It applies $\SWAP_{\RegM,\RegM'}$ where $\RegM$ is the opened message, and $\RegM'$ is initialized to $\ket{0}$. It then applies $\Com$ to recompute the commitment/decommitment.
    \end{itemize}
    Finally, it sends the $\RegD$ register back to the adversary.
    \item The adversary outputs a guess $b'$ and wins if $b' = b$.
\end{enumerate}
\end{itemize}}

This gives our final definition, which we call \emph{swap binding}.

\begin{definition}[Swap binding]
\label{def:binding-attempt-2} A quantum state commitment is swap binding if it is hard to win $\emph{\texttt{BindExpt}}$ with advantage better than $1/2 + \negl(\lambda)$.
\end{definition}

Why is this definition \emph{equivalent} to~\cref{def:binding-attempt-1}? At a very high level, any adversary for the double-opening game can be ``folded'' into an adversary for the single opening game by running the first and second opening rounds in superposition. In slightly more detail, let $\ket{\psi}_{\RegC,\RegD,\RegE}$ (where $\RegE$ is the internal register of the double opening adversary) be the initial state of the adversary for the double opening game, and let $U$ denote the unitary the adversary applies before sending the second opening. Consider the following adversary for the single opening game:
\begin{itemize}
    \item The adversary prepares a control qubit initialized to $\ket{+}$, and applies $U$ controlled on $\ket{+}$ to obtain the state $(\ket{0}\ket{\psi}_{\RegC,\RegD,\RegE} + \ket{1} U \ket{\psi}_{\RegC,\RegD,\RegE})/\sqrt{2}$. It then sends the registers $(\RegC, \RegD)$ to the challenger.
    \item Upon receiving the $\RegD$ register back from the challenger, it applies controlled-$U^\dagger$, measures the control qubit in the Hadamard basis, and uses the resulting outcome to guess $b$.
\end{itemize}
It is possible to show that if the original adversary can win the double opening game, this adversary will win the single opening game.

\subsubsection{Consequences}
\label{subsubsec:consequences}

We briefly discuss some consequences of our new definition:
\begin{itemize}
\item \textbf{Relationship to classical binding.} Since quantum information encompasses classical information, we should expect our definition to capture binding for classical messages. Indeed, we can show this is the case:

\begin{itemize}
    \item By a hybrid argument, we can replace the $\SWAP_{\RegM,\RegM'}$ operation in $\texttt{BindExpt}$ with any other quantum operation $P$ that acts on $\RegM$ and an ancilla held by the challenger.\footnote{We sketch the argument here. In Hybrid 0, the challenger simply sends $\RegD$ back to the adversary after verifying the commitment. In Hybrid 1, the challenger performs $\SWAP_{\RegM,\RegM'}$ before sending $\RegD$ back; this is indistinguishable from Hybrid 0 by our binding definition. In Hybrid 2, the challenger applies $P$ to $\RegM$ and then applies $\SWAP_{\RegM,\RegM'}$; this is perfectly indistinguishable from Hybrid 1 since the register $P$ acts on is independent of the adversary's view. Finally, in Hybrid 3, the challenger only applies $P$; this is indistinguishable from Hybrid 2 by our definition, which allows us to undetectably remove $\SWAP_{\RegM,\RegM'}$.}
    \item Define $P$ to be the operation that applies $Z^s$ to $\RegM$ for a random $s$; this is equivalent to \emph{measuring} $\RegM$ in the standard basis and discarding the outcome. Our definition implies that the adversary cannot distinguish its original decommitment from a decommitment in which the message has been measured in the standard basis. This is essentially Unruh's definition~\cite{EC:Unruh16} of collapse-binding for classical messages!\footnote{Technically, Unruh's definition is for classical commitments to classical messages. However, it is not hard to adapt his definition to the syntax of quantum-communication commitments.}
\end{itemize}

In fact, we can say more. In $\texttt{BindExpt}$, we could initialize $\RegM'$ to the maximally mixed state (instead of $\ket{0}$), and the definition would remain the same. From the adversary's point of view, $\SWAP_{\RegM,\RegM'}$ is equivalent to the challenger applying $X^rZ^s$ to $\RegM$ for uniformly random strings $r,s$, since this also maximally mixes the message. This gives an alternative view of our definition: a $\QSC$ is binding for quantum states if it is (collapse-)binding for classical messages in both the standard and Hadamard bases!

\item \textbf{Hiding-binding duality.} We have not yet defined hiding for QSCs, but it is not hard to write down a definition: the adversarial receiver sends a register $\RegM$ containing its message $\ket{\psi}$, the challenger either sends a commitment to $\ket{\psi}$ or a commitment to an unrelated message, and the adversary must guess which one it received.\footnote{This hiding definition is virtually identical to existing definitions of semantic security for quantum encryption schemes~\cite[Definition 3.3]{C:BroJef15}.} Perhaps the most surprising consequence of our definition is that for \emph{non-interactive} QSCs, this hiding experiment is the same as our binding experiment --- except that the challenger sends $\RegC$ instead of $\RegD$ to the adversary. Indeed, sending registers $(\RegC , \RegD)$ containing a valid commitment/decommitment state is equivalent to sending a register $\RegM$ containing the committed message state in the clear, since a valid commitment/decommitment is $\Com \ket{\psi}_\RegM \ket*{0^\lambda}$.

One consequence of this duality is that if we have, for example, a statistically binding and computationally hiding QSC, there is a simple ``dual'' QSC satisfying computationally binding and statistical hiding: just send $\RegD$ as the commitment and use $\RegC$ as the opening!
\end{itemize}

\subsection{Constructions}
\label{subsec:to-constructions}

\subsubsection{Warm-up: hiding and binding QSCs}

Having defined the security guarantees of QSCs, we will now prove that the ``folklore'' construction is secure. Recall that the sender commits to $\ket{\psi}$ by sampling random $r,s$ and sending $X^rZ^s\ket{\psi}$ together with a commitment $\ket{c_{r,s}}$ to the classical string $(r,s)$. Writing this as a non-interactive QSC (which requires replacing the randomness of $r,s$ with a corresponding superposition), the commitment/decommitment state is (up to normalization)
\[ \sum_{r,s} (\ket{r,s}\ket{d_{r,s}})_{\RegD} (X^rZ^s\ket{\psi} \ket{c_{r,s}})_\RegC, \]
where $\ket{d_{r,s}}$ is the decommitment for $\ket{c_{r,s}}$.
\begin{itemize}
    \item For hiding, we must show that the $\RegC$ register alone does not reveal $\ket{\psi}$. This can be easily formalized with a hybrid argument. First, we can replace $\ket{c_{r,s}}$ with $\ket{c_{0,0}}$ by the hiding of the commitment to $(r,s)$. Then, because the superposition over $r,s$ is held on the challenger's registers, the adversary's view in this hybrid is just a maximally mixed state and a commitment to $(0,0)$. 
    \item For binding, we must show that the $\RegD$ register alone does not reveal $\ket{\psi}$. By collapse-binding security of the $\ket{c_{r,s}}$ commitment, the adversary cannot distinguish a superposition of valid decommitments from one in which the message $(r,s)$ is measured. In particular, this means the adversary's decommitment $\RegD$ is indistinguishable from $\sum_{r,s} \ketbra{r,s} \otimes \ketbra{d_{r,s}}$. But this mixed state is just a uniform mixture of valid decommitments to uniformly random $r,s$ and is therefore independent of $\ket{\psi}$.
\end{itemize}

We therefore conclude the following.

\begin{theorem}
Hiding and binding QSCs exist if and only if hiding and binding QBCs exist (i.e., quantum-communication commitments to classical bits). 
\end{theorem}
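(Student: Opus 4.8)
The plan is to prove both directions of the equivalence, with the forward direction (QSCs $\Rightarrow$ QBCs) being essentially trivial and the reverse direction (QBCs $\Rightarrow$ QSCs) being the content, established by formalizing the ``folklore'' construction sketched just above in the overview. For the trivial direction: a QSC for quantum messages in particular commits to single-qubit messages in the computational basis, so restricting attention to classical inputs $\ket{m}$ for $m \in \{0,1\}$ yields a QBC; one checks that hiding is inherited directly, and that swap binding restricted to classical messages implies classical binding via the hybrid argument already given in \cref{subsubsec:consequences} (taking $P$ to be a standard-basis measurement shows the committed bit cannot be changed undetectably, hence the classical collapse-binding / sum-binding property).

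For the reverse direction, I would first pin down the non-interactive QSC syntax for the folklore scheme: given a hiding and binding QBC with commit unitary $\Com_{\mathsf{bit}}$ acting on a classical message register and $\lambda$ ancilla qubits, define $\Com$ on $\ket{\psi}_\RegM \ket{0^{\lambda'}}$ to (i) create the uniform superposition $\sum_{r,s} \ket{r,s}$ on fresh registers, (ii) apply $X^r Z^s$ to $\RegM$ controlled on $(r,s)$, and (iii) coherently commit to each bit of $(r,s)$ using $\Com_{\mathsf{bit}}$, routing the bit-commitment registers into $\RegC$ and the bit-decommitment registers together with $\ket{r,s}$ into $\RegD$. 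This is exactly the state $\sum_{r,s}(\ket{r,s}\ket{d_{r,s}})_\RegD (X^rZ^s\ket{\psi}\ket{c_{r,s}})_\RegC$ displayed in the overview. I would then verify that $\Com$ is a unitary (the maps $r,s \mapsto \ket{c_{r,s}}\ket{d_{r,s}}$ are isometric since they come from a unitary $\Com_{\mathsf{bit}}$ applied to $\ket{r,s}\ket{0^\lambda}$) and that honest opening works: applying $\Com^\dagger$ uncomputes everything and the last $\lambda'$ qubits return to $0$.

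The two security properties are then proved by the hybrid arguments already sketched. For hiding: Hybrid $0$ is a commitment to $\ket{\psi}$; Hybrid $1$ replaces $\ket{c_{r,s}}$ by $\ket{c_{0,0}}$ inside $\RegC$, indistinguishable by (multi-bit, parallel) hiding of the QBC — here I would invoke whatever parallel-composition statement for QBC hiding the paper has available, or do a bit-by-bit hybrid over the $O(\lambda)$ bits of $(r,s)$; in Hybrid $1$ the adversary's view on $\RegC$ is $\operatorname{Tr}_\RegD$ of a state where the $r,s$ superposition lives entirely in $\RegD$, hence a maximally mixed message register tensored with a fixed commitment to $0$, independent of $\ket{\psi}$. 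For binding, I would run $\texttt{BindExpt}$: after the challenger verifies and holds a valid $(\RegC,\RegD)$, collapse-binding of the QBC (again in parallel over the bits of $(r,s)$) says the adversary cannot distinguish $\RegD$ from the state where $(r,s)$ has been measured in the standard basis, i.e.\ from $\sum_{r,s}\ketbra{r,s}\otimes\ketbra{d_{r,s}}$ on $\RegD$ — but this is a uniform mixture of honest decommitments to uniform $(r,s)$ and is thus independent of whether the message register was swapped out, so the $b=0$ and $b=1$ branches are indistinguishable.

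The main obstacle is the parallel-composition step: the folklore QSC commits to the $\Theta(\lambda)$ bits of the one-time-pad key using many invocations of a single-bit QBC, so both the hiding proof and the collapse-binding proof need a parallel-composition theorem for QBCs (hiding and binding respectively). I would handle this either by citing the paper's own parallel-composition results for swap binding / QSCs specialized to classical messages, or, if a self-contained treatment is wanted, by a standard $\lambda$-step hybrid: for hiding, swap one ciphertext bit's commitment at a time; for binding, measure one key bit at a time and argue each measurement is undetectable by single-bit collapse-binding. A secondary subtlety is matching conventions — the paper's binding game initializes $\RegM'$ to $\ket 0$ and swaps, whereas the cleanest statement of the folklore argument uses ``$\RegM$ is maximally mixed after applying $X^rZ^s$ for random $r,s$''; I would reconcile these using the observation, already made in \cref{subsubsec:consequences}, that from the adversary's view $\SWAP_{\RegM,\RegM'}$ is equivalent to applying a random Pauli $X^rZ^s$ to $\RegM$, which is exactly the averaging that the $\sum_{r,s}$ in $\RegD$ provides ``for free.''
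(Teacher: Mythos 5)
Your proposal is correct and follows essentially the same route as the paper: the nontrivial direction is the folklore construction, proved secure via hybrid arguments that (for hiding) replace the $\Com_{\QBC}$ commitment to $(r,s)$ with a commitment to a fixed string, after which the view is independent of the message, and (for binding) use collapse binding of $\Com_{\QBC}$ to argue that the decommitment register is indistinguishable from the fixed uniform mixture of honest $\Com_{\QBC}$ decommitments to random $(r,s)$, which is independent of the message and hence of the challenger's swap. The one structural difference is where parallel composition enters: the paper first proves the single-qubit scheme secure using a two-bit QBC (\cref{theorem:folklore-construction}) and then scales to $n$ qubits via its general parallel-composition theorem for QSC swap binding (\cref{theorem:parallel_composition}, plus the routine composability of hiding), whereas you build the $n$-qubit scheme directly and push the composition down to a bit-by-bit hybrid over the $\Theta(n)$ underlying QBC invocations. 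Both are sound; the paper's ordering lets it reuse a theorem it has already proved, while yours requires an analogous (standard, but unstated in the paper) composition fact for QBC hiding and collapse binding.
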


A pair of recent works~\cite{C:MorYam22,C:AnaQiaYue22} showed that hiding and binding QBCs --- which were previously known assuming one-way functions --- can be built from quantum assumptions that are potentially weaker than one-way functions~\cite{Kretschmer21}. Consequently, hiding and binding QSCs exist under the same weak assumptions.

\subsubsection{Beyond the folklore construction: succinct QSCs.}

While we have successfully formalized the security of the folklore construction using swap binding, a definition really begins to shine when it enables new constructions. As mentioned in~\cref{sec:intro}, the folklore construction requires the commitment $\RegC$ to be \emph{at least as large} as the committed message. We will now see how to construct a \emph{succinct} QSC, meaning a binding QSC for $n$-qubit messages where the commitment $\RegC$ is smaller than $n$ qubits.\footnote{Notice that we have dropped the hiding property; as in the classical setting, many applications of succinct commitments do not explicitly require hiding. We remark that in the classical setting, succinct commitments are known to imply standard hiding-binding commitments (and thus one-way functions).}

\paragraph{A succinct QSC from any collapsing hash function.} A natural starting point is to assume the existence of collision-resistant hash functions, or rather, their post-quantum analogue: collapsing hash functions~\cite{EC:Unruh16}. Collapsing hash functions are succinct, collapse-binding commitments to \emph{classical messages}. That is, given a hash function description $h \gets H_\lambda$, a sender can succinctly commit to a message $m$ by sending the classical string $h(m) = y$, and can open by revealing $m$. 

How can we turn a succinct commitment $h$ for classical messages into one for quantum messages? Our idea is simple: use $h$ to commit to the quantum state in both the standard and Hadamard bases. Concretely, given a collapsing hash function $h: \{0,1\}^n \rightarrow \{0,1\}^{n/4}$, the sender commits to an $n$-qubit message $\RegM$ as follows:

\begin{enumerate}
    \item Interpret $\RegM$ in the standard basis and evaluate $h$ coherently, writing the output onto a fresh register $\RegC_0$.
    \item Apply the $n$-bit Hadamard transform $H^{\otimes n}$ to $\RegM$.
    \item Apply $h$ again, this time writing the output onto another fresh register $\RegC_1$. Finally, send the $(n/2)$-qubit state $\RegC = (\RegC_0, \RegC_1)$ as the commitment. 
    \item To decommit, the sender provides the $\RegD = \RegM$ register. This allows the receiver to uncompute the commitment, check validity, and recover the original message.
\end{enumerate}

Our swap-binding definition makes it easy to prove this scheme secure assuming that $h$ is collapse-binding. Recall that collapse-binding of $h$ guarantees that if an adversary sends a superposition of inputs for $h$ to a challenger, then it cannot distinguish between the following:
\begin{itemize}
    \item the challenger evaluates $h$ in superposition, measures the output, and then returns the input registers, or
    \item the challenger simply measures the input registers and returns them.
\end{itemize}
In our scheme, since the challenger keeps the register $\RegC = (\RegC_0, \RegC_1)$, from the adversary's point of view, both $\RegC_0$ and $\RegC_1$ are measured (in the standard basis). 
By carefully invoking collapse-binding security twice, once for each coherent evaluation of $h$ in our scheme, we can show that the decommitment held by the adversary is indistinguishable from one that is generated as follows: measure $\RegM$ in the standard basis, apply $H^{\otimes n}$, and measure $\RegM$ again. But this gives a totally mixed state, and so the decommitment completely hides the original quantum message.

\paragraph{Succinct QSCs from even weaker assumptions.} The above construction demonstrates that succinct QSCs can be built from collapsing hash functions (succinct classical commitments). Perhaps surprisingly, we show something stronger: succinct QSCs can even exist in a world where succinct classical commitments do not!

For instance, suppose we are only given a (post-quantum) one-way function, or equivalently, a (post-quantum) pseudorandom generator (PRG) $G: \{0,1\}^{n/2} \rightarrow \{0,1\}^{2n}$~\cite{SICOMP:HILL99,FOCS:Zhandry12}. We show that it is still possible to commit to an $n$-qubit state $\ket{\psi}$, as follows:

\begin{enumerate}
    \item Initialize an $n/2$-qubit register to $\ket{0}$ and apply $H^{\otimes \frac{n}{2}}$ to obtain $2^{-n/4}\sum_{k \in \{0,1\}^{n/2}} \ket{k}$.
    \item Controlled on $k$, apply $X^{G_0(k)}Z^{G_1(k)}$ to $\ket{\psi}$, where $G_0(k),G_1(k)$ are the first and last $n/2$ bits of $G(k)$, respectively. The resulting state is
    \begin{align*}
        \frac{1}{2^{n/4}}\sum_{k \in \{0,1\}^{n/2}} \ket{k}_{\RegC} X^{G_0(k)}Z^{G_1(k)} \ket{\psi}_{\RegD},
    \end{align*}
    where the registers containing $k$ are the succinct commitment $\RegC$, and the one-time padded message is the decommitment $\RegD$.
\end{enumerate}

Our swap-binding definition makes it easy to prove this scheme secure:
\begin{itemize}
    \item When the challenger holds the $\RegC$ register (i.e., it is traced out in the adversary's view), the decommitment for $\ket{\psi}$ is the mixed state resulting from applying a $X^{G_0(k)}Z^{G_1(k)}$ to $\ket{\psi}$ for a uniformly random $k$. 
    \item By PRG security, this is indistinguishable from $X^rZ^s\ket{\psi}$ for uniformly random $r,s$, which is the maximally mixed state so swap-binding follows.
\end{itemize}

More generally, all we need is a one-time secure encryption scheme for $n$-qubit quantum messages with short keys. In fact, any scheme with keys shorter than $n$ bits will suffice, since it turns out that using a quantum analogue of Merkle-Damgård domain extension (see~\cref{subsec:domain-extension}), succinct QSCs with 1 qubit of compression can be composed to achieve any desired compression!

As we just saw, one-way functions give one possible instantiation of such an encryption scheme: to encrypt a state $\ket{\psi}$ with key $k$, just apply $X^{G_0(k)}Z^{G_1(k)}$. However, there are other instantiations from assumptions that may be even weaker than one-way functions. In particular, we show that one-time secure quantum encryption with short keys is implied by any pseudorandom unitary~\cite{C:JiLiuSon18} (see~\cref{claim:oracle-separation}), and the latter is known to be black-box separated from one-way functions~\cite{Kretschmer21}. In summary, we have the following theorem.

\begin{theorem}[Succinct QSCs from encryption]
Assuming the existence of one-time secure encryption for $n$-bit quantum messages with $(n-1)$-bit classical keys, succinct QSCs exist. Such one-time encryption schemes are implied by one-way functions, but their existence is a potentially weaker assumption.
\end{theorem}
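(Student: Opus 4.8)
The plan is to first handle the case of one bit of compression --- namely, building a succinct QSC for $n$-qubit messages with an $(n-1)$-qubit commitment --- directly from the one-time secure encryption scheme, and then invoke the Merkle--Damg\aa rd-style domain extension from~\cref{subsec:domain-extension} to bootstrap to arbitrary compression (in particular to the $\lambda$-qubit commitments promised by the other theorem statement). For the one-bit case, let $(\QEnc, \QDec)$ be a one-time secure encryption scheme with message length $n$ and key length $n-1$. The commitment unitary $\Com$ acts on $\RegM \ket*{0^{n-1}}$ by: preparing a uniform superposition $2^{-(n-1)/2}\sum_k \ket{k}$ on the $(n-1)$-qubit ancilla (now relabeled $\RegC$), and then coherently applying $\QEnc$ with key $k$ to $\RegM$, so that the resulting state on $(\RegC, \RegD)$ is $2^{-(n-1)/2}\sum_k \ket{k}_\RegC\, (\QEnc_k \ket{\psi})_\RegD$ where $\RegD = \RegM$. (Correctness: $\Com^\dagger$ uncomputes by applying $\QDec_k$ controlled on $k$ and then Hadamard-ing the $\RegC$ register back to $\ket*{0^{n-1}}$, and the last $n-1$ qubits are then $0$ exactly; recovery of $\ket{\psi}$ is immediate.)

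For swap binding, I would follow the template already spelled out in the overview for the PRG instantiation. In $\texttt{BindExpt}$, once the adversary's $(\RegC,\RegD)$ passes verification, the challenger holds $\RegC$ (the key register $\ket{k}$), and the honest $b=0$ branch just returns $\RegD$. The $b=1$ branch swaps $\RegM$ out for $\ket 0$ before re-applying $\Com$; equivalently, by the reasoning in~\cref{subsubsec:consequences}, it suffices to show the adversary cannot detect replacing the decommitment message with the maximally mixed state. Since $\RegC$ is traced out from the adversary's view, its reduced state on $\RegD$ after verification is $\Exp_k\!\left[\QDec$-consistent superposition$\right]$ --- more carefully, I would argue that conditioned on passing verification, $\RegD$ (together with the adversary's side register) is the state one gets by measuring the key register, obtaining a uniform $k$, and holding $\QEnc_k$ applied to whatever message the adversary committed to. Then one-time security of the encryption scheme says this is indistinguishable (statistically or computationally, matching the security of the scheme) from $\QEnc_k$ applied to a fixed junk message, equivalently the maximally mixed state, which is exactly what the $b=1$ branch produces. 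The one subtlety is that the adversary's committed ``message'' need not be a product state --- it may be entangled with the adversary's side register and with $\RegC$ itself via the superposition over $k$ --- so I would phrase the reduction to carry this side information through: the one-time encryption security game allows the challenge plaintext to be half of an entangled state, which is precisely the semantic-security-style definition (cf.~the~\cite{C:BroJef15} definition referenced earlier), so this goes through.

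Next, the implications among assumptions. One-way functions give a PRG $G\colon \{0,1\}^{(n-1)/2} \to \{0,1\}^{2n}$, and the scheme $\QEnc_k(\ket\psi) = X^{G_0(k)}Z^{G_1(k)}\ket\psi$ is one-time secure by PRG security (the real ciphertext is indistinguishable from $X^r Z^s\ket\psi$, which is maximally mixed), with key length $(n-1)/2 \le n-1$; pad the key to length $n-1$ if desired. For the pseudorandom-unitary instantiation I would cite the claim referenced in the excerpt (\cref{claim:oracle-separation}): a (single-query) pseudorandom unitary family $\{U_k\}$ on $n$ qubits yields $\QEnc_k(\ket\psi) = U_k\ket\psi$, whose ciphertext is indistinguishable from $U\ket\psi$ for Haar-random $U$, and $\Exp_{U \sim \text{Haar}}[U\rho U^\dagger]$ is maximally mixed --- so again one-time security holds, and by~\cite{Kretschmer21} pseudorandom unitaries are black-box separated from one-way functions, giving the ``potentially weaker'' clause.

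The main obstacle I anticipate is the entanglement bookkeeping in the swap-binding reduction: making precise the claim that ``conditioning on successful verification, $\RegD$ is $\QEnc_k$ applied to some (possibly entangled) message for uniform $k$'' and that this reduces cleanly to the one-time encryption game without the adversary's ability to choose $\RegD$ adaptively (it only chooses $(\RegC,\RegD)$ once, so this should be fine) or to use $\RegC$ as a side register (it cannot --- $\RegC$ is with the challenger). I would handle this by working with the purified, coherent form of $\texttt{BindExpt}$ throughout, so that ``measuring the key'' and ``swapping in junk'' are manipulations of a global pure state, and then the hybrid argument from~\cref{subsubsec:consequences} (replace $\SWAP$ by an arbitrary channel $P$ on $\RegM$ plus challenger ancilla) lets me reduce to the single cleanest statement: the adversary cannot distinguish $\RegD = \QEnc_k\ket\psi$ from $\RegD$ replaced by maximally mixed. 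Everything else is routine.
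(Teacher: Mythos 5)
Your construction and binding argument match the paper's (see \cref{theorem:pru-construction}): $\Com$ is the controlled encryption $\sum_k (\ketbra{k} H^{\otimes d})_\RegW \otimes (U_k)_\RegM$ with $\RegC = \RegW$ and $\RegD = \RegM$, and the reduction to one-time security is essentially identical. Your detour through the maximally mixed state is slightly roundabout but does work: one-time security (\cref{def:one-time-quantum-encryption}) applied to the plaintext $\Id/2^n$ shows that $\mathbb{E}_k[U_k \ketbra{0} U_k^\dagger]$ is itself indistinguishable from maximally mixed, so both the $b=0$ and $b=1$ decommitments are close to it; the paper's proof is the one-step direct comparison of the two branches, which also handles statistical security more cleanly. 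The one-way-function instantiation via a PRG-derived Pauli one-time pad is the same as \cref{claim:oracle-separation}'s companion claim, and the use of Merkle--Damg\aa rd to improve compression matches \cref{theorem:domain-extension}.

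Your PRU paragraph, however, has a real gap. A pseudorandom unitary family $\{U_k\}_{k \in \{0,1\}^n}$ acting on $n$ qubits yields an encryption scheme with $n$-bit keys for $n$-qubit messages, which has \emph{no} compression and therefore does not instantiate the $(n-1)$-bit-key assumption (you'd get a commitment register as large as the message). The paper's \cref{claim:oracle-separation} resolves this by first applying the Schur-transform expansion of \cref{theorem:pru-expansion}: trading the PRU's multi-copy security for one-time security on a message space of dimension $\binom{\ell+p-1}{\ell}$ while keeping the same $n$-bit key, so the key is now strictly shorter than the message. Without that step the PRU instantiation does not give short keys, so your argument as written does not support the ``potentially weaker than one-way functions'' clause of the theorem.
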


\subsection{Applications}
\label{subsec:to-applications}

So far, we have proposed a definition and argued that it captures the semantic meaning of binding for quantum messages. We have also shown several instantiations of schemes achieving this definition. However, there is still a crucial missing piece to our theory: can commitments satisfying our definition actually be used to achieve cryptographic ends?

In this section, we describe our flagship application: a three-message quantum succinct argument system based on succinct QSCs. Recall that a succinct argument is a protocol that enables a prover to convince a verifier that a claim is true (e.g., an $\NP$ statement), where the total communication and verifier runtime is extremely short. Our protocol is a direct quantum analogue of Kilian's classical succinct argument protocol~\cite{STOC:Kilian92} and is based on a recent proposal of Chen and Movassagh~\cite{ARXIV:CheMov21} (see~\cref{subsec:related} for more details). While the protocol is simple to state, proving security will be far more difficult. Even for Kilian's classical protocol, security against quantum attacks was a long-standing open question that was only recently settled by~\cite{FOCS:CMSZ21}.  Moreover, the~\cite{FOCS:CMSZ21} technique fundamentally relies on the ability to copy information and is therefore limited to extracting classical information from classical protocols.

In this work, we resolve several challenges unique to the quantum setting and obtain a general framework for proving security of quantum protocols via \emph{rewinding}. This is our most technically involved contribution. For this overview, we will present these techniques in the context of quantum succinct arguments (see~\cref{sec:quantum-sigma} for applications to quantum sigma protocols).

\subsubsection{The quantum succinct argument protocol}
\label{subsubsec:to-kilian-protocol}

\paragraph{Kilian's protocol.} Kilian's protocol~\cite{STOC:Kilian92} is a compiler that uses succinct commitments to turn a probabilistically checkable proof (PCP) into a four-message interactive succinct argument. At the start of the protocol, the verifier samples a classical key for the succinct commitment, i.e., a hash function description $h$, and sends it to the prover. The prover uses $h$ to generate a \emph{tree commitment} $c$ to the PCP and sends $c$ to the verifier.\footnote{Assuming $h: \{0,1\}^n \rightarrow \{0,1\}^{n/2}$, the tree commitment is generated as follows: (1) partition the PCP into blocks of size $n/2$, (2) create a binary tree of hash values where the PCP blocks form the leaves and each internal node is the hash of its two children, and (3) send the root as the commitment. The local opening for a PCP index $i$ is a set consisting of (i) the PCP block containing $i$ and (ii) the values of every node adjacent to the path from block containing $i$ up to the root. The opening for a set of PCP indices $S$ is the union of the sets corresponding to the decommitments for each $i \in S$.} Next, the verifier samples random coins corresponding to the randomness of the PCP verifier. Finally, the prover sends \emph{local openings} for the indices $S$ specified by the random coins, which are at most $|S|\cdot \poly(\lambda)$ bits long in total. The verifier accepts if the PCP verifier accepts and the openings are valid.

\paragraph{The quantum protocol.} Following Kilian's template, our quantum protocol is a compiler that uses succinct QSCs to turn any \emph{classical or quantum} PCP into a quantum succinct argument. As mentioned earlier, the resulting protocol is very similar to a heuristic proposal of Chen and Movassagh~\cite{ARXIV:CheMov21} (see~\cref{subsec:related} for more details). 
\begin{itemize}
    \item \textbf{First message.} The prover generates a \emph{quantum tree commitment} to its PCP using a succinct QSC that maps $n$-qubit messages to $n/2$-qubit commitments. The quantum tree commitment is computed as follows:
    \begin{enumerate}
        \item The prover partitions the PCP into blocks of size $n/2$.
        \item Starting at the leaves, it creates a binary tree of commitments up to the root where each node register is a succinct commitment to its two children, and
        \item It sends the root commitment as the commitment to the whole PCP.
    \end{enumerate}
    We emphasize one difference from the classical setting: since the prover is committing to quantum states, the committed PCP is unavailable to the prover once it sends the root.
    \item \textbf{Second message.} The verifier sends random coins corresponding to the PCP verifier randomness.
    \item \textbf{Third message.} The prover sends local openings for the PCP indices $S$ specified by the verifier's challenge. For an individual leaf, the corresponding opening consists of every register on or adjacent to the root-to-leaf path (excluding the root itself). The opening for a set $S$ consists of all registers involved in the opening for each $i \in S$. We refer the reader to~\cref{subsec:quantum-merkle} and~\cref{fig:merkle-tree} for further details.
    \item \textbf{Verification.} Starting from the root, the verifier works its way down the tree, using the provided openings to reveal each subsequent layer of commitments. Eventually, this reveals the qubits of the PCP at the locations in $S$. The verifier accepts if every individual QSC was opened correctly and the PCP verifier would accept. 
\end{itemize}

This template is compatible with both classical and quantum PCPs. We emphasize that even when the PCP is classical, our protocol still requires the ability to commit to arbitrary \emph{quantum states}. In particular, a tree commitment requires composing quantum commitments in sequence, i.e., committing to commitments, so even if the leaves are classical, the internal nodes are still commitments to quantum messages.

The goal for the remainder of this subsection will be to prove that this protocol is sound for any classical or quantum PCP and thereby establish the following theorem.

\begin{theorem}
Assuming the existence of succinct QSCs, every $\NP$ language has a three-message quantum succinct argument. If we additionally assume the quantum PCP conjecture, this extends to $\QMA$.

\end{theorem}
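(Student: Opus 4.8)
The plan is to establish completeness and succinctness (both routine) and then computational soundness, which is the crux. Completeness is immediate from $\Com^\dagger\Com=\Id$: honestly generated tree openings always pass the ancilla checks and hand the verifier the correct PCP qubits, on which the PCP verifier accepts. Succinctness is also routine: the root commitment has $\lambda$ qubits, the challenge is $\polylog(\lambda)$ bits of PCP randomness, and polylogarithmic locality means the third message consists of $\polylog$-many root-to-leaf openings of $\poly(\lambda)$ qubits each, while verification is $\polylog$-many applications of $\Com^\dagger$ followed by the (efficient) PCP check. Starting from a PCP with constant soundness error and polylogarithmic locality, $\polylog(\lambda)$-fold repetition (which keeps the locality polylogarithmic) drives the error to $\negl(\lambda)$, giving a negligible-soundness argument; the $\QMA$ statement differs only in that the quantum PCP conjecture supplies the constant-error, polylog-locality quantum PCP in place of the classical one, and nothing in the soundness argument below changes.

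For soundness I would first reduce to an \emph{extraction lemma}: for every QPT malicious prover $P^*$ that makes the verifier accept with probability $p$ on an instance $x$, there is a quantum state $\brho$ on the PCP registers on which the honest PCP verifier accepts with probability at least $p-\negl(\lambda)$, witnessed by a chain of efficiently implementable hybrids whose consecutive indistinguishability follows from swap binding (\cref{def:binding-attempt-2}). Granting this, if $x\notin L$ then PCP soundness forces $p\le\epsilon_{\mathrm{PCP}}+\negl(\lambda)$, which is negligible after amplification, so no efficient prover can break soundness without breaking swap binding along one of the hybrid steps. For $\QMA$ the same chain extracts an arbitrary quantum state, so the argument is unchanged.

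The extractor is built by \emph{swap-based rewinding}. The quantum tree commitment is itself a succinct QSC for the whole PCP (the Merkle composition of swap-binding QSCs, \cref{subsec:quantum-merkle}), so after the first message the extractor, playing the verifier, holds a commitment $\RegC$ to a well-defined committed PCP $\brho$ --- well-defined precisely because swap binding, composed across the tree, allows $\brho$ to be swapped out of the prover's view undetectably (the entanglement-respecting property of \cref{def:binding-attempt-2}). The extractor therefore simulates $P^*$ internally --- so it controls all of $P^*$'s registers, including those the honest protocol never opens --- and walks down the tree: at each node it uses the prover-held decommitment register to apply $\Com^\dagger$, aborts if the ancilla is not $0^\lambda$, and invokes swap binding to move the recovered child registers into its own memory; after $\le\poly(\lambda)$ invocations it has extracted $\brho$ with total error $\poly(\lambda)\cdot\negl(\lambda)=\negl(\lambda)$. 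What remains is the correctness claim: a valid partial opening of the verifier's challenge set $S$ reveals exactly the restriction $\brho|_S$ of this single global committed state, so that $\Pr_S[\text{PCP-check}_S(\brho|_S)\text{ accepts}]\ge p-\negl(\lambda)$, which is what ties extraction to the verifier's acceptance event.

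The main obstacle --- and the reason this is the paper's most technical piece --- is proving that correctness claim, i.e., that a \emph{partial} opening of a tree commitment is consistent with a single \emph{global} committed state. The classical \cite{FOCS:CMSZ21} analysis opens at many challenges, records (copies) the answers, and \emph{repairs} the prover's state between rounds; here, extracting a sub-PCP means swapping it out of the prover's view, which is \emph{irreversible}, so \cite{FOCS:CMSZ21}-style repair is information-theoretically impossible and a naive ``extract $S_1$, repair, extract $S_2$, \dots, take the union'' strategy cannot work. I expect the resolution to require a genuinely new rewinding technique that marries the alternating-projections/Jordan's-lemma machinery of \cite{FOCS:CMSZ21} to the swap structure: rather than recording-and-repairing, one swaps-out-and-re-synchronizes the prover to a target state that already accounts for the extracted content, restricting all the analysis to the subspace on which the relevant $\Com^\dagger$ ancilla checks succeed (charging the loss to $p$) so as to handle the fact that a malicious $P^*$ need not hold valid decommitments at deep nodes that random challenges rarely touch. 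Carefully interleaving the per-node swap-binding hybrids with these validity projections, and arguing the extracted $\brho|_S$ is jointly consistent with the verifier's acceptance event, is the technical core; everything else is bookkeeping.
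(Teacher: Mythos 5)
Your completeness, succinctness, and the high-level reduction to an extraction lemma are all fine, and you correctly locate the crux (repairing after an irreversible swap) and correctly observe that recording-and-repairing à la \cite{FOCS:CMSZ21} is information-theoretically ruled out. But your proposed extractor is wrong, and the gesture you make at the fix (``swap-out-and-re-synchronize \dots restricting to the subspace on which ancilla checks succeed'') omits the two pieces that actually make the paper's proof go through.

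The extractor you describe --- simulate $\widetilde{P}$ internally, then ``walk down the tree'' applying $\Com^\dagger$ at each node and swap-extract the children --- presumes a structure on the prover's internal state that simply is not there. After the first message, the verifier holds only the root register $\RegC_\varepsilon$; the prover's internal register $\RegH$ is an arbitrary state, and the decommitments $\{\RegD_\ell\}_{\ell\in\Path(S_r)}$ are produced only when the prover applies the response unitary $U_r$ \emph{after} receiving the challenge $r$, and only for the challenge-dependent set $\Path(S_r)$. There is no unitary you can apply to $\RegH$ that yields a globally consistent collection of valid decommitments for every node of the tree; a malicious prover may have genuinely superposed/inconsistent content at unqueried nodes, and applying $\Com^\dagger$ ``at each node'' is not even a well-posed operation on $\RegH$. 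Consequently the ``$\poly(\lambda)$ invocations, $\poly\cdot\negl$ error'' count does not apply: the extractor can only extract nodes that appear in accepting responses, which forces the iterative structure the paper actually uses --- sample a random $r$, measure acceptance of $\{\Pi^{U_r}_{r,E},\Id-\Pi^{U_r}_{r,E}\}$, swap out the newly-touched nodes $\Path(S_r)\setminus E$ if it accepted, then $\Repair$ the \emph{swap-augmented} prover, and finally stop at a \emph{random} time $T\le O(2^\beta/\gamma^2)$ and argue via a counting/Markov bound (\cref{claim:random-stop}) that the extracted leaves already account for most of the acceptance probability. Your proposal has none of this machinery.

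The second missing piece is the ingredient that makes the swap-binding reduction even possible. Naively, to distinguish consecutive hybrids $\HybExtract[j-1]$ and $\HybExtract[j]$ you want to hand the challenger the commitment-decommitment pair for node $j$; but then the reduction loses the $\RegC_j$ register, while the rest of the extraction loop ($\Est$, $\Repair$, running $\SwapRecover$ on later challenges) must continue to act coherently on the committed message of node $j$. Swap binding as stated gives the adversary nothing after it relinquishes $\RegC$, so the reduction gets stuck. The paper's resolution is the oracle security lemma (\cref{theorem:oracle-binding-security}): swap binding generically upgrades to \emph{oracle swap binding}, which grants the adversary query access to a unitary $G_b$ that acts on the underlying committed message (in the $\Com^\dagger$ frame) even after the challenger holds $\RegC$. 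Proving that upgrade is itself a substantial standalone step (\cref{sec:oracle-security}, the admissible oracle lemma and \cref{lemma:main}), and the whole design of the extractor is dictated by the need for every operation after the handoff to be expressible as a composition of such oracle calls. Your proposal does not identify this lemma or the constraint it addresses, and without it the hybrid argument you are implicitly relying on (``consecutive indistinguishability follows from swap binding'') does not reduce to the stated binding property.
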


\subsubsection{Rewinding for quantum protocols}
\label{subsubsec:to-rewinding-difficulties}

At a very high level, we will prove soundness of our protocol --- or rather, a stronger property called argument of knowledge --- by showing how to extract a convincing PCP from any malicious prover $\widetilde{P}$ that makes the verifier accept with noticeable probability. In the classical setting, this is achieved by a technique called \emph{rewinding}: the extractor queries $\widetilde{P}$ many times on random challenges, rewinding $\widetilde{P}$ back to its earlier state after every query. After sufficiently many queries, the extractor will have enough accepting responses to stitch together a PCP.

It is well known that traditional rewinding proofs completely fail to capture quantum attacks~\cite{vandeGraaf97,Watrous09,FOCS:AmbRosUnr14}: when the extractor runs $\widetilde{P}$ and records its response, it may irreversibly disturb $\widetilde{P}$'s state, rendering it useless for further queries. Fortunately, there has been a line of work showing how to rewind quantum adversaries for some important \emph{classical} protocols~\cite{Watrous09,EC:Unruh12,EC:Unruh16,C:ChiChuYam21,FOCS:CMSZ21,FOCS:LomMaSpo22,TCC:LaiMalSpo22}.

The work that turns out to be most relevant to our setting is~\cite{FOCS:CMSZ21}, which showed a general technique that enables repeatedly querying $\widetilde{P}$ and recording an arbitrary number of its accepting responses. At a high level, their technique consists of two steps:

\begin{itemize}
    \item \textbf{Step 1: Reduce the task to ``one-bit'' extraction.} First, they reduce the problem to a simplified setting where the extractor only measures the bit indicating whether $\widetilde{P}$ succeeds on a query (as opposed to measuring the entire response). This step invokes collapse-binding security~\cite{EC:Unruh12,EC:Unruh16} to argue that measuring this bit is computationally indistinguishable from measuring the entire response. 
    \item \textbf{Step 2: One-bit extraction via state repair.} The primary technical innovation of \cite{FOCS:CMSZ21} is a procedure for one-bit extraction. One of their key insights was to realize that the extractor does not need to restore the malicious prover's original state, but rather \emph{any state that gives it the same success probability as the original state}. Then, they designed a ``state repair'' procedure to restore the success probability of $\widetilde{P}$ after any one-bit disturbance. Their resulting extractor simply queries $\widetilde{P}$ repeatedly on random challenges and runs the state repair procedure in between each query.
\end{itemize}

In summary, the full~\cite{FOCS:CMSZ21} extractor repeats the following ``measure-and-repair'' step: (1) run the adversary on a random challenge, (2) check whether the response is valid; if so, measure and record the response, and finally (3) run the repair procedure to restore the success probability.

\paragraph{Extracting quantum information.} The~\cite{FOCS:CMSZ21} extractor was designed for a different setting than ours: the goal of \cite{FOCS:CMSZ21} is to extract classical information from a quantum prover $\widetilde{P}$ in a classical protocol, whereas our goal is to extract \emph{quantum information} from a quantum prover $\widetilde{P}$ in a quantum protocol.\footnote{Even when we instantiate our quantum succinct argument with a classical PCP, we will still need to extract quantum information to invoke security of the tree commitment.} In our setting it doesn't make sense to \emph{measure} $\widetilde{P}$'s response, but our swap-binding definition strongly suggests a quantum analogue: swap out $\widetilde{P}$'s opened message and replace it with a junk $\ket{0}$ state!

While this idea may seem promising, it introduces a fundamental problem that was not present in the classical setting. By swapping $\widetilde{P}$'s message with junk, we have implicitly forced the adversary to \emph{forget} part of its committed PCP. As a result, repairing the adversary is now \emph{information-theoretically impossible}: if we could repair the adversary, the extractor would eventually generate multiple copies of the quantum PCP.

\paragraph{The swap-augmented prover.} Fortunately, there is still a way to use \cite{FOCS:CMSZ21}-style state repair. To explain our idea, suppose that the extractor has swapped a single index $i$ of the prover's PCP onto some internal register $\RegM'_i$. Our idea is to define the following \emph{swap-augmented prover}:

\begin{itemize}
    \item[] Let $\RegC$ denote the register holding the (root of the) tree commitment. 
    \item The state of the swap-augmented prover consists of $\widetilde{P}$'s state augmented with the registers $(\RegM'_i, \RegC)$.
    \item To run the swap-augmented prover on a set of positions $S$:
    \begin{enumerate}
        \item First, run the real prover $\widetilde{P}$ on $S$ to generate a response on some register $\RegZ$. If $i \not\in S$, we are done.
        \item If $i \in S$:
        \begin{enumerate}
            \item\label[step]{step:to-recover} Use $(\RegC,\RegZ)$ to recover the message (corresponding to indices $S$) in the clear.
            \item Next, apply $\SWAP_{\RegM_i,\RegM'_i}$ where $\RegM_i$ is the register containing the message opened by the prover for index $i$.
            \item Apply the inverse of~\cref{step:to-recover}.
        \end{enumerate}
    \end{enumerate}
\end{itemize}

Roughly speaking, the swap-augmented prover is defined to output the responses that the prover \emph{would have given} if the swap had not occurred. More generally, at every stage in the extraction procedure, we define a swap-augmented prover that makes use of all registers that have been swapped out so far.

At the beginning of the extraction, the swap-augmented prover is simply the original malicious prover $\widetilde{P}$, which convinces the verifier to accept with non-negligible probability $p$. Then after each query, instead of repairing $\widetilde{P}$ itself (which is impossible), our extractor will \emph{run state repair on the swap-augmented prover}. At first glance, this may seem somewhat strange: unlike~\cite{FOCS:CMSZ21}, which only runs state repair on $\widetilde{P}$'s state, we give the state repair procedure the additional freedom to modify the extracted PCP itself. Nevertheless, we show the extractor works by arguing that there is eventually a point where:

\begin{itemize}
    \item with very high probability, the swap-augmented prover answers any PCP query using only the augmented registers $\{\RegM_i'\}$, and
    \item the swap-augmented prover still has success probability $\approx p$.
\end{itemize} 
If this happens, then the registers $\{\RegM'_i\}_i$ must contain a convincing PCP!

\paragraph{The oracle security lemma.} 

It turns out to be quite subtle to argue that swap-binding security can actually be invoked in our extraction procedure. In~\cite{FOCS:CMSZ21}, collapse binding is used to argue that whenever the prover produces a superposition of valid responses, we can undetectably measure it and copy it. In our setting, we would like to argue that a swap operation can be applied undetectably.

However in a quantum commitment scheme, the adversarial committer does not have access to the commitment register after sending the commitment. But our extractor explicitly needs access to the commitment (for instance, the extractor must be able to run swap-augmented prover, which makes use of the commitment to generate its responses). Thus, it is completely unclear how we can argue security while running this extractor.

Our solution is the following:
\begin{itemize}
    \item First, we show that swap binding generically implies a stronger security property we call \emph{oracle swap binding}. Recall that in the swap-binding security experiment, the adversary's message in the $\Com^\dagger$ basis is either left alone on its original message register $\RegM^{(0)}$ or swapped onto a challenger register $\RegM^{(1)}$. What the adversary actually sees is the $\RegD$ register in the original basis.
    
    We define an oracle swap-binding experiment that gives the adversary additional power. In particular, it can access an oracle $\mathcal{O}$ that performs any efficient operation on $\RegM^{(b)}$ in the $\Com^\dagger$ basis, where $b$ is the challenge bit. In other words, the operation is performed on the register that contains the adversary's originally committed message. The adversary's goal in this experiment is the same as before, which is to guess $b$.
    
    It is not a priori obvious that swap binding implies oracle swap binding, but we prove this in~\cref{sec:oracle-security}. 
    
    \item We show that the swap operations performed in our extractor are undetectable to the adversary by reduction to oracle swap binding.
    
    Specifically, we prove that the extractor's access to the $\RegC$ register can be implemented with the oracle provided in this security game. At a high level, this oracle enables the extractor to (a) check whether an opening is valid and (b) check if the opened message passes PCP checks. These functionalities turn out to be sufficient to implement the~\cite{FOCS:CMSZ21} repair procedure given the oracle but not the $\RegC$ register, which is what allows us to appeal to oracle swap-binding.
\end{itemize}

\subsection{Related works}
\label{subsec:related}

\paragraph{Comparison to~\cite{ARXIV:CheMov21}.} The quantum succinct argument protocol we analyze is based on a proposal from Chen and Movassagh~\cite{ARXIV:CheMov21}, who were the first to realize that a quantum-communication analogue of Kilian's protocol was even possible. The syntax for our tree commitments is similar to the syntax proposed in their work (modulo some changes to handle generic succinct QSCs).

However,~\cite{ARXIV:CheMov21} did not prove any security guarantees for their protocol, nor did they define the intermediate primitive of succinct QSCs. Instead they conjectured (but did not prove) that their protocol is secure when the commitments are modeled as \emph{Haar random oracles}. In this work, we give a full proof of security \emph{in the plain model} under very mild cryptographic assumptions. We additionally construct and analyze other protocols that do not rely on the quantum PCP conjecture (e.g., when our succinct argument template is instantiated with classical PCPs, we obtain a protocol that requires fewer messages and weaker assumptions than Kilian's protocol).

\paragraph{Computational soundness of~\cite{FOCS:BroGri20}.} The quantum sigma protocol we analyze is essentially the same as the one in~\cite{FOCS:BroGri20}. However, the proof that the protocol is computationally sound (in fact, an argument of knowledge) is new to this work. While~\cite{FOCS:BroGri20} originally proposed this claim, a recent version of their work~\cite{SIAM:BroGri22} retracts it since it was missing a rewinding-based analysis, which is essential
in the setting of computational soundness. In~\cref{sec:quantum-sigma}, we prove computational soundness using our new rewinding techniques. As an additional contribution, we also place their protocol in a general framework using our new abstraction of hiding and binding QSCs.

\paragraph{Comparison to \cite{FOCS:Mahadev18a}.}
\cite{FOCS:Mahadev18a} constructs a ``weak quantum state commitment'' with classical communication and a classical receiver. As Mahadev explains, her protocol is only a ``weak'' commitment because the commit phase of does not actually bind the prover to a fixed quantum state. Instead, she shows that the commitment \emph{together with any malicious opening attack} fixes a state. While Mahadev's construction suffices for her applications, it is not a commitment in the typical sense and her techniques do not directly yield QSCs.

\paragraph{Prior work on quantum \emph{bit} commitments.} The basic syntax for quantum state commitments (QSCs) is similar to the ``canonical form'' described in prior work of Yan~\cite{AC:Yan22}. However, Yan only considers commitments to \emph{single-bit classical messages}, whereas we consider commitments to general quantum states. Even if we restrict to commitments to classical messages, there is one syntactic difference between our form and~\cite{AC:Yan22}: Yan's ``canonical form'' assumes that the bit commitment scheme is specified by two circuits $Q_0$ and $Q_1$ and the committed message is revealed \emph{in the clear} in the opening phase.

It was also well known from prior work on quantum bit commitments that quantum communication enables removing interaction~\cite{KO09,KO11,YWLQ15,TCC:BitBra21,AC:Yan22}. The non-interactive compiler we describe in~\cref{sec:non-interactive}) follows from very similar ideas and is included for the sake of completeness.

\paragraph{Other succinct arguments for $\QMA$.} \cite{TCC:ChiChuYam20,C:BKLMMVVY22} construct certain variants of succinct arguments for $\QMA$ with classical communication. However, both works rely on extremely heavy cryptographic hammers including post-quantum indistinguishability obfuscation (for which there are only heuristic candidates). Our succinct argument for $\QMA$ is significantly simpler and can be instantiated from far weaker cryptographic assumptions at the cost of relying on the quantum PCP conjecture.

\paragraph{Authentication of quantum messages.} Our hiding-binding duality is reminiscent of a well-known result of~\cite{FOCS:BCGST02}, which showed that any secure authentication scheme for quantum messages must also encrypt the message. However, there are several major differences between our results and~\cite{FOCS:BCGST02}:
\begin{itemize}
    \item While binding and authentication both consider adversaries that attempt to change some underlying message, the two notions capture different threat models. Binding guarantees security against the sender (who prepares the initial messages), whereas authentication guarantees security against an adversary who intercepts communication between the sender and receiver. 
    \item There is no ``duality'' in the setting of authentication. In particular, authentication implies encryption, but \emph{encryption does not imply authentication}. 
    \item \cite{FOCS:BCGST02} only consider statistical security, whereas our definitions apply to both the computational and statistical settings.
\end{itemize}

In fact, our work highlights a potential limitation of the~\cite{FOCS:BCGST02} definition: their definition does not say what happens to entangled messages.\footnote{It is claimed on~\cite[Page~1]{FOCS:BCGST02} that their definition ``implies security for mixed or entangled states,'' but what they mean is that the mixed state of the authenticated message is preserved (see~\cite[Appendix~B]{FOCS:BCGST02}). This does not immediately rule out the possibility we raise here, in which the mixed state is preserved but the entanglement is broken.} For instance, suppose a sender prepares an entangled state on registers $(\RegA, \RegB)$ and authenticates $\RegB$. Is the receiver guaranteed to obtain a state that is still entangled with $\RegA$? Their definition for mixed states~\cite[Appendix~B]{FOCS:BCGST02} guarantees that if the receiver accepts, the message it obtains is negligibly close to the sender's original message (i.e., fidelity $1- \negl(\lambda)$). However, it does not say that this message remains entangled with $\RegA$. While it is almost certainly true that the constructions in~\cite{FOCS:BCGST02} preserve entanglement in this manner, it is not clear how to deduce this from the definition.

In our setting, proving such guarantees is made possible by the fact that our definition composes in parallel. In slightly more detail, if a sender commits to $\RegB$ and leaves $\RegA$ untouched, we can view this as a parallel composition of a trivially binding (identity) commitment to $\RegA$ and a commitment to $\RegB$. The security of this larger commitment ensures that entanglement is preserved.

\newpage

\section{Preliminaries}

\paragraph{Notation.}

The security parameter is written as $\lambda$. A function $f: \mathbb{N} \rightarrow [0,1]$ is \emph{negligible} (denoted $f(\lambda) = \negl(\lambda)$) if it decreases faster than the inverse of any polynomial. A probability is \emph{overwhelming} if is at least $1-\negl(\secp)$ for a negligible function $\negl(\secp)$. For any nonzero $n \in \mathbb{N}$, let $[n] = \{1,2,\dots,n\}$ and let $[0]$ be the empty set. For a set $R$, let $r \leftarrow R$ be a uniformly random sample from $R$.

\paragraph{Quantum information preliminaries.}

Let $\cH$ be a finite-dimensional Hilbert space. Let $\bfS(\cH)$ be the space of all Hermitian operators on $\cH$. Let $\bfD(\cH)$ be set of all $\brho \in \bfS(\cH)$ with $\Tr(\brho) = 1$. A \emph{pure quantum state} is a unit vector $\ket{\psi} \in \cH$. A \emph{density matrix} (and mixed quantum state) is an operator $\brho \in \bfD(\cH)$. For any vector $\ket{\phi} \in \cH$, we define its norm $\norm{\ket{\phi}} \coloneqq \sqrt{\braket{\phi}}$. For an operator $M$ on $\cH$, we define its operator norm as 
\[
    \norm{M}_{op} \coloneqq \max_{\ket{\psi} : \norm{ \ket{\psi} }=1} \norm{M\ket{\psi}} .
\]

We use the term quantum register to refer to a collection of qubits that we wish to treat as a single unit. These will be denoted with uppercase sans serif font, e.g., $\RegA,\RegB,\RegC$. Each register is associated with a finite-dimensional Hilbert space, denoted by writing the same letter in uppercase calligraphic font, e.g., $\cA,\cB,\cC$ are the spaces corresponding to registers $\RegA,\RegB,\RegC$. For an $n$-qubit register $\RegA$, we sometimes use the shorthand $\ket{0}$ to denote the all-zero state $\ket{0^n}$ on $\RegA$.

An \emph{observable} is represented by a Hermitian operator $O$ on $\RegH$. When $O^2=\Id$ (equivalently, $O$ has eigenvalues in $\{-1,1\}$), $O$ is also a \emph{binary observable}. For binary observables $O$ we define the projector onto its $+1$ eigenspace $O^+$ and the projector onto its $-1$ eigenspace $O^-$ so that $O = O^+ - O^-$. For any projector $\Pi$, we define a corresponding binary projective measurement $\{\Pi, \Id-\Pi\}$ and say that the measurement \emph{accepts} if the outcome is $\Pi$; otherwise, we say that the measurement rejects.

General (non-unitary) evolution of a quantum state can be represented via a completely-positive trace-preserving (CPTP) map $\Phi : \bfS(\cH) \rightarrow \bfS(\cH')$. We define the diamond norm~\cite{Kitaev97,KSV02} of any CPTP map $\Phi : \bfS(\cH) \rightarrow \bfS(\cH')$ to be
\[ \norm{\Phi}_{\diamond} \coloneqq \max_{\brho \in \bfD(\cH \otimes \cX)}{\norm{(\Phi \otimes \mathbb{I}_{\bfS(\cX)})(\brho)}_1}\]
where $\cX$ is any Hilbert space with the same dimension as $\cH$, and $\norm{\cdot}_1$ is the Schatten 1-norm.

We occasionally denote the action of a unitary $U$ on a mixed state $\brho$ by $U(\brho) := U \brho U^\dagger$.

\paragraph{Computational indistinguishability.} The following is taken verbatim from~\cite{C:BKLMMVVY22}. Two quantum state ensembles $\{\brho_0^{(\secp)}, \brho_1^{(\secp)}\}_\secp$ are said to be \emph{computationally indistinguishable} if for every non-uniform QPT algorithm $A = \{A^{(\secp)}, \brho^{(\secp)} \}$ (that outputs a bit $b$), we have that 
\[\Big|  \mathop{\mathbb{E}}\left[A^{(\secp)}(\brho^{(\secp)},\brho^{(\secp)}_0)\right] -  \mathop{\mathbb{E}}\left[A^{(\secp)}(\brho^{(\secp)},\brho^{(\secp)}_1)\right]\Big| = \negl(\secp).
\]

Equivalently, $\{\brho_0^{(\secp)}, \brho_1^{(\secp)}\}_\secp$ are computationally indistinguishable if for every efficiently computable non-uniform binary observable ($R, \bsigma)$, we have that 
\[\Big|\Tr(R (\brho_0 \tensor \bsigma) ) - \Tr(R (\brho_1 \tensor \bsigma)) \Big| = \negl(\secp).
\]
We will occasionally use the notation $\brho_0 \approx_c \brho_1$ to denote computational indistinguishability of $\{\brho_0^{(\secp)}, \brho_1^{(\secp)}\}_\secp$. When $\brho_0$ and $\brho_1$ are statistically indistinguishable, i.e. $\norm{\brho_0 - \brho_1}_1 = \negl(\lambda)$, we write $\brho_0 \approx_s \brho_1$.

\newpage

\section{Defining quantum state commitments (QSCs)}
\label{sec:definitions}

\subsection{Syntax}

We describe the syntax for non-interactive quantum state commitments. 

\begin{definition}[Non-interactive QSC syntax] \label{def:noninteractive_qsc}
A non-interactive quantum state commitment $\QSC$ is specified by a unitary $\Com_{\QSC}$ on $n+\lambda$ qubits. 
\begin{itemize}
    \item (Commitment phase) To commit to a $n$-qubit state in register $\RegM$, the sender initializes register $\RegW$ to the $\ket*{0^{\lambda}}$ state and applies $\Com_{\QSC}$ to $(\RegM,\RegW)$. The resulting state is divided into registers $(\RegC,\RegD)$. It sends $\RegC$ to the receiver.
    \item (Opening phase) The sender decommits by sending $\RegD$. The receiver applies $\Com_{\QSC}^\dagger$ to the pair $(\RegC, \RegD)$, obtaining $(\RegM, \RegW)$. It verifies the decommitment by measuring $\RegW$ in the standard basis and checking that the outcome is $\ket*{0^{\lambda}}$. If verification accepts, the receiver interprets the state on $\RegM$ as the committed message.
\end{itemize}
\end{definition}
For convenience, we will often write $\Com$ in place of $\Com_{\QSC}$. 

\paragraph{What about interactive commitments?} Our swap-binding definition (\cref{def:swap-binding}) will only apply to non-interactive QSCs. This is justified in part by the fact that any interactive QSC can be made non-interactive via a simple transformation that was previously known in the setting of quantum \emph{bit} commitments (see~\cref{sec:non-interactive} for full details). Nevertheless, we believe it could be useful in some settings to have a definition that directly handles the interactive case. Towards this end, we state an alternative definition of binding that we call ``Pauli binding'' in~\cref{subsec:pauli-binding}. Pauli binding naturally handles interactive QSCs and is equivalent to swap-binding for non-interactive QSCs. We choose to present swap binding as our main definition since we found it significantly easier to work with.

\subsection{The swap binding definition}
\label{sec:swap-binding-def}

We now present our new swap binding security definition. For two quantum registers of the same size, $\RegM$ and $\RegM'$, $\SWAP[\RegM, \RegM']$ denotes the unitary that maps $\ket{\psi}_{\RegM} \ket{\phi}_{\RegM'}$ to $\ket{\phi}_{\RegM} \ket{\psi}_{\RegM'}$.

\begin{definition}[Swap binding] \label{def:swap-binding}
For a non-interactive quantum commitment scheme $\Com$, an interactive adversary $A$, a challenge bit $b \in \{0,1\}$, and a security parameter $\lambda$, define the swap binding security experiment $\emph{\texttt{SwapBindExpt}}_{\QSC,A,b}(\lambda)$ as follows.

\indent $\emph{\texttt{SwapBindExpt}}_{\QSC,A,b}(\lambda)$:
\begin{enumerate}
    \item The adversary $A$ (acting as a malicious sender) sends commitment register $\RegC$ and decommitment register $\RegD$ to the challenger (acting as an honest receiver).
    \item The challenger applies $\Com^\dagger$ and measures $\{\ketbra{0}, \Id - \ketbra{0}\}$ on $\RegW$; if the measurement rejects, then it aborts and outputs a random $b' \gets \{0,1\}$.
    \item\label[step]{step:swap-bind-expt-challenge} The challenger does the following: 
    \begin{itemize}
        \item if $b = 0$, the challenger simply applies $\Com$, and sends the $\RegD$ register to $A$;
        \item if $b = 1$, the challenger replaces the contents of the $\RegM$ register with $\ket{0}$ (i.e., it initializes a fresh register $\RegE = \ket{0}$ of the same dimension as $\RegM$ and applies $\SWAP[\RegM, \RegE]$). It then applies $\Com$ and sends the $\RegD$ register to $A$.
    \end{itemize}
    \item The output of the experiment is $b' \gets A$.
\end{enumerate}

$\Com$ is computationally (resp. statistically) swap binding if there exists a negligible function $\mu(\lambda)$ such that for all polynomial-time (resp. unbounded-time) quantum interactive adversaries $A$,
\[ \Pr_{b \gets \{0,1\}} [\emph{\texttt{SwapBindExpt}}_{\QSC,A,b}(\lambda)=b] \leq \frac{1}{2} + \mu(\lambda). \]
\end{definition}

\subsection{Additional definitions: hiding and succinctness}

We now define hiding and succinctness for QSCs. While the notion of hiding for \emph{commitments} to quantum messages had not been defined in the literature before, it is easy to write down a definition based on existing definitions for \emph{encryption} of quantum messages. The following definition is essentially the same as the definition of indistinguishability under chosen-plaintext-attacks (q-IND-CPA) given by~\cite{C:BroJef15}.

\begin{definition}[Hiding for quantum state commitments] \label{def:quantum-hiding}
For a non-interactive quantum commitment scheme $\Com$, an interactive adversary $A$, a challenge bit $b \in \{0,1\}$, and a security parameter $\lambda$, define a security experiment $\emph{\texttt{HideExpt}}_{\QSC,A,b}(\lambda)$ as follows.

\indent $\emph{\texttt{HideExpt}}_{\QSC,A,b}(\lambda)$:
\begin{enumerate}
    \item\label[step]{step:qsc-hiding-expt-measurement} $A$ prepares a message $\RegM$ and sends it to the challenger. 
    \item\label[step]{step:hiding-commit} Next, the challenger (acting as an honest sender) performs the commit phase with $A$ (acting as the receiver) using the quantum state in $\RegM$ if $b=0$, or $\ket{0}$ if $b=1$.
    \item The output of the experiment is $b' \gets A$. 
\end{enumerate}

$\Com$ is computationally (resp. statistically) hiding if there exists a negligible function $\mu(\lambda)$ such that for all polynomial-time (resp. unbounded-time) quantum interactive adversaries $A$,
\[ \Pr_{b \gets \{0,1\}} [\emph{\texttt{HideExpt}}_{\QSC,A,b}(\lambda)=b] \leq \frac{1}{2} + \mu(\lambda). \]
\end{definition}

\begin{definition}[Succinct non-interactive QSCs] \label{def:succinct-quantum-commitments}
We say that a non-interactive QSC is \emph{succinct} if the commitment is shorter than the message, i.e., the size of register $\RegC$ is smaller than $\RegM$.
\end{definition}

We remark that our definition of succinct QSCs does not require hiding. This is analogous to the definitions of succinct classical commitments (i.e., collision-resistant hash functions, or their post-quantum analogue, collapsing hash functions~\cite{EC:Unruh16}), which only require binding and succinctness but not hiding. Moreover, hiding is not necessary for our primary application to quantum succinct arguments (\cref{sec:quantum-kilian}).

\subsection{Hiding-binding duality}
\label{subsec:duality}
We observe that the hiding and binding security experiments are nearly identical for non-interactive QSCs. In particular, we can write both of the experiments as follows.

\begin{property}[Hiding-binding duality for QSCs]
\label{property:hiding-binding-duality}
    For a non-interactive quantum commitment scheme $\Com$, an interactive adversary $A$, a challenge bit $b \in \{0,1\}$, and a security parameter $\lambda$, the binding and hiding security experiments can be written as follows.
    \begin{enumerate}
        \item The adversary $A$ sends $(\RegC, \RegD)$ to the challenger.
        \item The challenger applies $\Com^{\dagger}$ and measures $\{\ketbra{0}_{\RegW}, \Id - \ketbra{0}_{\RegW}\}$. If the measurement rejects, abort the experiment (i.e., output a random bit $b'$). Otherwise:
        \begin{enumerate}
            \item The challenger initializes a register $\RegE$ to $\ket{0}$ and applies $\SWAP[\RegM, \RegE]^b$, then $\Com$.
            \item For binding, the challenger returns $\RegD$; for hiding, the challenger returns $\RegC$.
        \end{enumerate}
        \item The output of the experiment is $b' \gets A$.
    \end{enumerate}
\end{property}
\begin{proof}
The binding game is exactly equivalent to \cref{def:swap-binding}. The hiding game is equivalent to \cref{def:quantum-hiding}, except that here the adversary sends $(\RegC, \RegD)$ whereas the adversary in \cref{def:quantum-hiding} sends $\RegM$ (and there is no $\RegW$ measurement). However, these notions of hiding are equivalent by essentially the same argument as \cite[Lemma 14]{EC:Unruh16}. The proof follows by observing that sending $(\RegC, \RegD)$ registers containing an invalid commitment-decommitment pair cannot help the adversary; if the adversary sends $(\RegC, \RegD)$ registers containing a valid commitment-decommitment pair, this is equivalent to sending $\RegM$ (in the $\Com$ basis).
\end{proof}

\cref{property:hiding-binding-duality} motivates the definition of a \emph{dual scheme}, where the roles of the $\RegC$ and $\RegD$ registers are reversed.
\begin{definition}[Dual commitments]
For a non-interactive commitment scheme $\Com$ with commitment register $\RegC$ and decommitment register $\RegD$, the \emph{dual scheme} is the commitment scheme with the same commitment unitary $\Com$, but $\RegD$ is used as the commitment and $\RegC$ is used as the decommitment.
\end{definition}

As an immediate consequence of \cref{property:hiding-binding-duality}, we obtain the following corollary.
\begin{corollary}
    Let $\Com$ be a computationally (resp. statistical) swap-binding and statistically (resp. computationally) hiding non-interactive commitment scheme. Then the dual commitment scheme $\Com$ is a computationally (resp. statistical) hiding and statistically (resp. computationally) swap-binding non-interactive commitment scheme.
\end{corollary}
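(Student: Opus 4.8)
The plan is to observe that Property~\ref{property:hiding-binding-duality} already does essentially all the work: it exhibits a single ``master'' experiment that specializes to the binding game when the challenger returns $\RegD$ and to the hiding game when the challenger returns $\RegC$, with everything else (the adversary's preparation of $(\RegC,\RegD)$, the $\Com^\dagger$ application, the $\RegW$ measurement, the conditional $\SWAP[\RegM,\RegE]^b$ followed by $\Com$) identical in both cases. So I would first recall this, and then simply note that passing from $\Com$ to its dual scheme $\Com$ literally swaps the roles of $\RegC$ and $\RegD$: what the dual scheme calls ``return the commitment'' is what the original scheme calls ``return $\RegD$,'' and vice versa. Hence the dual scheme's hiding experiment \emph{is} the original scheme's binding experiment (as the same unitary-level process, up to relabeling which register is named $\RegC$ versus $\RegD$), and the dual scheme's binding experiment is the original scheme's hiding experiment.

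Concretely, I would write: fix an adversary $A$ against, say, the hiding property of the dual scheme. By Property~\ref{property:hiding-binding-duality} applied to the dual scheme, $A$'s winning probability in $\texttt{HideExpt}_{\mathsf{dual},A,b}(\lambda)$ equals its winning probability in the master experiment where the challenger returns the dual scheme's commitment register --- but that register is the original scheme's $\RegD$, so this is exactly $\Pr_{b}[\texttt{SwapBindExpt}_{\QSC,A,b}(\lambda)=b]$. If $\Com$ is statistically (resp. computationally) swap binding, this is at most $1/2 + \negl(\lambda)$ for all unbounded (resp. polynomial-time) $A$, which is precisely the statement that the dual scheme is statistically (resp. computationally) hiding. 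The argument for the other half --- swap binding of the dual scheme from hiding of $\Com$ --- is symmetric, invoking Property~\ref{property:hiding-binding-duality} in the direction where the returned register is the original $\RegC$. One should also double-check the one slightly non-syntactic point: Property~\ref{property:hiding-binding-duality} already absorbs the equivalence (via the \cite[Lemma 14]{EC:Unruh16}-style argument) between the hiding game where $A$ sends $\RegM$ directly and the version where $A$ sends $(\RegC,\RegD)$, so no new such argument is needed here; it suffices to cite the property.

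I do not expect a serious obstacle, since this corollary is really just a bookkeeping consequence of the duality property; the only thing to be careful about is the quantifier/complexity-class pairing --- statistical binding pairs with statistical hiding and computational binding pairs with computational hiding, because the class of allowed adversaries ($\poly$-time versus unbounded) is the \emph{same} experiment-to-experiment, and it is only the security guarantee being imported that flips its name from ``binding'' to ``hiding.'' I would state the proof in two or three sentences, once for each direction, explicitly pointing at Property~\ref{property:hiding-binding-duality} as the justification that the two experiments coincide at the process level after relabeling $\RegC \leftrightarrow \RegD$, and noting that succinctness/dimension constraints play no role in this particular corollary.
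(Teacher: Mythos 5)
Your proposal is correct and is essentially the paper's own argument made explicit: the paper gives no proof for this corollary, stating only that it is ``an immediate consequence of Property~\ref{property:hiding-binding-duality},'' and your observation---that the dual scheme's hiding (resp.\ binding) experiment is literally the original scheme's binding (resp.\ hiding) experiment after relabeling $\RegC \leftrightarrow \RegD$, with the complexity class of the adversary unchanged---is precisely the intended bookkeeping.
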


\subsubsection{Discussion: is there a hiding-binding duality for QBCs?}
\label{subsubsec:qbc-duality}
As it turns out, a similar duality also exists for quantum bit(string) commitments (QBCs), i.e., quantum-communication commitments to \emph{classical messages}. Previously,~\cite{AC:Yan22} showed that it was possible to transform any statistically binding, computationally hiding QBC into a computationally binding, statistically hiding QBC (or vice versa). However, using the insights developed above for QSCs, we can make this connection between hiding and binding significantly simpler.

For the following discussion, we consider QBCs satisfying our syntax for non-interactive QSCs. This is essentially Yan's ``canonical form''~\cite{AC:Yan22} with a few syntactic differences (most notably, we allow for arbitrary-length messages, not just bits). 

Consider the following version of Unruh's collapse-binding definition~\cite{EC:Unruh16} for \emph{classical messages} (restated for QBCs):
\begin{enumerate}
    \item[] Computational (resp. statistical) security requires that any efficient (resp. inefficient) $A$ guesses $b$ with probability at most $1/2 + \negl(\lambda)$ in the following experiment:
    \item The adversary $A$ sends $(\RegC,\RegD)$ to the challenger. 
    \item The challenger applies $\Com^\dagger$ and measures $\{\ketbra{0}_{\RegW}, \Id-\ketbra{0}_{\RegW}\}$. If the measurement rejects, abort the experiment (i.e., output a random bit $b'$). Otherwise:
    \begin{enumerate}
        \item The challenger samples a random bitstring $s$ (whose length equals the number of qubits of $\RegM$) and applies $Z^s$ to $\RegM$. 
        \item Next, the challenger applies $\Com$ and returns $\RegD$ to the adversary.
    \end{enumerate}
    \item The adversary wins the game if it guesses $b$.
\end{enumerate}

Security requires that it is hard to guess $b$ with advantage noticeably greater than $1/2$. How does this security experiment relate to hiding? Suppose we modify the collapse-binding experiment as follows (differences in \textcolor{red}{red}):
\begin{enumerate}
    \item[] Computational (resp. statistical) security requires that any efficient (resp. inefficient) $A$ guesses $b$ with probability at most $1/2 + \negl(\lambda)$ in the following experiment:
    \item The adversary $A$ sends $(\RegC,\RegD)$ to the challenger. 
    \item The challenger applies $\Com^\dagger$ and measures $\{\ketbra{0}_{\RegW}, \Id-\ketbra{0}_{\RegW}\}$. If the measurement rejects, abort the experiment (i.e., output a random bit $b'$). Otherwise:
    \begin{enumerate}
        \item The challenger samples a random bitstring $s$ (whose length equals the number of qubits of $\RegM$) and applies \textcolor{red}{$X^s$} to $\RegM$. 
        \item Next, the challenger applies $\Com$ and returns \textcolor{red}{$\RegC$} to the adversary.
    \end{enumerate}
    \item The adversary wins the game if it guesses $b$.
\end{enumerate}
It is not hard to see that this definition, which we call ``$X$-hiding,'' implies the commitment hides any \emph{classical message}. In particular, if the adversary sends any classical message $\ket{m}$ (i.e., by initializing $(\RegC,\RegD)$ to $\Com( \ket{m}\ket{0}_\RegW)$) then the game above amounts to distinguishing a commitment to $m$ from a commitment to a uniformly random string. However, $X$-hiding is stronger than ordinary hiding for classical messages, since the adversary may choose a superposition of classical messages $\ket{m}$.

This connection between $X$-hiding and collapse-binding (which we will call $Z$-binding here) leads to a simple dual scheme for any QBC satisfying our non-interactive commitment syntax. In particular, suppose $\Com$ is statistically $Z$-binding and computationally $X$-hiding. Now consider the following ``dual'' commitment scheme:
\begin{itemize}
    \item To commit to an $n$-bit classical string $\ket{m}$, apply $H^{\otimes n}$ followed by $\Com$. Send the decommitment register $\RegD$ as the commitment.
    \item To decommit, send the commitment $\RegC$. To verify, apply $H^{\otimes n}\Com^\dagger$ and recover the $\ket{m}$ from the $\RegM$ register.
\end{itemize}
It is immediate from these definitions that this scheme scheme is \emph{computationally} $Z$-binding and statistically $X$-hiding!

\subsection{Parallel composition}
\label{subsec:parallel}
In this section, we show that our binding definition for quantum messages composes in parallel. 

\begin{theorem}[Binding composes in parallel] \label{theorem:parallel_composition}
Let $\{ \Com_i \}_{i \in [n]}$ be a set of $n$ non-interactive binding commitment schemes with message and ancillary registers $(\RegM_i, \RegW_i)_{i \in [n]}$. Then the parallel composition of the commitment schemes $\widetilde{\Com} \coloneqq \bigotimes \Com_i$ with message register $\RegM \coloneqq (\RegM_i)_{i \in [n]}$ and ancilla register $\RegW \coloneqq (\RegW_i)_{i \in [n]}$ is also a binding commitment scheme. 
\end{theorem}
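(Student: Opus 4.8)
The plan is a hybrid argument over the $n$ components, reducing each hybrid step to the swap binding of a single $\Com_i$. As a first step I would rewrite the goal: by the usual elementary manipulation (an adversary's advantage in $\texttt{SwapBindExpt}$ equals $\tfrac12$ the gap between $\Pr[b'=1]$ with challenge bit $1$ and with challenge bit $0$, and the sign is absorbed by flipping the adversary's output), it suffices to show that for every efficient (resp. unbounded) adversary $A$, the experiments $\texttt{SwapBindExpt}_{\widetilde\Com, A, 0}$ and $\texttt{SwapBindExpt}_{\widetilde\Com, A, 1}$ are computationally (resp. statistically) indistinguishable as distributions over the output bit $b'$. Since the registers $\RegM_i$ are pairwise disjoint, the single operation $\SWAP[\RegM, \RegE]$ applied by the $b=1$ challenger — with $\RegM = (\RegM_i)_{i\in[n]}$ and $\RegE = \ket0$ — is exactly the composition of the individual swaps $\SWAP[\RegM_i, \RegE_i]$ over all $i$. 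So we only need to interpolate between ``swap no $\RegM_i$'' and ``swap every $\RegM_i$''.

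To interpolate, I would define hybrid experiments $\Hyb_0, \dots, \Hyb_n$, where in $\Hyb_j$ the challenger receives $\big((\RegC_i)_i, (\RegD_i)_i\big)$ from $A$, applies $\widetilde\Com^\dagger = \bigotimes_i \Com_i^\dagger$, measures each $\RegW_i$ in the standard basis and aborts (outputting a uniform $b'$) if any outcome is nonzero, then applies $\SWAP[\RegM_i, \RegE_i]$ for every $i \le j$ (with fresh $\RegE_i = \ket0$), reapplies $\widetilde\Com$, and returns $\RegD = (\RegD_i)_i$ to $A$. Then $\Hyb_0 = \texttt{SwapBindExpt}_{\widetilde\Com, A, 0}$ and $\Hyb_n = \texttt{SwapBindExpt}_{\widetilde\Com, A, 1}$, so by the triangle inequality it suffices to bound $\big|\Pr[\Hyb_{j-1} = 1] - \Pr[\Hyb_j = 1]\big|$ for each $j \in [n]$.

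For the $j$-th step I would construct an adversary $B_j$ against the swap binding of $\Com_j$ that runs $A$ as a subroutine. $B_j$ receives $\big((\RegC_i)_i, (\RegD_i)_i\big)$ from $A$; for every $i \ne j$ it applies $\Com_i^\dagger$, measures $\RegW_i$ and halts with a uniform output if the outcome is nonzero, applies $\SWAP[\RegM_i, \RegE_i]$ if $i < j$, and reapplies $\Com_i$; it then sends $(\RegC_j, \RegD_j)$ to its own challenger, receives $\RegD_j$ back, hands the full $\RegD = (\RegD_i)_i$ to $A$, and outputs $A$'s guess. Because $\widetilde\Com^\dagger$ and all the $\RegW_i$ measurements act on pairwise disjoint wires they commute, and they all precede the $b$-dependent swaps; hence running $B_j$ against challenge bit $0$ reproduces exactly $\Hyb_{j-1}$ and against challenge bit $1$ reproduces exactly $\Hyb_j$ — including the ``abort $\Rightarrow$ uniform bit'' convention, which is triggered in both settings precisely when some $\RegW_i$ is nonzero (by $B_j$ for $i \ne j$, and by the $\Com_j$-challenger for $i = j$). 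Swap binding of $\Com_j$ then gives $\big|\Pr[\Hyb_{j-1} = 1] - \Pr[\Hyb_j = 1]\big| \le 2\mu_j(\lambda)$ for a negligible $\mu_j$, and summing over $j$ (using $n = \poly(\lambda)$) completes the argument. Since $B_j$ only performs a fixed number of extra $\Com_i^{(\dagger)}$, $\SWAP$, and measurement operations on polynomially many qubits, it is QPT whenever $A$ is, so this handles the computational case; the identical argument with unbounded $A$ and $B_j$ handles the statistical case.

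The one place requiring care is the bookkeeping around the $\RegW_i$ verification measurements and the abort rule: I need to confirm that $B_j$ pre-measuring the $i \ne j$ verification registers together with the $\Com_j$-challenger measuring $\RegW_j$ jointly realize the single ``abort if any $\RegW_i \ne 0^\lambda$'' event of the parallel experiment, and that these projections genuinely commute with, and can be pulled in front of, the $b$-dependent swaps — both of which hold because the measurements act on disjoint registers and occur before any swap. Everything else is routine: $B_j$ may legitimately apply $\Com_i$, $\Com_i^\dagger$, and $\SWAP$ to registers entangled with $A$'s internal workspace, and it always holds exactly the registers it must return to $A$.
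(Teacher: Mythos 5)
Your proposal is correct and follows essentially the same route as the paper's proof: a hybrid argument over $j=0,\dots,n$ where $\Hyb_j$ swaps out the first $j$ message registers, with each consecutive pair of hybrids bridged by a reduction to the swap-binding of $\Com_j$ that handles the other $n-1$ commitments locally. The only cosmetic difference is bookkeeping around the verification measurement: the paper's reduction applies the full $\widetilde\Com^\dagger$, measures all of $\RegW$ jointly, and recommits with $\widetilde\Com$ before forwarding $(\RegC_j,\RegD_j)$ (so the challenger's own $\RegW_j$ check passes deterministically), whereas you leave $\RegW_j$ untouched and let the challenger perform that check; both choices produce identical distributions because the per-index projections act on disjoint registers and commute.
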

\begin{proof}
    We will show by a sequence of $n+1$ hybrids that the parallel composition of the $n$ commitment schemes is binding. That is, we will show that an adversary cannot distinguish between decommitments for messages of their choice in all $n$ slots or for $\ket{0}$ in all $n$ slots. We define hybrids $H_j$ for $j \in \{0,1,\dots,n\}$ as follows.
    
    \noindent
    \begin{itemize}
        \item[] $\textbf{Hybrid}$ $H_j$:
        \begin{enumerate}
            \item The adversary sends $n$ commitment and decommitment registers $(\RegC_i, \RegD_i)_{i\in [n]}$ to the challenger. 
            \item The challenger then does the following:
            \begin{enumerate}
            \item Apply $\widetilde{\Com}^{\dagger}$. 
            \item Measure $\{\ketbra{0}_{\RegW} , \Id - \ketbra{0}_{\RegW}\}$. If the measurement accepts, then continue to the next step. Otherwise, the experiment aborts and a random bit $b' \leftarrow \{0,1\}$ is output. 
            \item Swap the messages in $\RegM_i$ for $i \leq j$ with $\ket{0}$.
            \item Apply $\widetilde{\Com}$.
        \end{enumerate}
        \item The challenger returns the decommitment $(\RegD_i)_{i \in [n]}$ to the adversary. 
        \end{enumerate}
    \end{itemize}
    
    Breaking binding for the commitment scheme $\widetilde{\Com}$ is equivalent to distinguishing between hybrids $H_0$ and $H_n$. Suppose that an adversary $A$ can distinguish hybrids $H_0$ and $H_n$ with advantage $\varepsilon$. Then for some $j \in [n]$, the adversary $A$ can distinguish hybrid $H_{j-1}$ from hybrid $H_j$ with advantage $\varepsilon / n$. We now use $A$ to construct an adversary $A'$ on the binding experiment for the commitment scheme $\Com_j$. The adversary $A'$ receives the registers $(\RegC_i, \RegD_i)_{i=1}^n$ from $A$, then does the following:
    \begin{enumerate}
        \item Apply $\widetilde{\Com}^{\dagger}$.
        \item Measure $\{\ketbra{0}_{\RegW} , \Id - \ketbra{0}_{\RegW}\}$. If the measurement accepts, then continue to the next step. Otherwise, output a random bit $b' \leftarrow \{0,1\}$.
        \item Swap the messages in $\RegM_i$ for $i \leq j - 1$ with $\ket{0}$.
        \item Apply $\widetilde{\Com}$. 
        \item Forward $(\RegC_j, \RegD_j)$ to the challenger, receive $\RegD_j$ from the challenger, and send $(\RegD_i)_{i=1}^n$ to $A$. 
        \item Output $b' \leftarrow A$.
    \end{enumerate}
    By construction of $A'$, the view of $A$ in which the challenger either does nothing or swaps out the $j$th message with $\ket{0}$ is identical to that of $H_{j-1}$ and $H_j$, respectively, so $A'$ wins the binding experiment for $\Com_j$ with advantage $\varepsilon / n$. 
\end{proof}

\newpage

\section{Constructions of QSCs}
\label{sec:constructions}

\subsection{Succinct QSCs from one-time quantum encryption}
\label{sec:succinct-constructions}
\begin{definition}[Quantum encryption syntax] 
    A quantum encryption scheme $\mathsf{QEnc}$ for quantum message space $\RegM$ and classical key space $\{0,1\}^{d(\lambda)}$ is specified by a family $\{U_k\}_{k \in \{0,1\}^{d(\lambda)}}$ of unitaries acting on $\RegM$, for each security parameter $\lambda$. We require that for any $k \in \{0,1\}^{d(\lambda)}$, there is a $\poly(\lambda)$-time classical procedure to generate a description of $U_k$. The encryption of $\ket{\psi} \in \RegM$ under secret key $k$ is $U_k \ket{\psi}$.
\end{definition}
To simplify notation, we will often drop the dependence on $\lambda$ and write $\QEnc = \{U_k\}_k$. We say that a quantum encryption scheme has \emph{short keys} if $d < n$. We define \emph{one-time security} for a quantum encryption scheme $\mathsf{QEnc}$ with the following security game:

\begin{definition}[One-time secure quantum encryption] \label{def:one-time-quantum-encryption}
For a quantum encryption scheme $\QEnc$, an interactive adversary $A$, a challenge bit $b \in \{0,1\}$, and a security parameter $\lambda$, define a security experiment $\emph{\texttt{QEncExpt}}_{\QSC,A,b}(\lambda)$ as follows.

\indent $\emph{\texttt{QEncExpt}}_{\QEnc,A,b}(\lambda)$:
\begin{enumerate}
    \item $A$ sends a quantum message $\RegM$ to the challenger. 
    \item The challenger samples a random $k \from \{0,1\}^{d(\lambda)}$ and does the following: 
        \begin{enumerate}
            \item If $b=0$, apply $U_k$ to $\RegM$ and send $\RegM$ to $A$. 
            \item If $b=1$, swap the contents of $\RegM$ with $\ket{0}$, apply $U_k$ to $\RegM$, and send $\RegM$ to $A$. 
        \end{enumerate}
    \item The output of the experiment is $b' \gets A$. 
\end{enumerate}
$\QEnc$ is one-time secure if there exists a negligible function $\mu(\lambda)$ such that for all polynomial-time quantum adversaries $A$,
\[ \Pr_{b \gets \{0,1\}} [\emph{\texttt{QEncExpt}}_{\QEnc,A,b}(\lambda)=b] \leq \frac{1}{2} + \mu(\lambda). \]
\end{definition}

Using any one-time secure quantum encryption scheme $\QEnc = \{U_k\}_k$, we can define a non-interactive QSC by the following commitment unitary:
\begin{equation}
\label{eq:qsc-from-pru}
    \Com = \sum_{k \in \{0,1\}^{d}} \left(\ketbra{k}{k} H^{\otimes d}\right)_\RegW \otimes (U_k)_\RegM,
\end{equation}
using $\RegC = \RegW$ as the commitment and $\RegD = \RegM$ as the decommitment. For example, the honest commitment to the pure state $\ket{\psi}_{\RegM}$ is
\[ 
    2^{-d/2} \sum_{k \in \{0,1\}^{d}} \ket{k}_{\RegC} \otimes (U_k \ket{\psi})_{\RegD}.
\]
If the encryption scheme has short keys, then the scheme is succinct. It is easy to show that the scheme is also binding.

\begin{theorem} \label{theorem:pru-construction}
If $\{U_k\}_k$ is a one-time secure quantum encryption scheme, then the QSC scheme $\Com$ defined by \cref{eq:qsc-from-pru} is binding.
\end{theorem}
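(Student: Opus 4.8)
The plan is to reduce swap binding of $\Com$ directly to one-time security of the underlying encryption scheme $\QEnc = \{U_k\}_k$. Given an adversary $A$ that wins $\texttt{SwapBindExpt}_{\QSC,A,b}(\lambda)$ with advantage $\mu(\lambda)$, I will construct an adversary $A'$ for $\texttt{QEncExpt}_{\QEnc,A',b}(\lambda)$ with exactly the same advantage. The whole proof is bookkeeping; the point is to identify what $A$ actually sees.

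The key observation is the following. After $A$ sends $(\RegC,\RegD)=(\RegW,\RegM)$ and the challenger applies $\Com^\dagger$ and the $\RegW$-measurement accepts, the global state has the form $\ket{0^d}_\RegW \otimes \ket{\phi}_{\RegM\RegB}$, where $\RegB$ is $A$'s side register (purifying its strategy). In the $b=0$ branch the challenger reapplies $\Com$: this replaces $\ket{0^d}_\RegW$ by $2^{-d/2}\sum_k\ket{k}_\RegW$, applies $U_k$ to $\RegM$ controlled on $k$, and then returns $\RegD=\RegM$ to $A$ while \emph{keeping} $\RegC=\RegW$ — the key register — forever. Tracing out $\RegC$, the state $A$ holds on $(\RegD,\RegB)$ is $2^{-d}\sum_k (U_k)_\RegD \ketbra{\phi}{\phi}_{\RegD\RegB}(U_k^\dagger)_\RegD$: a uniformly random, then-discarded key $k$ has been used to encrypt the $\RegM$-part of $\ket{\phi}$. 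In the $b=1$ branch the only change is that $\RegM$ is first swapped with a fresh $\ket{0}$ (the old contents going into a register the challenger never sends), so $A$ instead holds $\bigl(2^{-d}\sum_k U_k\ketbra{0}{0}U_k^\dagger\bigr)_\RegD \otimes (\Tr_\RegM\ketbra{\phi}{\phi}_{\RegM\RegB})_\RegB$. These are precisely the two states the $\texttt{QEncExpt}$ challenger produces, acting on the register $\RegM$, for challenge bits $0$ and $1$.

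Accordingly, $A'$ proceeds as follows: it internally runs $A$, receives $(\RegC,\RegD)$ from it (with $A$ keeping $\RegB$), applies $\Com^\dagger$, and measures $\{\ketbra{0^d}_\RegW, \Id-\ketbra{0^d}_\RegW\}$. If the measurement rejects, $A'$ sends a dummy $\ket{0}$ message to the $\texttt{QEncExpt}$ challenger, discards the reply, and outputs a uniformly random bit. If it accepts, $A'$ sends the $\RegM$-register to the challenger, receives it back, hands it to $A$ as $\RegD$, and outputs $A$'s guess. Crucially, $A'$ never reapplies $\Com$: the register returned by the $\texttt{QEncExpt}$ challenger already carries exactly the correct reduced state on $(\RegD,\RegB)$, which by the previous paragraph is $A$'s entire view. (This is the one place to be careful: if one instead tried to have $A'$ recompute a commitment by applying $\Com$ to the encrypted $\RegM$ together with $\ket{0^d}_\RegW$, one would pick up spurious $U_{k'}U_k$ factors and break the simulation.) Since the $\RegW$-measurement — and hence the accept probability — is independent of $b$, and since conditioned on accept $A$'s view in the simulation is identical to its view in $\texttt{SwapBindExpt}$ while conditioned on reject both experiments output a uniform bit, we get $\Pr_b[\texttt{QEncExpt}_{\QEnc,A',b}(\lambda)=b] = \Pr_b[\texttt{SwapBindExpt}_{\QSC,A,b}(\lambda)=b]$. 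Finally $A'$ is polynomial-time whenever $A$ is, because $\Com$ and $\Com^\dagger$ are efficiently implementable given the $\poly(\lambda)$-time procedure generating each $U_k$. The main obstacle is purely this bookkeeping — in particular recognizing that, because $A$ does nothing after receiving $\RegD$, its view is just the reduced density matrix on $(\RegD,\RegB)$, so the reduction can simply forward the $\texttt{QEncExpt}$ challenger's register without re-running the commitment unitary.
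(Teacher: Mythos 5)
Your proof is correct and takes essentially the same approach as the paper: compute the adversary's post-verification view on $(\RegD,\RegR)$, observe it is exactly the one-time encryption challenger's output (for the corresponding bit $b$), and invoke one-time security. The paper states this more tersely as a single chain $\brho^0_{\RegD,\RegR} = 2^{-d}\sum_k U_k \sigma_{\RegM,\RegR} U_k^\dagger \approx_c 2^{-d}\sum_k (U_k\ketbra{0}U_k^\dagger) \otimes \sigma_\RegR = \brho^1_{\RegD,\RegR}$, absorbing the abort case into the subnormalization of $\sigma_{\RegM,\RegR}$, whereas you make the reduction and abort handling explicit (including the correct observation that one must forward the $\texttt{QEncExpt}$ response directly rather than re-running $\Com$ on a fresh ancilla).
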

\begin{proof}
The proof is immediate from the definitions. Consider the binding security experiment for $\Com$ (\cref{def:swap-binding}). We denote the register containing the adversary's internal state by $\RegR$. Let $\brho_{\RegC,\RegD,\RegR}$ be the state of the experiment after the adversary sends $\RegD$ to the challenger, let $\brho_{\RegD,\RegR}^b$ be the state held by the adversary after it receives $\RegD$ back from the challenger in the $b \in \{0,1\}$ world of the binding experiment, and let $\sigma_{\RegM,\RegR} = \Tr_\RegC(\Pi \cdot \Com^\dagger \brho_{\RegC,\RegD,\RegR} \Com \cdot \Pi)$ be the sub-normalized state on $(\RegM, \RegR)$ after a successful opening. By the one-time security of the quantum encryption scheme,

\begin{align*}
    \brho_{\RegD,\RegR}^0 &= 2^{-d} \sum_{k\in\{0,1\}^{d}} (U_k)_\RegM \sigma_{\RegM,\RegR} (U_k^\dagger)_\RegM \\
    &\approx_c 2^{-d} \sum_{k \in \{0,1\}^d} \left( U_k \ketbra{0}_{\RegM} U_k^{\dagger} \right)_{\RegM} \otimes \sigma_{\RegR} \\
    &= \brho_{\RegD,\RegR}^1. \qedhere
\end{align*}
\end{proof}

\begin{claim}[One-time secure quantum encryption with short keys from OWFs]
One-time secure quantum encryption with short keys exists assuming the existence of any post-quantum one-way function.
\end{claim}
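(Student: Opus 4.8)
The plan is to instantiate the quantum one-time pad with pseudorandom Pauli keys. Fix a message length $n = n(\lambda)$ (for the parameter setting relevant downstream one may take $n = \lambda + 1$) and set the key length $d = d(\lambda) = n - 1 < n$, so the scheme will have short keys. From a post-quantum one-way function we obtain a post-quantum pseudorandom generator $G \colon \{0,1\}^{d} \to \{0,1\}^{2n}$ via the standard constructions~\cite{SICOMP:HILL99,FOCS:Zhandry12}; its seed length $d \geq \lambda$ grows with $\lambda$ and its output length $2n$ is polynomial, and it is secure against non-uniform quantum polynomial-time distinguishers. Write $G(k) = (G_0(k), G_1(k))$ with $G_0(k), G_1(k) \in \{0,1\}^{n}$, and define the encryption unitary $U_k \coloneqq X^{G_0(k)} Z^{G_1(k)}$ on the $n$-qubit register $\RegM$. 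Each $U_k$ has a $\poly(\lambda)$-time classical description (run $G$ on $k$), so $\QEnc = \{U_k\}_k$ is a syntactically valid quantum encryption scheme with short keys.

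For one-time security (\cref{def:one-time-quantum-encryption}) I would show that the $b = 0$ and $b = 1$ views in \texttt{QEncExpt} are each computationally indistinguishable from the \emph{same} state. Let $\rho_{\RegM,\RegR}$ denote the joint state of $\RegM$ and the adversary's internal register $\RegR$ immediately after the adversary sends $\RegM$. In the $b = 0$ world the adversary receives $\frac{1}{2^{d}} \sum_{k} (U_k)_{\RegM}\, \rho_{\RegM,\RegR}\, (U_k^\dagger)_{\RegM}$; by PRG security this is $\approx_c \frac{1}{2^{2n}} \sum_{r,s \in \{0,1\}^{n}} (X^{r} Z^{s})_{\RegM}\, \rho_{\RegM,\RegR}\, (X^{r} Z^{s})^\dagger_{\RegM}$, and since a uniformly random $n$-qubit Pauli completely depolarizes $\RegM$, the latter equals $\frac{\Id_{\RegM}}{2^{n}} \otimes \rho_{\RegR}$. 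Running the identical argument with $\ketbra{0}_{\RegM} \otimes \rho_{\RegR}$ in place of $\rho_{\RegM,\RegR}$ shows the $b = 1$ view is also $\approx_c \frac{\Id_{\RegM}}{2^{n}} \otimes \rho_{\RegR}$. Chaining the two indistinguishabilities through this common state gives that the $b = 0$ and $b = 1$ views are computationally indistinguishable, which is exactly one-time security.

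The only step needing care is the reduction from PRG security, which I would write out explicitly. Given an efficient adversary $A$ that distinguishes, say, the $b = 0$ world from $\frac{\Id_{\RegM}}{2^{n}} \otimes \rho_{\RegR}$ with non-negligible advantage, build a PRG distinguisher $D$ that on input $y \in \{0,1\}^{2n}$ simulates the experiment internally: it runs the first phase of $A$ to prepare the state on $(\RegM, \RegR)$, parses $y = (y_0, y_1)$, applies $X^{y_0} Z^{y_1}$ to $\RegM$, returns $\RegM$ to $A$, and outputs $A$'s final bit. If $y = G(k)$ for uniform $k$ this reproduces the $b = 0$ world exactly; if $y$ is uniform then $X^{y_0} Z^{y_1}$ maximally mixes $\RegM$ and this reproduces $\frac{\Id_{\RegM}}{2^{n}} \otimes \rho_{\RegR}$; hence $D$ breaks the PRG. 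The $b = 1$ case is identical with $A$'s message replaced by $\ketbra{0}_{\RegM}$. I do not anticipate a genuine obstacle: this is essentially the classical ``PRG $\Rightarrow$ one-time pad'' reduction transported to the quantum setting, and the only points to verify --- that the stretch $d \to 2n$ is polynomial and that the PRG withstands (possibly non-uniform) quantum distinguishers --- both hold for post-quantum PRGs built from post-quantum one-way functions.
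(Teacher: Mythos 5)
Your proposal is correct and follows essentially the same route as the paper's proof: build $U_k = X^{G_0(k)} Z^{G_1(k)}$ from a PRG and argue via PRG security that the $b=0$ and $b=1$ views in the encryption game are both computationally indistinguishable from the maximally mixed state on $\RegM$ tensored with the adversary's residual state. The paper compresses this into a single $\approx_c$ step and uses key length $d = n/2$ rather than $d = n-1$, but the construction and the underlying hybrid/reduction are the same.
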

\begin{proof}
There exists a post-quantum pseudo-random generator $G : \{0,1\}^{n/2} \to \{0,1\}^{2n}$ assuming the existence of any post-quantum one-way function \cite{SICOMP:HILL99,FOCS:Zhandry12}. Let $U_k = X^{G_0(k)} Z^{G_1(k)}$ be a quantum encryption scheme, where $G_0(k),G_1(k)$ are the first and last $n/2$ bits of $G(k)$, respectively. This scheme has key size $d=n/2$ for $n$-qubit messages, and security is immediate from the definitions.

Consider the security experiment for $\{U_k\}_k$ (\cref{def:one-time-quantum-encryption}). We denote the register containing the adversary's internal state by $\RegR$. Let $\brho_{\RegM,\RegR}$ be the state of the experiment after the adversary sends $\RegM$ to the challenger and let $\brho_{\RegM,\RegR}^b$ be the state held by the adversary after it receives $\RegM$ back from the challenger in the $b \in \{0,1\}$ world of the security experiment. Since $G$ is a post-quantum pseudo-random generator,
\begin{align*}
    \brho_{\RegM,\RegR}^0 &= 2^{-d} \sum_{k\in\{0,1\}^{d}} (X^{G_0(k)} Z^{G_1(k)})_\RegM \brho_{\RegM,\RegR} (Z^{G_1(k)} X^{G_0(k)})_\RegM \\
    &\approx_c 2^{-d} \sum_{k\in\{0,1\}^{d}} (X^{G_0(k)} Z^{G_1(k)} \ketbra{0}_{\RegM} Z^{G_1(k)} X^{G_0(k)})_\RegM \otimes \brho_{\RegR} \\
    &= \brho_{\RegM,\RegR}^1. \qedhere
\end{align*}
\end{proof}

\begin{claim}
\label{claim:oracle-separation}
There is a quantum oracle $\mathcal{O}$ relative to which one-time secure quantum encryption with short keys exists, but $\BQP^\mathcal{O} = \QMA^\mathcal{O}$ (and in particular, post-quantum one-way functions do not exist).
\end{claim}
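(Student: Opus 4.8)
The plan is to take $\mathcal{O}$ to be a mild variant of Kretschmer's oracle~\cite{Kretschmer21}: a collection $\mathcal{U} = \{U_{\lambda,k}\}_{\lambda \in \mathbb{N},\, k \in \{0,1\}^{\lambda}}$ of independent Haar-random unitaries, with $U_{\lambda,k}$ acting on $\lambda+1$ qubits, together with a classical oracle deciding a fixed $\mathsf{PSPACE}$-complete language. Relative to $\mathcal{O}$ we need: (i) a one-time secure quantum encryption scheme with short keys, and (ii) $\BQP^{\mathcal{O}} = \QMA^{\mathcal{O}}$. The parenthetical claim that post-quantum one-way functions do not exist follows from (ii): for any candidate one-way function $f$ with oracle access to $\mathcal{O}$, the language $\{(y,i) : \exists x,\ f(x) = y \text{ and } x_i = 1\}$ lies in $\NP^{\mathcal{O}} \subseteq \QMA^{\mathcal{O}} = \BQP^{\mathcal{O}}$, and a $\BQP^{\mathcal{O}}$ decider for it inverts $f$ bit by bit. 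Both (i) and (ii) will hold with probability $1$ over the choice of $\mathcal{U}$ (via the same measure-concentration and Borel--Cantelli arguments as in~\cite{Kretschmer21}), so such an $\mathcal{O}$ exists.

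For (i), we define the encryption scheme $\QEnc$ whose key space at parameter $\lambda$ is $\{0,1\}^{\lambda}$ and whose unitaries are $U_k \coloneqq U_{\lambda,k}$ acting on $(\lambda+1)$-qubit messages; the key length $\lambda$ is strictly smaller than the message length, so keys are short, and each $U_k$ is described by the $O(\lambda)$-bit instruction ``query $\mathcal{U}$ at index $(\lambda,k)$.'' To prove one-time security (\cref{def:one-time-quantum-encryption}) we fix a $\poly(\lambda)$-query adversary $A^{\mathcal{O}}$ and pass to a hybrid in which the challenger, instead of applying the oracle unitary $U_{\lambda,k}$, applies a \emph{fresh} Haar-random unitary $W$ sampled independently of $\mathcal{O}$ (the oracle itself being unchanged). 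Since the key $k$ is uniform over $\{0,1\}^{\lambda}$ and $A$ issues only polynomially many queries, a BBBV-style argument --- comparing $\mathcal{O}$ to the oracle obtained from $\mathcal{O}$ by re-sampling its $(\lambda,k)$-block, and averaging the total query magnitude on index $(\lambda,k)$ over the random $k$ --- shows the hybrid is within $\negl(\lambda)$ of the real experiment in both the $b=0$ and $b=1$ worlds. In the hybrid, however, the two worlds are \emph{identical}: since $W$ is Haar-random and independent of the rest of $A$'s state $\rho_{\RegR}$, the state returned to $A$ is $\rho_{\RegR} \otimes \int dW\, W \sigma W^{\dagger} = \rho_{\RegR} \otimes \Id/2^{\lambda+1}$ whether $\sigma$ is the submitted message ($b=0$) or $\ket{0}$ ($b=1$), using that the Haar twirl sends every state to the maximally mixed state. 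Hence $A$'s distinguishing advantage is $\negl(\lambda)$, which is one-time security (and, via \cref{theorem:pru-construction}, it also yields a binding succinct QSC, though that is not needed here).

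For (ii), Kretschmer~\cite{Kretschmer21} showed that relative to a Haar-random unitary oracle together with a $\mathsf{PSPACE}$-complete oracle, $\BQP = \QMA$ with probability $1$. The only difference here is that our $\mathcal{U}$ is a key-indexed family of independent Haar-random unitaries rather than one unitary per input length, which is a routine adaptation: one can absorb the index $(\lambda,k)$ into the oracle's input as a control register that the oracle never disturbs, so that to the simulator $\mathcal{U}$ is again a unitary admitting an exponential-size classical description and behaving Haar-randomly on each of its invariant blocks --- the only features used by Kretschmer's $\mathsf{PSPACE}$-simulation of $\QMA^{\mathcal{U}}$ and by his lemma that the oracle does not help a $\QMA$ prover beyond what $\mathsf{PSPACE}$ can verify. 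Alternatively, one may take $\mathcal{O}$ to be Kretschmer's oracle verbatim and obtain the short-key encryption scheme from the pseudorandom unitaries~\cite{C:JiLiuSon18} that exist relative to it, using that pseudorandom unitaries imply one-time secure quantum encryption with short keys.

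The main obstacle is the quantum-query step of (i): showing that a query placed in superposition over the secret index $(\lambda,k)$ cannot help $A$ tell the oracle's unitary apart from a fresh Haar-random one. This is an instance of ``a random hidden point is hard to find,'' but it must be executed with care for superposition queries and then averaged over the random key; everything else --- the Haar-twirl calculation and the implication (ii)$\Rightarrow$``no one-way functions'' --- is routine, and the genuinely difficult ingredient, $\BQP = \QMA$ relative to a Haar-random oracle, is imported wholesale from~\cite{Kretschmer21}.
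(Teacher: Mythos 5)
Your main approach is a genuinely different route from the paper's, and it is plausible modulo the details you yourself flag as needing care. The paper takes Kretschmer's oracle verbatim, which yields a PRU family $\{U_k : \mathbb{C}^{2^n} \to \mathbb{C}^{2^n}\}_{k \in \{0,1\}^n}$ with $n$-bit keys on $n$-qubit messages --- \emph{not} short keys --- and then applies a PRU-expansion theorem (their Appendix~C, proved via the Schur transform $\USch$) that trades the PRU's many-time security for a one-time encryption scheme on the symmetric subspace of $(\mathbb{C}^{2^n})^{\otimes \ell}$, of dimension $\binom{\ell + 2^n - 1}{\ell} \gg 2^n$ for any constant $\ell > 1$. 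This keeps Kretschmer's $\BQP^{\mathcal{O}} = \QMA^{\mathcal{O}}$ result fully black-box; the only new ingredient is the Schur-transform lemma. You instead build the dimension mismatch into the oracle from the start ($\lambda$-bit keys, $(\lambda+1)$-qubit unitaries) and prove one-time security directly via a BBBV/reprogramming argument followed by a Haar twirl. What you gain is avoiding the Schur-transform machinery entirely; what you give up is that you must re-verify that the $\BQP = \QMA$ half of Kretschmer's theorem still holds for the dimension-modified oracle, which you call a routine adaptation. That is believable, since his simulation only relies on the oracle being a key-indexed family of Haar-random blocks, but it is an obligation the paper's route never incurs.

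One genuine gap to note: the ``alternative'' you sketch at the end --- take Kretschmer's oracle verbatim and ``obtain the short-key encryption scheme from the pseudorandom unitaries, using that pseudorandom unitaries imply one-time secure quantum encryption with short keys'' --- treats as known precisely the nontrivial content of the claim. A PRU from Kretschmer's oracle has exactly $n$-bit keys acting on exactly $n$ qubits, so viewed directly as an encryption scheme it has $d = n$, not $d < n$. Converting such a PRU into a scheme with keys shorter than the message is exactly what the paper's Schur-transform expansion accomplishes; stating the implication without an argument leaves this open, so that aside should either be dropped or supplied with the expansion argument.
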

\begin{proof}
By \cite[Theorem 2]{Kretschmer21}, there is a quantum oracle $\mathcal{O}$ relative to which a pseudo-random unitary family $\{U_k : \mathbb{C}^{2^n} \to \mathbb{C}^{2^n}\}_{k \in \{0,1\}^n}$ exists, but $\BQP^\mathcal{O} = \QMA^\mathcal{O}$. We would like to view the family as a quantum encryption scheme with keys $k \in \{0,1\}^n$, but this scheme does not have short keys (as $d=n$).

However, \cref{sec:pru-expansion} shows that we can exchange the PRU family's many-time security for one-time security on a larger message space. We therefore use $\{\Expand(U_k, \ell)\}_{k \in \{0,1\}^n}$ for any constant $\ell > 1$, which is a secure one-time quantum encryption scheme acting on at least $\ell \cdot n$ qubits by \cref{theorem:pru-expansion}.
\end{proof}

\subsection{Domain extension}
\label{subsec:domain-extension}

In this section, we show that a quantum analogue of Merkle-Damgård domain extension for classical compressing hash functions works for succinct QSCs.

Concretely, suppose we are given a \emph{succinct} QSC scheme $\SQSC$ for $(m+1)$-qubit messages that produces $m$-qubit commitments.\footnote{One-qubit compression is without loss of generality, since if a scheme compresses by more than one qubit, we can pad the input with $0$'s (and additionally require that these qubits are $0$ during verification).} That is, $\Com_{\SQSC}$ acts on message-ancilla registers $(\RegM,\RegW)$ and produces commitment-decommitment registers $(\RegC,\RegD)$ where:
\begin{itemize}
\item $\RegM$ is an $(m+1)$-qubit message register,
\item $\RegW$ is a $d$-qubit ancilla register,
\item $\RegC$ is an $m$-qubit commitment register, and
\item $\RegD$ is a $(d+1)$-qubit decommitment register.
\end{itemize}

Using $\SQSC$, we construct a new succinct QSC scheme $\SQSC^{\MD}_k$ (where $\MD$ stands for Merkle-Damgård) enabling commitments to $k$-qubit messages with commitments of $m$ qubits for any $k = \poly(\lambda)$.

In $\SQSC^{\MD}_k$, the sender commits to a $k$-qubit message $\RegM$ as follows:
\begin{enumerate}
    \item[] \textbf{Setup:} Initialize an $(m + k d)$-qubit register $\RegW$ to $\ket{0^{m+kd}}$. Separate $\RegW$ into subregisters $\RegW = (\RegC_0,\RegY_0,\RegY_1,\dots,\RegY_{k-1})$ where $\RegC_0$ is $m$ qubits and each $\RegY_i$ is $d$ qubits. Separate $\RegM$ into subregisters $\RegM = (\RegM_0,\RegM_1,\dots,\RegM_{k-1})$ where each $\RegM_i$ is a single qubit register.
    \item For $i = 0,\dots,k-1$:
    \begin{enumerate}
        \item[] Apply $\Com_{\SQSC}$ to $(\RegM_i,\RegC_i,\RegY_i)$, i.e., commit to the $(m+1)$-qubit message register $(\RegM_i,\RegC_i)$ using the $d$-qubit ancilla $\RegY_i$, to obtain $(\RegC_{i+1},\RegD_{i+1})$.
    \end{enumerate}
    \item The commitment is the $m$-qubit register $\RegC = \RegC_k$ and the decommitment is the $k \cdot (d+1)$-qubit register $\RegD = (\RegD_1,\dots,\RegD_k)$.
\end{enumerate}

\begin{theorem}[Domain extension] \label{theorem:domain-extension}
For any $k = \poly(\lambda)$, the scheme $\SQSC^{\MD}_k$ is swap binding assuming that $\SQSC$ is swap binding.
\end{theorem}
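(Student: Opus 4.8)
I would prove swap binding of $\SQSC^{\MD}_k$ by induction on $k$. The base case $k=0$ is immediate: the commitment is the fixed state $\ket{0^m}$ and the decommitment is empty, so the adversary's view in \texttt{SwapBindExpt} does not depend on the challenge bit. For the inductive step I would use the recursive structure of $\Com_{\SQSC^{\MD}_k}$: it first commits to $(\RegM_0,\dots,\RegM_{k-2})$ with $\Com_{\SQSC^{\MD}_{k-1}}$, producing an intermediate commitment register $\RegC_{k-1}$ together with decommitments $\RegD_1,\dots,\RegD_{k-1}$, and then layer $k-1$ runs $\Com_{\SQSC}$ on the message $(\RegM_{k-1},\RegC_{k-1})$ to produce the final commitment $\RegC_k$ and decommitment $\RegD_k$.

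Fixing an efficient adversary $A$, I would pass from the $b=0$ challenger to the $b=1$ challenger through three hybrids $E_0 \to E_1 \to E_2 \to E_3$, where $E_0$ is the real $b=0$ behavior and $E_3$ the real $b=1$ behavior. In all four, the challenger applies $\Com_{\SQSC^{\MD}_k}^\dagger$, measures the ancilla $\RegW=(\RegC_0,\RegY_0,\dots,\RegY_{k-1})$ (aborting with a random output if it is not $0$), and recomputes $\RegC_{k-1},\RegD_1,\dots,\RegD_{k-1}$ by rerunning layers $0,\dots,k-2$ on the recovered message qubits; they differ only in the treatment of layer $k-1$ and of the recovered message. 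In $E_1$, the challenger first swaps the entire top-layer message $(\RegM_{k-1},\RegC_{k-1})$ onto a fresh register which it then \emph{discards}, and has layer $k-1$ commit to $\ket{0^{m+1}}$ instead; since $\RegC_k$ is produced by and retained at the challenger both in $\Com_{\SQSC^{\MD}_k}$'s and in $\Com_{\SQSC}$'s swap binding games, $E_0 \approx_c E_1$ is an instance of swap binding of $\Com_{\SQSC}$ with $(\RegM_{k-1},\RegC_{k-1})$ playing the role of the committed message. Crucially, after this step the intermediate commitment $\RegC_{k-1}$ has been discarded and influences nothing. In $E_2$, the challenger additionally replaces the recovered $\RegM_0,\dots,\RegM_{k-2}$ with $\ket0$ before recomputing $\RegD_1,\dots,\RegD_{k-1}$, so that $\RegD_1,\dots,\RegD_{k-1}$ and $\RegC_{k-1}$ become the fixed states $\ket{\gamma_{k-1}}$ etc.\ obtained by committing to $0^{k-1}$; since $\RegC_{k-1}$ is unused in $E_1$ and $E_2$, this is precisely the swap binding experiment for $\Com_{\SQSC^{\MD}_{k-1}}$, so $E_1 \approx_c E_2$ by the induction hypothesis. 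Finally $E_3$ stops discarding the now-fixed $\ket{\gamma_{k-1}}$ and instead feeds $(\ket0,\ket{\gamma_{k-1}})$ into layer $k-1$, which makes it equal to the $b=1$ challenger; as $\ket{\gamma_{k-1}}$ is a fixed, efficiently preparable state, $E_2 \approx_c E_3$ is once more an instance of swap binding of $\Com_{\SQSC}$. Summing, $A$'s advantage is at most $2\,\negl(\lambda)+\mathrm{adv}_{k-1}(\lambda)$, and unrolling gives $\mathrm{adv}_k(\lambda)=O(k)\cdot\negl(\lambda)=\negl(\lambda)$ because $k=\poly(\lambda)$.

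In every hybrid transition the reduction is the obvious one: the reduction adversary runs $A$, performs the $\Com_{\SQSC^{\MD}_k}^\dagger$ verification and the layers-$0$-to-$(k-2)$ recomputation itself, forwards to the relevant challenger either $(\RegC_k,\RegD_k)$ (for the two $\Com_{\SQSC}$ steps) or $(\RegC_{k-1},\RegD_1,\dots,\RegD_{k-1})$ (for the induction-hypothesis step), receives the decommitment back, reassembles $A$'s view, and outputs $A$'s guess. What makes this legal is that the reduction never needs back the register it hands over: $\RegC_k$ is output-only in the Merkle-Damg{\aa}rd scheme, while $\RegC_{k-1}$ is consumed by layer $k-1$ only in $E_0$ --- where it is handled by the first reduction, which gives $(\RegC_k,\RegD_k)$ to the challenger only \emph{after} running layer $k-1$ --- and has been erased before the induction hypothesis is invoked.

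The main obstacle --- and the reason a one-line hybrid over the $k$ message qubits does not work --- is precisely that the Merkle-Damg{\aa}rd chain re-uses each intermediate commitment $\RegC_i$ as part of the \emph{message} of layer $i$. A commitment register cannot be exposed to a distinguisher (together with its decommitment it trivially breaks binding), so one cannot reduce ``change message qubit $i$'' to swap binding of layer $i$ while still routing $\RegC_{i+1}$ through the remaining layers. The three-hybrid structure is designed to sidestep exactly this: top-layer swap binding is used first to \emph{discard} the intermediate commitment (the only place the chain is joined), which decouples the inductive sub-scheme from the rest of $A$'s view so that the induction hypothesis applies, and swap binding is then used once more to reinsert the (by-then deterministic) intermediate commitment. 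The bookkeeping to watch is that $E_1$ discards not only $\RegC_{k-1}$ but also the recovered copy of $\RegM_{k-1}$; this is harmless because the target $b=1$ world discards the whole committed message anyway.
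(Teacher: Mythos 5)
Your proposal is correct, and it takes a genuinely different route from the paper's. The paper avoids induction: it introduces a middle representation in which the $k$ decommitments $\RegD_1,\dots,\RegD_k$ are computed ``in parallel'' from fresh $\ket{0^m}$ chaining values rather than the actual chain, passes from the real $b=0$ scheme to this parallel form (\Cref{claim:hybrid01}, via a cascading sequence of $k$ sub-hybrids that strip-and-rebuild the tail of the chain), then swaps out all $k$ message qubits in one shot by invoking the parallel-composition theorem (\Cref{theorem:parallel_composition}) on these now-independent commitments (\Cref{claim:hybrid12}), and finally reverses the parallelization to reach $b=1$ (\Cref{claim:hybrid23}). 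Your version instead peels off only the top layer, uses swap binding of the base scheme to replace its $(\RegM_{k-1},\RegC_{k-1})$ input with $\ket{0^{m+1}}$, recurses on the decoupled $\SQSC^{\MD}_{k-1}$ sub-scheme, and then reinserts the (now deterministic) chaining value $\ket{\gamma_{k-1}}$ by one more swap-binding step. Both approaches cost $O(k)$ invocations of swap binding of $\Com_{\SQSC}$ and the hybrid reductions are structurally the same; what your induction buys is that you never need to invoke parallel composition as an auxiliary lemma, and you never need the cascading multi-layer rebuild steps --- each hybrid transition touches exactly one application of $\Com_{\SQSC}$. What the paper's version buys is a fully unrolled argument with no induction hypothesis to track, and the parallelize--swap--deparallelize pattern generalizes cleanly (the same skeleton is re-used for the Merkle tree, where a recursion by depth would be more delicate). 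Your diagnosis of why the naive one-pass hybrid fails --- the intermediate commitment $\RegC_i$ being consumed as the message of the next layer --- is exactly the issue the paper's cascading sub-hybrids are built to handle, and your fix (remove the join point before recursing, restore it afterward as a fixed state) is sound.
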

\begin{proof}
    We prove this claim by a sequence of hybrid arguments. To simplify notation, let $U_j$ be the unitary corresponding to the $j$th application of $\Com_{\SQSC}$ in $\SQSC^{\MD}_k$, so that the commitment is implemented with the unitary $U \coloneqq U_k U_{k-1} \cdots U_1$. We first state all of the hybrids below and then prove indistinguishability of the hybrids in~\cref{claim:hybrid01,claim:hybrid12,claim:hybrid23}. We highlight the differences from the previous hybrid in \textcolor{red}{red}.
    
    \begin{itemize}
        \item \textbf{Hybrid} $H_0$:
        \begin{enumerate}
            \item The adversary sends the registers $(\RegC, \RegD)$ to the challenger.
            \item The challenger does the following:
            \begin{enumerate}
                \item Apply a binary projective measurement to check that $(\RegC, \RegD)$ is valid.
                \item Return $\RegD = (\RegD_1,\dots,\RegD_k)$ to the adversary.
            \end{enumerate}
        \end{enumerate}
    \end{itemize}
        Hybrid $H_0$ corresponds to the $b = 0$ case of the $\QSC$ binding security game~\cref{def:swap-binding}.
    \begin{itemize}
        \item \textbf{Hybrid} $H_1$:
        \begin{enumerate}
            \item The adversary sends $(\RegC, \RegD)$ to the challenger.
            \item The challenger does the following:
            \begin{enumerate}
                \item Apply a binary projective measurement to check that $(\RegC, \RegD)$ is valid.
                {\color{red}
                \item Apply $U^\dagger$ to recover the committed message on registers $(\RegM_0,\dots,\RegM_{k-1})$.
                \item For each $i = 0,\dots,k-1$:
                \begin{enumerate}
                    \item Initialize an $m$-qubit register $\RegC_i'$ to $\ket{0^m}$.
                    \item Use $\Com$ to commit to the $m+1$-qubit register $(\RegM_i, \RegC_i')$, producing a decommitment $\RegD_{i+1}$ and a commitment $\RegC_{i+1}$; discard $\RegC_{i+1}$.
                \end{enumerate}}
                \item Return $\RegD = (\RegD_1, \dots,\RegD_k)$ to the adversary. 
            \end{enumerate}
        \end{enumerate}
    \end{itemize}
        The difference between $H_1$ and $H_0$ is that in $H_1$, the decommitments $\RegD_1,\dots,\RegD_k$ are each generated \emph{in parallel}. That is, in hybrid $H_1$, each $\RegD_i$ the adversary receives is the decommitment resulting from committing to $(\RegM_{i-1}, \RegC'_{i-1})$ where $\RegC'_{i-1}$ is initialized to $\ket{0^m}$. In contrast, in hybrid $H_0$, the decommitments $\RegD_i$ are generated \emph{in sequence}, i.e., $\RegD_i$ is the decommitment resulting from committing to $(\RegM_{i-1}, \RegC_{i-1})$, where $\RegC_{i-1}$ is the commitment resulting from committing to $(\RegM_{i-2}, \RegC_{i-2})$ in the previous layer, etc.
        
        We prove that $H_0$ and $H_1$ are indistinguishable in~\cref{claim:hybrid01} by defining a careful sequence of $k$ sub-hybrids.
    \begin{itemize}
        \item \textbf{Hybrid} $H_2$:
        \begin{enumerate}
            \item The adversary sends $(\RegC, \RegD)$ to the challenger.
            \item The challenger does the following:
            \begin{enumerate}
                \item Apply a binary projective measurement to check that $(\RegC, \RegD)$ is valid.
                \item Apply $U^\dagger$ to recover the committed message on register $\RegM = (\RegM_0,\dots,\RegM_{k-1})$.
                {\color{red}
                \item Initialize $k$ qubits $\RegM' = (\RegM_0', \dots, \RegM_{k-1}')$ to $\ket{0^k}$, and apply $\SWAP[\RegM,\RegM']$.}
                \item For each $i = 0,\dots,k-1$:
                \begin{enumerate}
                    \item Initialize an $m$-qubit register $\RegC_i'$ to $\ket{0^m}$.
                    \item Use $\Com$ to commit to the $m+1$-qubit register $(\RegM_i, \RegC_i')$ and obtain $(\RegD_{i+1},\RegC_{i+1})$; discard $\RegC_{i+1}$.
                \end{enumerate}
                \item Return $\RegD = (\RegD_1, \dots, \RegD_k)$ to the adversary.
            \end{enumerate}
        \end{enumerate}
    \end{itemize}
        The difference between $H_2$ and $H_1$ is that in $H_2$, the decommitments $\RegD_1,\dots,\RegD_k$ are each decommitments resulting from (independent) commitments to the message $\ket{0^{m+1}}$. Since the $k$ decommitments in both $H_1$ and $H_2$ are all independent, indistinguishability of these hybrids follows from the fact that swap-binding composes in parallel (\cref{theorem:parallel_composition}); see \cref{claim:hybrid12}.
    \begin{itemize}
        \item \textbf{Hybrid} $H_3$:
        \begin{enumerate}
            \item The adversary sends $(\RegC, \RegD)$ to the challenger.
            \item The challenger does the following:
            \begin{enumerate}
                \item Apply a binary projective measurement to check that $(\RegC, \RegD)$ is valid.
                \item Apply $U^\dagger$ to recover the committed message on registers $(\RegM_0, \dots, \RegM_{k-1})$.
                \item Initialize $k$ qubits $\RegM' = (\RegM_0', \dots,\RegM_{k-1}')$ to $\ket{0^k}$, and apply $\SWAP[\RegM,\RegM']$.
                {\color{red}
                \item Apply $U$.}
                \item Return $\RegD = (\RegD_1, \dots, \RegD_k)$ to the adversary.
            \end{enumerate}
        \end{enumerate}
    \end{itemize}
    Hybrid $H_3$ corresponds to the $b = 1$ case of the $\QSC$ binding security game (\cref{def:swap-binding}). The differences between $H_3$ and $H_2$ are analogous to the difference between $H_0$ and $H_1$, except that now each $\RegM_i$ is replaced with $\ket{0}$. We prove $H_3 \approx_c H_2$ in~\cref{claim:hybrid23} using essentially the same arguments as in~\cref{claim:hybrid01}.
    
    It remains to prove~\cref{claim:hybrid01,claim:hybrid12,claim:hybrid23}.
\end{proof}
    
    \begin{claim}
    \label{claim:hybrid01}
    Hybrid $H_0$ is indistinguishable from Hybrid $H_1$.
    \end{claim}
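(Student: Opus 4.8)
The plan is to interpolate between $H_0$ and $H_1$ by a sequence of $k$ sub-hybrids $G^{(0)} = H_0, G^{(1)}, \dots, G^{(k-1)} = H_1$ that cut the Merkle--Damg{\aa}rd chain one link at a time, starting from the root. In $G^{(j)}$ the challenger first checks that $(\RegC,\RegD)$ is valid, applies $U^\dagger$ to recover $(\RegM_0,\dots,\RegM_{k-1})$ (the $\RegW$ measurement guarantees the recovered ancillas $\RegC_0,\RegY_0,\dots,\RegY_{k-1}$ are all $\ket{0}$), and then re-runs the commitment chain, \emph{except} that in the top $j$ applications of $\Com_{\SQSC}$ --- layers $\ell = k, k-1, \dots, k-j+1$ --- it feeds a fresh $\ket{0^m}$ into the ``previous-commitment'' slot instead of the genuine $\RegC_{\ell-1}$; finally it returns $\RegD = (\RegD_1,\dots,\RegD_k)$. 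Since layer $1$ already ingests $\RegC_0 = \ket{0^m}$, cutting all of layers $2,\dots,k$ recovers precisely the parallel decommitment generation of $H_1$, so $G^{(k-1)} = H_1$; and $G^{(0)} = H_0$ since checking $(\RegC,\RegD)$ and returning $\RegD$ is the same as applying $U^\dagger$, measuring $\RegW$, reapplying $U$, and returning $\RegD$.

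For the step $G^{(j-1)}\approx_c G^{(j)}$, only layer $\ell := k-j+1$ changes: its input goes from $(\RegM_{\ell-1},\RegC_{\ell-1})$ to $(\RegM_{\ell-1},\ket{0^m})$. The key structural point --- and the reason for cutting from the root down --- is that layers $\ell+1,\dots,k$ are already cut in \emph{both} $G^{(j-1)}$ and $G^{(j)}$, so the commitment $\RegC_\ell$ output by layer $\ell$ is ingested by no later layer and never sent to $A$; the only register the change touches in $A$'s view is the decommitment $\RegD_\ell$. I would bridge through an intermediate game $\widetilde{G}^{(j)}$ in which $\RegD_\ell$ is instead the honest decommitment to the all-zero message $\ket{0^{m+1}}$, and establish $G^{(j-1)}\approx_c \widetilde{G}^{(j)}\approx_c G^{(j)}$ by two direct reductions to swap binding of $\SQSC$ (\cref{def:swap-binding}). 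The first reduction runs $A$, verifies, applies $U^\dagger$, measures $\RegW$, runs the chained layers $1,\dots,\ell-1$ to obtain $\RegD_1,\dots,\RegD_{\ell-1}$ and $\RegC_{\ell-1}$, runs the cut layers $\ell+1,\dots,k$ to obtain $\RegD_{\ell+1},\dots,\RegD_k$ (which depend only on $\RegM_\ell,\dots,\RegM_{k-1}$, not on $\RegC_\ell$), forms the honest $\SQSC$ commitment/decommitment pair for the layer-$\ell$ message $(\RegM_{\ell-1},\RegC_{\ell-1})$, forwards it to the swap-binding challenger, and returns the received $\RegD_\ell$ --- together with the other $\RegD_i$'s --- to $A$: the $b=0$ world is $G^{(j-1)}$ and the $b=1$ world is $\widetilde{G}^{(j)}$. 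The second reduction is identical except that it discards $\RegC_{\ell-1}$ and forwards the honest pair for $(\RegM_{\ell-1},\ket{0^m})$; its $b=0$ world is $G^{(j)}$ and its $b=1$ world is again $\widetilde{G}^{(j)}$ (one checks the two descriptions of $\widetilde{G}^{(j)}$ coincide: in both, $\RegD_\ell$ is an honest decommitment to $\ket{0^{m+1}}$ and both $\RegM_{\ell-1}$ and $\RegC_{\ell-1}$ end up traced out of $A$'s view). Both reductions are efficient (since $\Com_{\SQSC}$ and $U$ are) and are legal swap-binding adversaries, since the pairs they forward are honestly generated and hence pass the challenger's verification; the early-abort case where $(\RegC,\RegD)$ is invalid is handled by both reductions (and by every hybrid) outputting a uniform bit.

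Summing the $2(k-1)$ swap-binding invocations and using $k=\poly(\lambda)$ yields $\bigl|\Pr[A\to 1 \text{ in } H_0]-\Pr[A\to 1 \text{ in } H_1]\bigr|=\negl(\lambda)$. I expect the main difficulty to be pinning down the sub-hybrid structure: one must cut from the root so that the modified layer's output commitment is dangling and invisible to $A$ (cutting from the leaves would instead propagate the change into the input of every later layer), and one must be comfortable that ``replacing the chained-commitment slot by $\ket{0^m}$'' --- which is \emph{not} literally what swap binding offers, since swap binding zeroes out the \emph{entire} $\SQSC$ message --- is nevertheless reached by bouncing off the common ``commit to $\ket{0^{m+1}}$'' intermediate $\widetilde{G}^{(j)}$, which each neighboring hybrid can reach via a genuine swap-binding step with its own choice of forwarded message.
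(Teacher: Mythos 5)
Your proof is correct, and the per-step argument differs from the paper's in a useful way. Both decompose $H_0 \to H_1$ into $O(k)$ root-down cuts so that the output commitment of the cut layer dangles and is never seen by $A$. The paper handles each step with a \emph{single} swap-binding reduction in which the challenger swaps out the entire $(m+1)$-qubit message $(\RegM_{k-i-1}, \RegC_{k-i-1})$; read literally, its sub-hybrids also zero out the message bits $\RegM_j$ as they re-commit, so the terminal hybrid $H_k^{(0-1)}$ actually matches $H_2$ rather than $H_1$ --- the intended reading is clearly that only the $\RegC_{k-i}$ slot is swapped, using the ``binding on a subset of the message registers'' property that the paper invokes explicitly in the proofs of \cref{claim:hybrid12} and \cref{claim:hybrid23} (cf.~\cite[Lemma~15]{EC:Unruh16}). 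You avoid this entirely by bouncing each step through the common intermediate $\widetilde{G}^{(j)}$, where $\RegD_\ell$ decommits to $\ket{0^{m+1}}$, reaching it from either neighbor by a genuine full-message swap. Each step costs two swap-binding invocations instead of one, but you never need the subset-binding lemma, and your two descriptions of $\widetilde{G}^{(j)}$ demonstrably coincide because $\RegC_{\ell-1}$ is traced out of $A$'s view on either side. The ``bounce through the zeroed-out commitment'' trick is a genuinely nice observation: it realizes a partial-message replacement using only the full-message swap that the swap-binding definition hands you, which is precisely the mismatch you identified as the main difficulty.
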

    \begin{proof}
    We prove this claim by a sequence of $k+1$ hybrids $H_0^{(0-1)}, \dots, H_k^{(0-1)}$. Hybrid $H_0^{(0-1)}$ is identical to $H_0$, and $H_k^{(0-1)}$ is identical to $H_1$. For $i = 0, \hdots, k$ we define $H_i^{(0-1)}$ by the following behavior for the challenger:
    \noindent
    \begin{itemize}
        \item[] \textbf{Hybrid} $H_i^{(0-1)}$:
        \begin{enumerate}
            \item The adversary sends $(\RegC, \RegD)$.
            \item The challenger does the following:
            \begin{enumerate}
                \item Check that the commitment is valid. If not, the experiment aborts.
                \item If $i>0$, open the last $i$ commitments using $U_{k-i+1}^\dagger \cdots U_k^\dagger$, initialize a new register $(\RegM_{k-i}',\RegC_{k-i}')$ to $\ket{0}$, and apply $\SWAP[(\RegM_{k-i},\RegC_{k-i}),(\RegM_{k-i}',\RegC_{k-i}')]$.
                \item If $i>1$, for $j = k-i, \dots, k-2$ in sequence, apply $\Com$ to $(\RegM_j,\RegC_j,\RegY_j)$ to produce a decommitment on $\RegD_{j+1}$ and a commitment on $\RegC_{j+1}$, initialize a new register $(\RegM_{j+1}',\RegC_{j+1}')$ to $\ket{0}$, and apply $\SWAP[(\RegM_{j+1},\RegC_{j+1}),(\RegM_{j+1}',\RegC_{j+1}')]$.
                \item If $i>0$, apply $\Com$ to $(\RegM_{k-1}, \RegC_{k-1},\RegY_{k-1})$ to produce a decommitment on $\RegD_k$ and a commitment on $\RegC_k$.
            \end{enumerate}
            \item The challenger returns the decommitment $\RegD = (\RegD_1, \dots, \RegD_k)$ to the adversary.
        \end{enumerate}
    \end{itemize}
    Suppose that an adversary $A$ can distinguish hybrids $H_0^{(0-1)}$ and $H_k^{(0-1)}$ with advantage $\varepsilon$. Then for some $i \in \{0,\dots,k-1\}$, the adversary $A$ can distinguish hybrid $H_i^{(0-1)}$ from hybrid $H_{i+1}^{(0-1)}$ with advantage $\varepsilon / k$. We now construct a reduction $R_i^{(0-1)}$ that uses such an adversary to achieve advantage $\varepsilon / k$ in the binding experiment for the commitment scheme $\Com$.
    
    \noindent
    \begin{itemize}
        \item[] \textbf{Reduction} $R_i^{(0-1)}$:
        \begin{enumerate}
            \item The adversary sends $(\RegC,\RegD)$ to $R_i^{(0-1)}$.
            \item The reduction does the following:
            \begin{enumerate}
                \item Check that the commitment is valid. If not, the experiment aborts and the reduction makes a random guess $b' \from \{0,1\}$.
                \item If $i>0$, open the last $i$ commitments using $U_{k-i+1}^\dagger \cdots U_k^\dagger$. Initialize an internal register $(\RegM_{k-i}', \RegC_{k-i}')$ to $\ket{0}$ and apply $\SWAP[(\RegM_{k-i}, \RegC_{k-i}),(\RegM_{k-i}', \RegC_{k-i}')]$.
                \item If $i>1$, for $j = k-i, \dots, k-2$ in sequence, apply $\Com$ to $(\RegM_j, \RegC_j, \RegY_j)$ to obtain  $(\RegD_{j+1},\RegC_{j+1})$. Initialize an internal register $(\RegM_{j+1}', \RegC_{j+1}')$ to $\ket{0}$ and then apply $\SWAP[(\RegM_{j+1}, \RegC_{j+1}),(\RegM_{j+1}', \RegC_{j+1}')]$.
                \item If $i>0$, apply $\Com$ to $(\RegM_{k-1}, \RegC_{k-1}, \RegY_{k-1})$ to produce a decommitment on $\RegD_k$ and a commitment on $\RegC_k$. For convenience of notation, let $\RegC_k'=\RegC_k$.
                \item\label[step]{step:reduction-hsub} Send $(\RegD_{k-i}, \RegC_{k-i}')$ to the challenger. According to the challenge bit $b$, the challenger applies $\SWAP^b$ to an auxiliary $\ket{0}$ state and the message $(\RegM_{k-i-1}, \RegC_{k-i-1})$ underlying $(\RegD_{k-i}, \RegC_{k-i}')$. Then the challenger returns $\RegD_{k-i}$ to the reduction.
            \end{enumerate}
            \item The reduction returns the decommitment $\RegD = (\RegD_1, \dots, \RegD_k)$ to the adversary.
        \end{enumerate}
    \end{itemize}
    The view of the adversary in the $b=0$ world of $R_i^{(0-1)}$ is exactly the same as in $H_i^{(0-1)}$, and the view of the adversary in the $b=1$ world is exactly the same as in $H_{i+1}^{(0-1)}$.
    \end{proof}
    
    \begin{claim}
    \label{claim:hybrid12}
    Hybrid $H_1$ is indistinguishable from Hybrid $H_2$.
    \end{claim}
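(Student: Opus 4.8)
The plan is to reduce \cref{claim:hybrid12} directly to the parallel composition of swap binding (\cref{theorem:parallel_composition}). The key observation is that in both $H_1$ and $H_2$, once the challenger has verified that the adversary's $(\RegC, \RegD)$ is a valid $\SQSC^{\MD}_k$ commitment and applied $U^\dagger$ to recover the committed message onto $(\RegM_0, \dots, \RegM_{k-1})$, everything that follows is simply $k$ \emph{independent} honest $\Com_{\SQSC}$ commitments whose outputs $\RegD_1, \dots, \RegD_k$ are returned to the adversary. In $H_1$ the $i$-th of these commits to the $(m+1)$-qubit message $(\RegM_i, \RegC_i')$ with $\RegC_i' = \ket{0^m}$; in $H_2$ it commits to $(\RegM_i, \RegC_i')$ after the contents of $\RegM_i$ have been swapped out, i.e.\ to $\ket{0^{m+1}}$. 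Because $\RegC_i'$ is initialized to $\ket{0^m}$ in \emph{both} hybrids, replacing $\RegM_i$ by $\ket{0}$ is the same as replacing the entire slot-$i$ message register $(\RegM_i, \RegC_i')$ by a fresh $\ket{0^{m+1}}$ --- which is exactly the $b = 1$ action of the swap-binding challenger for the parallel composition $\widetilde{\Com} \coloneqq \bigotimes_{i=0}^{k-1} \Com_{\SQSC}$, viewed as a commitment scheme with message registers $(\RegM_i, \RegC_i')$ and ancillas $\RegY_i$.

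Given a distinguisher $A$ for $H_1$ versus $H_2$, I would construct an adversary $A'$ for the swap-binding experiment of $\widetilde{\Com}$ as follows. $A'$ receives $(\RegC, \RegD)$ from $A$, applies $U^\dagger$, and measures the MD ancilla register $\RegW$; if this measurement rejects, $A'$ outputs a uniformly random bit, matching the common abort behavior of $H_1$ and $H_2$. Otherwise $A'$ now holds the recovered message on $(\RegM_0, \dots, \RegM_{k-1})$; it prepares fresh $\RegC_i' = \ket{0^m}$ and $\RegY_i = \ket{0^d}$ for each $i$, applies $\widetilde{\Com}$ to the registers $(\RegM_i, \RegC_i', \RegY_i)_i$ to obtain commitment registers $(\RegC_{i+1})_i$ and decommitment registers $(\RegD_{i+1})_i$, and sends this (honestly formed) commitment/decommitment pair to the challenger. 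Upon receiving $(\RegD_{i+1})_i$ back, $A'$ forwards $\RegD = (\RegD_1, \dots, \RegD_k)$ to $A$ and outputs $A$'s guess. When $b = 0$ the challenger simply re-applies $\widetilde{\Com}$, so $A$'s view is distributed exactly as in $H_1$; when $b = 1$ it swaps each $(\RegM_i, \RegC_i')$ with a fresh $\ket{0^{m+1}}$ and then re-applies $\widetilde{\Com}$, so (using the observation above) $A$'s view is distributed exactly as in $H_2$, the swapped-out contents landing in a challenger-side register that is traced out just as $\RegM'$ is in $H_2$. Hence $A'$'s swap-binding advantage equals $A$'s distinguishing advantage, and since $\Com_{\SQSC}$ is swap binding by the hypothesis of \cref{theorem:domain-extension}, \cref{theorem:parallel_composition} gives that $\widetilde{\Com}$ is swap binding; therefore $A$'s advantage is negligible and $H_1 \approx_c H_2$. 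For statistically binding $\SQSC$ the identical reduction works with $A$ and $A'$ unbounded.

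The only real subtlety I expect is verifying that $A'$'s simulation is exact and that it is a legal adversary for the swap-binding experiment of $\widetilde{\Com}$: because $A'$ always submits an honestly computed commitment/decommitment pair, the challenger's internal $\widetilde{\Com}^\dagger$ validity check never rejects and never interferes with the simulation; and the fact that the swap-binding challenger swaps the whole $(m+1)$-qubit message $(\RegM_i, \RegC_i')$ rather than just $\RegM_i$ is harmless precisely because $\RegC_i'$ equals $\ket{0^m}$ in both hybrids. Everything else is routine bookkeeping about which registers are held by the challenger and traced out.
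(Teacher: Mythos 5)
Your proof is correct, and it follows a genuinely different (and slightly more elementary) route from the paper's. The paper first appeals to the fact that a swap-binding QSC remains binding when restricted to any \emph{subset} of its message registers (stated as an analogue of~\cite[Lemma~15]{EC:Unruh16}), concluding that each application of $\Com_{\SQSC}$ is a binding commitment to the single-qubit subregister $\RegM_i$ alone (with $\RegC_i'$ absorbed into the ancilla), and then applies parallel composition to these subset commitments; the paper's reduction challenger accordingly swaps only $\RegM_i$ with $\RegM_i'$. Your argument sidesteps the subset lemma by exploiting the fact that $\RegC_i'$ is initialized to $\ket{0^m}$ in both hybrids, so that swapping the entire $(m+1)$-qubit message $(\RegM_i, \RegC_i')$ with $\ket{0^{m+1}}$ produces exactly the same committed state --- and the same reduced state on $(\RegD,\RegR)$ --- as swapping only $\RegM_i$ with $\ket{0}$; the leftover $\ket{0^m}$ the challenger retains is an unentangled ancilla that is traced out of the adversary's view. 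This lets you reduce directly to \cref{theorem:parallel_composition} applied to the $k$ copies of $\Com_{\SQSC}$ with their full message registers. The two arguments are comparable in length; yours is marginally more self-contained since it avoids the implicit appeal to subset binding, while the paper's is marginally more general in that it would still go through even if the $\RegC_i'$ registers were not in a fixed product state across the two hybrids.
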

    \begin{proof}
    It is immediate that a QSC is binding to every subset of registers of the message; the same fact for collapse-binding commitments was proven in \cite[Lemma 15]{EC:Unruh16} and the proof for QSCs is nearly identical. In particular, $\Com$ is a binding commitment to $\RegM_i$ for each $i \in \{0,\dots,k-1\}$. Therefore, $H_1$ and $H_2$ are indistinguishable by the following reduction to the security of parallel composition (\cref{theorem:parallel_composition}).
    \begin{itemize}
        \item \textbf{Reduction} $R_1$:
        \begin{enumerate}
            \item The adversary sends $(\RegC, \RegD)$ to $R$.
            \item The reduction does the following:
            \begin{enumerate}
                \item Apply a binary projective measurement to check that $(\RegC, \RegD)$ is valid. If not, the experiment aborts and the reduction makes a random guess $b' \from \{0,1\}$.
                \item Apply $U^\dagger$ to recover the committed message on registers $(\RegM_0, \dots, \RegM_{k-1})$
                \item For each $i = 0,\dots,k-1$:
                \begin{enumerate}
                    \item Initialize an $m$-qubit register $\RegC_i'$ to $\ket{0^m}$.
                    \item Apply $\Com$ to the $m+1$-qubit register $(\RegM_i, \RegC_i', \RegY_i)$, producing a decommitment $\RegD_{i+1}$ and a commitment $\RegC_{i+1}$.
                \end{enumerate}
                \item Send $\{\RegD_i, \RegC_i\}_{i \in [k]}$ to the challenger.
                \item The challenger samples a random $b \from \{0,1\}$, applies
                \[
                \bigotimes_{i\in[k]} U_i (\SWAP[\RegM_i, \RegM_i']^b) U_i^\dagger,
                \]
                and returns $\RegD = (\RegD_1,\dots, \RegD_k)$ to the reduction.
                \item The reduction returns $\RegD$ to the adversary. \qedhere
            \end{enumerate}
        \end{enumerate}
    \end{itemize}
    \end{proof}
    
    \begin{claim}
    \label{claim:hybrid23}
    Hybrid $H_2$ is indistinguishable from Hybrid $H_3$.
    \end{claim}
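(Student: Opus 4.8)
The plan is to replay the proof of \cref{claim:hybrid01} almost verbatim. The only difference between $H_2$ and $H_3$ is that $H_3$ recomputes the decommitments $(\RegD_1,\dots,\RegD_k)$ \emph{sequentially} via the full commitment unitary $U=U_k\cdots U_1$ --- feeding the layer-$j$ commitment $\RegC_{j-1}$ forward into layer $j$ --- whereas $H_2$ recomputes them \emph{in parallel}, running each layer with a fresh $\ket{0^m}$ register in place of the forwarded commitment. This is exactly the $H_0$-versus-$H_1$ distinction analyzed in \cref{claim:hybrid01}, the only change being that here both hybrids first perform the common $\SWAP[\RegM,\RegM']$ step, so the message coordinate fed into each layer is $\ket 0$ rather than the honest $\RegM_{j-1}$.

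Accordingly, I would define $k+1$ sub-hybrids $H_0^{(2-3)},\dots,H_k^{(2-3)}$ with $H_0^{(2-3)}=H_3$ and $H_k^{(2-3)}=H_2$, where $H_i^{(2-3)}$ checks that $(\RegC,\RegD)$ is valid, applies $U^\dagger$, applies $\SWAP[\RegM,\RegM']$, then runs layers $1,\dots,k-i$ sequentially (forwarding the real commitments) and layers $k-i+1,\dots,k$ with fresh $\ket{0^m}$ commitment registers, carrying out exactly the ``break out the last $i$ layers'' bookkeeping (opening the last $i$ commitments via $U_{k-i+1}^\dagger\cdots U_k^\dagger$, parking intermediate registers into fresh $\ket 0$ registers, and recommitting in sequence) used to define $H_i^{(0-1)}$. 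Consecutive sub-hybrids $H_i^{(2-3)}$ and $H_{i+1}^{(2-3)}$ are indistinguishable by a reduction $R_i^{(2-3)}$ identical to $R_i^{(0-1)}$ except for inserting $\SWAP[\RegM,\RegM']$ right after $U^\dagger$: the reduction forwards the layer-$(k-i)$ commitment/decommitment pair to its swap-binding challenger, so that challenge bit $0$ reproduces $H_i^{(2-3)}$ and bit $1$ reproduces $H_{i+1}^{(2-3)}$. A distinguishing advantage $\varepsilon$ between $H_2$ and $H_3$ therefore yields advantage $\varepsilon/k$ against the swap binding of $\Com$.

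The step needing the most care is pure bookkeeping: one must fix the position of $\SWAP[\RegM,\RegM']$ (immediately after $U^\dagger$, identically in both hybrids) and keep its target register $\RegM'$ disjoint from the ``parking'' registers $\RegM_j',\RegC_j'$ introduced inside each $H_i^{(2-3)}$ and $R_i^{(2-3)}$ --- a superficial notational clash resolved by renaming. Once that is settled, the swap acts only on the message registers $\RegM=(\RegM_0,\dots,\RegM_{k-1})$ and $\RegM'$, which are disjoint from every register ($\RegD_j$, $\RegC_j'$, $\RegY_j$) that the reduction exposes to its challenger, so the entire argument of \cref{claim:hybrid01} carries over word for word with ``the committed message $(\RegM_0,\dots,\RegM_{k-1})$'' replaced throughout by ``$\ket{0^k}$''. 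This yields $H_2\approx_c H_3$ and completes the proof of \cref{theorem:domain-extension}.
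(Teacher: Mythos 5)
Your proof is correct and follows the paper's sub-hybrid decomposition and reduction skeleton exactly. There is one genuine (and worth noting) difference in how the final step is discharged. The paper's reduction $R_i^{(3-2)}$ has the swap-binding challenger swap \emph{only} the sub-register $\RegC_{k-i-1}$, and therefore relies on the fact that a swap-binding commitment remains swap-binding with respect to any subset of message registers (this is invoked explicitly in the proof of \cref{claim:hybrid12} via \cite[Lemma 15]{EC:Unruh16}). Your reduction instead plugs directly into the standard swap-binding game, where the challenger swaps the \emph{entire} committed message $(\RegM_{k-i-1},\RegC_{k-i-1})$, and you observe that the extra $\RegM_{k-i-1}$ swap is a no-op because the state there is already $\ket{0}$ after the initial $U^\dagger,\SWAP[\RegM,\RegM'],U$ sequence. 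This is a small but real simplification --- it discharges \cref{claim:hybrid23} against the unmodified \cref{def:swap-binding} without any appeal to the sub-register binding fact. The tradeoff is that it works only because the message coordinate happens to be a fixed $\ket 0$ at the relevant stage; the paper's sub-register formulation is the more reusable tool (and is in any case still needed for \cref{claim:hybrid12}). Your bookkeeping observations (the common placement of the swap, and the disjointness of $\RegM'$ from the parking registers) are exactly the points that need care and you have handled them correctly.
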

    \begin{proof}
    The proof is identical to the proof of \cref{claim:hybrid01}, except that we view the commitments as swap-binding commitments to just the registers $\RegC_i$ instead of $(\RegM_i, \RegC_i)$. That is, we only swap $\RegC_i$ instead of $(\RegM_i, \RegC_i)$, and only make use of the binding property of the commitment scheme on the $\RegC_i$ register.
    
    Formally, we again use a sequence of $k+1$ hybrids $H_0^{(3-2)}, \dots, H_k^{(3-2)}$. Hybrid $H_0^{(3-2)}$ is identical to $H_3$, and $H_k^{(3-2)}$ is identical to $H_2$. For $i = 0, \hdots, k$ we define $H_i^{(3-2)}$ by the following behavior for the challenger:
    \noindent
    \begin{itemize}
        \item[] \textbf{Hybrid} $H_i^{(3-2)}$:
        \begin{enumerate}
            \item The adversary sends $(\RegC, \RegD)$.
            \item The challenger does the following:
            \begin{enumerate}
                \item Check that the commitment is valid. If not, the experiment aborts.
                \item Apply $U^\dagger$ to recover the committed message on registers $(\RegM_0, \dots, \RegM_{k-1})$.
                \item Initialize $k$ qubits $\RegM' = (\RegM_0', \dots,\RegM_{k-1}')$ to $\ket{0^k}$ and apply $\SWAP[\RegM,\RegM']$.
                \item Apply $U$.
                \item If $i>0$, open the last $i$ commitments using $U_{k-i+1}^\dagger \cdots U_k^\dagger$, initialize a new register $\RegC_{k-i}'$ to $\ket{0}$, and apply $\SWAP[\RegC_{k-i},\RegC_{k-i}']$.
                \item If $i>1$, for $j = k-i, \dots, k-2$ in sequence, apply $\Com$ to $(\RegM_j,\RegC_j,\RegY_j)$ to produce a decommitment on $\RegD_{j+1}$ and a commitment on $\RegC_{j+1}$, initialize a new register $\RegC_{j+1}'$ to $\ket{0}$, and apply $\SWAP[\RegC_{j+1},\RegC_{j+1}']$.
                \item If $i>0$, apply $\Com$ to $(\RegM_{k-1}, \RegC_{k-1},\RegY_{k-1})$ to produce a decommitment on $\RegD_k$ and a commitment on $\RegC_k$.
            \end{enumerate}
            \item The challenger returns the decommitment $\RegD = (\RegD_1, \dots, \RegD_k)$ to the adversary.
        \end{enumerate}
    \end{itemize}
    Suppose that an adversary $A$ can distinguish hybrids $H_0^{(3-2)}$ and $H_k^{(3-2)}$ with advantage $\varepsilon$. Then for some $i \in \{0,\dots,k-1\}$, the adversary $A$ can distinguish hybrid $H_i^{(3-2)}$ from hybrid $H_{i+1}^{(3-2)}$ with advantage $\varepsilon / k$. We now construct a reduction $R_i^{(3-2)}$ that uses such an adversary to achieve advantage $\varepsilon / k$ in the binding experiment for the commitment scheme $\Com$.
    
    \noindent
    \begin{itemize}
        \item[] \textbf{Reduction} $R_i^{(3-2)}$:
        \begin{enumerate}
            \item The adversary sends $(\RegC,\RegD)$ to $R_i^{(3-2)}$.
            \item The reduction does the following:
            \begin{enumerate}
                \item Check that the commitment is valid. If not, the experiment aborts and the reduction makes a random guess $b' \from \{0,1\}$.
                \item Apply $U^\dagger$ to recover the committed message on registers $(\RegM_0, \dots, \RegM_{k-1})$.
                \item Initialize $k$ qubits $\RegM' = (\RegM_0', \dots,\RegM_{k-1}')$ to $\ket{0^k}$ and apply $\SWAP[\RegM,\RegM']$.
                \item Apply $U$.
                \item If $i>0$, open the last $i$ commitments using $U_{k-i+1}^\dagger \cdots U_k^\dagger$. Initialize an internal register $\RegC_{k-i}'$ to $\ket{0}$ and apply $\SWAP[\RegC_{k-i},\RegC_{k-i}']$.
                \item If $i>1$, for $j = k-i, \dots, k-2$ in sequence, apply $\Com$ to $(\RegM_j, \RegC_j, \RegY_j)$ to obtain  $(\RegD_{j+1},\RegC_{j+1})$. Initialize an internal register $\RegC_{j+1}'$ to $\ket{0}$ and then apply $\SWAP[\RegC_{j+1},\RegC_{j+1}']$.
                \item If $i>0$, apply $\Com$ to $(\RegM_{k-1}, \RegC_{k-1}, \RegY_{k-1})$ to produce a decommitment on $\RegD_k$ and a commitment on $\RegC_k$. For convenience of notation, let $\RegC_k'=\RegC_k$.
                \item Send $(\RegD_{k-i}, \RegC_{k-i}')$ to the challenger. According to the challenge bit $b$, the challenger applies $\SWAP^b$ to an auxiliary $\ket{0}$ state and the register $\RegC_{k-i-1}$ underlying $(\RegD_{k-i}, \RegC_{k-i}')$. Then the challenger returns $\RegD_{k-i}$ to the reduction.
            \end{enumerate}
            \item The reduction returns the decommitment $\RegD = (\RegD_1, \dots, \RegD_k)$ to the adversary.
        \end{enumerate}
    \end{itemize}
    The view of the adversary in the $b=0$ world of $R_i^{(3-2)}$ is exactly the same as in $H_i^{(3-2)}$, and the view of the adversary in the $b=1$ world is exactly the same as in $H_{i+1}^{(3-2)}$.
    \end{proof}

\subsection{Formalizing the folklore: QSCs from QBCs}
\label{sec:folklore-construction}

In this section, we formalize the security of the ``folklore'' construction of \emph{hiding} QSCs: to commit to an $n$-qubit quantum state $\ket{\psi}$, sample two random $n$-bit classical strings $r,s$, and send $X^rZ^s\ket{\psi}$ together with commitments to the classical string $(r,s)$. The opening is just the opening to the commitment to $(r,s)$, which enables the receiver to recover $\ket{\psi}$ by applying $Z^sX^r$. We instantiate the commitment to $(r,s)$ with any quantum-communication commitment to classical messages satisfying hiding and binding.

\paragraph{Defining quantum bit(string) commitments.}

The syntax for a quantum bit(string) commitment (QBC) is identical to our syntax for \emph{non-interactive} quantum state commitments (\cref{def:noninteractive_qsc}). 

Hiding and binding are defined as follows:

\begin{definition}[Hiding for QBCs] \label{def:qbc-hiding}
For a quantum bit(string) commitment $\Com_{\QBC}$, an interactive adversary $A$, a challenge bit $b \in \{0,1\}$, and a security parameter $\lambda$, define a security experiment $\emph{\texttt{HideExpt}}_{\QBC,A,b}(\lambda)$ as follows.

\indent $\emph{\texttt{HideExpt}}_{\QBC,A,b}(\lambda)$:
\begin{enumerate}
    \item\label[step]{step:qbc-hiding-expt-measurement} The adversary $A$ sends a two messages $x_0, x_1 \in \{0,1\}^{|\RegM|}$ to the challenger.
    \item The challenger (acting as the honest sender) initializes register $\RegM$ to the standard basis state $\ket{x_b}$ and register $\RegW$ to $\ket{0}$, applies $\Com_{\QBC}$, and sends register $\RegC$ to the adversary.
    \item The output of the experiment is $b' \gets A$. 
\end{enumerate}

$\Com_{\QBC}$ is computationally (resp. statistically) hiding if there exists a negligible function $\mu(\lambda)$ such that for all polynomial-time (resp. unbounded-time) quantum interactive adversaries $A$,
\[ \Pr_{b \gets \{0,1\}} [\emph{\texttt{HideExpt}}_{\QBC,A,b}(\lambda)=b] \leq \frac{1}{2} + \mu(\lambda). \]
\end{definition}

\begin{definition}[Binding for QBCs \cite{EC:Unruh16}] \label{def:qbc-binding}

For a quantum bit(string) commitment $\Com_{\QBC}$, an interactive adversary $A$, a challenge bit $b \in \{0,1\}$, and a security parameter $\lambda$, define a binding security experiment $\emph{\texttt{ColBindExpt}}_{\QBC,A,b}(\lambda)$ as follows. This definition is an adaptation of Unruh's collapse-binding definition~\cite{EC:Unruh16} to the setting of quantum bit commitments.

\indent $\emph{\texttt{ColBindExpt}}_{\QBC,A,b}(\lambda)$:
\begin{enumerate}
    \item The adversary $A$ (acting as a malicious committer) sends registers $(\RegC, \RegD)$ to the challenger.    
    \item The challenger applies $\Com_{\QBC}^{\dagger}$ and measures $\{\ketbra{0}_{\RegW}, \Id - \ketbra{0}_{\RegW}\}$; if the decommitment is invalid (the measurement rejects), it aborts the experiment (i.e., outputs a random $b'$).
    \item\label[step]{step:qbc-bind-expt-challenge} The challenger does the following: 
    \begin{itemize}
        \item if $b = 0$, the challenger simply applies $\Com_{\QBC}$, and sends the $\RegD$ register to $A$;
        \item if $b = 1$, the challenger measures $\RegM$ in the standard basis. It then applies $\Com_{\QBC}$ and sends the $\RegD$ register to $A$.
    \end{itemize}
    \item The output of the experiment is $b' \gets A$.
\end{enumerate}

$\Com_{\QBC}$ is computationally (resp. statistically) collapse-binding if there exists a negligible function $\mu(\lambda)$ such that for all polynomial-time (resp. unbounded-time) quantum interactive adversaries $A$,
\[ \Pr_{b \gets \{0,1\}} [\emph{\texttt{ColBindExpt}}_{\QBC,A,b}(\lambda)=b] \leq \frac{1}{2} + \mu(\lambda). \]
\end{definition}

\paragraph{Non-interactive form of the folkore construction.} Before proving security of the commitment to an $n$-qubit state, we write the folklore construction in non-interactive form for a commitment to one qubit. Let $\Com_{\QBC}$ be a quantum bitstring commitment to \emph{two} classical bits. That is, it maps two bits in registers $\RegM' \coloneqq (\RegM'_1, \RegM'_2)$ and ancilla register $\RegW'$ to commitment register $\RegC'$ and decommitment register $\RegD'$. Recall that in the folklore construction, the sender commits to two randomly samples bits, $r$ and $s$. To purify this step, we initialize two additional single-qubit registers $\RegK \coloneqq (\RegK_1, \RegK_2)$, and make $(\RegK_1, \RegM'_1)$ and $(\RegK_2, \RegM'_2)$ each hold one EPR pair. Formally, the construction is as follows:

To commit to a qubit in message register $\RegM$, a sender will initialize ancillary registers $\RegW \coloneqq (\RegK_1, \RegK_2, \RegM'_1, \RegM'_2, \RegW')$ to the $\ket{0}$ state. Then the sender applies the following commitment unitary:
\begin{align} \label{eq:qsc_from_qbc}
    \Com_{\QSC} = \Com_{\QBC} (\ctl_{\RegM'_2} \mh Z_{\RegM}) (\ctl_{\RegM'_1} \mh X_{\RegM}) \left( (\ctl_{\RegK_1} \mh X_{\RegM'_1}) H_{K_1}  \otimes (\ctl_{\RegK_2} \mh X_{\RegM'_2}) H_{\RegK_2} \right) \; .
\end{align}
The commitment register is $\RegC \coloneqq (\RegM,\RegC')$ and decommitment register is $ \RegD \coloneqq (\RegK_1, \RegK_2, \RegD')$. Thus, for any quantum message $\ket{\psi}_{\RegM}$,
\[
    \Com_{\QSC} \left( \ket{\psi}_{\RegM} \otimes \ket{0}_{\RegK \RegM' \RegW'} \right) = \frac{1}{2} \sum_{r,s \in \{0,1\}} \ket{r,s}_{\RegK} \otimes (\Com_{\QBC} \ket{r,s}_{\RegM'} \otimes \ket{0}_{\RegW'}) \otimes Z^sX^r \ket{\psi}_{\RegM} .
\]

\begin{theorem} \label{theorem:folklore-construction}
    The quantum state commitment scheme in \cref{eq:qsc_from_qbc}, $\Com_{\QSC}$, is computationally (resp. statistical) hiding and statistically (resp. computational) binding if the underlying quantum bit commitment scheme, $\Com_{\QBC}$, is computationally (resp. statistical) hiding and statistically (resp. computational) binding. 
\end{theorem}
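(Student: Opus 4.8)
The plan is to prove the hiding and the binding halves of the statement independently, and to do so uniformly in the ``computational vs.\ statistical'' dichotomy: hiding of $\Com_{\QSC}$ will reduce to hiding of $\Com_{\QBC}$ (\cref{def:qbc-hiding}), and (swap-)binding of $\Com_{\QSC}$ will reduce to collapse-binding of $\Com_{\QBC}$ (\cref{def:qbc-binding}), with each reduction running the $\QSC$-adversary together with a constant number of applications of $\Com_{\QSC}$, $\Com_{\QSC}^\dagger$, $\Com_{\QBC}$, $\Com_{\QBC}^\dagger$, $\SWAP$ and single-qubit Cliffords, hence preserving efficiency. This gives both ``resp.''\ implications at once. (One could also observe that by the hiding--binding duality of \cref{property:hiding-binding-duality} the two halves are mirror images of each other, but the direct route is cleaner.) Throughout, write $c_{r,s}$ for the honest $\QBC$ commitment to the $2$-bit string $(r,s)$, i.e.\ the reduced state on the commitment register of $\Com_{\QBC}\ket{r,s}\ket{0}$, and $d_{r,s}$ for its reduced state on the decommitment register, and recall the explicit form of the honest $\QSC$ commitment to $\ket{\psi}_\RegM$ given just below \cref{eq:qsc_from_qbc}.

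\paragraph{Hiding.} Fix any state $\ket{\psi}$ that the adversary places in $\RegM$, possibly entangled with an internal register $\RegR$. Running $\Com_{\QSC}$ yields $\tfrac12\sum_{r,s}\ket{r,s}_\RegK\otimes(\Com_{\QBC}\ket{r,s}\ket{0})_{\RegC'\RegD'}\otimes(Z^sX^r)_\RegM\ket{\psi}_{\RegM\RegR}$, and the adversary sees only $\RegC=(\RegM,\RegC')$, i.e.\ the partial trace over $\RegD=(\RegK,\RegD')$. First I would trace out $\RegK$: since the $\ket{r,s}_\RegK$ are orthonormal this collapses the sum to a classical mixture, giving the view $\tfrac14\sum_{r,s}c_{r,s}\otimes(Z^sX^r\otimes\Id_\RegR)\ketbra{\psi}(X^rZ^s\otimes\Id_\RegR)$. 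Next I would invoke hiding of $\Com_{\QBC}$ to replace $c_{r,s}$ by $c_{0,0}$: the reduction is a $\QBC$-hiding adversary that samples $(r,s)\gets\{0,1\}^2$, requests a commitment to $(r,s)$ (world $0$) or $(0,0)$ (world $1$), and builds the rest of the $\QSC$ challenger's message itself, so that the $\QSC$-adversary's view in world $0$ is exactly the real one and in world $1$ is $c_{0,0}\otimes\tfrac14\sum_{r,s}(Z^sX^r\otimes\Id)\ketbra{\psi}(X^rZ^s\otimes\Id)=c_{0,0}\otimes\tfrac{\Id_\RegM}{2}\otimes\psi_\RegR$ by perfect security of the quantum one-time pad, where $\psi_\RegR=\Tr_\RegM\ketbra{\psi}$. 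Running the same argument with $\ket{\psi}$ replaced by $\ket{0}$ shows that the adversary's view when the challenger commits to $\ket{0}$ is also $\approx_c c_{0,0}\otimes\tfrac{\Id_\RegM}{2}\otimes\psi_\RegR$ (the $\RegR$-marginal is unchanged since the adversary sent $\RegM$ away). Since the common limiting state does not depend on the challenge bit, hiding follows; statistical hiding of $\Com_{\QBC}$ gives statistical hiding of $\Com_{\QSC}$ by the same reduction with an unbounded distinguisher.

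\paragraph{Binding.} In the swap-binding experiment for $\Com_{\QSC}$, let $\RegR$ be the adversary's register. After the adversary submits $(\RegC,\RegD)$ and the challenger applies $\Com_{\QSC}^\dagger$ and projects $\RegW$ onto $\ket{0}$, the (sub-normalized) state is $\ket{0}_\RegW\otimes\ket{\xi}_{\RegM\RegR}$ for some $\ket{\xi}$. Re-applying $\Com_{\QSC}$ --- preceded, in the $b=1$ branch, by $\SWAP[\RegM,\RegE]$ with a fresh $\RegE=\ket{0}$ --- produces $\tfrac12\sum_{r,s}\ket{r,s}_\RegK\otimes(\Com_{\QBC}\ket{r,s}\ket{0})_{\RegC'\RegD'}\otimes(Z^sX^r)_\RegM\ket{\zeta}$, where conditioned on $\RegK=(r,s)$ the pair $(\RegC',\RegD')$ carries an \emph{honest} $\QBC$ commitment to $(r,s)$, and $\ket{\zeta}$ is $\ket{\xi}$ when $b=0$ and has $\RegM=\ket{0}$ when $b=1$ (the old contents of $\RegM$ sitting in the challenger-held $\RegE$). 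The key step is to apply collapse-binding of $\Com_{\QBC}$ to argue that the $\QSC$-adversary cannot detect a standard-basis measurement of $\RegK$: since applying $\Com_{\QBC}^\dagger$ to $(\RegC',\RegD')$ makes the $\QBC$ message register perfectly correlated with $\RegK$, measuring $\RegK$ is precisely the collapse-binding challenger's $b=1$ operation, and the reduction simply has a $\QBC$-collapse-binding adversary run the $\QSC$ adversary, perform $\Com_{\QSC}^\dagger$, the $\RegW$-check, the (possible $\SWAP$ and) $\Com_{\QSC}$, forward $(\RegC',\RegD')$ to its challenger while retaining $(\RegK,\RegM,\RegR,\RegE)$, and return $\RegD=(\RegK,\RegD')$ to the $\QSC$ adversary. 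Once $\RegK$ is decohered, tracing out the challenger-held registers $\RegM,\RegC',\RegE$ leaves the $\QSC$-adversary with $\bigl(\tfrac14\sum_{r,s}\ketbra{r,s}_\RegK\otimes d_{r,s}\bigr)\otimes\Tr_\RegM\ketbra{\xi}$ --- here I use that tracing out the one-time-padded register $\RegM$ destroys all dependence on $\ket{\zeta}$ except through its $\RegR$-marginal, which equals $\Tr_\RegM\ketbra{\xi}$ in both branches. Since this is the same state in the $b=0$ and $b=1$ branches, the chain ``$b{=}0$ honest $\approx$ $b{=}0$ decohered $=$ $b{=}1$ decohered $\approx$ $b{=}1$ honest'' shows the adversary cannot guess $b$; statistical vs.\ computational collapse-binding of $\Com_{\QBC}$ gives the corresponding flavour of swap-binding for $\Com_{\QSC}$.

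\paragraph{Main obstacle.} I expect the delicate part to be justifying the binding reduction: one must check carefully that re-running $\Com_{\QSC}$ after verification really does produce a $\RegK$-controlled \emph{honest} $\QBC$ commitment on $(\RegC',\RegD')$ --- so that the collapse-binding challenger's message measurement acts exactly as a standard-basis measurement of $\RegK$, its internal verification always succeeds, and the sub-normalization coming from the $\QSC$ $\RegW$-check is harmless --- and that this holds even when the $\QSC$ adversary submits a malformed $(\RegC,\RegD)$, which is fine precisely because $\Com_{\QSC}^\dagger$ followed by the projection onto $\ket{0}_\RegW$ forces the post-measurement state into the required form. The remaining manipulations (the orthogonality collapse of $\RegK$ and the Pauli twirl on $\RegM$) are routine linear algebra, and since all operations in each reduction are efficiently implementable, the computational and statistical statements go through in parallel.
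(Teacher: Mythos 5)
Your proposal is correct and follows essentially the same route as the paper's proof: for hiding, a hybrid that swaps the $\Com_{\QBC}$ commitment to $(0,0)$ followed by the quantum-one-time-pad twirl on $\RegM$; for binding, a hybrid that invokes $\Com_{\QBC}$ collapse-binding to decohere $\RegK$ (which, as you note, is equivalent to the paper's measurement of $\RegM'$ by the perfect $\RegK$--$\RegM'$ correlation enforced by the $\RegW$-check), after which the traced-out decommitment $\frac14\sum_{r,s}\ketbra{r,s}_\RegK\otimes d_{r,s}\otimes\xi_\RegR$ is manifestly independent of whether the $\SWAP$ was applied. The ``$b{=}0$ honest $\approx$ $b{=}0$ decohered $=$ $b{=}1$ decohered $\approx$ $b{=}1$ honest'' chain is exactly the paper's $H_0\approx H_1 = H_2\approx H_3$.
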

\begin{proof}
    Hiding of $\Com_{\QSC}$ follows by a standard hybrid argument. To simplify notation, for $(r,s) \in \{0,1\}^2$, let $\tau_{r,s}$ denote the density matrix corresponding to the commitment to $(r,s)$ under $\Com_{\QBC}$.
    
    \begin{itemize}
        \item $H_0$: In this hybrid, the adversary $\Adv$ sends a single qubit state $\brho_{\RegM}$ to the challenger. The challenger returns the mixed state
        \[
            \frac{1}{4} \sum_{r,s} \tau_{r,s} \otimes (X^rZ^s) \brho_{\RegM} (Z^sX^r),
        \]
        corresponding to an honest commitment of $\brho_{\RegM}$ under $\Com_{\QSC}$. This hybrid is the view of the adversary in the $b = 0$ case of the $\QSC$ hiding experiment (\cref{def:quantum-hiding}).

        \item $H_1$: This hybrid is the same as $H_0$, except the commitment to $(r,s)$ is replaced by a commitment to $(0,0)$.
        \[
            \tau_{0,0} \otimes \frac{1}{4} \sum_{r,s} (X^rZ^s) \brho_{\RegM} (Z^sX^r).
        \]
        This is indistinguishable from $H_1$ by the hiding security of $\Com_{\QBC}$: an adversary in the hiding experiment can sample random $r,s \gets \{0,1\}^2$, send $(r,s)$ and $(0,0)$ to the challenger, receive the state $\tilde{\tau}$ (which is either $\tau_{r,s}$ or $\tau_{0,0}$), and prepare the state $\frac{1}{4}\sum_{r,s} \tilde{\tau} \otimes (X^rZ^s) \brho_{\RegM} (Z^sX^r)$.
        \item $H_2$: This hybrid is the same as $H_1$, except the challenger uses $\ketbra{0}$ instead of $\brho_{\RegM}$:
        \[
            \tau_{0,0} \otimes \frac{1}{4} \sum_{r,s} (X^rZ^s) \ketbra{0}_{\RegM} (Z^sX^r).
        \]
        This is identical to $H_1$ since the state in both hybrids is $\tau_{0,0} \otimes (\Id/2)$.
        \item $H_3$: This hybrid is the same as $H_2$, except the commitment to $(0,0)$ is replaced by a commitment to $(r,s)$: 
        \[
            \frac{1}{4} \sum_{r,s} \tau_{r,s} \otimes (X^rZ^s) \ketbra{0}_{\RegM} (Z^sX^r).
        \]
        This follows by the hiding property of $\Com_{\QBC}$. This completes the proof of hiding for $\Com_{\QSC}$ since $H_3$ is the view of the adversary in the $b = 1$ case of the $\Com_{\QSC}$ hiding experiment (\cref{def:quantum-hiding}).
    \end{itemize}

    We now show that $\Com_{\QSC}$ is swap binding (\cref{def:swap-binding}) assuming $\Com_{\QBC}$ is binding (\cref{def:qbc-binding}). 
    
    \begin{itemize}
        \item Hybrid $H_0$: 
        \begin{enumerate}
            \item The adversary $\Adv$ sends a quantum state on registers $\RegC, \RegD$ to the challenger. 
            \item The challenger then does the following:
            \begin{enumerate}
                \item Apply $\Com_{\QSC}^{\dagger}$ to $(\RegC, \RegD)$.
                \item Measure $\{\ketbra{0}_{\RegW}, \Id - \ketbra{0}_\RegW\}$. If the measurement rejects, abort.
                \item Apply $\Com_{\QSC}$. 
                \item Send the decommitment register $\RegD$ to $\Adv$.
            \end{enumerate}
        \end{enumerate}
        The view of $\Adv$ corresponds to the view of the adversary in the $b=0$ world of the swap-binding experiment for $\Com_{\QSC}$ (\cref{def:swap-binding}). 
        \item Hybrid $H_1$: 
        \begin{enumerate}
            \item The adversary $\Adv$ sends a quantum state on registers $\RegC, \RegD$ to the challenger. 
            \item The challenger then does the following:
            \begin{enumerate}
                \item\label[step]{step:h1-open} Apply $\Com_{\QSC}^{\dagger}$ to $(\RegC, \RegD)$.
                \item\label[step]{step:h1-measurement} Measure $\{\ketbra{0}_{\RegW}, \Id - \ketbra{0}_\RegW\}$. If the measurement rejects, abort.
                \item\label[step]{step:h1-recommit} Apply $\Com_{\QSC}$. 
                {\color{red}
                \item Apply $\Com_{\QBC}^\dagger$ to reveal the two-bit committed classical message on register $\RegM'$.
                \item Measure $\RegM'$ in the standard basis.
                \item Apply $\Com_{\QBC}$.}
                \item Send the decommitment register $\RegD$ to $\Adv$.
            \end{enumerate}
        \end{enumerate}
        The measurement in \cref{step:h1-measurement} ensures that after applying $\Com_{\QSC}$ in \cref{step:h1-recommit}, the resulting state includes a valid commitment-decommitment pair for the message in $\RegM'$ under $\Com_{\QBC}$. By the binding security of $\Com_{\QBC}$, the act of measuring the contents of $\RegM'$ in the standard basis is undetectable without access to the commitment $\RegC'$. Thus, hybrid $H_1$ is indistinguishable from $H_0$; otherwise, an adversary in the $\Com_{\QBC}$ binding experiment receives $(\RegC, \RegD)$ from $\Adv$, performs \cref{step:h1-open,step:h1-measurement,step:h1-recommit}, sends $(\RegC', \RegD')$ to the challenger, receives $\RegD'$ from the challenger, sends $\RegD$ to $\Adv$, and uses $\Adv$'s output to win the $\Com_{\QBC}$ binding experiment. 
        \item Hybrid $H_2$: 
        \begin{enumerate}
            \item The adversary $\Adv$ sends registers $(\RegC, \RegD)$ to the challenger. 
            \item The challenger then does the following:
            \begin{enumerate}
                \item Apply $\Com_{\QSC}^{\dagger}$ to $(\RegC, \RegD)$ to obtain $(\RegM,\RegW)$.
                \item\label[step]{step:h2-measurement} Measure $\{\ketbra{0}_{\RegW}, \Id - \ketbra{0}_\RegW\}$. If the measurement rejects, abort.
                \item {\color{red} Replace the contents of $\RegM$ with the state $\ketbra{0}_{\RegM}$}. 
                \item Apply $\Com_{\QSC}$. 
                \item Apply $\Com_{\QBC}^\dagger$ to reveal the two-bit committed classical message on register $\RegM'$.
                \item Measure $\RegM'$ in the standard basis.
                \item Apply $\Com_{\QBC}$.
                \item Send the decommitment register $\RegD$ to $\Adv$.
            \end{enumerate}
        \end{enumerate}
        For $(r,s) \in \{0,1\}^2$, let $\gamma_{r,s}$ denote the density matrix corresponding to the decommitment to $(r,s)$ under the quantum commitment to classical messages $\Com_{\QBC}$. By a straightforward computation, regardless of the the contents of $\RegM$ after \cref{step:h2-measurement}, the state that the challenger sends $\Adv$ is 
        \[
            \frac{1}{4} \sum_{r,s} \ketbra{r,s}_{\RegK} \otimes (\gamma_{r,s})_{\RegD'} .
        \]
        Thus $H_2$ is perfectly indistinguishable from $H_1$. 
        \item Hybrid $H_3$: 
        \begin{enumerate}
            \item The adversary $\Adv$ sends registers $(\RegC, \RegD)$ to the challenger. 
            \item The challenger then does the following:
            \begin{enumerate}
                \item Apply $\Com_{\QSC}^{\dagger}$ to the $(\RegC, \RegD)$ registers.
                \item Measure $\{\ketbra{0}_{\RegW}, \Id - \ketbra{0}_\RegW\}$. If the measurement rejects, abort.
                \item Replace the contents of $\RegM$ with the state $\ketbra{0}_{\RegM}$. 
                \item Apply $\Com_{\QSC}$. 
                {\color{red}
                \item[] \sout{Apply $\Com_{\QBC}^\dagger$ to reveal the two-bit committed classical message on register $\RegM'$.}
                \item[] \sout{Measure $\RegM'$ in the standard basis.}
                \item[] \sout{Apply $\Com_{\QBC}$.}}
                \item[(e)] Send the decommitment register $\RegD$ to $\Adv$.
            \end{enumerate}
        \end{enumerate}
        The difference between hybrids $H_2$ and $H_3$ is that challenger measures $\RegM'$ in the former but not in the latter. Again, by the binding security of $\Com_{\QBC}$, the two hybrids are indistinguishable. This completes the proof of binding for $\Com_{\QSC}$ because $H_3$ is the view of the adversary in the $b=1$ world of the binding experiment (\cref{def:swap-binding}). 
    \end{itemize}
\end{proof}

The proof of security above shows that the folklore construction applied to one qubit is a hiding and binding QSC. By appealing to the security of the parallel composition of $n$ such commitment schemes using \cref{theorem:parallel_composition}, we get a full QSC for $n$-qubit quantum messages. 

\begin{corollary}
    Let $\Com_{\QSC}$ be the QSC defined by \cref{eq:qsc_from_qbc}. Then for any $n=\poly(\lambda)$, the $n$-fold parallel composition $\bigotimes_{i=1}^n \Com_{\QSC}$ is a hiding and binding QSC if the underlying quantum commitment to classical messages $\Com_{\QBC}$ is hiding and binding.  
\end{corollary}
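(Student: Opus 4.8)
The plan is to separate binding and hiding, and in each case reduce the $n$-fold statement to the single-qubit statement already established in \cref{theorem:folklore-construction} via a parallel-composition argument.

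For binding, the proof is essentially immediate. By \cref{theorem:folklore-construction}, the single-qubit scheme $\Com_{\QSC}$ of \cref{eq:qsc_from_qbc} is computationally (resp. statistically) swap binding whenever $\Com_{\QBC}$ is computationally (resp. statistically) binding. Applying \cref{theorem:parallel_composition} with $\Com_i = \Com_{\QSC}$ for every $i \in [n]$ then shows that $\bigotimes_{i=1}^n \Com_{\QSC}$ is swap binding. The only thing to check is that the $1/n$ loss incurred by the hybrid argument inside \cref{theorem:parallel_composition} remains negligible, which holds because $n = \poly(\lambda)$.

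For hiding, I would first record the analogue of \cref{theorem:parallel_composition} for the hiding experiment — that hiding composes in parallel — proved by the same style of hybrid argument. Define hybrids $H_0, \dots, H_n$, where in $H_j$ the challenger runs the commit phase on the adversary's real message in slots $j+1, \dots, n$ and on $\ket{0}$ in slots $1, \dots, j$, returning the commitment registers $(\RegC_1, \dots, \RegC_n)$ to the adversary; then $H_0$ and $H_n$ are exactly the $b=0$ and $b=1$ worlds of \cref{def:quantum-hiding} for $\bigotimes_{i=1}^n \Com_{\QSC}$. A distinguisher between $H_{j-1}$ and $H_j$ with advantage $\varepsilon$ yields, for the single fixed slot $j$, an attacker against hiding of $\Com_{\QSC}$ with advantage $\varepsilon$: the reduction receives $(\RegM_1, \dots, \RegM_n)$ from the distinguisher, honestly commits to $\ket{0}$ in slots $<j$ and to $\RegM_i$ in slots $>j$ (producing $\RegC_i$ for those slots on its own), forwards $\RegM_j$ to the hiding challenger to obtain $\RegC_j$, hands $(\RegC_1,\dots,\RegC_n)$ to the distinguisher, and echoes its guess. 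The key point is that the reduction keeps the remaining message registers — which may be arbitrarily entangled with $\RegM_j$ — inside its own workspace, so entanglement across slots causes no difficulty; this is the same mechanism used in the proof of \cref{theorem:parallel_composition}. Summing over the $n = \poly(\lambda)$ hybrids gives parallel composition for hiding (preserving the computational/statistical distinction), and combining with the single-qubit hiding guarantee of \cref{theorem:folklore-construction} finishes the hiding half of the corollary.

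I do not expect a genuine obstacle here: the hiding parallel-composition argument is routine and in fact strictly easier than the binding one, since there is no $\Com^\dagger$-and-verify step and the reduction never needs the commitment register after sending it. The only thing to be careful about is that \cref{theorem:parallel_composition} as stated speaks only of binding, so the hiding composition must be invoked or restated explicitly. If one wishes to avoid a new lemma entirely, an alternative is to observe via \cref{property:hiding-binding-duality} that the hiding game for $\bigotimes_{i=1}^n \Com_{\QSC}$ coincides with the binding game for the scheme obtained by swapping the roles of $\RegC$ and $\RegD$ in each factor, which is itself a parallel composition of single-qubit schemes; \cref{theorem:parallel_composition} then applies directly. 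Writing out the hybrid argument above is, however, cleaner and self-contained.
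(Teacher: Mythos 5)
Your proof is correct and follows the paper's approach exactly: combine the single-qubit security of \cref{theorem:folklore-construction} with parallel composition. You are actually a bit more careful than the paper's own one-line proof, which invokes only \cref{theorem:parallel_composition} (stated for binding alone); you rightly flag that parallel composition of hiding needs its own (routine) justification, and both the hybrid argument you give and the duality shortcut via \cref{property:hiding-binding-duality} correctly supply the step the paper leaves implicit.
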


\newpage

\section{The oracle security lemma}
\label{sec:oracle-security}

In this section we prove that swap binding is equivalent to a seemingly stronger definition that we call oracle swap binding. As discussed in \Cref{subsubsec:to-rewinding-difficulties}, oracle swap binding plays a crucial role in our rewinding-based security proofs (\cref{sec:rewinding}).

% We refer the reader to \Cref{subsubsec:to-rewinding-difficulties} to see why oracle swap binding is a necessary component in proofs of security using swap binding commitment schemes. For example, this equivalence will be a crucial part of our rewinding-based security proofs in~\cref{sec:rewinding}.

% Roughly speaking, a commitment is oracle swap binding if the swap binding game remains hard even when the adversary is given an oracle that can perform \emph{arbitrary efficient operations on the originally committed message}. 

Recall that in the swap-binding security experiment, the challenger's bit $b$ determines whether the adversary's originally committed message is left on the $\RegM$ register (when $b = 0$) or swapped into the challenger's $\RegM'$ register (when $b = 1$). Of course, the standard swap-binding adversary is unable to access either of these registers: the state on $\RegM$ can only be revealed by applying $\Com^\dagger$ to $(\RegC,\RegD)$ (which the adversary cannot do since the challenger does not return the $\RegC$ register) and the contents of the $\RegM'$ register remain entirely out of the adversary's view. Roughly speaking, the oracle swap binding experiment gives the adversary an oracle that can perform \emph{arbitrary efficient operations on the originally committed message}, i.e., $\RegM$ if $b = 0$ or $\RegM'$ if $b = 1$.

To state the definition formally, we use the following notation:
\begin{itemize}
    \item For any operator $O$ and commitment unitary $\Com$, we write $\widehat{O} \coloneqq (\Com) O (\Com^{\dagger})$. 
    
    % operations $\RegM$ us to easily specify operations acting on $\RegM$  particular, applying $\widehat{O}$ to $(\RegC,\RegD)$ is equivalent to applying $\cO$ to apply an operation $\RegO$ on $\RegM$ to a commitment-decommitment pair, we msut perform $\widehat{O}$. is useful for appyling operations on the $\RegM$ register operations to be performed on $\RegM$
    \item Let $\Pi \coloneqq \Com (\Id_\RegM \otimes \ketbra{0}_\RegW) \Com^{\dagger}$ denote the projection onto valid commitment-decommitment states.
    \item Recall that for a pair of equal-size registers $\RegM, \RegM'$, the unitary $\SWAP[\RegM,\RegM']$ maps $\ket{\psi}_{\RegM}\ket{\phi}_{\RegM'}$ to $\ket{\phi}_{\RegM}\ket{\psi}_{\RegM'}$. 
\end{itemize}

In the oracle swap binding experiment, the adversary specifies an arbitrary unitary $\cO$, which acts on $\RegM$ and any number of additional ancilla qubits $\RegR$. After the adversary sends $(\RegC,\RegD)$ to its challenger and gets back a state on $\RegD$, it is allowed to perform the following oracle call as many times as desired:
% a state $\RegD$ to the adversary,  

% between oracle-swap binding experiment and the standard swap-binding experiment i fr differs  the standard swap-binding experiment, the oracle swap-binding adversary submits $(\RegC,\RegD)$ to the challenger and then gets back a state on $\RegD$. The oracle swap-binding adversary specifies an arbitrary efficient unitary $\cO$, which acts on $\RegM$ and any number of additional ancilla qubits $\RegR$. The adversary is given the ability to interact with the following oracle interface:
% who returns $\RegD$. But now, before the adversary specifies n the adversary   After the challenger returns $\RegD$ to the adversary, the adversary can make queries to the oracle as follows:
\begin{enumerate}
    \item The adversary sends registers $(\RegR,\RegD)$ to the oracle.
    \item If $b = 0$, the oracle applies the unitary $\widehat{\cO} \Pi + (\Id - \Pi)$ to the registers $(\RegC,\RegD,\RegR)$. If $b = 1$, the oracle applies the unitary $\widehat{\SWAP[\RegM,\RegM']} \cdot \widehat{\cO} \cdot \widehat{\SWAP[\RegM,\RegM']} \Pi + (\Id - \Pi)$.
    \item The oracle returns $(\RegR,\RegD)$ to the adversary.
\end{enumerate}

We can write this oracle more compactly as
\[
G_b \coloneqq  (\widehat{\SWAP[\RegM, \RegM']})^b \cdot \widehat{ \cO} \cdot  (\widehat{\SWAP[\RegM, \RegM']})^b \Pi + (\Id-\Pi).
\]
When $(\RegC,\RegD)$ contains a valid commitment (i.e., the state is in the image of $\Pi$), the oracle behaves as expected: in the $\Com^\dagger$ basis, it applies $\cO$ to $(\RegM,\RegR)$ if $b = 0$, or to $(\RegM',\RegR)$ if $b = 1$. To ensure that the oracle does not enable trivial attacks, we will restrict the oracle's behavior to be independent of $b$ whenever $(\RegC,\RegD)$ is an invalid commitment. In particular, we require that the oracle behave as identity on any state in $(\Id - \Pi)$.

\begin{definition}[Oracle swap-binding for quantum state commitments] \label{def:oracle-binding}
% Consider a QSC specified by a unitary $\Com$. An efficient adversary for oracle swap binding is initialized with a state on registers $(\RegC,\RegD,\RegR)$ where $\RegR$ denotes the adversary's internal register. The adversary additionally specifies an efficient unitary $\cO$ that acts on registers $(\RegR,\RegM)$, where $\RegM$ is the message register obtained by applying $\Com^\dagger$ to $(\RegC,\RegD)$. The adversary plays the standard swap-binding security experiment, but is augmented with the ability to make oracle calls to the unitary $G_b$ defined as
% \[
% G_b \coloneqq  (\widehat{\SWAP[\RegM, \RegM']})^b \cdot \widehat{ \cO} \cdot  (\widehat{\SWAP[\RegM, \RegM']})^b \Pi + (\Id-\Pi),
% \]
% where $b$ is the challenger's bit.

$\Com$ is computationally (resp. statistically) oracle swap-binding if there exists a negligible function $\mu(\lambda)$ such that for all polynomial-time (resp. unbounded-time) oracle swap-binding adversaries specified by an efficient (resp. inefficient) interactive adversary $A$ and any efficient (resp. inefficient) unitary $\cO$, 
\[
\Pr_{b \gets \{0,1\}} [\emph{\texttt{SwapBindExpt}}_{\QSC,A^{G_b},b}(\lambda)=b] \leq \frac{1}{2} + \mu(\lambda).
\]

% consists of a quantum interactive algorithm specified by a quantum interactive adversary $A$ and unitary operation $\mathcal{O}$ acting on $\RegR, \RegM$.

% Let $\mathcal{O}$ be a unitary operation acting on $\RegR, \RegM$, where $\RegR$ is some register held by the adversary and $\RegM$ is the message register for $\Com$. Let $\RegM'$ be a register held by the challenger and initialized to $\ket{0}$. We define
% \[
% G_b \coloneqq  (\widehat{\SWAP[\RegM, \RegM']})^b \cdot \widehat{ \cO} \cdot  (\widehat{\SWAP[\RegM, \RegM']})^b \Pi + (\Id-\Pi).
% \]
% For challenge bit $b \in \{0,1\}$ and security parameter $\lambda$, we say that $\Com$ is computationally (resp. statistically) oracle swap-binding if there exists a negligible function $\mu(\lambda)$ such that for all polynomial-time (resp. unbounded-time) quantum interactive adversaries $A$,
% \[
% \Pr_{b \gets \{0,1\}} [\emph{\texttt{SwapBindExpt}}_{\QSC,A^{G_b},b}(\lambda)=b] \leq \frac{1}{2} + \mu(\lambda).
% \]
\end{definition}

\begin{theorem} \label{theorem:oracle-binding-security}
If $\Com$ is computationally (resp. statistically) swap-binding (\cref{def:swap-binding}), then $\Com$ is computationally (resp. statistically) oracle swap-binding (\cref{def:oracle-binding}).
\end{theorem}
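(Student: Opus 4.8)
The plan is to prove \cref{theorem:oracle-binding-security} by a hybrid argument over the (at most polynomially many) oracle queries, reducing each elementary step to ordinary swap binding (\cref{def:swap-binding}).

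\smallskip

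I would begin by recording the relevant algebra. Abbreviating $\widehat{\SWAP} \coloneqq \widehat{\SWAP[\RegM,\RegM']} = \Com\,\SWAP[\RegM,\RegM']\,\Com^\dagger$, note that $\widehat{\SWAP}$ is a Hermitian unitary and that it commutes with the validity projector $\Pi$, because $\SWAP[\RegM,\RegM']$ commutes with $\Id_\RegM \otimes \ketbra{0}_\RegW$. Two consequences follow. First, the two oracles are conjugate on the valid subspace: $G_1 = \widehat{\SWAP}\,G_0\,\widehat{\SWAP}$, and both act as the identity on $\Id-\Pi$. Second, using $\Com^\dagger\widehat{\SWAP} = \SWAP[\RegM,\RegM']\,\Com^\dagger$, the oracle $G_b$ restricted to valid commitments equals $\Com\,\SWAP^b\,\cO\,\SWAP^b\,\Com^\dagger$, i.e.\ ``open, (maybe) swap $\RegM \leftrightarrow \RegM'$, apply $\cO$ to $(\RegM,\RegR)$, swap back, recommit.'' In particular the oracle applies the same map to the pair (committed message, $\RegR$) regardless of $b$; only the label carrying the message versus the junk $\ket 0$ differs. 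This is the conceptual reason oracle access should be harmless, since ordinary swap binding is exactly the statement that the adversary cannot detect which label is which.

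\smallskip

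Next I would set up the hybrids. Fix an oracle swap-binding adversary $A$ making $q=\poly(\lambda)$ queries, with interleaved unitaries $V_1,\dots,V_q$ on its own registers. Writing out the $b=1$ experiment and repeatedly substituting $G_1 = \widehat{\SWAP}\,G_0\,\widehat{\SWAP}$, one sees it equals the $b=0$ experiment with an extra guarded $\widehat{\SWAP}$ inserted just after the initial validity check and after the processing of every query. I would therefore define hybrids $\Hyb_0,\dots,\Hyb_q$, where in $\Hyb_p$ a single ``migrating'' guarded swap is placed immediately after the processing of query $p$ (and just after the initial check when $p=0$), queries $1,\dots,p$ are answered with $G_0$, and queries $p+1,\dots,q$ are answered with $G_1$. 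Then $\Hyb_0$ is exactly the $b=1$ world, while $\Hyb_q$ differs from the $b=0$ world only by a single guarded $\widehat{\SWAP}$ applied just before $A$'s final measurement. Establishing $\Hyb_{p-1}\approx_c\Hyb_p$ for each $p\in[q]$ together with $\Hyb_q\approx_c(\text{the }b=0\text{ world})$ yields that the $b=0$ and $b=1$ oracle experiments are indistinguishable, which is precisely computational oracle swap binding; the statistical case is verbatim with $\approx_s$.

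\smallskip

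The step $\Hyb_q\approx_c(\text{the }b=0\text{ world})$ is the easy one: the reduction runs $A$, answers all $q$ queries itself with $G_0$ (it holds $\RegC$ throughout, so it can), and only at the very end forwards the valid pair $(\RegC,\RegD)$ to the plain swap-binding challenger, whose action on the valid subspace is exactly $\widehat{\SWAP}^{\,b}$; it then returns $\RegD$ to $A$ and outputs $A$'s guess. No queries follow, so losing $\RegC$ is harmless. The technically delicate part --- which I expect to be the main obstacle --- is the generic step $\Hyb_{p-1}\approx_c\Hyb_p$, i.e.\ ``insert a guarded $\widehat{\SWAP}$ in the middle of the query sequence undetectably.'' The naive reduction would forward $(\RegC,\RegD)$ to the plain challenger at the transition, but then it no longer holds $\RegC$, whereas the subsequent queries $p,\dots,q$ must still open and recommit via $\Com^\dagger,\Com$. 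Resolving this is the crux, and I expect it to require a non-black-box use of the plain challenger: exploiting the ``open--swap--$\cO$--swap--recommit'' form, the commitment side is only ever touched through $\Com$ and $\Com^\dagger$, so the reduction can hand the plain challenger exactly the opening it needs at the single migration point and continue simulating the remaining queries from the decommitment side, choosing things so that the challenger's bit being $1$ reproduces $A$'s view in $\Hyb_{p-1}$ and its being $0$ reproduces $\Hyb_p$. One must also check the two hybrids agree on states that the $V_i$'s push out of $\Pi$ --- there both oracles act as the identity and both challengers abort symmetrically, so these contribute equally to the two worlds and drop out. The remaining work is bookkeeping, and the hybrid over the $q+1$ stages costs only a $\poly(\lambda)$ factor in the advantage.
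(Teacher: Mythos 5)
Your algebra is right (in particular $G_1 = \widehat{\SWAP}\,G_0\,\widehat{\SWAP}$ on the valid subspace and the identity $S'G_1 = G_0 S'$ for the guarded swap), the hybrid structure is a natural first attempt, and you correctly localize the crux. But the crux is a genuine gap, not a detail to be filled in: the reduction for $\Hyb_{p-1}\approx\Hyb_p$ hands $(\RegC,\RegD)$ to the plain swap-binding challenger at the migration point, receives only $\RegD$ back, and must then answer queries $p+1,\dots,q$. Each such query applies $G_b\Pi = \Com\,\SWAP^b\,\cO\,\SWAP^b\,\Com^\dagger\,\Pi$, and there is no way to implement $\Com^\dagger$ on $(\RegC,\RegD)$ ``from the decommitment side'' --- $\Com$ is a genuine unitary on the joint $(\RegC,\RegD)$ system, and without $\RegC$ the reduction simply cannot touch it. ``Non-black-box use of the plain challenger'' does not help either, since the plain swap-binding challenger exposes exactly one interaction (send $(\RegC,\RegD)$, get $\RegD$ back) and no hooks to delegate later $\Com/\Com^\dagger$ applications. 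Deferring the challenger call to the end does not salvage things: the commuting identity $S'G_1 = G_0S'$ lets you slide the guarded swap past oracle calls, but it does not commute with the adversary unitaries $V_i$ (which act on $\RegD$), so the swap cannot be pushed to the end without changing the hybrid. You also cannot project away the off-$\Pi$ component before later queries without changing the distribution, since the $V_i$'s legitimately create amplitude off $\Pi$. So the middle hybrid step, as stated, has no reduction, and this is exactly where the naive approach breaks.

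The paper avoids the problem by not hybridizing over queries at all. It proves an ``admissible oracle lemma'' (\cref{lemma:admissible-oracles}): any $t$-query oracle distinguisher is converted into a $0$-query distinguisher whose initial state $\brho$ is prepared by a circuit $\Prep^{G,M}$ that uses the oracle only \emph{before} the challenger interaction. The key moves are (i) a duality (\cref{lemma:duality}) turning the distinguishing advantage into a ``mapping'' advantage $\norm{W^-\Pi\,U\,W^+\Pi\ket\psi}^2$, and (ii) the main bound (\cref{lemma:main}) showing that the mapping advantage of $(UG)^t$ is controlled by the mapping advantage of words $U^q(\widetilde G_0)^r(U\widetilde G_0)^s$ where $\widetilde G_0 := W^+\Pi\,G + (\Id-W^+\Pi)$ is a $b$-\emph{independent} operator. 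Because $\widetilde G_0$ does not depend on the challenge bit, all oracle-related computation is absorbed into the preparation of $\brho$; the remaining attack makes no oracle calls and so can safely hand $(\RegC,\RegD)$ to the plain challenger. This reorganization --- ``mapping advantage commutes oracle calls to the front'' --- is precisely the idea that is missing from, and cannot be reproduced by, a straightforward query-by-query hybrid.
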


The rest of this section is devoted to proving~\cref{theorem:oracle-binding-security}. 

% In fact, we will prove a slightly more general lemma that we call the \emph{admissible oracle lemma}, which we show implies~\cref{theorem:oracle-binding-security} in~\cref{subsec:swap-binding-implies-oracle}.

% Roughly speaking, the admissible oracle lemma shows that for a 

% The only difference between oracle swap-binding (\cref{def:oracle-binding}) and regular swap-binding (\cref{def:swap-binding}) is that in oracle swap-binding, the adversary is allowed to query the oracle $G_b$. We will prove that swap binding schemes is equivalent to oracle swap-binding. Our main technique in proving this theorem is what we call the \emph{admissible oracle lemma}.

\subsection{The admissible oracle lemma}

Instead of proving~\cref{theorem:oracle-binding-security} directly, we will consider a more abstract distinguishing task that we call the $(W,\Pi)$-distinguishing game, parameterized by a binary observable $W$ and a projection $\Pi$ that commutes with $W$. We state and prove a lemma we call the \emph{admissible oracle lemma} (\cref{lemma:admissible-oracles}); in~\cref{subsec:swap-binding-implies-oracle}, we show how~\cref{theorem:oracle-binding-security} follows from a special case of the admissible oracle lemma.

% \fermi{add better transition ``We will prove that swap binding implies oracle swap binding by proving a more general result about an abstract distinguishing task (namely, one that generalizes the swap-binding game) that we call the $(W,\Pi)$-distinguishing game.''}
% We now define an abstract distinguishing task called the ``$(W,\Pi)$-distinguishing game'' parameterized by a binary observable $W$ and a projection $\Pi$ that commutes with $W$. In this game, the adversary provides a state in the image of $\Pi$, and must determine whether a challenger applies the operator $W$ to its state. 

\begin{definition}[$(W,\Pi)$-distinguishing game]
    Let $(\RegA, \RegB)$ be two quantum registers. Let $W$ be a binary observable and $\Pi$ be a projector on $(\RegA, \RegB)$ such that $\Pi$ commutes with $W$. Consider the following distinguishing game:
    \begin{enumerate}
        \item The adversary sends a quantum state on registers $(\RegA, \RegB)$ to the challenger.
        \item\label[step]{step:dist-chal} The challenger chooses a random bit $b \leftarrow \{0,1\}$. Next, it measures measures $\{\Pi,\Id-\Pi\}$; if the measurement rejects, abort and output a random bit $b' \from \{0,1\}$. Otherwise, the challenger applies $W^b$ to $(\RegA, \RegB)$, and returns $\RegB$ to the adversary.
        \item\label[step]{step:dist-adv} The adversary outputs a guess $b'$.
    \end{enumerate}
    We define the distinguishing advantage of the adversary to be $\abs{\Pr[b'=b] - 1/2}$.
\end{definition}

% When it is clear from context, we will drop the $\Pi$ and simply refer to this as the $(W,\Pi)$-distinguishing game. 

The main result of the section is that, if the $(W,\Pi)$-distinguishing game is hard, i.e., the best possible distinguishing advantage is negligible, then it remains hard when the adversary is given oracle access to any \emph{admissible} unitary $G$, defined as follows.

% certain operations $G$ at the end of the game. Importantly, $G$ may act on the register $\RegA$, which is completely inaccessible in the standard $(W,\Pi)$-distinguishing game. We define our class of \emph{admissible} operations below.

\begin{definition}
We say that a unitary $G$ is \emph{admissible} if:
\begin{itemize}
    \item $G$ commutes with both $W$ and $\Pi$, and
    \item $G$ acts identically on $\Id-\Pi$, i.e., $G(\Id-\Pi) = \Id - \Pi$.
\end{itemize}
\end{definition}

Informally, the admissible oracle lemma (\cref{lemma:admissible-oracles}) says the following. Suppose that a quantum adversary with an initial state $\ket{\psi}$ wins the $(W,\Pi)$-distinguishing game with advantage $\varepsilon$, making $t$ oracle calls to $G$. Then there exists a quantum adversary that wins the $(W,\Pi)$-distinguishing game with advantage $\varepsilon^2/8t^2$, but does not make any calls to $G$. The latter adversary requires an initial state $\brho$ which can be prepared using a circuit $\Prep^G$ that runs $G$ as a subroutine. We emphasize that $\Prep^G$ does not require special oracle access and can be implemented directly, since it is run before the adversary sends any registers to the challenger.

\begin{lemma}[Admissible oracle lemma]
\label{lemma:admissible-oracles}
    Let $G$ be an admissible unitary. Suppose that for some quantum state $\ket{\psi}_{\RegA \RegB \RegR}$, there is a quantum algorithm $D^G$ (acting only on registers $(\RegB, \RegR)$) with oracle access to $G$ that achieves a distinguishing advantage $\varepsilon$ in the $(W,\Pi)$-distinguishing game with initial state $\ket{\psi}$. Then there exists another quantum state $\brho$ and an algorithm $E$ making no queries to $G$ that wins the $(W,\Pi)$-distinguishing game with advantage $\frac{\varepsilon^2}{8t^2}$ using initial state $\brho$, where $t$ is an upper bound on the number of queries $D$ makes to $G$. Moreover,
    \begin{itemize}
        \item if $D^G$ is implemented with a size-$T$ oracle circuit (counting oracle queries as size $1$), then $\brho \leftarrow \Prep^{G, M} (\ket{\psi})$, where $\mathsf{Prep}^{G, M}$ is a $\poly(T)$ size oracle circuit with (controlled) queries to $G$, $G^{\dagger}$, and $M \coloneqq (W^+ \Pi) \otimes X + (\Id - W^+ \Pi) \otimes \Id$, and
        \item $E$ is $\poly(T)$ size. 
    \end{itemize}
\end{lemma}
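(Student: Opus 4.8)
\emph{Reduction to detecting a phase.} I would first restrict attention to runs in which the challenger's $\{\Pi,\Id-\Pi\}$ measurement accepts, since on rejection the output is a uniform bit and contributes nothing to the distinguishing advantage. Using admissibility, decompose the space of $(\RegA,\RegB)$ into the images of $W^+\Pi$, $W^-\Pi$, and $\Id-\Pi$: the unitary $G$ is block diagonal with respect to this decomposition and acts as the identity on the image of $\Id-\Pi$, while $W^b$ acts as $+1$ on $W^+\Pi$ and as $(-1)^b$ on $W^-\Pi$. Hence on the accept branch the \emph{only} effect of the challenge is to multiply the $W^-\Pi$-component of the post-measurement state by $(-1)^b$, and $D^G$'s task is precisely to detect this relative phase using post-challenge unitaries on $(\RegB,\RegR)$ interleaved with its $t$ calls to $G$. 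Write $\ket{\psi}=\ket{\psi_+}+\ket{\psi_-}+\ket{\psi_\perp}$ for the corresponding decomposition of the initial state.

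\emph{The tagging trick.} Consider a \emph{tagged game} in which the challenger, instead of applying $W^b$, applies the controlled map $M$ to a fresh ancilla qubit $\RegF$, so that after an accepting measurement the state becomes $\ket{\psi_+}\ket{1}_\RegF+\ket{\psi_-}\ket{0}_\RegF$. Since $M$ is built from $W^+\Pi$ and acts on the disjoint register $\RegF$, it commutes with $G$ and with every operation of $D$. Moreover, the real game is obtained from the tagged game by additionally applying the operator $P^b_{\RegF}$ that places a $(-1)^b$ phase on the $\RegF=0$ branch, and $P^b_{\RegF}$ — acting only on $\RegF$ — commutes with $G$ and with each of $D$'s unitaries, so it can be moved to the very front of the post-challenge computation. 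Consequently the tagged game's final state $\ket{\Phi}_{\RegA\RegB\RegR\RegF}$ has the following property: applying a Hadamard to $\RegF$, measuring $\RegF$, running $D$'s final measurement on $(\RegB,\RegR)$, and flipping the guess whenever $\RegF=1$, reproduces the real game's output distribution exactly --- \emph{provided} one can first turn $\ket{\Phi}$ into a state where a single use of the challenger's $W^b$ supplies the phase $P^b_{\RegF}$. That, in turn, requires an $(\RegA,\RegB)$-subsystem lying in the image of $W^+\Pi\oplus W^-\Pi$ and entangled with $\RegF$ exactly as after $M$. Were such a subsystem available, this would be a loss-free reduction.

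\emph{Confining the oracle to $\Prep$ (the crux).} The obstruction is that the tagged game itself runs the $t$ oracle calls and scrambles the $(\RegA,\RegB)$ registers, so they are no longer in the image of $\Pi$ and the challenger's $W^b$ no longer injects the desired phase; and $E$ cannot unscramble them, since that needs $G^\dagger$. The resolution is to push all oracle interaction into $\Prep$ via a Watrous-style forward--measure--backward procedure: on input $\ket{\psi}$, $\Prep$ runs the tagged game coherently, performs a binary reflection associated with $D$'s decision, and runs the tagged game in reverse --- this is exactly where the controlled queries to $G$, $G^\dagger$, and $M$ in $\Prep^{G,M}$ are used --- emerging with a state $\brho$ that simultaneously (i) contains a clean $M$-correlated $(\RegA,\RegB,\RegF)$ subsystem for $E$ to hand to the challenger and (ii) stores the side information $E$ needs to finish the computation of the previous paragraph with no further oracle call. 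The useful branch of the reflected state carries amplitude proportional to the per-query advantage $\varepsilon/t$, and the quadratic loss inherent in Watrous rewinding converts this into final advantage $\tfrac{\varepsilon^2}{8t^2}$; since each reflection costs $O(T)$ gates and only $O(1)$ rounds are needed (the relevant acceptance probability is bounded away from $0$), both $\Prep^{G,M}$ and $E$ are $\poly(T)$ size. I expect this last step --- arranging $\brho$ so that a single challenge call suffices, and pushing the rewinding/amplitude bookkeeping through to the stated constants --- to be the main obstacle; the rest is routine manipulation with the commutation facts established above.
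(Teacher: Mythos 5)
Your setup is on the right track and identifies the correct obstruction, but the step you flag as "the main obstacle" is precisely the non-trivial content of this lemma, and the resolution you propose — a "Watrous-style forward--measure--backward" rewinding — does not actually do the work.

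The paper proves this lemma in two phases. First (\cref{lemma:duality}) it converts the distinguishing task into a \emph{mapping} task (given a state in $\image(W^+\Pi)$, output overlap with $\image(W^-\Pi)$), incurring the quadratic loss $\varepsilon \to \varepsilon^2$. Second, and crucially (\cref{lemma:main}), it runs a purely combinatorial expansion of $(UG)^t$: writing $\widetilde{G}_0 = \Id + \Pi_0 H$ and expanding $(U\widetilde{G}_0)^t$ binomially, one collects terms according to how many $U$'s appear before the first $\Pi_0 H$, and then uses a triangle-inequality telescope plus the observation that $G$ acts trivially off $\Pi$. This shows that the overall mapping norm is bounded by $4t^2$ times the maximum, over $q,r,s \le t$, of the mapping norm of $U^q (\widetilde G_0)^r (U\widetilde G_0)^s$ restricted to the $W^+\Pi$ subspace. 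It is exactly this decomposition that lets the oracle-dependent prefix $(\widetilde G_0)^r(U\widetilde G_0)^s$ be absorbed into $\Prep$ (which is why $\Prep$ needs $M$: to implement the truncated oracle $\widetilde G_0 = W^+\Pi\, G + (\Id - W^+\Pi)$), while leaving an oracle-free suffix $U^q$ for $E$. Your proposal has no analogue of this expansion, and the $4t^2$ factor is exactly what you would have to reproduce.

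Concretely, three things in your sketch are unsubstantiated. (1) The claim that "the useful branch carries amplitude proportional to the per-query advantage $\varepsilon/t$" assumes a linear decomposition of advantage across queries; the paper's bound is quadratic ($4t^2$), not linear, so this heuristic already gives the wrong constants. (2) Watrous rewinding amplifies the probability of a known binary outcome whose acceptance probability is (roughly) independent of the input state; there is no such measurement here, and nothing in your sketch explains what is being amplified or why a constant number of rounds suffices. (3) You correctly note that $\brho$ must end up with a clean $(\RegA,\RegB)$ component on which the challenger's single $W^b$ injects the phase — but you give no argument that the forward--reflect--backward procedure produces such a state, and this is precisely the content of \cref{lemma:main}. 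The reflection $\Id - 2\Pi_D$ does appear in the paper (this is what makes the distinguish-to-map conversion work), so the resemblance you see is real; but the paper then analyzes this reflection via an explicit expansion, not via rewinding, and it is that analysis that is missing from your argument.
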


We will prove \cref{lemma:admissible-oracles} in two steps:
\begin{enumerate}
    \item In \cref{lemma:duality}, we show that the distinguishing task is equivalent to the following \emph{mapping task}: given any state in $\image(W^+\Pi)$, output any state in $\image(W^-\Pi)$.
    \item In \cref{lemma:main}, we show that for the \emph{mapping task}, admissible oracles only provide limited help.
\end{enumerate}
For a distinguisher $D$ implemented as a binary projective measurement $D = \{\Pi_D, \Id - \Pi_D\}$, the distinguishing advantage on (potentially sub-normalized) $\ket{\psi}$ can be written as
\[
\frac{1}{2} \abs{ \bra{\psi}\Pi \Pi_D \Pi \ket{\psi} - \bra{\psi} \Pi W \Pi_D W \Pi \ket{\psi} }.
\]
For a unitary $U$ and (potentially sub-normalized) state $\ket{\psi}$, we define the \emph{mapping advantage} of $U$ on $\ket{\psi}$ to be
\[ \norm{(W^- \Pi) U (W^+ \Pi)\ket{\psi}}^2\]
The following lemma states that the mapping and distinguishing tasks are equivalent in a setting where the adversary supplies the initial state (up to a quadratic loss in the advantage).

\begin{lemma}
\label{lemma:duality}
Let $W$ be a binary observable on a register $\RegH$, let $\Pi$ be a projection on $\RegH$ that commutes with $W$, and let $\RegB$ be a fresh single-qubit register. Fix a state $\ket{\psi}$ on $\RegH$ and let $\ket{\psi_+} \coloneqq W^+ \Pi\ket{\psi}$. Then:
\begin{enumerate}[(i)]
    \item\label[item]{item:map-to-dist} If $U$ has mapping advantage $\varepsilon$ on $\ket{\psi}$, then $\{\widetilde{\Pi}_D, \Id - \widetilde{\Pi}_D\}$ has distinguishing advantage $\varepsilon/2$ on $ \mathsf{ctl}_{\RegB} U \ket{+}_\RegB \ket{\psi_+}$, where $\widetilde{\Pi}_D \coloneqq  \mathsf{ctl}_{\RegB} U \ketbra{+}_\RegB \mathsf{ctl}_{\RegB} U^\dagger$.
    \item\label[item]{item:dist-to-map} If $\Pi_D$ has distinguishing advantage $\varepsilon$ on $\ket{\psi}$, then the unitary $\widetilde{U} = \Id - 2 \Pi_D$ has mapping advantage at least $\varepsilon^2$ on $\ket{\psi}$.
\end{enumerate}
\end{lemma}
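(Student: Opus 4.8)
The plan is to route both directions through a single algebraic identity for the distinguishing advantage. Since $W$ is a binary observable, $W^+ + W^- = \Id$ and $W^+W^- = 0$, so $\Pi_\pm \coloneqq W^\pm\Pi = \Pi W^\pm$ are orthogonal projectors with $\Pi_+ + \Pi_- = \Pi$. Setting $\ket{\psi_\pm} \coloneqq \Pi_\pm\ket\psi$ (sub-normalized), we have $\Pi\ket\psi = \ket{\psi_+}+\ket{\psi_-}$ and $W\Pi\ket\psi = \ket{\psi_+}-\ket{\psi_-}$. Substituting into the displayed formula for the distinguishing advantage stated just before the lemma and expanding, the terms $\bra{\psi_+}\Pi_D\ket{\psi_+}$ and $\bra{\psi_-}\Pi_D\ket{\psi_-}$ cancel between the two halves, leaving
\[ \text{distinguishing advantage of } \Pi_D \text{ on } \ket\psi \;=\; 2\bigl|\mathrm{Re}\langle\psi_+|\Pi_D|\psi_-\rangle\bigr|. \]
First I would establish this; it is the workhorse for both parts.

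For \cref{item:dist-to-map}, I set $\widetilde U = \Id - 2\Pi_D$ (a self-adjoint reflection, hence unitary) and compute its mapping advantage $\|\Pi_-\widetilde U\ket{\psi_+}\|^2$. Because $\Pi_-\ket{\psi_+} = \Pi_-\Pi_+\ket\psi = 0$, this equals $4\|\Pi_-\Pi_D\ket{\psi_+}\|^2$. On the other hand, using $\Pi_-\ket{\psi_-} = \ket{\psi_-}$ we can write $\langle\psi_+|\Pi_D|\psi_-\rangle = \langle\Pi_-\Pi_D\psi_+, \psi_-\rangle$, and Cauchy--Schwarz together with $\|\ket{\psi_-}\|\le 1$ gives $\|\Pi_-\Pi_D\ket{\psi_+}\| \ge |\mathrm{Re}\langle\psi_+|\Pi_D|\psi_-\rangle| = \varepsilon/2$. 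Hence the mapping advantage is at least $\varepsilon^2$.

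For \cref{item:map-to-dist}, I expand the prepared state $\ket\Phi \coloneqq \mathsf{ctl}_\RegB U \ket{+}_\RegB\ket{\psi_+} = \tfrac{1}{\sqrt2}\bigl(\ket0_\RegB\ket{\psi_+} + \ket1_\RegB U\ket{\psi_+}\bigr)$ and apply the workhorse identity with $W,\Pi$ (and hence $\Pi_\pm$) extended by $\Id_\RegB$. Using $\Pi\ket{\psi_+} = \ket{\psi_+}$, $W^+\ket{\psi_+}=\ket{\psi_+}$, $W^-\ket{\psi_+}=0$ gives $\Pi_+\ket\Phi = \tfrac{1}{\sqrt2}(\ket0\ket{\psi_+} + \ket1\Pi_+ U\ket{\psi_+})$ and $\Pi_-\ket\Phi = \tfrac{1}{\sqrt2}\ket1\Pi_- U\ket{\psi_+}$. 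Then I evaluate $\langle\Phi_+|\widetilde\Pi_D|\Phi_-\rangle$ by conjugating back through $\mathsf{ctl}_\RegB U^\dagger$ and using $\langle+|0\rangle = \langle+|1\rangle = \tfrac1{\sqrt2}$: the cross term carries a factor $\Pi_+\Pi_- = 0$ and vanishes, and what remains is $\tfrac14\langle\psi_+|U^\dagger\Pi_- U|\psi_+\rangle = \tfrac14\|\Pi_- U\ket{\psi_+}\|^2 = \varepsilon/4$, which is real. So the workhorse identity yields distinguishing advantage $2\cdot(\varepsilon/4) = \varepsilon/2$. (That $\widetilde\Pi_D$ is a genuine projector is immediate, being a unitary conjugate of $\ketbra{+}_\RegB\otimes\Id$.)

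I expect the only real hazard to be bookkeeping: keeping the fresh control qubit $\RegB$ cleanly separate from the register $\RegH$ on which $W$, $\Pi$, and $U$ act, and remembering that $\ket{\psi_+}$ and $\ket{\psi_-}$ are sub-normalized --- this is precisely why the Cauchy--Schwarz step in \cref{item:dist-to-map} only needs $\|\ket{\psi_-}\| \le 1$ rather than equality. Once the decomposition $\Pi = \Pi_+ + \Pi_-$ and the relation $\Pi_+\Pi_- = 0$ are in place, each computation is a couple of lines.
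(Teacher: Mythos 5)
Your proof is correct. The algebra is sound in both parts: the workhorse identity $\textnormal{dist.\,adv.}(\Pi_D,\ket\psi)=2\bigl|\mathrm{Re}\bra{\psi_+}\Pi_D\ket{\psi_-}\bigr|$ is a clean restatement of the difference in the defining formula after substituting $\Pi\ket\psi=\ket{\psi_+}+\ket{\psi_-}$ and $W\Pi\ket\psi=\ket{\psi_+}-\ket{\psi_-}$; the mapping-advantage computations in both directions then follow from $\Pi_+\Pi_-=0$, $\Pi_+\ket{\psi_+}=\ket{\psi_+}$, $\Pi_-\ket{\psi_+}=0$, and Cauchy--Schwarz with $\|\ket{\psi_-}\|\le 1$.

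The route is algebraically the same as the paper's (decompose through $\Pi_\pm$, exploit orthogonality, use Cauchy--Schwarz against a subnormalized vector), but you organize it more tightly: the paper computes $\bra{\phi}\Pi W^b\widetilde\Pi_D W^b\Pi\ket{\phi}$ separately for $b\in\{0,1\}$ in \cref{item:map-to-dist} and then runs an independent expansion of $\|W^-\Pi\widetilde U W^+\Pi\ket\psi\|^2$ in \cref{item:dist-to-map}, so the $\Pi_\pm$-cancellation is effectively repeated twice. Factoring it out once as the workhorse identity lets each direction become a short local calculation and makes explicit which observation does the work in each part (for \cref{item:dist-to-map}, the Cauchy--Schwarz step lower-bounding $\|\Pi_-\Pi_D\ket{\psi_+}\|$; for \cref{item:map-to-dist}, the vanishing of the cross term through $\Pi_+\Pi_-=0$). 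One minor point of comparison: your \cref{item:dist-to-map} bound is the stated $\varepsilon^2$ exactly, whereas the paper's chain as written lands at a constant multiple of $\varepsilon^2$; both suffice since the lemma only claims "at least $\varepsilon^2$."
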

\begin{proof}[Proof of \cref{item:map-to-dist}]
Let $\ket{\phi} = \mathsf{ctl}_{\RegB} U \ket{+}_\RegB \ket{\psi_+}$. The proof follows from directly computing the distinguishing advantage
\begin{align*}
    \frac{1}{2} \abs{ \bra{\phi} \Pi \widetilde{\Pi}_D \Pi \ket{\phi} - \bra{\phi} \Pi W \widetilde{\Pi}_D W \Pi \ket{\phi} }.
\end{align*}
For $b \in \{0,1\}$, we compute
\begin{align*}
    \bra{\phi}\Pi W^b \widetilde{\Pi}_D W^b \Pi \ket{\phi} &= \norm{\widetilde{\Pi}_D W^b \Pi \mathsf{ctl}_{\RegB}U \ket{+}_\RegB \ket{\psi_{+}} }^2 \\
    &= \frac{1}{2} \norm{\ketbra{+}_\RegB \left(\ket{0}_\RegB W^b \ket{\psi_+} + \ket{1}_\RegB U^\dagger W^b \Pi U \ket{\psi_+}\right)}^2 \\
    &= \frac{1}{4} \norm{\ket{+}_\RegB \left(\ket{\psi_+} + U^\dagger W^b \Pi U \ket{\psi_+}\right)}^2 \\
    &= \frac{1}{4} \left( \braket{\psi_+} + \bra{\psi_+}U^\dagger \Pi U \ket{\psi_+} + 2 \Re \bra{\psi_+} U^\dagger W^b \Pi U \ket{\psi_+} \right).
\end{align*}

Then taking the difference,
\begin{align*}
    & \bra{\phi}\Pi  \widetilde{\Pi}_D \Pi \ket{\phi} - \bra{\phi}\Pi W \widetilde{\Pi}_D W \Pi\ket{\phi} \\
    &= \frac{1}{2} \left(\Re \bra{\psi_+} U^\dagger \Pi U \ket{\psi_+} - \Re \bra{\psi_+} U^\dagger W \Pi U \ket{\psi_+}\right) \\
    &=  \frac{1}{2}\Re \bra{\psi_+} U^\dagger (I-W) \Pi U \ket{\psi_+} \\
    &= \bra{\psi_+} U^\dagger W^- \Pi U \ket{\psi_+} \\
    &= \varepsilon. \qedhere
\end{align*}
\end{proof}
\begin{proof}[Proof of \cref{item:dist-to-map}]
We compute 
\begin{align*}
    \norm{W^- \Pi \widetilde{U} \ket{\psi_+}}^2 &= \norm{\Pi\left(\frac{I-W}{2}\right)\widetilde{U}\left(\frac{I+W}{2}\right)\Pi \ket{\psi}}^2\\
    &\geq \abs{\bra{\psi}\Pi\left(\frac{I-W}{2}\right)\widetilde{U}\left(\frac{I+W}{2}\right)\Pi\ket{\psi}}^2 \\
    &= \frac{1}{4} \abs{(\bra{\psi} - \bra{\psi} W) \widetilde{U} (\ket{\psi} + W \ket{\psi})}^2 \\ 
    &= \frac{1}{4} \abs{\bra{\psi} \widetilde{U} \ket{\psi} - \bra{\psi} W\widetilde{U}W \ket{\psi} + 2i \cdot \Im(\bra{\psi} \widetilde{U}W \ket{\psi})}^2 \\
    &\geq \frac{1}{4} \abs{\bra{\psi} \widetilde{U} \ket{\psi} - \bra{\psi} W\widetilde{U}W \ket{\psi}}^2 \\
    &= \frac{1}{2} \cdot \abs{ \bra{\phi}\Pi \widetilde{\Pi}_D \Pi\ket{\phi} - \bra{\phi} \Pi W \widetilde{\Pi}_D W \Pi \ket{\phi} }^2 \\
    &= 2 \varepsilon^2. \qedhere
\end{align*}
\end{proof}

\begin{remark}
In \cite{AAS20}, a similar statement is proven about the equivalence between distinguishing and \emph{swapping} for the task of distinguishing between two \emph{fixed} states. They also showed that distinguishing and mapping are not in general equivalent for the fixed-state setting. \cref{lemma:duality} implies that distinguishing, swapping, and mapping are all equivalent when the task is to \emph{provide} a state and detect the application of a fixed operation.
\end{remark}

We now prove that an admissible unitary $G$ cannot help an adversary with the mapping task. Without loss of generality we assume that the adversary repeatedly applies a single unitary $U$, interleaved with $G$.\footnote{Any adversary can be converted into this form by introducing an additional clock register.} \Cref{lemma:main} bounds the mapping advantage of $(UG)^t$ in terms of the mapping advantage of \emph{just applying $U^q$} for any $q \le t$, with an initial state that can be prepared with access to a unitary very similar to $G$.

\begin{lemma}
\label{lemma:main}
    Fix a Hilbert space $\cH$. Let $U$ be a unitary, $\ket{\psi}$ be a state, and $\Pi_0,\Pi_1$ be a pair of orthogonal projectors, all on $\cH$. Let $\Pi = \Pi_0 + \Pi_1$. Let $G$ be a unitary on $\cH$ of the form $G = \Pi_0 G_0 + \Pi_1 G_1 + (\Id- \Pi)$ where $G_0,G_1$ are unitaries that commute with $\Pi_0,\Pi_1$ respectively. Define $\widetilde{G}_0 = \Pi_0 G_0 + (\Id-\Pi_0)$ and
    \[
    \varepsilon(t) \coloneqq \max_{q,r,s \le t} \norm{ \Pi_1 U^q \Pi_0 (\widetilde{G}_0)^r (U\widetilde{G}_0)^s \Pi_0 \ket{\psi}}
    \]
    for any integer $t \geq 0$. Then for all integers $t \geq 0$,
    \[
        \norm{\Pi_1 (U G)^t \Pi_0 \ket{\psi}} \leq 4 t^2 \cdot \varepsilon(t).
    \]
\end{lemma}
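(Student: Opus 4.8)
The plan is to induct on $t$, peeling off oracle calls one at a time and tracking how each call interacts with the projectors $\Pi_0,\Pi_1$. The key structural observation is that $G = \Pi_0 G_0 + \Pi_1 G_1 + (\Id - \Pi)$ acts ``block-diagonally'' with respect to the decomposition $\img(\Pi_0) \oplus \img(\Pi_1) \oplus \img(\Id - \Pi)$, and since $G_0$ commutes with $\Pi_0$ and $G_1$ with $\Pi_1$, we have $G\Pi_0 = \widetilde{G}_0 \Pi_0$ where $\widetilde{G}_0 = \Pi_0 G_0 + (\Id - \Pi_0)$, and similarly $G\Pi_1 = \widetilde{G}_1\Pi_1$ with $\widetilde{G}_1 = \Pi_1 G_1 + (\Id - \Pi_1)$. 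So on states supported in $\img(\Pi_0)$, the oracle $G$ is indistinguishable from $\widetilde{G}_0$, which is a ``harmless'' unitary in the sense that it preserves $\img(\Pi_0)$ and acts as identity off it. First I would write $(UG)^t \Pi_0 = U G (UG)^{t-1}\Pi_0$ and insert a resolution of identity $\Id = \Pi_0 + \Pi_1 + (\Id - \Pi)$ right before the final $G$ in the chain, splitting the amplitude according to which block the state $(UG)^{t-1}\Pi_0\ket\psi$ lands in.

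The heart of the argument is to bound the ``bad'' branch where the intermediate state has a component in $\img(\Pi_1)$, since on the $\img(\Pi_0)$-component the oracle call can be replaced by $\widetilde{G}_0$ for free. Concretely, I expect to show, by a careful expansion, that
\[
\norm{\Pi_1 (UG)^t \Pi_0 \ket\psi} \le \sum_{j=1}^{t} \norm{\Pi_1 (U\widetilde{G}_1)^{t-j} U\, \Pi_1 U (U\widetilde{G}_0)^{\,\ast}\Pi_0\ket\psi} \;+\; (\text{lower-order terms}),
\]
where the $j$-th term captures the ``first time'' the trajectory visits $\img(\Pi_1)$ (at step $j$), after which $G$ is replaced by $\widetilde{G}_1$. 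The inner quantities $\norm{\Pi_1 U^q \Pi_0 (\widetilde{G}_0)^r (U\widetilde{G}_0)^s \Pi_0 \ket\psi}$ are exactly what $\varepsilon(t)$ bounds (modulo also needing the symmetric statement with the roles of $\Pi_0,\Pi_1$ swapped, which should follow because $\Pi_1 U^q \Pi_0 = (\Pi_0 (U^\dagger)^q \Pi_1)^\dagger$ and $U^\dagger$ plays the same role as $U$ up to relabeling). Counting the number of such terms — each of the $t$ choices of when to enter $\img(\Pi_1)$, composed with telescoping the $\widetilde{G}_1$-dressed evolution back into another $O(t)$ pieces — gives the $O(t^2)$ prefactor, hence the bound $4t^2\,\varepsilon(t)$.

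The main obstacle I anticipate is making the ``first visit to $\img(\Pi_1)$'' decomposition rigorous while keeping the bookkeeping honest: after the trajectory enters $\img(\Pi_1)$, subsequent oracle calls become $\widetilde{G}_1$, but the state can re-enter $\img(\Pi_0)$ or $\img(\Id-\Pi)$, so one cannot simply assume it stays in $\img(\Pi_1)$. The clean way around this is a hybrid/triangle-inequality argument: define intermediate operators where the first $j$ oracle calls are replaced by $\widetilde{G}_0$ and argue the error introduced at step $j$ is controlled by a norm of the form $\norm{\Pi_1 (\text{evolution}) \Pi_0 \ket\psi}$ with all evolution built from $U$ and $\widetilde{G}_0$ — so that it falls under the definition of $\varepsilon(t)$ — and then bound the ``after it's gone wrong'' part trivially by $1$ (operator norm). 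Summing $t$ hybrid steps and, within the correction term, doing a second telescoping over how $\widetilde{G}_1$ is inserted, yields $t \cdot (2t) \cdot \varepsilon(t) \le 4t^2\varepsilon(t)$. The constant $4$ is deliberately loose, so I would not optimize it; the real work is verifying that every error term that arises genuinely has the restricted form $\Pi_1 U^{q}\Pi_0(\widetilde G_0)^r(U\widetilde G_0)^s\Pi_0\ket\psi$ appearing in $\varepsilon(t)$, which is where commutation of $G_0$ with $\Pi_0$ and the identity $G\Pi_0=\widetilde G_0\Pi_0$ get used repeatedly.
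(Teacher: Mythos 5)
Your hybrid plan correctly captures one half of the argument. It is, in essence, how the paper proves its Claim~\ref{claim:ignore-P1}: define hybrids replacing the first $j$ oracle calls by $\widetilde{G}_0$, so consecutive hybrids differ by $(UG)^{t-j}U(G-\widetilde{G}_0)(U\widetilde{G}_0)^{j-1}\Pi_0\ket{\psi}$, and since $G-\widetilde{G}_0 = (G_1-\Id)\Pi_1$ this has norm at most $2\norm{\Pi_1(U\widetilde{G}_0)^{j-1}\Pi_0\ket{\psi}}$. But this is exactly where you are too quick: the quantity $\norm{\Pi_1(U\widetilde{G}_0)^{j}\Pi_0\ket{\psi}}$ does \emph{not} fall under the definition of $\varepsilon(t)$. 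That quantity specifically controls norms of the form $\norm{\Pi_1 U^q\Pi_0(\widetilde{G}_0)^r(U\widetilde{G}_0)^s\Pi_0\ket{\psi}}$, with a $\Pi_0$ sitting between $U^q$ and the $\widetilde{G}_0$-evolution. ``All evolution built from $U$ and $\widetilde{G}_0$'' is not enough; you need that inner projector, and $(U\widetilde{G}_0)^j$ does not contain one.

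Bridging that gap is the paper's Claim~\ref{claim:main-expansion}, and it is the substantive idea of the lemma: write $\widetilde{G}_0 = \Id + \Pi_0 H$ with $H = \Pi_0 G_0 + \Pi_1 G_1 - \Id$, expand $(U\widetilde{G}_0)^j = (U + U\Pi_0 H)^j$, and group terms by the leftmost occurrence of $U\Pi_0 H$. Each group has the shape $U^r\cdot U\Pi_0 H\cdot(\text{tail})$, so a $\Pi_0$ appears right after a pure power of $U$, and the resulting terms are of exactly the form $\varepsilon$ controls; this gives $\norm{\Pi_1(U\widetilde{G}_0)^{j}\Pi_0\ket{\psi}} \le 2j\,\varepsilon(j)$. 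With that in hand your telescoping sum closes and yields the $4t^2$ prefactor. Two secondary signs that your plan had drifted off course: (i) the appeal to a ``symmetric statement with $\Pi_0,\Pi_1$ swapped'' has no basis --- $\varepsilon$ is defined only via $\widetilde{G}_0$ and $\Pi_0$, and nothing in the hypothesis controls the $\Pi_1/\widetilde{G}_1$ analogue; (ii) the ``second telescoping over how $\widetilde{G}_1$ is inserted'' is a holdover from the first-visit idea you abandoned --- the correction terms in the $\widetilde{G}_0$-hybrid never involve $\widetilde{G}_1$, and the actual resolution is the $\Pi_0 H$ expansion above, not any $\widetilde{G}_1$-based decomposition.
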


We divide the proof of~\cref{lemma:main} into~\cref{claim:main-expansion} and~\cref{claim:ignore-P1}.

\begin{claim}
\label{claim:main-expansion}
\[
\norm{\Pi_1 (U \widetilde{G}_0)^t \Pi_0 \ket{\psi}} \leq 2 t \cdot \varepsilon(t)
\]
\end{claim}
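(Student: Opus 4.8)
The plan is to prove \Cref{claim:main-expansion} by a telescoping argument that replaces the factors $\widetilde{G}_0$ inside $(U\widetilde{G}_0)^t$ by the identity one at a time, starting from the left, absorbing the error incurred at each replacement into a term controlled by $\varepsilon(t)$. Concretely, for $0 \le \ell \le t$ I will set
\[
\ket{c_\ell} := \Pi_1\, U^\ell\, (U\widetilde{G}_0)^{t-\ell}\,\Pi_0\ket{\psi},
\]
so that $\norm{\ket{c_0}}$ is exactly the quantity to be bounded, while $\ket{c_t} = \Pi_1 U^t \Pi_0\ket{\psi}$ already matches the definition of $\varepsilon(t)$ with $(q,r,s) = (t,0,0)$ and hence has norm at most $\varepsilon(t)$.

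The crux is a one-step identity: for $0 \le \ell \le t-1$,
\[
\ket{c_\ell} - \ket{c_{\ell+1}} = \ket{e_\ell} - \ket{d_\ell}, \qquad
\ket{e_\ell} := \Pi_1 U^{\ell+1}\Pi_0\,\widetilde{G}_0\,(U\widetilde{G}_0)^{t-\ell-1}\Pi_0\ket{\psi},\quad
\ket{d_\ell} := \Pi_1 U^{\ell+1}\Pi_0\,(U\widetilde{G}_0)^{t-\ell-1}\Pi_0\ket{\psi}.
\]
To obtain this I will peel one copy of $U\widetilde{G}_0$ off the front of $(U\widetilde{G}_0)^{t-\ell}$, insert $\Id = \Pi_0 + (\Id - \Pi_0)$ immediately to the right of the exposed $\widetilde{G}_0$, and use the two identities $\widetilde{G}_0\Pi_0 = \Pi_0\widetilde{G}_0 \,(= G_0\Pi_0)$ and $\widetilde{G}_0(\Id - \Pi_0) = \Id - \Pi_0$, both of which follow directly from $\widetilde{G}_0 = \Pi_0 G_0 + (\Id - \Pi_0)$ together with the hypothesis that $G_0$ commutes with $\Pi_0$. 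The $\Pi_0$-branch becomes $\ket{e_\ell}$, and the $(\Id-\Pi_0)$-branch becomes $\Pi_1 U^{\ell+1}(\Id-\Pi_0)(U\widetilde{G}_0)^{t-\ell-1}\Pi_0\ket{\psi} = \ket{c_{\ell+1}} - \ket{d_\ell}$.

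Summing the identity over $\ell = 0,\dots,t-1$ gives $\ket{c_0} = \ket{c_t} + \sum_{\ell=0}^{t-1}\ket{e_\ell} - \sum_{\ell=0}^{t-1}\ket{d_\ell}$; noticing $\ket{d_{t-1}} = \Pi_1 U^t\Pi_0\ket{\psi} = \ket{c_t}$, the two $\ket{c_t}$ contributions cancel, so $\ket{c_0} = \sum_{\ell=0}^{t-1}\ket{e_\ell} - \sum_{\ell=0}^{t-2}\ket{d_\ell}$. Each $\ket{e_\ell}$ has the form $\Pi_1 U^q \Pi_0 (\widetilde{G}_0)^r (U\widetilde{G}_0)^s \Pi_0\ket{\psi}$ with $(q,r,s) = (\ell+1, 1, t-\ell-1)$, and each $\ket{d_\ell}$ with $(q,r,s) = (\ell+1, 0, t-\ell-1)$; since $\ell+1 \le t$ and $t-\ell-1 \le t$, every such vector has norm at most $\varepsilon(t)$ by definition, so the triangle inequality yields $\norm{\ket{c_0}} \le t\,\varepsilon(t) + (t-1)\,\varepsilon(t) = (2t-1)\,\varepsilon(t) \le 2t\,\varepsilon(t)$. (The case $t = 0$ is immediate since $\Pi_1\Pi_0 = 0$.)

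I do not expect a genuine obstacle — the argument is largely bookkeeping — but the step to handle with care is the rewriting $\widetilde{G}_0\Pi_0 = \Pi_0 G_0$ and $\widetilde{G}_0(\Id-\Pi_0) = \Id - \Pi_0$: this is precisely where the commutation hypothesis $[G_0,\Pi_0]=0$ enters, and it is what makes the $\Pi_0$-branch of each step collapse into a single $\varepsilon(t)$-term rather than proliferating into more complicated nested expressions. The second point worth double-checking is the cancellation $\ket{d_{t-1}} = \ket{c_t}$, which is what keeps the final bound within the claimed $2t\,\varepsilon(t)$ rather than $(2t+1)\,\varepsilon(t)$.
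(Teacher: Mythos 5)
Your proof is correct and follows essentially the same algebraic decomposition as the paper's: you peel one $U\widetilde{G}_0$ factor at a time and use $\Pi_0 G_0 = \Pi_0\widetilde{G}_0 = \widetilde{G}_0\Pi_0$ together with $\widetilde{G}_0(\Id-\Pi_0)=\Id-\Pi_0$ to split into exactly the $\varepsilon(t)$-controlled pieces $\ket{e_\ell}$, $\ket{d_\ell}$. The paper organizes the same calculation as a binomial expansion of $(U + U\Pi_0 H)^t$ grouped by the position of the first $U\Pi_0 H$ factor (with $\widetilde{G}_0 = \Id + \Pi_0 H$), which is a direct reindexing of your telescope; your presentation is slightly cleaner in that it makes the cancellation $\ket{d_{t-1}} = \ket{c_t}$ explicit, which is what pins the count at $(2t-1)\varepsilon(t)$ rather than $(2t+1)\varepsilon(t)$. (As a side note, the paper's displayed sum $\sum_{r=1}^{t}\Pi_1 U^r F_{t-r}\Pi_0\ket{\psi}$ appears to have an off-by-one in the lower index — it should run from $r=0$, since the $r=0$ group is the term $\Pi_1 F_t\Pi_0\ket{\psi}$, i.e.\ your $\ket{e_0}-\ket{d_0}$ — and then the same $\ket{c_t}$-cancellation you observe is what salvages the stated $2t\,\varepsilon(t)$ bound. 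Your version avoids this issue entirely.)
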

\begin{proof}
    Let $H \coloneqq \Pi_0 G_0 + \Pi_1 G_1 - \Id$ and write $\widetilde{G}_0 = \Pi_0 G_0 + \Pi_1 + (\Id - \Pi) = \Id + \Pi_0 H$. 
    We can then expand the state $\Pi_1 (U \widetilde{G}_0)^t \Pi_0 \ket{\psi}$ as
    \begin{align*}
        \Pi_1 (U \widetilde{G}_0)^t \Pi_0 \ket{\psi} =  \Pi_1 (U + U \Pi_0 H)^t \Pi_0 \ket{\psi} = \sum_{r=1}^{t} \Pi_1 U^r F_{t-r} \Pi_0\ket{\psi},
    \end{align*}
    where $F_0 \coloneqq \Id$, and $F_i \coloneqq U \Pi_0 H (U + U \Pi_0 H)^{i-1}$ for $i \geq 1$. The second equality above uses the fact that $U^r F_{t-r} $ is the sum of the terms in the binomial expansion of $(U + U \Pi_0 H)^t$ that, when going from left to right, consist of exactly $r$ $U$'s before the first $U \Pi_0 H$.
    
    By the triangle inequality, it suffices to show that for each $r \in [t]$,
    \begin{align*}
        \norm{\Pi_1 U^r F_{t-r} \Pi_0 \ket{\psi}} \leq 2\varepsilon(t).
    \end{align*}
         The $r=t$ case is immediate from assumption. For the $r \leq t-1$ case, rewrite the definition of  $F_{t-r}$ using $\Pi_0 H = \Pi_0G_0 - \Pi_0$ and $\widetilde{G}_0 = \Id + \Pi_0 H$:
    \begin{align*}
        F_{t-r} = U (\Pi_0G_0 - \Pi_0) (U\widetilde{G}_0)^{t-r-1}.
    \end{align*}
    Plugging in this expression for $F_{t-r}$ and invoking the triangle inequality yields
    \begin{align*}
        & \norm{\Pi_1 U^r F_{t-r} \Pi_0 \ket{\psi}}\\
        &= \norm{\Pi_1 U^{r+1} \Pi_0 G_0 (U\widetilde{G}_0)^{t-r-1}\Pi_0 \ket{\psi}-\Pi_1 U^{r+1} \Pi_0 (U\widetilde{G}_0)^{t-r-1} \Pi_0 \ket{\psi}}\\
        &\leq \norm{\Pi_1 U^{r+1} \Pi_0 G_0 (U\widetilde{G}_0)^{t-r-1}\Pi_0 \ket{\psi}} + \norm{\Pi_1 U^{r+1} \Pi_0 (U\widetilde{G}_0)^{t-r-1} \Pi_0 \ket{\psi}}.
    \end{align*}
    By our initial assumptions, this implies $\norm{\Pi_1 U^r F_{t-r} \Pi_0 \ket{\psi}} \leq 2\varepsilon(t)$.
\end{proof}

\begin{claim}
\label{claim:ignore-P1}
\[
\norm{ (U G)^t \Pi_0 \ket{\psi} - (U \widetilde{G}_0)^t \Pi_0 \ket{\psi}} \leq 2 t^2 \cdot \varepsilon(t)
\]
\end{claim}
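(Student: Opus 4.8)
The plan is to prove Claim~\ref{claim:ignore-P1} by a step-by-step hybrid argument that replaces the oracle $G$ with $\widetilde{G}_0$ one application at a time, controlling each resulting error term via Claim~\ref{claim:main-expansion}.

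First I would record the elementary identity $G - \widetilde{G}_0 = \Pi_1(G_1 - \Id) = (G_1 - \Id)\Pi_1$. Indeed, since $\Pi = \Pi_0 + \Pi_1$ we have $\Id - \Pi = (\Id - \Pi_0) - \Pi_1$, so $G = \Pi_0 G_0 + \Pi_1 G_1 + (\Id - \Pi_0) - \Pi_1$ while $\widetilde{G}_0 = \Pi_0 G_0 + (\Id - \Pi_0)$, and the second equality uses that $G_1$ commutes with $\Pi_1$. Since $G_1$ is unitary, $\norm{G_1 - \Id}_{op} \le 2$. I would also note that $\varepsilon(\cdot)$ is monotone non-decreasing, as $\varepsilon(t)$ is a maximum over the set of triples $(q,r,s)$ with $q,r,s \le t$.

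Next I would apply the telescoping identity $A^t - B^t = \sum_{j=0}^{t-1} A^j (A - B) B^{t-1-j}$ with $A \coloneqq UG$ and $B \coloneqq U\widetilde{G}_0$, so that $A - B = U(G - \widetilde{G}_0) = U(G_1 - \Id)\Pi_1$. Applying both sides to $\Pi_0\ket{\psi}$, invoking the triangle inequality, and using unitarity of $(UG)^j U$, this gives
\[
\norm{(UG)^t \Pi_0 \ket{\psi} - (U\widetilde{G}_0)^t \Pi_0 \ket{\psi}} \le \sum_{j=0}^{t-1} \norm{(G_1 - \Id)\,\Pi_1 (U\widetilde{G}_0)^{t-1-j} \Pi_0 \ket{\psi}}.
\]
Each summand is at most $\norm{G_1 - \Id}_{op} \cdot \norm{\Pi_1 (U\widetilde{G}_0)^{t-1-j}\Pi_0\ket{\psi}} \le 2 \cdot 2(t-1-j)\,\varepsilon(t-1-j) \le 4(t-1-j)\,\varepsilon(t)$ by Claim~\ref{claim:main-expansion} and monotonicity of $\varepsilon$. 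Summing over $j$ yields $4\varepsilon(t)\sum_{k=0}^{t-1} k = 2t(t-1)\varepsilon(t) \le 2t^2 \varepsilon(t)$, as claimed.

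The point that requires the most care is the direction of the telescoping expansion. We must group the terms of $A^t - B^t$ so that every error term has the form $\Pi_1 (U\widetilde{G}_0)^k \Pi_0 \ket{\psi}$, built from powers of the \emph{clean} operator $U\widetilde{G}_0$ to which Claim~\ref{claim:main-expansion} applies. This is exactly why we use the grouping $A^t - B^t = \sum_j A^j(A-B)B^{t-1-j}$ (all the $A$'s to the left of the difference, all the $B$'s to the right) rather than the mirror grouping: the mirror grouping would leave error terms of the form $\Pi_1 (UG)^k \Pi_0 \ket{\psi}$ involving the oracle itself, which admit no direct bound and would instead require a further, seemingly circular, induction.
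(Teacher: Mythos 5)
Your proof is correct and is essentially the paper's argument, just unrolled: the paper proves the bound by induction on $t$, where the inductive step isolates exactly the term $U(G-\widetilde{G}_0)(U\widetilde{G}_0)^{t-1}\Pi_0\ket{\psi}$ and bounds it via \cref{claim:main-expansion}; unwinding that induction yields precisely your telescoping sum $\sum_{j}(UG)^j U(G-\widetilde{G}_0)(U\widetilde{G}_0)^{t-1-j}\Pi_0\ket{\psi}$, with each summand controlled by the same application of \cref{claim:main-expansion}. Your explicit identity $G-\widetilde{G}_0=\Pi_1(G_1-\Id)$ is the same as the paper's $G=\Id+H\Pi$, $\widetilde{G}_0=\Id+H\Pi_0$ decomposition (so $G-\widetilde{G}_0=H\Pi_1$), and the remaining bookkeeping (monotonicity of $\varepsilon$, $\norm{G_1-\Id}_{op}\le 2$) matches.
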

\begin{proof}
We prove this claim by induction on $t$. The base case $t=0$ is immediate. For $t \geq 1$, we have by the inductive hypothesis that
\begin{align}
\label{eq:first-inequality}
    \norm{(UG)^t \Pi_0 \ket{\psi} -U G (U \widetilde{G}_0)^{t-1} \Pi_0 \ket{\psi}} \leq 2(t-1)^2\cdot\varepsilon(t-1).
\end{align}
Using $G = \Id + HP$, we can write $ U G (U \widetilde{G}_0)^{t-1} \Pi_0 \ket{\psi}$ and $(U \widetilde{G}_0)^t \Pi_0 \ket{\psi}$ as follows:
\begin{align}
    \label{eq-subtract1} U G (U \widetilde{G}_0)^{t-1} \Pi_0 \ket{\psi} &= U (U \widetilde{G}_0)^{t-1} \Pi_0 \ket{\psi} + U H \Pi (U \widetilde{G}_0)^{t-1} \Pi_0 \ket{\psi}\\
     \label{eq-subtract2} (U \widetilde{G}_0)^t \Pi_0 \ket{\psi} &= U (U \widetilde{G}_0)^{t-1} \Pi_0 \ket{\psi} + U H \Pi_0 (U \widetilde{G}_0)^{t-1} \Pi_0 \ket{\psi}.
\end{align}
By subtracting \cref{eq-subtract2} from \cref{eq-subtract1} and taking the norm, we obtain
\begin{align*}
    \norm{ U G (U \widetilde{G}_0)^{t-1} \Pi_0 \ket{\psi} -   (U \widetilde{G}_0)^t \Pi_0 \ket{\psi}} &= \norm{U H \Pi (U \widetilde{G}_0)^{t-1} \Pi_0 \ket{\psi} -  U H \Pi_0 (U \widetilde{G}_0)^{t-1} \Pi_0 \ket{\psi}}\\
    &\leq \norm{UH}_{\mathrm{op}}\norm{ \Pi (U \widetilde{G}_0)^{t-1} \Pi_0 \ket{\psi} - \Pi_0 (U \widetilde{G}_0)^{t-1} \Pi_0 \ket{\psi}}.
\end{align*}
By \cref{claim:main-expansion} (and writing $\Pi_1 = \Pi - \Pi_0$)
\begin{align*}
    \norm{\Pi (U \widetilde{G}_0)^{t-1} \Pi_0 \ket{\psi} - \Pi_0 (U \widetilde{G}_0)^{t-1} \Pi_0 \ket{\psi}} \leq 2 (t-1) \cdot \varepsilon(t-1).
\end{align*}
Since $\norm{UH}_{\mathrm{op}} = \norm{H}_{\mathrm{op}} \leq 2$, it follows that
\begin{align}
\label{eq:second-inequality}
    \norm{ U G (U \widetilde{G}_0)^{t-1} \Pi_0 \ket{\psi} - (U \widetilde{G}_0)^t \Pi_0 \ket{\psi}} \leq 4 (t-1) \cdot \varepsilon(t-1).
\end{align}

Combining \cref{eq:first-inequality,eq:second-inequality}, we obtain 
\begin{align*}
    \norm{ (U G)^t \Pi_0 \ket{\psi} - (U \widetilde{G}_0)^t \Pi_0 \ket{\psi}} &\leq 2(t-1)^2 \cdot \varepsilon(t-1) + 4 (t-1) \cdot \varepsilon(t-1) \\
    &\leq 2 t^2 \cdot \varepsilon(t). \qedhere
\end{align*}
\end{proof}

\begin{proof}[Proof of \cref{lemma:main}]
    % Define $\widetilde{G}_0 \coloneqq \Pi_0 G_0 + \Pi_1 + (\Id - \Pi)$, which is $G$ but with the action on $\Pi_1$ fixed to $\Id$.
    
    By \cref{claim:ignore-P1} and \cref{claim:main-expansion}, we have
    \begin{align*}
        \norm{\Pi_1 (U G)^t \Pi_0 \ket{\psi}} &\le \norm{\Pi_1 (U \widetilde{G}_0)^t \Pi_0 \ket{\psi}} + 2t^2 \cdot \varepsilon(t) \\
        &\le 2 t \cdot \varepsilon(t) + 2t^2 \cdot \varepsilon(t) \\
        &\le 4 t^2 \cdot \varepsilon(t). \qedhere
    \end{align*}
\end{proof}

\begin{proof}[Proof of the admissible oracle lemma (\cref{lemma:admissible-oracles})]
Let $\{\Pi_{\D^G}, \Id - \Pi_{D^G}\}$ be the binary projective measurement corresponding to running distinguisher $D^G$ and measuring the output bit that distinguishes $\Pi \ket{\psi}$ and $W \Pi \ket{\psi}$ with $\varepsilon$ advantage. By \cref{lemma:duality}, \cref{item:dist-to-map}, the unitary operator $\widetilde{U} \coloneqq \Id - 2 \Pi_{D^G}$ achieves a mapping advantage at least $\varepsilon^2$ on initial state $\ket{\psi}$. The unitary $\widetilde{U}$ can be implemented by running $D^G$, applying $Z$ to the output qubit, then running $(D^G)^{\dagger}$. This requires access to two oracles $G$ and $G^{\dagger}$. For convenience, we will introduce the new admissible oracle $\widetilde{G} \coloneqq \ketbra{0}_{\RegX} \otimes G + \ketbra{1}_{\RegX} \otimes G^{\dagger}$, which can simulate the action of $G$ and $G^{\dagger}$ using an ancilla qubit $\RegX$. For this reason, we write the unitary as $\widetilde{U}^{\widetilde{G}}$. 

The circuit $\widetilde{U}^{\widetilde{G}}$, which makes $t+1=O(T)$ calls to $\widetilde{G}$, can be converted to the form $(U \widetilde{G})^{t}$ for some fixed unitary $U$ making no calls to $\widetilde{G}$ by appending a clock register $\RegY$ with $\poly(T)$ overhead in overall circuit size and ancilla registers. Then:
\[
\norm{(W^- \Pi) (U\widetilde{G})^t (W^+ \Pi) (\ket{0}_{\RegX} \ket{0}_{\RegY} \ket{\psi})}^2 \ge \varepsilon^2.
\]
Then by \cref{lemma:main}, there exist $q,r,s \le t$ such that
\[
\norm{(W^- \Pi) U^q (W^+ \Pi) (\widetilde{G}_0)^r (U\widetilde{G}_0)^s (W^+ \Pi) (\ket{0}_{\RegX} \ket{0}_{\RegY} \ket{\psi})}^2 \ge \varepsilon^2/4t^2
\]
where $\widetilde{G}_0 = W^+\Pi\widetilde{G} + (\Id-W^+\Pi)$. The unitary $\widetilde{G}_0$ can be implemented using $O(1)$ applications of controlled $G$, $G^{\dagger}$ and the unitary $M \coloneqq (W^+ \Pi) \otimes X + (\Id - W^+ \Pi) \otimes \Id$.

Let $\Prep^{G,M}$ be the quantum algorithm that on input $\ket{\psi}$ creates the sub-normalized state\footnote{By creating a sub-normalized state, we mean that if a measurement $\{W^+ \Pi, \Id - W^+ \Pi\}$ rejects (i.e., apply $M$ on the state and an ancilla qubit and measure the ancilla qubit), abort and do not produce any output. Otherwise, output the resulting state $\ket{\gamma}$.}
\[
    \ket{\gamma} \coloneqq (W^+ \Pi) (\widetilde{G}_0)^r (U\widetilde{G}_0)^s (W^+ \Pi) (\ket{0}_{\RegX} \ket{0}_{\RegY} \ket{\psi}),
\]
and then uses $\ket{\gamma}$ to prepare the state 
\[
    \ket{\phi} = \ctl_{\RegZ}\mh U^q \left( \ket{+}_{\RegZ} \ket{\gamma} \right).
\]
By \cref{lemma:duality}, \cref{item:map-to-dist}, if we define the projector $\widetilde{\Pi}_{D} \coloneqq \ctl_{\RegZ} \mh U^q  \ketbra{+}_{\RegZ} \ctl_{\RegZ} \mh (U^q)^{\dagger}$, then the binary projective measurement $\{ \widetilde{\Pi}_{D}, \Id - \widetilde{\Pi}_{D} \}$ distinguishes $\Pi \ket{\phi}$ and $W \Pi \ket{\phi}$ with advantage $\varepsilon^2/8t^2$. Thus, the algorithm $E$ that measures the binary projective measurement $\{\widetilde{\Pi}_D, \Id - \widetilde{\Pi}_D\}$ wins the $(W,\Pi)$-distinguishing game with advantage $\varepsilon^2/8t^2$. 
\end{proof}

\subsection{Swap binding implies oracle swap binding}
\label{subsec:swap-binding-implies-oracle}

We now prove the main theorem of this section, \cref{theorem:oracle-binding-security}, which says that any swap binding commitment scheme is oracle swap binding.

\begin{proof}[Proof of \Cref{theorem:oracle-binding-security}]
    Suppose that a size-$T$ adversary $A^{G_b}$ for oracle swap-binding (counting queries to $G_b$ as unit cost) achieves distinguishing advantage $\varepsilon$ with initial state $\ket{\psi}_{\RegC \RegD \RegR}$, where $\RegR$ is the adversary's auxiliary registers. We will use \cref{lemma:admissible-oracles} to translate this adversary into one that achieves distinguishing advantage $\varepsilon/16T^2$ in the standard swap-binding game.

    Let $\ctl_{\RegB}\mh S \coloneqq \ketbra{0}_{\RegB} \otimes \Id + \ketbra{1}_{\RegB} \otimes \widehat{\SWAP[\RegM, \RegM']}$, where $\RegB$ is a single qubit register and $\RegM'$ is a register of the same dimension as $\RegM$. Then define the oracle
    \[
        G = (\ctl_{\RegB} \mh S) \widehat{\cO} (\ctl_{\RegB} \mh S) \Pi  + (\Id - \Pi)
    \]
    and operation
    \begin{align} \label{def:swap-W}
        W = (\ctl_{\RegB}\mh S) X_{\RegB} (\ctl_{\RegB} \mh S) = X_{\RegB} \otimes  \widehat{\SWAP[\RegM, \RegM']}.
    \end{align}
    
    Observe that $A^G$ is the same quantum algorithm as $A^{G_b}$, except the calls to $G_b$ are replaced by calls to $G$. Since $A^{G_b}$ achieves distinguishing advantage $\varepsilon$ in oracle swap-binding with initial state $\ket{\psi}_{\RegC \RegD \RegR}$, $A^G$ achieves distinguishing advantage $\varepsilon$ in the $(W,\Pi)$-distinguishing game with registers $\RegA \coloneqq (\RegB, \RegM', \RegC)$, $\RegB \coloneqq \RegD$ and initial state $\ket{0}_{\RegB} \otimes \ket{0}_{\RegM'} \otimes \ket{\psi}_{\RegC \RegD \RegR}$.
    
    By \cref{lemma:admissible-oracles}, there exists another quantum state $\brho$ and efficient algorithm $E$ \emph{without query access to $G$} such that the following hold:
    \begin{itemize}
        \item there is a size-$\poly(T)$ oracle circuit $\Prep^{G,M}$ such that $\brho \leftarrow \Prep^{G,M} ( \ket{0}_{\RegB \RegM'} \ket{\psi}_{\RegC \RegD \RegR} \ket{0}_{\anc})$ and $\Prep$ uses at most $T$ (controlled) queries to $G$, $G^{\dagger}$, and $M \coloneqq (W^+ \Pi) \otimes X + (\Id - W^+ \Pi) \otimes \Id$,
        \item $E$ has size $\poly(T)$, and
        \item $E$ wins the $(W,\Pi)$-distinguishing game with advantage $\varepsilon^2 / 8T^2$ using initial state $\brho$.
    \end{itemize}
    If $\cO$ is efficiently implementable, then (controlled) $G$ and $G^{\dagger}$ are also efficiently implementable. In addition, with our given definition of $W$, the binary projective measurement $\{W^+ \Pi, \Id - W^+ \Pi\}$ is efficiently implementable. This is because  
    \[
        W^+ \Pi = \left(\ketbra{+}_{\RegB} \otimes \widehat{\Pi_{\Sym}} + \ketbra{-}_{\RegB} \otimes\widehat{\Pi_{\Antisym}} \right) \Pi,
    \]
    where $\Pi_{\Sym}$ and $\Pi_{\Antisym}$ are the projectors onto the the symmetric and antisymmetric subspaces of $\cM \otimes \cM'$, which can be measured using a $\SWAP$ test. Thus $M$ is efficiently implementable. For this reason, we now write $\Prep \coloneqq \Prep^{G,M}$ because $\Prep$ can implement the functionality of controlled $G$, $G^{\dagger}$, and $M$ without any oracle access.
    
    Thus $E$ can win the $(W,\Pi)$-distinguishing game using initial state $\brho \leftarrow \Prep( \ket{0}_{\RegB \RegM'} \ket{\psi}_{\RegC \RegD \RegR} \ket{0}_{\anc})$ with advantage $\varepsilon^2 / 8T^2$. The conclusion follows by an application of \cref{claim:swap-binding-to-swap-binding-with-registers} (which we state and prove below), which gives a distinguisher for swap-binding with advantage $\varepsilon^2 / 16T^2$.
\end{proof}

\begin{claim} \label{claim:swap-binding-to-swap-binding-with-registers}
    Suppose that an efficient distinguisher $D$ wins the $(W,\Pi)$-distinguishing game with advantage $\delta$, where $W$ and $\Pi$ are defined as 
    \[
        W = X_{\RegB} \otimes \widehat{\SWAP[\RegM, \RegM']} \text{ and } \Pi = \Com (\Id_\RegM \otimes \ketbra{0}_\RegW) \Com^{\dagger}.
    \]
    Then there is an efficient distinguisher for the swap-binding game
    (\cref{def:swap-binding}) with advantage $\delta / 2$.
\end{claim}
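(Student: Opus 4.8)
The plan is to build a swap-binding distinguisher $D'$ (for \cref{def:swap-binding}) that runs $D$'s state preparation internally and routes the commitment--decommitment pair through the swap-binding challenger. Recall that in the instantiation of the $(W,\Pi)$-game at hand, $D$ sends the challenger a control qubit $\RegB$, an auxiliary register $\RegM'$, and a commitment--decommitment pair $(\RegC,\RegD)$; after measuring $\{\Pi,\Id-\Pi\}$ (which acts only on $(\RegC,\RegD)$) and, on acceptance, applying $W^b=(X_\RegB\otimes\widehat{\SWAP[\RegM,\RegM']})^b$, the challenger returns \emph{only} $\RegD$, so $D$'s guess is produced from $\RegD$ and its private workspace $\RegR$. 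Two easy observations drive the reduction. First, since $\RegB$ is never returned and partial trace is invariant under unitaries on the traced-out register, the factor $X_\RegB$ inside $W$ is invisible to $D$ and may be ignored. Second, the only remaining difference from the swap-binding game is \emph{which} register is swapped into the message slot when $b=1$: in the $(W,\Pi)$-game it is $D$'s own $\RegM'$ (prepared however $D$ likes, possibly entangled with $\RegR$), whereas the swap-binding challenger always swaps in a fresh $\ket{0}$.

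To bridge that gap I would have $D'$ flip a fair coin $c\in\{0,1\}$. Write $\beta$ for the probability that the $\Pi$-check on $D$'s prepared $(\RegC,\RegD)$ rejects, and let $p_{\mathrm{orig}}$, $p_{\RegM'}$, $p_{\ket{0}}$ be the probabilities that $D$'s post-processing outputs $0$ when, conditioned on the check accepting, it is fed (in place of $\RegD$, together with $\RegR$): its honest decommitment; the decommitment whose underlying message has been swapped out for the contents of $\RegM'$ (this is exactly $D$'s real-game $b=1$ view); and the decommitment whose message has been replaced by $\ket{0}$. Unwinding the $(W,\Pi)$-game gives $\delta=(1-\beta)\,\lvert p_{\mathrm{orig}}-p_{\RegM'}\rvert/2$. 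If $c=0$, $D'$ forwards $D$'s $(\RegC,\RegD)$ unchanged and echoes $D$'s output: the swap challenger's $b=0$ branch reproduces $D$'s honest view and its $b=1$ branch hands $D$ the message-$\ket{0}$ decommitment, so $D'$'s bias is $(p_{\mathrm{orig}}-p_{\ket{0}})/2$ (times $1-\beta$). If $c=1$, $D'$ first applies $\Com^\dagger$, then $\SWAP[\RegM,\RegM']$, then $\Com$ to $(\RegC,\RegD)$ before forwarding, and outputs the negation of $D$'s guess; here I would use that $\SWAP[\RegM,\RegM']$ commutes with $\Id_\RegM\otimes\ketbra{0}_\RegW$ (disjoint registers), hence $\widehat{\SWAP[\RegM,\RegM']}$ commutes with $\Pi$, so the check still accepts with probability $1-\beta$ and the accepted state equals the honest accepted state with the swap pre-applied. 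Consequently the swap challenger's $b=0$ branch now presents $D$ with precisely its real-game $b=1$ view (including the entanglement of the swapped-in message with $\RegR$) and its $b=1$ branch again yields the message-$\ket{0}$ decommitment, for bias $(p_{\ket{0}}-p_{\RegM'})/2$. Averaging over $c$, the $p_{\ket{0}}$ terms cancel and $D'$ wins swap binding with probability $\tfrac12+(1-\beta)(p_{\mathrm{orig}}-p_{\RegM'})/4=\tfrac12+\delta/2$, after the harmless normalization of replacing $D$ by its output-complement if needed so that $p_{\mathrm{orig}}\ge p_{\RegM'}$.

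What remains is routine: checking in each of the four $(\text{coin},\text{challenge-bit})$ cases that the swap challenger's steps (apply $\Com^\dagger$, measure $\RegW$, branch on $b$, re-apply $\Com$) reproduce ``measure $\Pi$, apply $W^b$'' --- or the message-$\ket{0}$ variant --- on $D$'s view, together with the arithmetic above; and noting that $D'$ runs in time polynomial in that of $D$ with $O(1)$ extra applications of $\Com,\Com^\dagger$ and one $\SWAP$, so nothing changes between the computational and statistical settings. The one genuinely non-mechanical point --- and what I expect to be the main obstacle --- is recognizing that the $(W,\Pi)$-advantage cannot be charged to swap binding directly, because the register swapped in by $W$ is the adversary's $\RegM'$ rather than $\ket{0}$; interposing the message-$\ket{0}$ decommitment as a hybrid and paying the resulting factor of $2$ via the triangle inequality is exactly what resolves this.
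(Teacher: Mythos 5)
Your proof is correct and follows essentially the same route as the paper: both introduce the intermediate hybrid where the decommitment's underlying message is replaced by $\ket{0}$, both observe that $X_{\RegB}$ is invisible once $\RegB$ is traced out and that $\widehat{\SWAP[\RegM,\RegM']}$ commutes with $\Pi$, and both handle the second transition by pre-applying $\widehat{\SWAP[\RegM,\RegM']}$ before forwarding to the swap-binding challenger. The only difference is packaging: the paper uses the triangle inequality over two hybrid steps (so one of them has gap $\geq \delta/2$), while you fold both steps into a single reduction with a coin flip and average the two biases; both yield the claimed $\delta/2$.
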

\begin{proof}
    Suppose that $D$ wins the $(W,\Pi)$-distinguishing game with advantage $\delta$ using the state $\brho_{\RegB \RegM' \RegC \RegD \RegR}$. This means that $D$ can distinguish the states $\Tr_{\RegB \RegM' \RegC}(\Pi \brho \Pi)$ and $\Tr_{\RegB \RegM' \RegC} (W \Pi \brho \Pi W)$ with advantage $\delta$. Because the $\RegB$ register is traced out in the $(W,\Pi)$-distinguishing game, the operation $X_{\RegB}$ in $W$ has no effect on the distinguisher's view, so we may rewrite the latter state as
    \begin{align} \label{eq:world1-W-application}
        \Tr_{\RegB \RegM' \RegC} \left( W \Pi \brho \Pi W \right) = \Tr_{\RegB \RegM' \RegC} \left( \left( \widehat{\SWAP[\RegM, \RegM']} \right) \cdot \Pi \brho \Pi \cdot \left( \widehat{\SWAP[\RegM, \RegM']} \right)  \right) . 
    \end{align}
    
    We use a hybrid argument to show that the distinguisher $D$ yields an adversary for swap-binding with advantage $\delta/2$. We define hybrid games $H_0$, $H_1$, and $H_2$, where the adversary's view in hybrid $H_0$ corresponds to $\Tr_{\RegB \RegM' \RegC}(\Pi \brho \Pi)$ and the adversary's view in hybrid $H_2$ corresponds to $\Tr_{\RegB \RegM' \RegC} (W \Pi \brho \Pi W)$.
    
    \begin{itemize}
        \item $H_0$:
            \begin{enumerate}
                \item The adversary sends a quantum state $\brho_{\RegB \RegM' \RegC \RegD \RegR}$ to the challenger. 
                \item The challenger then does the following: 
                    \begin{enumerate}
                        \item Measure $\{\Pi, \Id - \Pi\}$ and abort if the measurement rejects. 
                        \item Send the $(\RegD, \RegR)$ register to the adversary.
                    \end{enumerate}
            \end{enumerate}
        \item $H_1$:
            \begin{enumerate}
                \item The adversary sends a quantum state $\brho_{\RegB \RegM' \RegC \RegD \RegR}$ to the challenger. 
                \item The challenger then does the following: 
                    \begin{enumerate}
                        \item Measure $\{\Pi, \Id - \Pi\}$ and abort if the measurement rejects. 
                        {\color{red} \item Initialize another register $\RegE$ (of the same dimension as $\RegM$) to the $\ket{0}$ state.
                        \item Apply $\widehat{\SWAP[\RegM, \RegE]}$. }
                        \item Send the $(\RegD, \RegR)$ register to the adversary.
                    \end{enumerate}
            \end{enumerate}
            This is immediately indistinguishable from $H_0$ by swap-binding.
        \item $H_2$: 
            \begin{enumerate}
                \item The adversary sends a quantum state $\brho_{\RegB \RegM' \RegC \RegD \RegR}$ to the challenger. 
                \item The challenger then does the following: 
                    \begin{enumerate}
                        \item Measure $\{\Pi, \Id - \Pi\}$ and abort if the measurement rejects. 
                        {\color{red}\item Apply $\widehat{\SWAP[\RegM, \RegM']}$.}
                        \item Send the $(\RegD, \RegR)$ register to the adversary.
                    \end{enumerate}
            \end{enumerate}
            We show that $H_2$ is indistinguishable from $H_1$ by invoking swap-binding. Given an adversary that distinguishes $H_1$ from $H_2$ with advantage $\delta'$, we construct the following reduction that distinguishes the $b=0$ and $b=1$ worlds of the swap binding game with advantage $\delta'$:
             \begin{enumerate}
                 \item Receive $\brho_{\RegB \RegM' \RegC \RegD \RegR}$ from the adversary.
                 \item Measure $\{\Pi, \Id - \Pi\}$ and abort if the measurement rejects.
                 \item Apply $\widehat{\SWAP[\RegM, \RegM']}$.
                 \item Send registers $(\RegC, \RegD)$ to the swap-binding challenger.
                 \item Receive $\RegD$ from the swap-binding challenger. 
                 \item Forward $(\RegD, \RegR)$ to the adversary.
                 \item Use the adversary's output as the guess for the swap-binding game.
             \end{enumerate}
             Because the reduction performs the measurement $\{\Pi, \Id-\Pi\}$, the challenger's measurement has no effect. In the $b=0$ world of the swap-binding game, the challenger will do nothing, which corresponds exactly to $H_2$. In the $b=1$ world, the challenger will apply $\widehat{\SWAP[\RegM, \RegE]}$ where $\RegE$ is initialized to $\ket{0}$. Thus, the adversary receives the decommitment after the operation $\widehat{\SWAP[\RegM, \RegE]} \cdot \widehat{\SWAP[\RegM, \RegM']} = \SWAP[\RegM', \RegE] \cdot \widehat{\SWAP[\RegM, \RegE]}$ has been applied. Because the final $\SWAP[\RegM', \RegE]$ has no effect on the adversary's view, this corresponds exactly to $H_1$.    
    \end{itemize}
\end{proof}

\newpage

\section{A succinct quantum argument protocol}
\label{sec:quantum-kilian}

In this section, we construct an interactive succinct quantum argument for any language with quantum PCPs with constant soundness error and polylogarithmic query complexity.

\subsection{Preliminaries}

\subsubsection{Probabilistically checkable proofs}
\label{subsubsec:pcps}

\begin{definition}[Quantum Probabilistically Checkable Proofs]
\label{def:quantum-pcp}
A quantum PCP for a language $L$ is parameterized by a completeness parameter $c$, soundness $s$, proof length $m$, randomness complexity $\ell$, and query complexity $q$. We require the following properties: 
\begin{itemize}
    \item (Efficient verification) There is a classical $\poly(n)$ time procedure that takes as input $x \in \{0,1\}^n$ and $r \in \{0,1\}^{\ell}$ and outputs the description of circuit for implementing a $q$-qubit projective measurement  $\{\Pi^{\PCP}_{x,r},\Id-\Pi^{\PCP}_{x,r}\}$, which acts on a state of size $m$.
    \item (Completeness) If $x \in L$, there exists an $m$-qubit state $\bpi$ such that
    \[ \displaystyle \mathop{\mathbb{E}}_{r \gets \{0,1\}^{\ell}} \Tr(\Pi_{x,r}^{\PCP} \bpi) \geq c.\]
    \item (Soundness) If $x \not\in L$, then for any $m$-qubit state $\bpi$,
    \[ \displaystyle \mathop{\mathbb{E}}_{r \gets \{0,1\}^{\ell}} \Tr(\Pi_{x,r}^{\PCP} \bpi) \leq s.\]
\end{itemize}
\end{definition}

\paragraph{Which languages have good quantum PCPs?} It is easy to check that any \emph{classical} PCP is captured by this definition. Define the measurements $\{\Pi^{\PCP}_{x,r},\Id-\Pi^{\PCP}_{x,r}\}$ corresponding to the predicate that the classical PCP verifier checks. Completeness is trivially preserved since any classical proof string $\pi$ is also a quantum state. Soundness is preserved because the measurements  $\{\Pi^{\PCP}_{x,r},\Id-\Pi^{\PCP}_{x,r}\}$ correspond to classical predicates and thus commute with standard basis measurements. If $x \not\in L$, the probability that a quantum state $\ket{\psi} = \sum_{\pi \in \{0,1\}^m} \alpha_\pi \ket{\pi}$ is accepted is \[\sum_{\pi \in \{0,1\}^m} \abs{\alpha_\pi}^2 \Pr[\pi \text{ is accepted}] \leq \sum_{\pi \in \{0,1\}^m} \abs{\alpha_\pi}^2 s = s.\]
The statement for mixed states follows by convexity.

\begin{theorem}[PCP theorem~\cite{STOC:BFLS91,FOCS:FGLSS91,AroraS98,AroraLMSS98}]
\label{theorem:classical-pcp-theorem}
    There exist constants $0 \leq s < c \leq 1$ and $k = O(1)$ such that every $\NP$ language has a PCP with completeness $c$, soundness $s$, proof length $m = \poly(n)$, randomness complexity $\ell = O(\log n)$, and query complexity $q = O(\log n)$.
\end{theorem}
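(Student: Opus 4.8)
The plan is to reduce the statement to the single assertion $\NP \subseteq \PCP[O(\log n), O(1)]$ — every $\NP$ language has a PCP with logarithmic randomness and a \emph{constant} number of queries. This is stronger than the stated $q = O(\log n)$, so it certainly suffices, and by the Cook--Levin theorem it is enough to construct such a PCP for a single $\NP$-complete problem. The most convenient starting point is a constraint satisfaction problem (\textsc{Gap-CSP}, or equivalently \textsc{3-SAT} / \textsc{3-Colorability}) with the \emph{trivial} soundness gap: a ``no'' instance has every assignment violating at least one of the $\poly(n)$ constraints. The entire content of the proof is then to amplify this inverse-polynomial gap into a fixed constant $c - s > 0$ while keeping randomness $O(\log n)$, proof length $\poly(n)$, and query complexity bounded.

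I would carry this out by one of two standard routes. \textbf{Route A (algebraic verification plus composition, \cite{AroraS98,AroraLMSS98}).} Arithmetize the CSP over a finite field $\mathbb{F}$, identifying variables with a low-dimensional grid $\mathbb{F}^k$ and encoding a satisfying assignment by its low-degree (Reed--Muller) extension; choosing $|\mathbb{F}| = \polylog(n)$ and $k \approx \log n / \log\log n$ makes $|\mathbb{F}|^k \approx n$, so one random point costs $O(\log n)$ bits of randomness. The proof contains this table together with the auxiliary polynomials of a \emph{sum-check} that certifies all constraints hold simultaneously. The verifier (i) runs a \emph{low-degree test} (line-vs-point or plane-vs-point) to check the table is close to a genuine low-degree polynomial, (ii) runs the sum-check verifier, and (iii) reads individual values via \emph{local self-correction} of Reed--Muller codes. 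This yields $\NP \subseteq \PCP[O(\log n), \polylog(n)]$ with a constant gap; one (or a few) rounds of \emph{proof composition} with an inner verifier — e.g.\ the Hadamard-code-based $\PCP[\poly(n), O(1)]$ for $\NP$ of \cite{AroraLMSS98}, applied to the $\polylog$-size computation the outer verifier performs — drives the query complexity down to $O(1)$ while the randomness stays $O(\log n)$. (Since the theorem only asks for $q = O(\log n)$, fewer composition rounds than needed for the full theorem already suffice.) \textbf{Route B (combinatorial gap amplification, following Dinur).} View the instance as a constraint graph over a constant-size alphabet with gap $\Omega(1/\poly(n))$, and iterate a single step: \emph{power} the graph (a $t$-step-walk version) to multiply the gap by $\approx\sqrt{t}$ at the cost of blowing the alphabet up to $\Sigma^{t^{O(1)}}$, then apply an \emph{alphabet-reduction} step (composition with a fixed small assignment-tester PCP) to shrink the alphabet back while losing only a constant factor, then expanderize/regularize so the next step's hypotheses hold. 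After $O(\log n)$ rounds the gap is a fixed constant; each round multiplies the instance size by only a constant, so the final instance has $\poly(n)$ size, and the verifier just reads one random constraint ($O(1)$ symbols, hence $O(1)$ bits). Either route delivers the claimed parameters a fortiori: $\ell = O(\log n)$ since there are $\poly(n)$ constraints, $m = \poly(n)$, $q = O(1) \le O(\log n)$, and constants $0 \le s < c \le 1$.

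The main obstacle in Route A is the \emph{soundness analysis of the low-degree test} — showing that any table passing the line/plane-vs-point test with non-negligible probability is (list-decoding) close to an actual low-degree polynomial — together with engineering \emph{robustness} of the outer verifier (its decision must depend on a few bundled queries whose joint acceptance probability degrades gracefully) so that composition preserves a constant gap; threading logarithmic randomness through every recursion level is the remaining delicate bookkeeping. In Route B the crux is the amplification lemma itself: proving that graph powering multiplies the gap as claimed, which goes through the expander mixing lemma and a ``plurality/popular-assignment'' decoding argument, plus verifying that composition with the inner assignment tester costs only a constant factor. In both approaches these lemmas are genuinely substantial; I would isolate the low-degree test (resp.\ the gap-amplification lemma) as the single hardest black box and organize the rest of the construction and the parameter accounting around it.
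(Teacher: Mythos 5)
The paper does not prove this theorem at all: it is an imported result, cited to \cite{STOC:BFLS91,FOCS:FGLSS91,AroraS98,AroraLMSS98}, and used as a black box to instantiate the succinct-argument template with classical PCPs. So there is no ``paper proof'' to compare yours against.

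That said, your sketch is a correct high-level account of both known routes to the PCP theorem, and you are right that proving the stronger $\NP \subseteq \PCP[O(\log n), O(1)]$ is the standard way to establish it (the paper's $q = O(\log n)$ then follows \emph{a fortiori}; the extra slack in $q$ appears to be there because the paper later needs polylog-locality after parallel repetition, cf.\ the soundness amplification claim). In Route A you correctly identify arithmetization, sum-check, low-degree testing, self-correction, and proof composition with the Hadamard-based $\PCP[\poly(n), O(1)]$ inner verifier as the main ingredients, and you correctly flag the low-degree test soundness and the robustness needed for composition as the hard parts. In Route B you correctly identify graph powering, alphabet reduction via an assignment tester, and expanderization, iterated $O(\log n)$ times, with the amplification lemma as the crux. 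Either would do. One small observational nit: the theorem statement as written mentions a constant $k = O(1)$ that never reappears in the parameter list; this is almost certainly a vestigial typo in the paper, not something your proof needs to supply. Given that the paper treats this as an axiom of the literature, a full proof is out of scope here, and citing the original papers (as the paper does) is the appropriate level of rigor; your outline is a faithful summary of what those citations contain.
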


\begin{conjecture}[Quantum PCP conjecture~\cite{AALV, AAV13}]
\label{conj:quantum-pcp}
    There exist constants $0 \leq s < c \leq 1$ and $k = O(1)$ such that every $\QMA$ language has a quantum PCP with completeness $c$, soundness $s$, proof length $m = \poly(n)$, randomness complexity $\ell = O(\log n)$, and query complexity $q = O(\log n)$.
\end{conjecture}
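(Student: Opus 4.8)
The statement labeled here is the \emph{quantum PCP conjecture}, which remains open; what follows is therefore a proposal for an approach rather than a proof, and I will be explicit about where any such approach currently stalls. The only known template is to mimic Dinur's gap-amplification proof of the classical PCP theorem (\cref{theorem:classical-pcp-theorem}) in the language of local Hamiltonians. Concretely, one starts from Kitaev's $\QMA$-complete local Hamiltonian problem, whose promise gap is $1/\poly(n)$, and attempts to apply a ``gap-doubling'' transformation that multiplies the relative promise gap by a constant factor while increasing the locality by only a constant and the qubit count by only a polynomial factor; iterating $O(\log n)$ times lands at a constant gap. In the notation of \cref{def:quantum-pcp}, the Hamiltonian terms become the projective measurements $\{\Pi^{\PCP}_{x,r},\Id-\Pi^{\PCP}_{x,r}\}$ with $r \in \{0,1\}^\ell$ selecting a term, $q = O(\log n)$ is the locality (so $k = O(1)$ after a final constant-locality reduction), $m = \poly(n)$ is the number of qubits, $\ell = O(\log n)$, and the constants $0 \le s < c \le 1$ record the constant promise gap.

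First I would isolate the three ingredients that form the backbone of Dinur's argument and fix their quantum analogues. (i) \textbf{Expanderization:} replace the interaction hypergraph of the Hamiltonian by a constant-degree expander while preserving the spectral/energy gap up to constants; the detectability lemma of Aharonov--Arad--Landau--Vazirani should control how energies of overlapping local terms compose here. (ii) \textbf{Gap amplification by powering:} form a new Hamiltonian whose terms correspond to radius-$t$ balls in the expander and argue that the (relative) energy of \emph{every} state grows by a factor $\approx\sqrt{t}$ relative to the old Hamiltonian. (iii) \textbf{Locality reduction:} after powering, the locality has blown up, so fold it back down to a constant using a quantum analogue of classical assignment testers / PCPs of proximity, where recent constructions of quantum locally testable codes and the NLTS construction of Anshu--Breuckmann--Nirkhe plausibly supply the needed gadgets. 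I would attack these in the order (i), (iii), (ii), because (i) and (iii) seem within reach of existing tools, whereas (ii) is the genuine obstacle.

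The hard part will be step (ii). Classically, powering works because a ``too-good'' assignment to the ball-Hamiltonian can be \emph{decoded}, by a plurality vote over overlapping balls, to a global assignment whose violation rate controls the new energy; a short combinatorial lemma then shows a length-$t$ walk on the expander hits a violated edge with probability $\gtrsim \sqrt{t}$ times the old violation rate. In the quantum setting the decoding step fails outright: there is no cloning, no majority vote over mutually inconsistent reduced density matrices, and an optimal state of the powered Hamiltonian can be globally entangled with no consistent local decoding. A plausible line of attack is to replace ``decode a global state'' by ``decode a global \emph{consistency structure}'': work with the collection of local density matrices $\{\brho_S\}$ on small sets $S$, phrase the powered constraints as constraints on these marginals, and use an approximate ground-space-projection / quantum de Finetti argument to show that low powered-energy forces the marginals to be jointly close to consistent with \emph{some} global low-energy state; one would then still need a quantitative local-to-global amplification showing that a constant fraction of violated balls forces a $\sqrt{t}$-type increase in the per-term energy. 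Making that quantitative bound survive entanglement is, in my assessment, precisely the point at which all prior attempts have stalled, and absent a new idea there the most one can honestly deliver is the \emph{conditional} consequence already recorded in the excerpt (succinct arguments for $\QMA$ from succinct QSCs, assuming this conjecture), together with the partial results --- NLTS, quantum locally testable codes --- that any eventual proof must subsume.
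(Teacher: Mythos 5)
You have correctly identified that this statement is not proven in the paper at all: it is the quantum PCP conjecture, imported from~\cite{AALV, AAV13} and used only as a hypothesis (the paper's $\QMA$ succinct argument in \cref{corollary:succinct-qma-body} is explicitly conditional on \cref{conj:quantum-pcp}, with only the unconditional $\NP$ case following from the classical PCP theorem, \cref{theorem:classical-pcp-theorem}). Your assessment — that any Dinur-style attack currently stalls at the quantum gap-amplification/decoding step and that only the conditional consequences can honestly be claimed — is consistent with how the paper treats the statement, so there is no proof to compare against and nothing further to flag.
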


Any quantum PCP with constant completeness-soundness gap can be repeated in parallel $\log^2(\lambda)$ times to achieve $1-\negl(\lambda)$ completeness soundness error. The proof of soundness amplification is the same as the standard proof that $\QMA$ can be amplified to exponentially small error by parallel repetition \cite{KSV02}.

\begin{claim}[Soundness amplification]
Suppose that a language $L$ has a quantum PCP with constant completeness-soundness gap $c-s$, proof length $m$, randomness complexity $\ell$, and query complexity $q$. Then $L$ has a quantum PCP with completeness $1-\negl(\lambda)$, soundness $\negl(\lambda)$, proof length $m \cdot \polylog(\lambda)$, randomness complexity $\ell \cdot \polylog(\lambda)$, and query complexity $q \cdot  \polylog(\lambda)$.
\end{claim}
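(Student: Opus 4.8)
The plan is to run the standard \textsf{QMA} parallel-repetition amplification adapted to the PCP syntax. Fix the constants $0 \le s < c \le 1$ and write $g := c - s$ for the (constant) gap. Choose a repetition parameter $t = t(\lambda)$ that is $\polylog(\lambda)$ but $\omega(\log\lambda)$ (e.g.\ $t = \log^2\lambda$, as asserted in the surrounding text), and set an acceptance threshold $\tau := (c+s)/2$, so that $s < \tau < c$. The amplified PCP takes an $mt$-qubit proof $\brho$, viewed as $t$ blocks of $m$ qubits each; the verifier samples $\vec r = (r_1,\dots,r_t) \in (\{0,1\}^\ell)^t$, applies the original $q$-qubit projective measurement $\{\Pi^{\PCP}_{x,r_i}, \Id - \Pi^{\PCP}_{x,r_i}\}$ to block $i$ to obtain a bit $b_i$, and accepts iff $\sum_i b_i \ge \tau t$. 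Since the original verification circuits are $\poly(n)$-describable and act on disjoint registers, this is a $\poly(n)$-describable $qt$-qubit projective measurement: for fixed $\vec r$, the accept event is a symmetric Boolean function of the commuting projectors $\{\Pi^{\PCP}_{x,r_i}\}_i$, hence itself a projector. The new parameters are proof length $mt$, randomness complexity $\ell t$, query complexity $qt$ — each the old value times $\polylog(\lambda)$, as claimed.

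For completeness, when $x \in L$ the honest prover submits $\bpi^{\otimes t}$ for the state $\bpi$ with $\mathbb{E}_r \Tr(\Pi^{\PCP}_{x,r}\bpi) \ge c$. Conditioned on $\vec r$, the bits $b_1,\dots,b_t$ are independent with $\Pr[b_i = 1] = \Tr(\Pi^{\PCP}_{x,r_i}\bpi)$, and over the choice of $\vec r$ they are i.i.d.\ with mean $\ge c > \tau$; a Hoeffding bound on the lower tail gives $\Pr[\sum_i b_i < \tau t] \le \exp(-\Omega(g^2 t)) = \negl(\lambda)$.

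Soundness is the only step that requires care, since a cheating prover's state $\brho$ may be \emph{entangled across the $t$ blocks}, so the per-block bits are not independent and a naive Chernoff bound does not apply. Here I would measure the (pairwise commuting) projectors one block at a time, in order, exposing the sequence $r_1, b_1, r_2, b_2, \dots$ as a stochastic process with its natural filtration. The crucial observation is that for any history $h = (r_1,b_1,\dots,r_{i-1},b_{i-1})$ the post-measurement reduced state on block $i$ is a legitimate $m$-qubit density matrix, so PCP soundness for $x \notin L$ gives $\mathbb{E}_{r_i}\!\big[\Pr[b_i = 1 \mid h]\big] \le s$. Hence $X_i := \sum_{j \le i}(b_j - s)$ is a supermartingale with increments in $[-1,1]$, and Azuma--Hoeffding yields $\Pr[\sum_i b_i \ge \tau t] = \Pr[X_t \ge (\tau - s)t] \le \exp(-\Omega(g^2 t)) = \negl(\lambda)$ for every $\brho$. (Equivalently, bound the moment generating function $\mathbb{E}[e^{\theta \sum_i b_i}] \le (1 - s + s e^\theta)^t$ by iterated conditioning and apply Markov.) Combining the three parts gives the claim.

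The main obstacle, as flagged above, is exactly the entangled-prover case in the soundness analysis; everything else is routine. The one conceptual point where the quantum setting differs from the classical one is that PCP soundness must be quantified over \emph{all} single-block mixed states — which is precisely what licenses the per-step conditional bound — mirroring the \textsf{QMA} amplification analysis of \cite{KSV02}.
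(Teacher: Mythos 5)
Your proof is correct and is precisely the argument the paper has in mind: the paper itself gives no details beyond citing the standard $\QMA$ parallel-repetition amplification of \cite{KSV02}, and your write-up — threshold test, Hoeffding on the i.i.d.\ honest case, and the martingale/Azuma bound for the entangled cheating prover via per-block conditional soundness — is exactly the standard instantiation of that argument.
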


\subsubsection{Interactive argument preliminaries}

\paragraph{Quantum interactive protocol syntax.} We first establish a syntax for general quantum interactive protocols between a prover and a verifier. We state the definition below for any $(2r-1)$-round protocol, though in this work we will focus on $3$-round protocols.
\begin{definition}
    A $(2r-1)$-message quantum interactive protocol $\langle P, V \rangle$ between a prover $P$ and verifier $V$ is specified by two sequences of unitaries $U_{P,1},\dots,U_{P,r}$ and $U_{V,1},\dots,U_{V,r}$. The prover and verifier have internal registers $\RegP$ and $\RegV$ respectively and also share a message register $\RegZ$. The interaction proceeds as follows:
    \begin{enumerate}
        \item[] For $i = 1,\dots,r$, the prover and verifier do the following:
        \begin{itemize}
            \item The prover applies $U_{P,i}$ to $(\RegP, \RegZ)$ and sends $\RegZ$ to the verifier.
        \item The verifier applies $U_{V,i}$ to $(\RegV, \RegZ)$. If $i < r$, it sends $\RegZ$ back to the prover. If $i = r$, it measures the first qubit of $\RegV$ and accepts if it is $1$ and rejects if it is $0$.
        \end{itemize}
    \end{enumerate}
    
    We write $\langle P(\brho_{\RegP}),V(\btau_{\RegV}) \rangle_{\mathrm{OUT}}$ to denote the $0/1$-random variable corresponding to the verifier's decision when the protocol is run with the $\RegP$ register initialized to $\brho_{\RegP}$ and the $\RegV$ register initialized to $\btau_{\RegV}$.
    
\end{definition}

The following definition of argument of knowledge is based on \cite[Definition~3.6]{FOCS:CMSZ21} with some modifications to handle extraction of quantum proofs.

\begin{definition}
Let $\mathsf{Ver}$ be a quantum algorithm that takes as input a state $\brho \in \bfD(\cH)$ and outputs an accept/reject decision. Define $\mathsf{Wit}[\mathsf{Ver},p] \subseteq \bfD(\cH)$ to be the set of all states $\brho \in \bfD(\cH)$ such that $\Pr[\mathsf{Ver} \text{ accepts } \brho] \geq p$.
\end{definition}

\begin{definition}[Argument of knowledge]
\label{def:argument-of-knowledge}
    Let $L$ be a $\QMA$ language and let $\mathsf{Ver}_{L,x}$ be the corresponding $\QMA$ verification procedure for statement $x$. An interactive argument $\langle P,V \rangle$ is an argument of knowledge for $L$ with respect to $\mathsf{Ver}_L$ if there exists a quantum algorithm $\KnowledgeExt$ such that for any $x$, any polynomial-time quantum interactive prover $\tilde{P}$ that convinces the argument verifier with probability $p' \coloneqq \Pr[\langle \tilde{P}(x), V(x)\rangle_{\mathrm{OUT}} = 1]$, any $\varepsilon \geq 1/\poly(\lambda)$, and any $p \leq p'$,\footnote{Following~\cite{FOCS:CMSZ21}, we give the extractor a lower bound $p$ on the success probability $q$ as classical advice. It is plausible that this requirement could be removed using an extractor based on techniques of~\cite{FOCS:LomMaSpo22}, but we did not attempt to work out the details.} 
    \[
        \Pr[ \brho \in \mathsf{Wit}[\mathsf{Ver}_{L,x},p-\varepsilon] :  \brho \leftarrow \KnowledgeExt^{\tilde{P}}(x,1^{\lceil 1/p\rceil})] \geq \Omega(\varepsilon).
    \]
    The runtime of $\KnowledgeExt^{\tilde{P}}(x,1^{\lceil1/p\rceil})$ is $\poly(\lambda,1/\varepsilon)$, counting calls to $\tilde{P}$ as unit time.
\end{definition}

At an intuitive level, this definition states that a protocol is an argument of knowledge with respect to a proof verifier $\mathsf{Ver}$ (which will typically correspond to a PCP verifier) if, given access to any efficient malicious prover that convinces the argument verifier with probability $p$, the extractor can (with noticeable probability) produce a proof string that convinces $\mathsf{Ver}$ with probability $\approx p$.

\subsection{Quantum tree commitments}
\label{subsec:quantum-merkle}
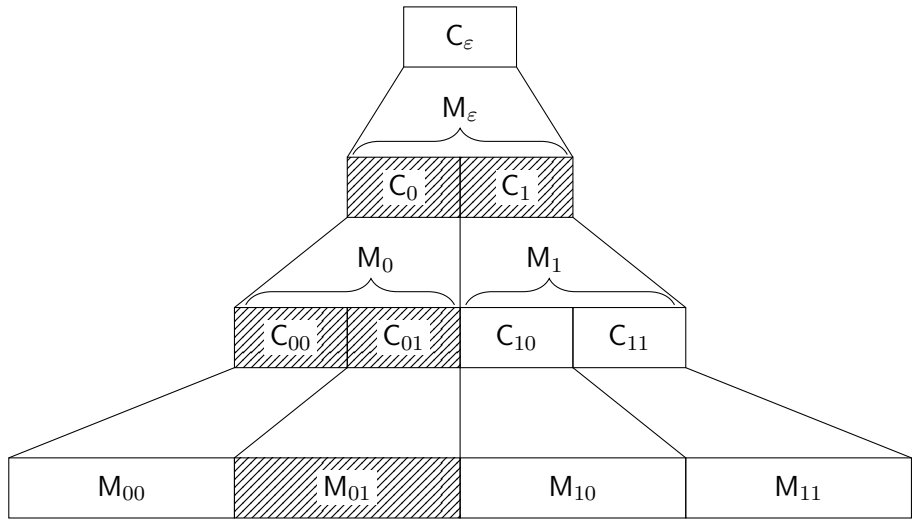
\begin{figure}[h!]
    \centering
    \begin{tikzpicture}
        \def\width{1.5}
        \def\height{0.8}
        \def\vspacing{2}
        %% level 0 / root
        \draw (3.5*\width,3*\vspacing) rectangle (4.5*\width,3*\vspacing+\height);
        \node at (4*\width,3*\vspacing+0.5*\height) {$\RegC_\varepsilon$};
        \draw (3.5*\width,3*\vspacing) -- (3*\width,2*\vspacing+\height);
        \draw (4.5*\width,3*\vspacing) -- (5*\width,2*\vspacing+\height);
        %% level 1
        \draw [decorate,decoration = {brace,raise=1pt,amplitude=10pt}] (3.05*\width,2*\vspacing+\height) --  (4.95*\width,2*\vspacing+\height);
        \node at (4*\width, 2*\vspacing+1.8*\height) {$\RegM_\varepsilon$};
        \draw[pattern=north east lines] (3*\width,2*\vspacing) rectangle (4*\width,2*\vspacing+\height);
        \node[fill=white,inner sep=2pt] at (3.5*\width,2*\vspacing+0.5*\height) {$\RegC_0$};
        \draw (3*\width,2*\vspacing) -- (2*\width,\vspacing+\height);
        \draw (4*\width,2*\vspacing) -- (4*\width,\vspacing+\height);
        \draw[pattern=north east lines] (4*\width,2*\vspacing) rectangle (5*\width,2*\vspacing+\height);
        \node[fill=white,inner sep=2pt] at (4.5*\width,2*\vspacing+0.5*\height) {$\RegC_1$};
        \draw (5*\width,2*\vspacing) -- (6*\width,\vspacing+\height);
        % level 2 M2
        \draw [decorate,decoration = {brace,raise=1pt,amplitude=10pt,aspect=0.6}] (2.1*\width,\vspacing+\height) --  (3.95*\width,\vspacing+\height);
        \node at (3.25*\width, \vspacing+1.8*\height) {$\RegM_0$};
        %% level 2 part 1
        \draw[pattern=north east lines] (2*\width,\vspacing) rectangle (3*\width,\height+\vspacing);
        \node[fill=white,inner sep=2pt] at (2.5*\width,\vspacing+0.5*\height) {$\RegC_{00}$};
        \draw (2*\width,\vspacing) -- (0,\height);
        \draw[pattern=north east lines] (3*\width,\vspacing) rectangle (4*\width,\height+\vspacing);
        \node[fill=white,inner sep=2pt] at (\width+2.5*\width,\vspacing+0.5*\height) {$\RegC_{01}$};
        \draw (\width+2*\width,\vspacing) -- (2*\width,\height);
        \draw [decorate,decoration = {brace,raise=1pt,amplitude=10pt,aspect=0.4}] (4.05*\width,\vspacing+\height) --  (5.9*\width,\vspacing+\height);
        \node at (4.75*\width, \vspacing+1.8*\height) {$\RegM_1$};
        %% level 2 part 2
        \draw (2*\width+2*\width,\vspacing) rectangle (2*\width+3*\width,\height+\vspacing);
        \node at (2*\width+2.5*\width,\vspacing+0.5*\height) {$\RegC_{10}$};
        \draw (2*\width+2*\width,\vspacing) -- (4*\width,\height);
        \draw (3*\width+2*\width,\vspacing) rectangle (3*\width+3*\width,\height+\vspacing);
        \node at (3*\width+2.5*\width,\vspacing+0.5*\height) {$\RegC_{11}$};
        \draw (3*\width+2*\width,\vspacing) -- (6*\width,\height);
        %% level 3 part 1
        \draw (0,0) rectangle (2*\width,\height);
        \node at (\width,0.5*\height) {$\RegM_{00}$};
        \draw[pattern=north east lines] (2*\width,0) rectangle (2*\width+2*\width,\height);
        \node[fill=white,inner sep=2pt] at (2*\width+\width,0.5*\height) {$\RegM_{01}$};
        %% level 3 part 2
        \draw (4*\width,0) rectangle (4*\width+2*\width,\height);
        \node at (4*\width+\width,0.5*\height) {$\RegM_{10}$};
        \draw (6*\width,0) rectangle (6*\width+2*\width,\height);
        \node at (6*\width+\width,0.5*\height) {$\RegM_{11}$};
        \draw (6*\width,\vspacing) -- (8*\width,\height);
    \end{tikzpicture}
    \caption{The Merkle tree structure. For simplicity, we assume that $\RegD_\ell = \RegM_\ell$ in this picture (as is the case in, for example, the scheme in \cref{theorem:pru-construction}). The shaded boxes correspond to decommitments $\{\RegD_\varepsilon,\RegD_0,\RegD_{01}\}$ for the $01$ block.}
    \label{fig:merkle-tree}
\end{figure}

We now describe the syntax for quantum tree commitments. This syntax is a generalization of a construction that appeared in prior work of Chen and Movassagh~\cite{ARXIV:CheMov21}; in our notation, their construction corresponds to a setting where the ancilla register $\RegW$ and the commitment $\RegC$ must be the same size. 

Let $\SQSC$ be a succinct QSC with $s(\lambda)$-sized messages and $s(\lambda)/2$-sized commitments and let $\Com$ be the unitary implementing the commitment. For any positive integer $\beta$, a quantum tree commitment for $(s \cdot 2^\beta)$-size messages $\RegM$ supports the following functionality:
\begin{itemize}
    \item The tree commitment algorithm $\mathsf{QTreeCom}_\SQSC$ does the following:
    \begin{enumerate}
        \item Divide $\RegM$ into $2^\beta$ registers $\RegM = \{\RegM_{\ell}\}_{\ell \in \{0,1\}^{\beta}}$ where each $\RegM_\ell$ is a block of $s$ qubits.
        \item For $j = \beta,\beta-1,\dots,1$:
        \begin{enumerate}
            \item For each $\ell \in \{0,1\}^{j}$, initialize $\RegW_\ell$ to $\ket{0}$ and apply $\Com$ to $(\RegM, \RegW)$ to obtain $(\RegC_\ell, \RegD_\ell)$. 
            \item For each $\ell \in \{0,1\}^{j-1}$, define $\RegM_{\ell} \coloneqq (\RegC_{\ell,0}, \RegC_{\ell,1})$ (when $j = 1$, we define $\RegM_{\varepsilon} \coloneqq (\RegC_0, \RegC_1)$ where $\varepsilon$ is the empty string).
        \end{enumerate}
        \item Initialize $\RegW_\varepsilon$ to $\ket{0}$ and apply $\Com$ to $(\RegM_{\varepsilon}, \RegW)$ to obtain $(\RegC_{\varepsilon}, \RegD_{\varepsilon})$. The commitment is $\RegC_{\varepsilon}$ and $\{\RegD_{\ell}\}_{\ell \in \{0,1\}^{\leq \beta}}$ is a set of decommitments which will be used later to form the local decommitments.
    \end{enumerate}
    \item For any subset $S \subseteq \{0,1\}^{\le \beta}$, we define the local decommitment $\decom_S$ as follows:
    \begin{itemize}
        \item For any $\ell \in \{0,1\}^{\le \beta}$, let $\Path(\ell)$ be the set of all $\abs{\ell}+1$ prefixes of $\ell$ (including $\ell$ itself and the empty string $\varepsilon$).
        \item Let $\Path(S) \coloneqq \bigcup_{\ell \in S} \Path(\ell)$.
        \item Finally, define $\decom_S \coloneqq  \{\RegD_{\ell}\}_{\ell \in \Path(S)}$.
    \end{itemize}
    \item The receiver can verify $\decom_S$ as follows:
    \begin{enumerate}
        \item Parse $\decom_S$ as $\{\RegD_{\ell}\}_{\ell \in \Path(S)}$.
        \item For $j = 0,\dots,\beta$:
        \begin{enumerate}
            \item For each $\ell \in \Path(S) \cap \{0,1\}^j$:
            \begin{enumerate}
                \item Apply $\Com^{\dagger}$ to $(\RegC_{\ell}, \RegD_{\ell})$ to obtain  $(\RegM_{\ell},  \RegW_{\ell})$. 
                \item Measure $\RegW_\ell$ with $(\ketbra{0}, \Id - \ketbra{0})$ and abort if the outcome is not $\ket{0}$.
                \item If $j < \beta$, partition $\RegM_\ell$ into two $(s/2)$-qubit registers $(\RegC_{\ell,0}, \RegC_{\ell,1})$.
            \end{enumerate}
        \end{enumerate}
        \item Output $\bigotimes_{\ell \in S} \RegM_\ell$.
    \end{enumerate}
\end{itemize}

\subsection{The protocol description}
\label{sec:kilian-construction}
In this section we construct a succinct quantum argument from any quantum PCP and succinct QSC. The protocol is parameterized by a choice of the underlying $\PCP$ as well as the succinct QSC $\SQSC$, where:
\begin{itemize}
    \item $\PCP$ is a classical or quantum PCP system for a language $L$.
    \item $\SQSC$ is a succinct QSC with $s(\lambda)$-sized messages and $s(\lambda)/2$-sized commitments.
\end{itemize}
For any setting of $\PCP$ and $\SQSC$ satisfying these requirements, we denote the resulting succinct quantum argument as $\SQUARG[\PCP,\SQSC]$ (for Succinct QUantum ARGument). The protocol is specified as follows.

\begin{itemize}
    \item[] $\SQUARG[\PCP,\SQSC]$:
    \begin{enumerate}
    \item[] \textbf{Prover input:} $x$ and a corresponding PCP $\ket{\pi}$. We assume the PCP has length $m = s \cdot 2^\beta$ qubits for some integer $\beta$, which is without loss of generality (by padding with $0$'s).
    \item[] \textbf{Verifier input:} $x$.
    \item The prover applies $\mathsf{QTreeCom}_\SQSC$ to $\ket{\pi}$ and obtains $\RegC_\varepsilon,\{\RegD_{\ell}\}_{\ell \in \{0,1\}^{\leq \beta}}$. It sends the root commitment $\RegC_{\varepsilon}$ to the verifier. 
    \item The verifier sends random coins $r \leftarrow \{0,1\}^{\mathsf{rc}}$ to the prover.
    \item The prover sends $\decom_{S_r}$ where $S_r \subseteq \{0,1\}^\beta$ is the set of block labels corresponding to the positions that $\{\Pi_r, \Id - \Pi_r\}$ checks (i.e., those that contain $Q_r$). 
    \item Finally, the verifier checks $\decom_{S_r}$. If the decommitment is valid, it also obtains the qubits $Q_r$ of the PCP corresponding to $r$ and measures $\{\Pi_r, \Id - \Pi_r\}$. It accepts if the measurement accepts.
    \end{enumerate}
\end{itemize}

The following is immediate by construction.

\begin{claim}[Succinctness]
\label{claim:succinctness}
Suppose that for an instance of size $n$, we have a $\PCP$ with proofs of length $m = \poly(n,\lambda)$, query complexity $\poly(\lambda)$, randomness complexity $\poly(\lambda)$, and completeness-soundness gap $1-\negl(\lambda)$. Moreover, assume that the PCP verifier runs in time $t(n,\lambda)$. Then the protocol $\SQUARG[\PCP,\SQSC]$ satisfies the following properties:
\begin{itemize}
    \item The quantum prover runs in time $\poly(m,\lambda)$.
    \item The quantum verifier runs in time $t(n,\lambda) + \poly(\lambda,\log(m))$.
    \item The total quantum communication is $\poly(\lambda,\log(m))$ qubits.
\end{itemize}
\end{claim}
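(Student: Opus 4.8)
The plan is to verify each of the three bullets by directly reading quantities off the protocol description; since the claim is asserted to be ``immediate by construction,'' I expect no genuine obstacle, only careful bookkeeping. Throughout I will use the notation $\beta = \log_2(m/s)$ for the depth of the tree, and I will repeatedly invoke two basic facts: (i) since $s = s(\lambda) = \poly(\lambda)$ we have $\beta = O(\log m)$, and (ii) since $\SQSC$ is a succinct QSC whose commitment unitary $\Com$ is efficiently implementable, each node of the tree carries an ancilla register $\RegW$ of $\poly(\lambda)$ qubits and a decommitment register $\RegD$ of $\poly(\lambda)$ qubits, and one application of $\Com$ or $\Com^\dagger$ costs $\poly(\lambda)$ time.

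First I would bound the prover's runtime. The prover is handed $\ket{\pi}$, so it incurs no proof-generation cost; its only work is running $\mathsf{QTreeCom}_\SQSC$ and then reading off the subset $\decom_{S_r}$ of registers it has already produced. Running $\mathsf{QTreeCom}_\SQSC$ applies $\Com$ once per tree node, and there are $\sum_{j=0}^{\beta} 2^j = O(2^\beta) = O(m/s)$ nodes; since each application costs $\poly(\lambda)$, the prover runs in time $O(m/s)\cdot\poly(\lambda) = \poly(m,\lambda)$. Forming $\decom_{S_r}$ only selects previously computed registers and adds nothing of higher order.

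Next I would bound communication and the verifier's runtime together, since both come down to the size of $\decom_{S_r}$. The first message is the root commitment $\RegC_\varepsilon$, of $s/2 = \poly(\lambda)$ qubits; the second is $\rc = \poly(\lambda)$ classical coins (the PCP randomness complexity). For the third message, a PCP check reads at most $q = \poly(\lambda)$ qubits and hence touches at most $q$ leaf blocks, so $\abs{S_r} \le q$ and $\abs{\Path(S_r)} \le \sum_{\ell\in S_r}\abs{\Path(\ell)} = \abs{S_r}\,(\beta+1) \le q(\beta+1) = \poly(\lambda,\log m)$. Since $\decom_{S_r} = \{\RegD_\ell\}_{\ell\in\Path(S_r)}$ and each $\RegD_\ell$ has $\poly(\lambda)$ qubits, the third message is $\poly(\lambda,\log m)$ qubits, and the total communication is $\poly(\lambda,\log m)$. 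The verifier samples the coins ($\poly(\lambda)$ time), then runs the decommitment-verification procedure, which applies $\Com^\dagger$ and a one-block measurement to each of the $\abs{\Path(S_r)} = \poly(\lambda,\log m)$ nodes at $\poly(\lambda)$ cost each, and finally extracts the queried qubits $Q_r$ and runs $\{\Pi_r,\Id-\Pi_r\}$, which by hypothesis takes time $t(n,\lambda)$; summing gives verifier runtime $t(n,\lambda) + \poly(\lambda,\log m)$.

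The only place that deserves a second look --- and the closest thing to an ``obstacle'' --- is making sure the tree depth really is logarithmic in $m$ and that per-node object sizes do not secretly scale with $m$: both follow from $s$ being a fixed polynomial in $\lambda$ and from $\Com$ being an efficient unitary on $s + \poly(\lambda)$ qubits, which are exactly the properties packaged into the definition of a succinct QSC. With those in hand every estimate above is a one-line counting argument, matching the paper's assertion that the claim is immediate.
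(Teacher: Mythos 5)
Your proof is correct and is exactly the bookkeeping the paper has in mind when it calls the claim ``immediate by construction'' --- the paper provides no written proof, and your three-step accounting (prover runs $O(m/s)$ applications of $\Com$, $\abs{\Path(S_r)} \le q(\beta+1) = \poly(\lambda,\log m)$ bounds both the third message and the decommitment-checking work, and $\beta = O(\log m)$ because $s=\poly(\lambda)$) is the natural way to make it precise. The one thing worth stating explicitly, which you do in passing, is that the per-node registers $\RegC_\ell,\RegD_\ell,\RegW_\ell$ all have size $\poly(\lambda)$ and $\Com$ is a $\poly(\lambda)$-time unitary on $s+\lambda$ qubits; this is what prevents any hidden dependence on $m$ in the per-node cost and is precisely what the succinct-QSC definition guarantees.
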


\newpage

\section{Security of the quantum succinct argument protocol}
\label{sec:rewinding}

In this section we prove security of the quantum succinct argument protocol described in~\cref{sec:kilian-construction}.

\begin{theorem}
\label{theorem:kilian-security}
Let $\PCP$ be a probabilistically checkable proof for a language $L$ and let $\SQSC$ be a succinct QSC with $s$-qubit messages and $s/2$-qubit commitments. If $\SQSC$ is swap binding, then $\SQUARG[\PCP,\SQSC]$ is an argument of knowledge for $L$ with respect to the $\PCP$ verifier (\cref{def:argument-of-knowledge}).
\end{theorem}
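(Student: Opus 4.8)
The plan is to build the knowledge extractor $\KnowledgeExt^{\tilde P}$ by adapting the \cite{FOCS:CMSZ21} ``measure-and-repair'' framework, with the crucial twist that we \emph{swap out} (rather than copy) the prover's opened blocks, and we run state repair on a \emph{swap-augmented prover} that carries the extracted blocks and the commitment register $\RegC_\varepsilon$ along with $\tilde P$'s internal state. Concretely, I would proceed as follows. First, fix $\tilde P$ convincing the verifier with probability $p'$, and fix the advice $p \le p'$ and slack $\varepsilon$. Reduce to a ``one-bit'' setting exactly as in Step 1 of \cite{FOCS:CMSZ21}, except that instead of invoking collapse-binding to argue that measuring the full response is indistinguishable from measuring the accept bit, we invoke \emph{oracle swap binding} (\cref{theorem:oracle-binding-security} applied to the tree commitment, which is swap binding by \cref{theorem:domain-extension} composed with the quantum-tree composition in \cref{subsec:quantum-merkle} built from \cref{theorem:parallel_composition}) to argue that swapping the opened PCP block at a position $i$ onto an internal register $\RegM_i'$ is undetectable to $\tilde P$. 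The whole point of the oracle security lemma is that, although our extractor genuinely needs the commitment register $\RegC_\varepsilon$ to run the swap-augmented prover (to re-derive the tree-opening on $(\RegC_\varepsilon, \RegZ)$, swap $\RegM_i \leftrightarrow \RegM_i'$, and undo the re-derivation), every such access factors through an efficient operation on the committed message register in the $\Com^\dagger$ basis — i.e., through the admissible oracle $G_b$ — so the reduction to swap binding goes through even though the reduction does not hold $\RegC_\varepsilon$ itself.

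Second, I would set up the extraction loop. Maintain a set $T \subseteq \{0,1\}^\beta$ of PCP positions extracted so far, with their contents sitting on registers $\{\RegM_i'\}_{i \in T}$, and define the swap-augmented prover $\tilde P_T$ as in \cref{subsec:to-applications}: run $\tilde P$ on a challenge $r$; for each $i \in S_r \cap T$, use $(\RegC_\varepsilon, \RegZ)$ to coherently recover the opened block, $\SWAP[\RegM_i, \RegM_i']$, then uncompute the recovery — so $\tilde P_T$ outputs what $\tilde P$ \emph{would} have output had the swaps never happened. At the start $T = \emptyset$ and $\tilde P_\emptyset = \tilde P$ succeeds with probability $p'$. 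The extractor repeatedly: (a) runs the \cite{FOCS:CMSZ21} estimation procedure to estimate the success probability of the current $\tilde P_T$; (b) runs $\tilde P_T$ on a fresh random challenge $r$, checks validity, and if valid, for each $i \in S_r$ not yet in $T$, performs $\SWAP[\RegM_i, \RegM_i']$ with a fresh $\RegM_i' = \ket 0$ and adds $i$ to $T$ — then (crucially) updates the augmented prover to $\tilde P_{T'}$; (c) runs the \cite{FOCS:CMSZ21} \emph{state repair} procedure on the state of $\tilde P_{T'}$ to restore its success probability to $\approx p$. The key invariant, proved by the potential-function / almost-projective-measurement machinery of \cite{FOCS:CMSZ21} (their "Jordan's lemma + threshold projection" analysis), is that repair succeeds because from $\tilde P$'s viewpoint nothing has changed — this is precisely where oracle swap binding is invoked, to say that the state the augmented prover operates on is computationally indistinguishable from the one before the swap, so the repair procedure (which only needs the accept-probability to be preserved, not the exact state) still works.

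Third, I would argue termination and correctness. Since every successful query with $S_r \not\subseteq T$ enlarges $T$ by at least one new position, and $S_r$ can coincide with a fixed ``new'' position only with probability governed by the PCP's (polylogarithmic) query and randomness complexity, after $\poly(\lambda, 1/\varepsilon)$ iterations we reach a point where, with probability $\Omega(\varepsilon)$, two things hold simultaneously: the swap-augmented prover $\tilde P_T$ still succeeds with probability $\ge p - \varepsilon$, and with high probability a random challenge $r$ satisfies $S_r \subseteq T$ — i.e., $\tilde P_T$ answers entirely out of the augmented registers $\{\RegM_i'\}_{i \in T}$. At such a point, the mixed state $\bpi \coloneqq \bigotimes_{i} \RegM_i'$ (appropriately reassembled into an $m$-qubit PCP state, padding unvisited positions with $\ket 0$) satisfies $\E_{r}\Tr(\Pi_r^{\PCP}\bpi) \ge p - \varepsilon$ up to negligible error, so $\bpi \in \mathsf{Wit}[\mathsf{Ver}_{L,x}, p-\varepsilon]$ and the extractor outputs it; the runtime bound $\poly(\lambda, 1/\varepsilon)$ counting calls to $\tilde P$ as unit cost follows by bookkeeping the $O(1/\varepsilon)$-type query counts of the \cite{FOCS:CMSZ21} estimation/repair subroutines times the $\poly$ number of rounds. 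I expect the \textbf{main obstacle} to be establishing exactly this invariant — that state repair can legitimately be run on the swap-augmented prover. Unlike \cite{FOCS:CMSZ21}, repair here is given the freedom to modify the extracted-PCP registers $\{\RegM_i'\}$ themselves, so one must show (i) that this extra freedom does not let repair ``cheat'' by corrupting the extracted proof in a way that breaks the final soundness count, and (ii) that the indistinguishability argument powering repair really does reduce to oracle swap binding despite the extractor's unavoidable use of $\RegC_\varepsilon$ — this is the delicate interface between the \cite{FOCS:CMSZ21} analysis and \cref{theorem:oracle-binding-security}, and it is where the bulk of the technical work will lie.
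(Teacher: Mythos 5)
Your high-level plan is the right one and matches the paper's: build a swap-augmented prover, run the \cite{FOCS:CMSZ21} measure-and-repair loop on it (so repair only restores the success probability, never the exact state), and justify that the swaps are undetectable via oracle swap binding (\cref{theorem:oracle-binding-security}), which is exactly what lets the reduction operate even though the extractor needs $\RegC_\varepsilon$ to run the augmented prover. You also correctly flag the two delicate points — whether repair can "cheat" by corrupting $\{\RegM_i'\}$, and how the reduction to oracle swap binding handles $\RegC_\varepsilon$. But both remain genuine gaps in the proposal, and they are where the actual proof does nontrivial work.

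First, the hybrid structure you need is not per-query but per-tree-node. The paper defines $\HybExtract[j]$ for $j$ ranging over all $2^{\beta+1}-1$ tree nodes (not just leaves), where $\HybExtract[j]$ runs the \emph{entire} extraction loop but only ever swaps nodes with index $\le j$. Then $\HybExtract[0]$ is literally $\BitExtract$ (so the \cite{FOCS:CMSZ21} guarantee applies verbatim), $\HybExtract[2^{\beta+1}-1]$ is $\Extract$, and the step from $j-1$ to $j$ reduces to oracle swap binding for the single commitment at node $j$. This organization is essential: the reduction plays the whole extraction run, handing $(\RegC_j,\RegD_j)$ to the challenger the first time node $j$ is opened, and implementing every subsequent touch of node $j$'s message through the admissible oracle $G_b$. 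Your phrasing ("invoke oracle swap binding to argue that swapping the opened PCP block at a position $i$ is undetectable") suggests a per-position or per-query argument that is much harder to formalize, because the same node is opened many times over the course of a run and the commitment register is long gone after the first time. You also need to define $\cO$ so that the reduction can implement $\Est$ and $\Repair$ (which require controlled, superposed access to the prover's accept projector $\Pi^{U_r}_{r,E}$) through the oracle alone, without touching $\RegC_j,\RegD_j,\RegW_j$ — this requires unfolding $\SwapRecover$ carefully and is a real piece of the argument, not just bookkeeping.

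Second, your termination/correctness step ("after $\poly$ iterations we reach a point where ... $S_r \subseteq T$") assumes the extractor can detect when this happens, but it cannot, and the claim as stated is not true for a fixed deterministic stopping time. The paper resolves this with a \emph{random} stopping time $T \gets \{1,\dots,\lceil 2^{\beta+3}/\gamma^2\rceil\}$ and a counting argument (Claim 5.4): the event "prover accepts on $r$ with $S_r \not\subseteq E$" forces at least one fresh node to be swapped, and there are only $2^\beta$ nodes to swap, so across the whole run this event occurs at most $2^\beta$ times; averaging over the random stop time and applying Markov shows that, except with probability $\gamma/2$, the residual "bad-challenge" probability is at most $\gamma/4$. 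Combined with the final $\Est$ measurement certifying that the swap-augmented prover still succeeds with probability $\ge q$, this gives $\mathbb{E}_r\Tr(\Pi_r^{\PCP}\bpi) \ge q - \negl$, which is the soundness you need. This counting-plus-random-stop argument is also precisely what defuses your concern (i) about repair corrupting the extracted proof: you never have to argue that repair leaves $\{\RegM_i'\}$ intact; you only need the final success-probability check, and the amortization bound controls the gap between "augmented prover accepts" and "extracted PCP accepts."
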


By~\cref{claim:succinctness}, we obtain the following corollaries of~\cref{theorem:kilian-security}.

\begin{corollary}
\label{corollary:succinct-np-body}
Assuming the existence of succinct QSCs, there is a three-message quantum succinct argument of knowledge for $\NP$.\footnote{This follows by combining~\cref{theorem:intro-kilian} with the classical PCP theorem~\cite{STOC:BFLS91,FOCS:FGLSS91,AroraS98,AroraLMSS98}, since quantum PCPs encompass classical PCPs.}
\end{corollary}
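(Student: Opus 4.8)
The plan is to instantiate the general protocol $\SQUARG[\PCP,\SQSC]$ from \cref{sec:kilian-construction} with a classical PCP for $\NP$ and any succinct QSC, and then read off the conclusion from \cref{theorem:kilian-security} and \cref{claim:succinctness}; there is no new technical content.

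First, fix an $\NP$ language $L$. By the PCP theorem (\cref{theorem:classical-pcp-theorem}), $L$ has a classical PCP with constant completeness $c$, constant soundness $s<c$, proof length $m_0=\poly(n)$, randomness complexity $\ell_0=O(\log n)$, and query complexity $q_0=O(\log n)$. As observed in the paragraph following \cref{def:quantum-pcp}, a classical PCP is in particular a quantum PCP with the same parameters: the verifier's predicate yields projective measurements $\{\Pi^{\PCP}_{x,r},\Id-\Pi^{\PCP}_{x,r}\}$ that commute with standard-basis measurements, so completeness and soundness carry over verbatim, including for mixed states by convexity. Applying the soundness amplification claim ($\polylog(\lambda)$-fold parallel repetition) gives a quantum PCP $\PCP$ for $L$ with completeness $1-\negl(\lambda)$, soundness $\negl(\lambda)$, proof length $m=m_0\cdot\polylog(\lambda)=\poly(n,\lambda)$, randomness complexity $\ell=\poly(\lambda)$, query complexity $q=\poly(\lambda)$, and verifier runtime $\poly(n,\lambda)$.

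Next, let $\SQSC$ be a succinct QSC with $s(\lambda)$-qubit messages and $s(\lambda)/2$-qubit commitments, which exists by hypothesis (and is swap binding by \cref{def:succinct-quantum-commitments} together with the assumed binding property). Consider $\SQUARG[\PCP,\SQSC]$. By construction it is a three-message protocol: the prover sends the root commitment $\RegC_\varepsilon$, the verifier replies with PCP randomness $r$, and the prover sends the local decommitment $\decom_{S_r}$. By \cref{theorem:kilian-security}, $\SQUARG[\PCP,\SQSC]$ is an argument of knowledge for $L$ with respect to the $\PCP$ verifier (\cref{def:argument-of-knowledge}). By \cref{claim:succinctness} applied with the parameters above, the quantum prover runs in time $\poly(m,\lambda)=\poly(n,\lambda)$, the quantum verifier runs in time $\poly(n,\lambda)+\poly(\lambda,\log m)=\poly(n,\lambda)$, and the total quantum communication is $\poly(\lambda,\log m)=\poly(\lambda,\log n)$ qubits; hence the protocol is succinct. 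Finally, to confirm computational soundness (so that we genuinely have a succinct argument), suppose $x\notin L$ and a polynomial-time prover $\tilde P$ convinces the verifier with non-negligible probability $p$. Running $\KnowledgeExt^{\tilde P}$ with a polynomial lower bound on $p$ yields, with probability $\Omega(1/\poly(\lambda))$, a state accepted by the $\PCP$ verifier with probability at least $p-\varepsilon$, which is non-negligible for suitable $\varepsilon$ — contradicting the $\negl(\lambda)$ soundness of $\PCP$.

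The only work beyond citing prior results is the parameter bookkeeping in the second and third paragraphs — checking that amplification plus the Merkle-tree structure keeps communication and verifier time succinct — and that is precisely what \cref{claim:succinctness} is stated to deliver, so I do not expect any real obstacle.
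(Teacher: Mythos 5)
Your proposal is correct and follows essentially the same route as the paper: instantiate $\SQUARG[\PCP,\SQSC]$ with a classical PCP for $\NP$ viewed as a quantum PCP (after amplification), then invoke \cref{theorem:kilian-security} for the argument-of-knowledge property and \cref{claim:succinctness} for succinctness. The extra details you spell out (soundness amplification and deriving computational soundness from knowledge extraction against the PCP's negligible soundness) are exactly the steps the paper leaves implicit.
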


\begin{corollary}
\label{corollary:succinct-qma-body}
If the quantum PCP conjecture holds~\cite{AALV, AAV13}, then assuming the existence of succinct QSCs, there is a three-message quantum succinct argument (of knowledge) for $\QMA$.
\end{corollary}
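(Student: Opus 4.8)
The plan is to prove that $\SQUARG[\PCP,\SQSC]$ is an argument of knowledge by constructing a rewinding-based extractor along the lines sketched in \cref{subsec:to-applications}: repeatedly query the malicious prover $\tilde P$ on fresh PCP challenges, each time \emph{swapping out} the newly opened message blocks onto internal registers, and interleaving \cite{FOCS:CMSZ21}-style state repair to keep $\tilde P$'s success probability from decaying. First I would set up the commitment-theoretic scaffolding. By \cref{theorem:parallel_composition} (parallel composition) and \cref{theorem:domain-extension} (Merkle--Damg\aa rd domain extension), together with the fact that a QSC is binding to every subset of message registers (\cref{claim:hybrid12}, following \cite[Lemma~15]{EC:Unruh16}), the quantum tree commitment $\mathsf{QTreeCom}_\SQSC$ is itself a swap-binding QSC for the full $m$-qubit PCP message with commitment register $\RegC_\varepsilon$, and opening the block-set $S_r$ demanded by challenge $r$ amounts to opening the corresponding subset of message registers. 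I would then lift swap binding to oracle swap binding via \cref{theorem:oracle-binding-security}: this is the tool that lets the extractor touch the commitment register (to verify openings, evaluate the PCP predicate on opened qubits, and perform the swaps) even though a direct reduction to swap binding does not hand $\RegC_\varepsilon$ to the adversary.

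Next I would describe the extractor and its invariant. It maintains a set $T$ of already-extracted block labels together with registers $\{\RegM'_\ell\}_{\ell\in T}$, and works with the \emph{swap-augmented prover} $\tilde P_T$: run $\tilde P$, and whenever it opens a block $\ell\in T$, overwrite the opened message with the contents of $\RegM'_\ell$ (in the recovered $\Com^\dagger$ basis, then re-commit). The extractor then (i) runs $\tilde P_T$ on a fresh random challenge; (ii) if the opening is valid and the PCP predicate accepts, swaps every newly opened block $\ell\notin T$ onto a fresh $\RegM'_\ell$ and adds it to $T$; and (iii) runs the \cite{FOCS:CMSZ21} repair procedure --- but \emph{on $\tilde P_T$'s state together with the augmented registers} --- to restore the success probability to roughly $p$. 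As in \cite{FOCS:CMSZ21}, steps (ii)--(iii) are first reduced to ``one-bit'' extraction: by swap binding, swapping out the opened message (equivalently, retaining only the accept bit of ``valid opening $\wedge$ PCP predicate'') is undetectable, so it suffices to analyze repair after a single binary outcome, which is exactly what the almost-projective threshold/$\ValEst$ machinery of \cite{FOCS:CMSZ21} handles, with $p$ supplied as advice via the input $1^{\lceil 1/p\rceil}$.

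The two correctness claims I would then establish are: (a) the invariant ``$\tilde P_T$ convinces the verifier with probability $\ge p-\delta_{|T|}$ for a small accumulated loss $\delta_{|T|}$'' is maintained across iterations --- here each swap is undetectable by \emph{oracle} swap binding, since the extractor's every use of $\RegC_\varepsilon$ (opening validity check, PCP predicate evaluation on opened qubits, block swaps, and the repair subroutine) can be implemented through the oracle $G_b$ of \cref{def:oracle-binding}, whose $b=0$ versus $b=1$ behavior differs only on the committed-message register, so a hybrid over the $|T|+1$ swaps is bounded by \cref{theorem:oracle-binding-security}; and (b) a progress argument showing that after $\poly(m,\lambda,1/\varepsilon)$ iterations $T$ contains all leaf blocks with probability $\Omega(\varepsilon)$. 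When $T$ saturates, $\tilde P_T$ answers every challenge purely from $\{\RegM'_\ell\}_{\ell\in\{0,1\}^\beta}$ and still succeeds with probability $\ge p-\varepsilon$, so (using the PCP-predicate amplification of \cref{claim:succinctness} so that a uniformly random challenge certifies the global proof quality) these registers contain a PCP state in $\mathsf{Wit}[\mathsf{Ver}_{L,x},p-\varepsilon]$, which the extractor outputs.

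I expect the main obstacle to be the simultaneous bookkeeping of three things: running \cite{FOCS:CMSZ21} state repair when the repair unitary is allowed to act on the extracted PCP registers, and showing it still restores the success probability without corrupting previously extracted blocks (this is where one must re-examine the Jordan/singular-value structure of the ``success'' projector relative to the swap-augmented prover); implementing every step of the extractor --- $\tilde P_T$, the PCP-predicate check, and repair --- using only the oracle $G_b$ rather than the commitment register itself, so that the hybrid over swaps really does reduce to oracle swap binding; and the termination/progress argument, i.e.\ quantifying how many queries are needed before $T$ saturates with non-negligible probability while keeping the accumulated security loss below $\varepsilon$. Routine but necessary side computations --- the diamond-norm accounting for the hybrids, the parallel repetition amplifying the completeness--soundness gap to $1-\negl(\lambda)$, and verifying the $\poly(\lambda,1/\varepsilon)$ runtime --- I would defer to the formal write-up.
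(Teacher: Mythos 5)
Your proposal recapitulates most of the high-level structure of the paper's proof of \cref{theorem:kilian-security} correctly: define a swap-augmented prover, interleave \cite{FOCS:CMSZ21}-style repair with swapping-out of opened blocks, and reduce the undetectability of each swap to oracle swap binding (\cref{theorem:oracle-binding-security}), which is indeed the paper's central tool and the reason plain swap binding does not suffice. But there is a genuine gap in your termination argument. You claim that ``after $\poly(m,\lambda,1/\varepsilon)$ iterations $T$ contains all leaf blocks with probability $\Omega(\varepsilon)$'' and conclude from saturation. Saturation is not achievable in general: a malicious prover can commit to a state and then simply fail every challenge $r$ for which $S_r$ contains some fixed block $\ell^\ast$ (while still accepting with probability $p$ on the remaining challenges, if $\ell^\ast$ is tested only rarely). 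Block $\ell^\ast$ is then never swapped out, $T$ never saturates, and your argument does not conclude. The paper avoids this by \emph{not} requiring saturation. Instead, $\KnowledgeExt$ samples a uniformly random stopping time $T \gets \{1,\dots,\lceil 2^{\beta+3}/\gamma^2\rceil\}$, and \cref{claim:random-stop} proves that the quantity $x(\brho) = \Pr_r[\Pi^{U_r}_{r,E}\text{ accepts and } S_r \not\subseteq E]$ has small expectation over the random stopping time: at most $2^\beta$ of the iterations can swap out a new block, so the time-averaged probability of the ``accept on a new block'' event is at most $2^\beta / T$, and Markov's inequality gives $\Pr[x(\brho) > \gamma/4] \le \gamma/2$. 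When $x(\brho) \le \gamma/4$, the acceptance probability of the swap-augmented prover already lower-bounds the PCP verifier's acceptance probability on the extracted registers, with no need for all blocks to have been extracted. This random-stopping-plus-counting argument is what replaces your saturation claim, and it is the part of the proof that is new relative to \cite{FOCS:CMSZ21} (their final step takes a union bound over classical proof strings, which is unavailable here).

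Two smaller discrepancies: the hybrid you sketch (``over the $|T|+1$ swaps'') is organized differently from the paper's, which instead defines a single hybrid family $\HybExtract[j]$ for $j = 0,\dots,2^{\beta+1}-1$ indexed by \emph{tree nodes} (not iterations), where $\HybExtract[j]$ restricts extraction to nodes in $[j]$ across all iterations; the reduction $\Reduction[j]$ then plays the oracle swap-binding game for the single commitment at node $j$, taking over the challenger's role for node $j$ the first time it is opened. Your framing is not obviously wrong but would need care to avoid the number of hybrid steps growing with the runtime. Also, your claim that the tree commitment is a swap-binding QSC for the full $m$-qubit message via \cref{theorem:parallel_composition} and \cref{theorem:domain-extension} is true but is not what the extractor needs: it needs a guarantee about \emph{local openings} of individual nodes, and the paper reduces directly to swap binding of the per-node $\SQSC$ rather than of the composite commitment.
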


\begin{remark}
Technically, our definition of argument of knowledge (\cref{def:argument-of-knowledge}) states that if the argument prover convinces the argument verifier with probability $p$, then the knowledge extractor outputs a PCP (with noticeable probability) that would convince the PCP verifier with probability $\approx p$. In the setting of arguments/proofs of knowledge for $\NP$, our notion is slightly different from the traditional notion, which requires that the extractor output an $\NP$ witness. (PCPs are not $\NP$ witnesses since they are not guaranteed to be deterministically verifiable.)

However, this issue can easily be resolved assuming the PCP is ``witness-extractable'' (see~\cite{TCC:Valiant08} for a discussion). Moreover, this distinction does not arise in the $\QMA$ setting, since the definition of $\QMA$ only requires probabilistic verification, so quantum PCPs are already valid $\QMA$ witnesses.
\end{remark}

The rest of this section is organized as follows:
\begin{itemize}
    \item In~\cref{sec:one-bit}, we recall the ``one-bit'' extraction procedure from~\cite{FOCS:CMSZ21}.
    \item In~\cref{subsec:extractor-description}, we describe a knowledge extraction procedure that outputs a quantum PCP given black-box access to any malicious prover for the quantum succinct argument.
    \item In~\cref{sec:extractor-success}, we prove that our knowledge extractor satisfies the conditions of~\cref{def:argument-of-knowledge}, establishing~\cref{theorem:kilian-security}.
\end{itemize}

\subsection{CMSZ preliminaries}
\label{sec:one-bit}

\paragraph{Recap: estimating success probability.} A key component of the~\cite{FOCS:CMSZ21} extraction procedure is a subroutine called $\Est$ defined with respect to a set of projectors $\{\Pi_r\}_r$. At a high level, $\Est$ is a quantum measurement on a state $\brho$ that estimates the probability that, for a random $r$, the measurement $\{\Pi_r,\Id-\Pi_r\}$ accepts. In~\cite{FOCS:CMSZ21}, this $\Est$ procedure is implemented using the alternating projectors trick of~\cite{MarWat05} and incorporates additional modifications proposed by~\cite{TCC:Zhandry20}. 

To make this concrete, fix a Hilbert space $\cH$, a corresponding register $\RegH$, and a state $\brho \in \bfD(\cH)$. Let $\sfPi \coloneqq \{\Pi_r\}_{r \in R}$ denote a set of projections acting on $\RegH$. In our setting, each $\Pi_r$ will correspond to the projection that checks whether the prover outputs an accepting response on the challenge $r \in R$. Thus, the ``success probability'' of $\brho$ is defined to be $\mathbb{E}_{r \from R}[\Tr(\Pi_r \brho)]$.  

The $\Est$ algorithm is also parameterized by an additive error bound $0< \varepsilon < 1$ and a failure probability $0 < \delta < 1$.\footnote{In~\cite{TCC:Zhandry20,FOCS:CMSZ21}, this is captured by the notion of $(\varepsilon,\delta)$-almost-projectivity defined in~\cite{TCC:Zhandry20}: for any initial state $\brho$, if $\Est_{\sfPi,\varepsilon,\delta}$ is applied \emph{twice in a row}, the probability that the resulting outcomes $p,p'$ are more than $\varepsilon$ apart is at most $\delta$.} We give a formal description below, following~\cite{FOCS:CMSZ21}:

\begin{itemize}
    \item[] $\underline{\Est^{\sfPi}_{\varepsilon,\delta}}$:
    \item[] \textbf{Input:} A state $\brho$ on register $\RegH$.
    \item[] \textbf{Setup:} Define $t  =\lceil 2\log(2/\delta)/\varepsilon^2 \rceil$.
    \begin{enumerate}[itemsep=0pt]
        \item Initialize a register $\RegQ$ to the state $\ket{+_R} \coloneqq \frac{1}{2} \ket{\top} + \frac{1}{2} \ket{\bot} + \frac{1}{\sqrt{2\abs{R}}} \sum_{r \in R} \ket{r}$.
        \item\label[step]{step:valest-estimate} Define $M_{\unif} \coloneqq \{\ketbra{+_R} \otimes \Id_{\RegH}, \Id-\ketbra{+_R}\otimes \Id_{\RegH}\}$ and $M_{\win} \coloneqq \{\Pi_{\win}, \Id-\Pi_{\win}\}$ where
        \[
            \Pi_{\win} = \sum_{r \in R} \ketbra{r}_\RegQ \otimes \Pi_r + \ketbra{\top}_\RegQ \otimes \Id.
        \]
        For $i = 1, \ldots, t$:
        \begin{enumerate}[nolistsep]
        \item Measure $M_{\unif}$, obtaining outcome $b_{2i-1} \in \{0,1\}$.
        \item Measure $M_{\win}$, obtaining outcome $b_{2i} \in \{0,1\}$.
        \end{enumerate}
        \item \label[step]{step:valest-disentangle} If $b_{2t} = 1$, skip to  \cref{step:valest-output}. Otherwise, apply $M_\win, M_\unif$ in an alternating fashion until $M_\unif \to 1$, or a further $2t$ measurements have been applied.
        \item \label[step]{step:valest-output} Discard $\RegQ$ and output 
        \[\tilde{p} \coloneqq  2\left(\frac{\sum_{i \in [t]} (1-(b_i-b_{i-1})^2)}{t}\right)-\frac{1}{2}.\]
    \end{enumerate}
\end{itemize}

\begin{remark}
Following~\cite{FOCS:CMSZ21}, the additional symbols $\bot$ and $\top$ are used to ensure that $\Est$ runs in strict (rather than expected) polynomial time. In particular, the outcomes obtained in~\cref{step:valest-estimate} correspond to the success probability of the prover in the following game:
\begin{itemize}
    \item with $1/2$ probability, we sample a random $r \gets R$ and test whether the $\{\Pi_r,\Id-\Pi_r\}$ accepts and
    \item with $1/2$ probability, we sample a random ``challenge'' from $\{\top,\bot\}$, and output success if the challenge is $\top$ and failure if the challenge is $\bot$.
\end{itemize}
If the prover has ``real'' success probability $p = \mathbb{E}_{r \from R}[\Tr(\Pi_r \brho)]$, its success probability in this modified game is $1/4 + p/2 \in [1/4,3/4]$. This rescaling enables running in strict polynomial time and is corrected in the last step of $\Est$ (\cref{step:valest-output}). 
\end{remark}

\paragraph{Recap: repairing the state after measurement.} \cite{FOCS:CMSZ21} use $\Est$ as a subroutine for their ``state repair'' procedure. To describe state repair, we introduce some additional definitions:
\begin{itemize}
    \item Without loss of generality, the procedure $\Est^{\sfPi}_{\varepsilon,\delta}$ can be efficiently implemented by (1) initializing an ancilla register $\RegV$ to $\ket{0}$, (2) applying some unitary to $(\RegH,\RegV)$, (3) performing some projective measurement $\{\Pi'_p\}_p$ on $\RegV$ to obtain the estimate $\widetilde{p}$, and (4) tracing out $\RegV$. We define $\mathsf{CoherentEst}^{\sfPi}_{\varepsilon,\delta}$ to be the unitary applied in step (2).\footnote{In~\cite[Section 4.3]{FOCS:CMSZ21}, $\mathsf{CoherentEst}^{\sfPi}_{\varepsilon,\delta}$ corresponds to the unitary $U_{\mathsf{M}}$.} 
    \item For any $p \in [0,1]$, we define a projection $\Pi_p$ on $(\RegH,\RegV)$ as
    \[ \Pi_p \coloneqq \sum_{p' \geq p} (\mathsf{CoherentEst}^{\sfPi}_{\varepsilon,\delta})^\dagger \Pi'_{p'} (\mathsf{CoherentEst}^{\sfPi}_{\varepsilon,\delta}).\]
    In other words, $\Pi_p$ is the projection that corresponds to running $\mathsf{CoherentEst}^{\sfPi}_{\varepsilon,\delta}$ and obtaining an estimate $p' \geq p$.
\end{itemize}

With these definitions in hand, we now describe the~\cite{FOCS:CMSZ21} repair procedure. State repair is applied to a (normalized) state $\brho \in \bfD(\cH)$ that was just ``disturbed'' by some projective measurement $\{D,\Id-D\}$; we will assume the outcome of the measurement was $D$, i.e., $\brho$ satisfies $\Tr(D \brho) = 1$. Given a target success probability $p$, the procedure will output a ``repaired'' state whose success probability is $p$, assuming that certain conditions on $\brho$ hold.

\begin{itemize}
    \item[] $\underline{\Repair^{\sfPi,D}_{\varepsilon,\delta,p}}$:
    \item[] \textbf{Input:} A state $\brho$ on register $\RegH$ such that $\Tr(D \brho) = 1$.
    \item[] 
    \begin{enumerate}[itemsep=0pt]
        \item Initialize the ancilla register $\RegV$ to $\ket{0}$.
        \item Perform the measurement $A \coloneqq \{\Pi_p,\Id-\Pi_p\}$. If the measurement accepts, skip to~\cref{step:repair-output}.
        \item Define $B \coloneqq \{D_{\RegH} \otimes \ketbra{0}_{\RegV}, \Id_{\RegH,\RegV} - D_{\RegH} \otimes \ketbra{0}_{\RegV}\}$. Perform the measurements $B,A,B,A,\dots$ in alternating fashion until either an $A$ measurement accepts, or $\lceil 1/\sqrt{\delta} \rceil$ total applications of $(B,A)$ have occurred.
        \item\label[step]{step:repair-output} Apply $\mathsf{CoherentEst}^{\sfPi}_{\varepsilon,\delta}$ and discard the $\RegV$ registers.
    \end{enumerate}
\end{itemize}

We refer the reader to~\cite[Section 4.3]{FOCS:CMSZ21} for additional details and intuition behind the procedure.

\paragraph{A one-bit extractor.} Using $\Est$ and $\Repair$, we define a procedure $\BitExtract$ that takes a single copy of a state $\brho$ and tries to obtain many accepting outcomes for many $\{\Pi_r,\Id-\Pi_r\}$ measurements. At a very high level, $\BitExtract$ works by (1) running $\Est$ to estimate the initial success probability of $\brho$ and (2) repeatedly applying $\{\Pi_r,\Id-\Pi_r\}$ for random $r \gets R$ followed by an application of $\Repair$ to restore the success probability.

Formally, the $\BitExtract$ procedure is parameterized by an error tolerance $\gamma>0$, a lower bound $q$ on the success probability, and a runtime parameter $T$.
\begin{itemize}
    \item[] $\underline{\BitExtract_{\gamma,q,T}^{\sfPi}}$:
    \item[] \textbf{Input:} A state $\brho$ on register $\RegH$.
    \item[] \textbf{Setup:} Define $\varepsilon = \gamma/4T$ and $\delta = (\gamma/16T)^2$.
\begin{enumerate}
    \item For $i = 0, \dots, T-1$:
    \begin{enumerate}
        \item Measure $\Est_{\varepsilon,\delta}^{\sfPi} \to p_i$. If $p_i < q$, abort and output failure.
        \item Sample a random $r \from R$ and apply the measurement $\{\Pi_r, \Id-\Pi_r\}$.
        \item If the measurement accepts, set $D = \Pi_{r}$. If the measurement rejects, set $D = \Id - \Pi_{r}$.
        \item Apply $\Repair_{\varepsilon,\delta,p_i}^{\sfPi,D}$.
    \end{enumerate}
    \item Measure $\Est^{\sfPi}_{\varepsilon,\delta} \rightarrow p_T$.
    \item Output success if $p_T \ge q$; otherwise output failure.
\end{enumerate}
\end{itemize}

We require the following guarantee on $\BitExtract$ which we prove using~\cite[Claim 4.14]{FOCS:CMSZ21}.

\begin{claim}
\label{claim:bit-extractor}
Let $p = \mathbb{E}_{r \from R}[\Tr(\Pi_r \brho)]$. Then
\[ \Pr[\BitExtract_{\gamma,q,T}^{\sfPi}(\brho) \rightarrow \emph{success}] \geq p- q-\gamma.\]
\end{claim}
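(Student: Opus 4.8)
The plan is to follow the template of~\cite{FOCS:CMSZ21}, specializing their ``estimate--measure--repair'' analysis to the one-bit disturbance used in $\BitExtract$ and to our parameter choices $\varepsilon = \gamma/4T$, $\delta = (\gamma/16T)^2$. The argument splits into two parts: first, that the initial estimate $p_0$ clears the threshold $q$ with probability at least $p - q - O(\varepsilon)$; and second, that conditioned on $p_0 \ge q$, the loop keeps every subsequent estimate $p_1,\dots,p_T$ at or above $q$ (so in particular $\BitExtract$ does not abort and $p_T \ge q$) except with probability $O(\gamma)$. Combining the two parts and using $\Pr[\BitExtract\to\textnormal{success}] \ge \Pr[p_0 \ge q] - \Pr[\textnormal{some } p_i < q \mid p_0 \ge q]$ yields the bound $p - q - \gamma$.

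For the first part I would use the accuracy guarantee of $\Est^{\sfPi}_{\varepsilon,\delta}$ that underlies its $(\varepsilon,\delta)$-almost-projectivity~\cite{TCC:Zhandry20,FOCS:CMSZ21}: on any $\brho$ with success probability $p = \mathbb{E}_{r}[\Tr(\Pi_r \brho)]$, the estimate $p_0$ satisfies $\mathbb{E}[p_0] \ge p - O(\varepsilon)$ and concentrates tightly, so $p_0$ lies in $[-O(\varepsilon),\,1+O(\varepsilon)]$ except with probability $\delta$. A reverse Markov inequality over this range then gives $\Pr[p_0 \ge q] \ge \mathbb{E}[p_0] - q - O(\varepsilon) \ge p - q - O(\varepsilon)$.

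For the second part --- the heart of the proof --- I would invoke \cite[Claim 4.14]{FOCS:CMSZ21}. In each iteration the disturbance recorded as $D$ is exactly the binary measurement $\{\Pi_r,\Id-\Pi_r\}$ testing whether the prover succeeds on challenge $r$, so $\Repair^{\sfPi,D}_{\varepsilon,\delta,p_i}$ is applied in precisely the one-bit regime CMSZ analyze. Their argument, built on the alternating-projectors / Jordan's-lemma analysis of~\cite{MarWat05} together with the almost-projectivity of $\Est$, shows that each $(\textnormal{measure }\Pi_r)\to\Repair$ step returns the state to (essentially) the image of the pull-back projector $\Pi_{p_i}$ --- so the next $\Est$ call outputs at least $p_i$ --- except with probability $O(\sqrt\delta)$, with the almost-projectivity slack contributing at most $\varepsilon$ to the estimate per step. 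An induction over the $T$ iterations then shows $p_1,\dots,p_T \ge q$ conditioned on $p_0 \ge q$ except with total probability $O(T\sqrt\delta) + O(T\varepsilon) + O(T\delta)$, which by $\varepsilon = \gamma/4T$ and $\delta = (\gamma/16T)^2$ is at most $\gamma - O(\varepsilon)$, exactly absorbing the $O(\varepsilon)$ loss from the first part. If \cite[Claim 4.14]{FOCS:CMSZ21} is already stated in a form that directly bounds the loop's survival probability by (initial estimate clearing $q$) minus $O(\gamma)$, this step is just an instantiation; otherwise it is the granular assembly just described.

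The main obstacle I expect is the bookkeeping in the second part: because $\Est$ is only almost-projective, ``the state has success probability $p_i$'' is never literally true, so one must track the pull-back projectors $\Pi_p$ and their images through the re-initialization of the $\Est$ ancilla $\RegV$ between iterations, and balance the per-step slack $\varepsilon$ against the per-step repair-failure probability $\sqrt\delta$ so that the accumulated error over all $T$ steps stays below $\gamma$. Since \cite[Claim 4.14]{FOCS:CMSZ21} performs exactly this accounting for a disturbance of the same one-bit type, the remaining work is to confirm that the parameter settings $\varepsilon = \gamma/4T$, $\delta = (\gamma/16T)^2$ match theirs and to append the reverse-Markov bound on $p_0$; I do not anticipate new technical difficulties beyond translating notation.
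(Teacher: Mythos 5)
Your overall template---bound $\Pr[p_0 \geq \text{threshold}]$ by a reverse Markov argument and then control per-step estimate drops via \cite[Claim~4.14]{FOCS:CMSZ21}---is the right shape and matches the paper's proof. However, there is a genuine gap in how you set the threshold. You condition on $p_0 \geq q$, which leaves no room for the accumulated drift of the estimate. \cite[Claim~4.14]{FOCS:CMSZ21} only bounds the probability that the estimate drops by \emph{more than roughly} $2\varepsilon = \gamma/2T$ in a single repair step; it does \emph{not} say the estimate is unlikely to drop at all. So over $T$ steps the estimate may legitimately decrease by up to $\gamma/2$ with overwhelming probability, and if $p_0$ is only barely above $q$ (which the event $p_0 \geq q$ permits), some $p_i$ can dip below $q$ even though every individual drop is within tolerance. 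Your conditional claim $\Pr[\text{some } p_i < q \mid p_0 \geq q] \leq O(\gamma)$ is therefore not true; the conditioning also makes the per-step bounds hard to invoke, since they are stated unconditionally.

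The fix, which is what the paper's proof does, is to put the slack into the initial threshold and argue unconditionally: show $\Pr[p_0 < q + \gamma/2] \leq 1 - (p - q - \gamma/2)$ by Markov's inequality, and observe that the invariant $p_i \geq q + \gamma/2 - i\gamma/2T$ is maintained for all $i \in \{0,\dots,T\}$ as long as no single step drops the estimate by more than $\gamma/2T$; by \cite[Claim~4.14]{FOCS:CMSZ21} the latter fails for some $i$ with probability at most $8T\sqrt{\delta} \leq \gamma/2$. Taking a union bound over these two unconditional failure events yields $p - q - \gamma$ directly. A secondary issue is that you list $O(T\varepsilon)$ alongside $O(T\sqrt{\delta})$ and $O(T\delta)$ as contributions to the failure \emph{probability}: the almost-projectivity slack $\varepsilon$ is an additive error in the \emph{estimate value}, not a probability, and must be charged against the threshold budget ($q + \gamma/2$ rather than $q$), not against the failure probability. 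Keeping these two kinds of loss separate is exactly what the paper's invariant bookkeeping accomplishes and what your decomposition conflates.
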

\begin{proof}
If $p_i \ge q + \gamma/2 - i\gamma/2T$ for all $i \in \{0,\dots,T\}$, then the extractor will succeed. We therefore take a union bound over the probability that $p_0 < q + \gamma/2$ and the probability that $p_i < p_{i-1} - \gamma/2T$ for some $i \in [T]$:
\begin{itemize}
    \item By Markov's inequality, the probability that $p_0 < q+\gamma/2$  is at most $1-(p-q-\gamma/2)$.
    \item By \cite[Claim 4.14]{FOCS:CMSZ21}, the probability that $p_i < p_{i-1} - \gamma/2T$ for some $i \in [T]$ is at most $8T\sqrt{\delta} \le \gamma/2$.
\end{itemize}
The claim follows by a union bound over these probabilities.
\end{proof}

\subsection{Description of the knowledge extractor}
\label{subsec:extractor-description}

In this subsection, we describe our knowledge extractor $\KnowledgeExt$. We first establish some useful notation:
\begin{itemize}
    \item Recall the following notation for the tree commitments to PCP proof strings introduced in~\cref{subsec:quantum-merkle,sec:kilian-construction}. The number of blocks (leaves) of the tree commitment is $2^{\beta}$. Each node of the tree is indexed by a string $\ell \in \{0,1\}^{\leq \beta}$, where node $\ell$ is distance $\abs{\ell}$ away from the root. For a challenge $r$, the set $S_r \subseteq \{0,1\}^{\beta}$ is the set of (block) indices that contain the positions the PCP verifier checks on randomness $r$. For any tree node $\ell \in \{0,1\}^{\leq \beta}$, $\Path(\ell)$ is the set of all $\abs{\ell}+1$ prefixes of $\ell$ (including $\ell$ itself and the empty string $\varepsilon$). For $S \subseteq \{0,1\}^{\leq \beta}$, $\Path(S) \coloneqq \bigcup_{\ell \in S} \Path(\ell)$. The tree commitment is the root $\RegC_{\varepsilon}$. For any $\ell \in \{0,1\}^{\le\beta}$, the commitment-decommitment pair $(\RegC_\ell,\RegD_{\ell})$ can be opened by applying $\Com^\dagger$ to obtain the corresponding message-ancilla registers $(\RegM_\ell,\RegW_\ell)$. If $|\ell|< \beta$, $\RegM_{\ell}$ can be split into two $s/2$ qubit registers $(\RegC_{\ell,0},\RegC_{\ell,1})$ (see~\cref{fig:merkle-tree}).
    \item Let $\RegH$ denote the register containing the malicious prover's state after it sends the tree commitment $\RegC_{\varepsilon}$. 
    \item Let $\RegM' \coloneqq (\RegM'_{\ell})_{\ell \in \{0,1\}^{\leq \beta}}$ denote a set of $s$-qubit registers. Looking ahead, these will eventually contain the extracted messages from the tree commitment and the extracted PCP will be the state on the registers $(\RegM'_{\ell})_{\ell \in \{0,1\}^{\beta}}$ corresponding to the leaves.\fermi{this is also wrong} Similarly to $\RegM$, if $|\ell|< \beta$, then $\RegM_{\ell}'$ can also be split into two $s/2$ qubit registers $(\RegC_{\ell,0}',\RegC_{\ell,1}')$.
    \item Let $U_r$ denote the unitary that the malicious prover applies to $\RegH$ to generate its response. We will abuse notation slightly and write $U$ as shorthand for $\{U_r\}_{r \in R}$.\footnote{Even when $R$ is an exponential-size set, the collection of unitaries $\{U_r\}$ can be compactly described by a single unitary $U$ acting on $\ketbra{r} \otimes \brho_{\RegH}$.}
\end{itemize}

Next, we define subroutines $\SwapRecover$ and $\SwapDiff$, which will simplify the description of our extractor. Both of the subroutines are unitaries that can be implemented given oracle access to the prover unitary $U_r$ for some challenge $r$.
\begin{itemize}
    \item $\SwapRecover^{U_r}_{S,E}$ is implemented with oracle access to the prover unitary $U_r$ for some $r$ and is parameterized by two sets of tree indices $E \subseteq \{0,1\}^{\le \beta}$ and $S \subseteq \Path(S_r)$, where $S_r$ is the set of PCP (block) indices for challenge $r$. At any stage of the extraction procedure, $E$ will denote the set of tree indices whose messages have already been extracted (i.e., swapped out) into $\RegM'$. Roughly speaking, $\SwapRecover^{U_r}_{S,E}$ is the unitary that (1) runs the prover unitary $U_r$ to generate its response to challenge $r$, (2) opens the decommitments to reveal the messages corresponding to indices in $S$, and (3) swaps in any messages in $E \cap S$.
    
    In our extraction procedure, calls to $\SwapRecover^{U_r}_{S,E}$ will use $S = \Path(S_r)$, where $S_r$ is the set of indices of PCP blocks containing all the indices that the PCP verifier checks on challenge $r$. However, in order to prove correctness of our extractor, we will need to define hybrid extractors that run $\SwapRecover$ for subsets $S \subseteq \Path(S_r)$.
    
    Formally, $\SwapRecover^{U_r}_{S,E}$ acts on the prover's response register $(\RegD_{\ell})_{\ell \in S}$, the tree commitment root $\RegC_\varepsilon$, and the extracted registers $(\RegM_\ell')_{\ell \in \{0,1\}^{\leq \beta}}$.
    \begin{enumerate}
        \item[] $\underline{\SwapRecover^{U_r}_{S,E}}$:
        \item Apply $U_r$.
        \item For $j = 0,\dots,\beta$:
        \begin{itemize}
            \item[] For each $\ell \in S \cap \{0,1\}^j$:
            \begin{enumerate}
                \item Apply $\Com^{\dagger}$ to $(\RegC_{\ell}, \RegD_{\ell})$ to obtain $(\RegM_\ell,\RegW_\ell)$.\footnote{Technically, there are choices of $S$ for which $\SwapRecover^{U_r}_{S,E}$ is not well-defined (in particular, this may occur if this $\RegC_{\ell}$ has not been revealed in a previous loop). However, we will avoid such choices of $S$.}
                \item\label[step]{step:b2-ESwap} If $\ell \in E$, apply $\SWAP[\RegM_\ell,\RegM'_\ell]$.
                \item If $j < \beta$, split $\RegM_{\ell}$ into two $s/2$-qubit registers $(\RegC_{\ell,0},\RegC_{\ell,1})$.
            \end{enumerate}
        \end{itemize}
    \end{enumerate}
    We emphasize that the outer loop must be performed in the order $j = 0,\dots,\beta$. However, the operations in the inner loop commute for all $\ell \in S \cap \{0,1\}^j$.

    \item We additionally define a unitary $\SwapDiff^{U_r}_{S,E}$ (for the same $U_r,S,E$ as in $\SwapRecover$) as follows:
    \begin{enumerate}
        \item[] $\underline{\SwapDiff^{U_r}_{S,E}}$: 
        \item Apply $\SwapRecover^{U_r}_{S,E}$.
        \item Apply $(\SwapRecover^{U_r}_{S,E \cup S})^\dagger$.
    \end{enumerate}
    In our extraction procedure, if the prover answers correctly on challenge $r$, we will apply $\SwapDiff^{U_r}_{S,E}$ with $S = \Path(S_r)$ in order to swap out any messages at tree indices in $\Path(S_r) \setminus E$. These indices correspond to tree nodes that (a) the prover opens on challenge $r$ and (b) were not extracted in an earlier query.
\end{itemize}
For any set of extracted nodes $E \subseteq \{0,1\}^{\leq \beta}$, we define a set of projections $\sfPi^{U}_E \coloneqq \{\Pi^{U_r}_{r,E}\}_{r \in R}$ as
\[
        \Pi^{U_r}_{r,E} \coloneqq (\SwapRecover^{U_r}_{\Path(S_r),E})^\dagger \left(\Pi_r^{\PCP} \otimes \bigotimes_{\ell\in\Path(S_r)} \ketbra{0}_{\RegW_\ell}\right) \SwapRecover^{U_r}_{\Path(S_r),E},
\]
where $\Pi^{\PCP}_r$ is the projection the verifier applies to $\bigotimes_{j \in S_r} \RegM_j$ to check the proof on random coins $r$ (see \Cref{def:quantum-pcp}; here we leave the dependence on the instance $x$ implicit). In words, the projective measurement $\{\Pi^{U_r}_{r,E}, \Id-\Pi^{U_r}_{r,E}\}$ measures whether the prover gives an accepting response on challenge $r$, where the response is computed by swapping in any previously-extracted messages from the registers $(\RegM'_i)_{i \in E}$ (whenever necessary).

We now define $\Extract^{U}_{\gamma,q,T}$, the primary subroutine in our full extractor $\KnowledgeExt$. $\Extract$ essentially behaves identically to $\BitExtract$, except that whenever it sends a challenge $r \gets R$ and receives openings for messages in $\Path(S_r)$, it swaps out the messages in $\Path(S_r)$ that it has not yet extracted. $\gamma > 0$ is an error tolerance, $q$ is a lower bound on the allowable success probability, and $T$ is a runtime parameter (these parameters will be set when we define $\KnowledgeExt$).

\begin{itemize}
    \item[] $\underline{\Extract^{U}_{\gamma,q,T}}$:
    \item[] \textbf{Input:} A prover state on register $\RegH$ and a tree commitment register $\RegC_{\varepsilon}$.
    \item[] \textbf{Setup:} Define $\varepsilon = \gamma/4T$ and $\delta = (\gamma/16T)^2$. Initialize registers $\RegM' \coloneqq (\RegM'_i)_{i \in [2^{\beta+1}-1]}$.\footnote{We identify each tree node $\ell \in \{0,1\}^{\leq \beta}$ with an integer $i \in [2^{\beta+1}-1]$ following a level-order traversal, i.e., the root is $i = 1$, its children are $2,3$, etc.} Initialize $E = \emptyset$.
\begin{enumerate}
    \item For $t = 0, \dots, T-1$:
    \begin{enumerate}
        \item Measure $\Est_{\varepsilon,\delta}^{\sfPi^U_{E}} \to p_t$. If $p_t < q$, abort and output failure.
        \item Sample a random $r \from R$ and apply the measurement $\{\Pi^{U_r}_{r,E}, \Id-\Pi^{U_r}_{r,E}\}$.
        \item If the measurement accepts, set $D \coloneqq \Pi^{U_r}_{r,E}$. If the measurement rejects, set $D \coloneqq \Id - \Pi^{U_r}_{r,E}$.  Perform the following steps:
        \begin{enumerate}
            \item Apply $\SwapDiff^{U_r}_{\Path(S_r),E}$.
            \item Update $E \leftarrow E \cup \Path(S_r)$. 
        \end{enumerate}
        \item Apply $\Repair_{\varepsilon,\delta,p_t}^{\sfPi^U_E,D}$.
    \end{enumerate}
    \item Measure $\Est_{\varepsilon,\delta}^{\sfPi^U_{E}} \to p_T$. Return ``success'' if $p_T \geq q$ and ``failure'' otherwise. Additionally, output the registers $(\RegM_i')_{i \in [2^{\beta+1}-1] \backslash [2^{\beta}-1]}$.
\end{enumerate}
\end{itemize}

We now define $\KnowledgeExt$.
\begin{enumerate}
        \item[] $\underline{\KnowledgeExt^{\widetilde{P}}(x,1^{\lceil 1/p \rceil},1^{\lceil 1/\gamma \rceil})}$:
        \item Run the malicious prover $\widetilde{P}$ to obtain a tree commitment $\RegC$. 
        \item Let $\btau_{\RegH,\RegC}$ denote the global state on the prover's registers $\RegH$ and the commitment register $\RegC$. Sample a random $T \gets \{1,2,\dots,\lceil 2^{\beta+3}/\gamma^2\rceil \}$ and run $\Extract^{U}_{\gamma/4,q+\gamma/4,T}(\btau_{\RegH,\RegC}) \rightarrow (b,\bpi)$ where $b$ is a bit indicating success/failure. Discard $b$ and output $\bpi$.
        
        % If $\Extract$ succeeded, output $\bpi$, otherwise output $\bot$.
\end{enumerate}

\subsection{Correctness of our extractor}
\label{sec:extractor-success}

In this subsection, we will complete the proof of our main theorem (\cref{theorem:kilian-security}) by showing that $\KnowledgeExt$ satisfies the conditions of~\cref{def:argument-of-knowledge}.

The main technical steps are captured by the following claim about the $\Extract$ subroutine.

\begin{claim}[Success probability of the $\Extract$ subroutine]
\label{claim:extractor-success}
For any state $\btau_{\RegH,\RegC}$ and efficient prover unitary $U$, let $p$ denote the initial success probability, i.e., $p \coloneqq \mathbb{E}_{r \from R}[\Tr(\Pi_{r} \btau)]$ where $\Pi_r \coloneqq \Pi^{U_r}_{r,E =\emptyset}$. If $\SQSC$ is swap-binding, then for any $\gamma = 1/\poly(\lambda)$, any $q < p - \gamma - 1/\poly(\lambda)$, and any $T = \poly(\lambda)$, the subroutine $\Extract^{U}_{\gamma,q,T}$ succeeds with probability at least $p - q - \gamma - \negl(\lambda) = 1/\poly(\lambda)$.
\end{claim}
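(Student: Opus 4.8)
The strategy is to reduce the analysis of $\Extract^{U}_{\gamma,q,T}$ to the correctness of the idealized $\BitExtract_{\gamma,q,T}^{\sfPi}$ procedure from~\cref{sec:one-bit} (for which~\cref{claim:bit-extractor} already gives the bound $\Pr[\text{success}] \geq p - q - \gamma$), by arguing that the swap operations performed by $\Extract$ are \emph{undetectable} — more precisely, that the entire execution of $\Extract$ is indistinguishable from the execution of a suitable ``phantom'' $\BitExtract$-style procedure that never actually moves any message out of the prover's registers. The main tool is the oracle swap-binding lemma (\cref{theorem:oracle-binding-security}): I would set up, for each tree node $\ell$ whose message gets swapped out over the course of the extraction, a hybrid in which that swap is replaced by leaving the message alone, and bound the distinguishing advantage between consecutive hybrids by constructing a reduction to the oracle swap-binding game for $\SQSC$.

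The key steps, in order, are as follows. \textbf{(1)} First I would fix notation for a sequence of hybrid extraction experiments $\mathsf{Hyb}_0, \mathsf{Hyb}_1, \dots$ interpolating between the real $\Extract$ and an ``all-phantom'' version: roughly, $\mathsf{Hyb}_j$ runs $\Extract$ but in the $\SwapDiff^{U_r}_{\Path(S_r),E}$ calls it refrains from physically swapping out the messages at the $j$-th (and later) newly-extracted tree node across the whole run, instead acting as though those nodes were extracted (the key point being that $\SwapDiff$ applied when $S \subseteq E$ already is the identity on the relevant registers, up to reversible bookkeeping — this is exactly why $\SwapDiff$ is defined as $\SwapRecover_{S,E} \circ (\SwapRecover_{S,E\cup S})^\dagger$). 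The total number of hybrids is bounded by the number of tree nodes that can ever be extracted, which is at most $2^{\beta+1}-1 = \poly(\lambda)$ (since $m = \poly(\lambda)$), times $T = \poly(\lambda)$ — polynomial overall. \textbf{(2)} For the reduction between $\mathsf{Hyb}_j$ and $\mathsf{Hyb}_{j+1}$, I would isolate the single $\SQSC$ commitment at the relevant tree node $\ell$ and have the reduction play the oracle swap-binding game against that commitment: the reduction sends the $(\RegC_\ell, \RegD_\ell)$ registers (extracted from the prover's response via $\Com^\dagger$ along the path, as in the verification procedure) to the swap-binding challenger, and then uses the oracle $G_b$ provided by~\cref{def:oracle-binding} to implement every subsequent step of $\Extract$ that requires touching the committed message or the commitment register $\RegC_\ell$. \textbf{(3)} I would then verify the crucial claim that all operations $\Extract$ performs after the node-$\ell$ swap can indeed be implemented with oracle access to $G_b$ and \emph{without} the $\RegC_\ell$ register: the operations in question are (i) applying $\Com^\dagger$ to check validity of later openings and $\SWAP$-in previously extracted messages inside $\SwapRecover$, (ii) applying the PCP projection $\Pi^{\PCP}_r$, and (iii) running the $\Est$ and $\Repair$ subroutines, which only invoke the projections $\{\Pi^{U_r}_{r,E}\}$ and the fixed projector $D$ — all of which factor through ``check the opening and check PCP acceptance'' functionality that the oracle $\widehat{\cO}\Pi + (\Id - \Pi)$ can supply. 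This is essentially the argument sketched in~\cref{subsubsec:to-rewinding-difficulties} under ``the oracle security lemma'' and needs to be made precise here. \textbf{(4)} Finally, chaining the hybrids gives that the real $\Extract$ is $\negl(\lambda)$-indistinguishable from the all-phantom version; but the all-phantom version, by construction, never disturbs the prover's message registers in a way that differs from a plain $\BitExtract$ run with respect to the projectors $\sfPi^U_{\emptyset}$ (the $E$-bookkeeping becomes a no-op), so~\cref{claim:bit-extractor} applies and yields success probability $\geq p - q - \gamma - \negl(\lambda)$, which is $1/\poly(\lambda)$ under the hypothesis $q < p - \gamma - 1/\poly(\lambda)$.

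\textbf{The main obstacle} I anticipate is step (3): carefully checking that the oracle $G_b$ really does suffice to run the $\Est$/$\Repair$ machinery and the $\SwapRecover$ subroutine on later queries without the $\RegC_\ell$ register. The subtlety is that $\SwapRecover^{U_r}_{\Path(S_r),E}$, for $r$ with $\ell \in \Path(S_r)$, applies $\Com^\dagger$ to $(\RegC_\ell, \RegD_\ell)$ — so the reduction must express this action purely through the oracle, which acts ``in the $\Com^\dagger$ basis'' on the committed-message register. One has to confirm that every downstream use of the recovered register $\RegM_\ell$ — the $\SWAP[\RegM_\ell, \RegM'_\ell]$ in~\cref{step:b2-ESwap}, the contribution of $\RegM_\ell$ to the PCP check $\Pi^{\PCP}_r$ when $\ell$ is a leaf, and the re-application of $\Com$ — can be folded into the oracle's efficient unitary $\cO$. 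A secondary subtlety is accounting: the number of distinct commitments that get swapped, and the number of oracle queries the reduction makes, must both stay polynomial so that the $\varepsilon^2/8t^2$-type loss in~\cref{theorem:oracle-binding-security} (and~\cref{lemma:admissible-oracles}) remains $1/\poly(\lambda)$; since $\beta = O(\log \lambda)$ (as $m = \poly(\lambda)$ and blocks have size $s = \poly(\lambda)$) and each query touches $|\Path(S_r)| = O(\beta \cdot |S_r|) = \polylog(\lambda)\cdot\poly(\lambda)$ nodes, this is fine, but it needs to be tracked explicitly.
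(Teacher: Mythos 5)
Your overall strategy — interpolate between $\BitExtract$ (no swaps) and $\Extract$ (all swaps) via a hybrid for each tree node, bound each step by a reduction to oracle swap binding (\cref{theorem:oracle-binding-security}), and apply \cref{claim:bit-extractor} at the swap-free endpoint — is exactly the paper's proof outline. Two points merit attention.

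First, the hybrid indexing you describe is subtly ill-defined. You index hybrids by "the $j$-th newly-extracted tree node across the whole run," which is a function of the random challenge sequence and hence not a fixed hybrid. The paper fixes this by indexing hybrids statically: $\HybExtract[j]$ is the procedure that, at every step, only swaps out nodes in $[j]$ under a fixed level-order traversal of the tree (so $E \gets E \cup (\Path(S_r) \cap [j])$ rather than $E \gets E \cup \Path(S_r)$). This gives exactly $2^{\beta+1}-1$ well-defined hybrids — your "$\times T$" factor is spurious — with $\HybExtract[0] = \BitExtract$ and $\HybExtract[2^{\beta+1}-1] = \Extract$.

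Second, you correctly flag step (3) as the main obstacle but do not resolve it, and it is where most of the paper's technical work lies. The reduction must express every post-transition operation (the $\Est$/$\Repair$ machinery, the $\{\Pi^{U_r}_{r,E}\}$ projectors, and later $\SwapRecover$/$\SwapDiff$ calls) purely through the oracle $G_b$, since after the first accepting opening of node $j$ it surrenders $\RegC_j$ to the challenger. The paper does this by (i) introducing a modified $\SwapRecover^{U_r}_{S,E,j}$ that opens everything along the path except node $j$, (ii) defining the oracle unitary $\cO_j = \sum_{E,r} \ketbra{E} \otimes \ketbra{r} \otimes \cO_{r,E,j}$ carefully so that conjugating $\widehat{\cO_{r,E,j}}\Pi + (\Id-\Pi)$ by $\SwapRecover^{U_r}_{\Path(S_r)\cap[j-1],E}$ reproduces the acceptance check in $\HybExtract[j-1]$, and likewise with the $\widehat{\SWAP[\RegM_j,\RegM'_j]}$-conjugated oracle for $\HybExtract[j]$, and (iii) verifying that $\cO_j$ is implementable as a circuit acting only on $\RegM_j$ and registers disjoint from $\RegC_j, \RegD_j, \RegW_j$ — the key observation being that $(\SwapRecover_{\Path(S_r),E}^{U_r})\cdot(\SwapRecover_{\Path(S_r)\cap[j],E}^{U_r})^\dagger$ can be realized by the loop that, for each $\ell \in \Path(S_r)\setminus[j]$, applies $\Com^\dagger_\ell$ and a conditional $\SWAP$, never touching node $j$'s registers. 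This construction, together with the flag-bit bookkeeping for the pre-/post-transition phases, is what makes the reduction type-check; without it the argument is incomplete.
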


\paragraph{Proof outline.} At a high level, our proof of~\cref{claim:extractor-success} consists of the following steps:
\begin{enumerate}
    \item First, we write down a sequence of hybrid extraction procedures $\HybExtract[j]$, one for each tree node $[2^{\beta+1}-1]$. $\HybExtract[j]$ is defined similarly to $\Extract$, except that the extractor only swaps out nodes in $[j]$. Thus, $\HybExtract[0]$ will correspond to $\BitExtract$ for which the~\cite{FOCS:CMSZ21} statistical guarantees apply, and $\HybExtract[2^{\beta+1}-1]$ corresponds to $\Extract$.
    \item Next, we show that for all $j$, if an adversary distinguishes between $\HybExtract[j-1]$ and $\HybExtract[j]$, this implies a reduction that violates the \emph{oracle swap binding} security (defined in~\cref{def:oracle-binding}) of $\SQSC$. We refer the reader to~\cref{subsubsec:to-rewinding-difficulties} for discussion on the necessity of oracle swap binding. Crucially, by~\cref{theorem:oracle-binding-security}, oracle swap binding security is \emph{implied} by standard swap binding.
\end{enumerate}

\begin{proof}[Proof of~\cref{claim:extractor-success}]

For $j \in \{0,\dots,2^{\beta+1}-1\}$, define $\HybExtract[j]$ as follows.

\begin{itemize}
    \item[] $\underline{\HybExtract^{U}_{\gamma,q,T}[j]}$ (differences from $\Extract$ highlighted in {\color{red}{red}}):
    \item[] \textbf{Input:} A prover state on register $\RegH$ and a tree commitment register $\RegC_{\varepsilon}$.
    \item[] \textbf{Setup:} Define $\varepsilon = \gamma/4T$ and $\delta = (\gamma/16T)^2$. Initialize registers $\RegM' \coloneqq (\RegM'_i)_{i \in [2^{\beta+1}-1]}$. Initialize $E = \emptyset$.
\begin{enumerate}
    \item For $t = 0, \dots, T-1$:
    \begin{enumerate}
        \item Measure $\Est_{\varepsilon,\delta}^{\sfPi^U_{E}} \to p_t$. If $p_t < q$, abort and output failure.
        \item Sample a random $r \from R$ and apply the measurement $\{\Pi^{U_r}_{r,E}, \Id-\Pi^{U_r}_{r,E}\}$.
        \item If the measurement accepts, set $D \coloneqq \Pi^{U_r}_{r,E}$. If the measurement rejects, set $D \coloneqq \Id - \Pi^{U_r}_{r,E}$.  Perform the following steps:
        \begin{enumerate}
            \item Apply $\SwapDiff^{U_r}_{\Path(S_r) {\color{red}{\cap [j]}}, E}$.
            \item Update $E \gets E \cup (\Path(S_r) {\color{red}{\cap [j]}})$. 
        \end{enumerate}
        \item Apply $\Repair_{\varepsilon,\delta,p_t}^{\sfPi^U_E,D}$.
    \end{enumerate}
    \item Measure $\Est_{\varepsilon,\delta}^{\sfPi^U_{E}} \to p_T$. Return ``success'' if $p_T \geq q$ and ``failure'' otherwise. Additionally, output the registers $(\RegM_i')_{i \in [2^{\beta+1}-1] \backslash [2^{\beta}-1]}$.
\end{enumerate}
\end{itemize}
Observe that:
\begin{itemize}
    \item $\HybExtract^{U}_{\gamma,q,T}[0]$ is equivalent to running $\BitExtract_{\gamma,q,T}^{\sfPi}$ for $\sfPi = \{\Pi_r\}$ (where $\Pi_r \coloneqq \Pi^{U_r}_{r,E =\emptyset}$) and additionally outputting registers $(\RegM_i')_{i \in [2^{\beta+1}-1] \backslash [2^{\beta}-1]}$ set to $0$. 
    \item $\HybExtract^{U}_{\gamma,q,T}[2^{\beta+1}-1]$ is identical to  $\mathsf{Extract}^{U}_{\gamma,q,T}$. 
\end{itemize}
We will prove that for all $j \in [2^{\beta+1}-1]$ and any efficient adversary (specified by some state $\btau_{\RegH,\RegC}$ and a unitary $U$),
\[ \abs{\Pr[\HybExtract^U_{\gamma,q,T}[j-1](\btau) \rightarrow \text{success}] - \Pr[\HybExtract^U_{\gamma,q,T}[j](\btau) \rightarrow \text{success}] } = \negl(\lambda).\]

Suppose otherwise. We will give a reduction $\Reduction[j]$ that wins the oracle swap binding game (\cref{def:oracle-binding}) using the commitment at index $j$. Before we state $\Reduction[j]$, we define the oracle $G_b$ that we provide the reduction in the oracle swap binding game:
\begin{itemize}
    \item Recall from~\cref{def:oracle-binding} that the adversary in the oracle swap binding game has access to a unitary $G_b$ with the form \[
G_b =  (\widehat{\SWAP[\RegM_j, \RegM_j']})^b \cdot \widehat{ \cO} \cdot  (\widehat{\SWAP[\RegM_j, \RegM_j']})^b \Pi + (\Id-\Pi),
\]
    for some unitary $\cO$ chosen by the adversary and register $\RegM_j''$ that the challenger initializes to $\ket{0}$. Recall that $\Pi \coloneqq (\Com) (\Id_{\RegM_j} \otimes \ketbra{0}_{\RegW_j}) (\Com^{\dagger})$ denotes the projection onto valid commitment-decommitment pairs, and that for any operation $U$, we write $\widehat{U} \coloneqq (\Com) U (\Com^\dagger)$.
    \item Let $\RegB$ be a one-qubit ancilla register. We will instantiate $G_b$ for the following choice of $\cO \coloneqq \cO_j$ for $\cO_j$ defined as
    \[
    \cO_j = \sum_{\substack{E \subseteq [2^{\beta+1}-1] \\ r \in R}} \ketbra{E}_\RegE \otimes \ketbra{r}_\RegQ \otimes \mathcal{O}_{r,E,j},\] 
    where the unitary $\mathcal{O}_{r,E,j}$ is defined as
    \[\mathcal{O}_{r,E,j} \coloneqq \SwapRecover_{\Path(S_r) \cap [j],E}^{U_r} \cdot (X_{\RegB} \otimes \Pi^{U_r}_{r,E,j} + \Id_{\RegB} \otimes (\Id-\Pi^{U_r}_{r,E,j})) \cdot (\SwapRecover_{\Path(S_r) \cap [j],E}^{U_r})^\dagger\]
    where
    \[
        \Pi^{U_r}_{r,E,j} \coloneqq (\SwapRecover^{U_r}_{\Path(S_r),E})^\dagger \left(\Pi_r^{\PCP} \otimes \bigotimes_{\ell\in\Path(S_r) \backslash \{j\}} \ketbra{0}_{\RegW_\ell}\right) \SwapRecover^{U_r}_{\Path(S_r),E}.
    \]
\end{itemize}

This operation $\cO$ is carefully defined to enable the reduction in a swap-binding security game to (a) run the (swap-augmented) prover on any classical challenge $r$ and (b) run $\Est$ and $\Repair$ on the prover (which requires running the swap-augmented prover on a superposition of challenges $r$), \emph{even after the challenger is given the $\RegC_j$ register}. Formally, $\cO$ allows the adversary to pick any choice of $r$ and $E$ and apply a unitary we call $\cO_{r,E,j}$. $\cO_{r,E,j}$ is defined so that both of the following are true:
\begin{itemize}
    \item The unitary \[ (\SwapRecover_{\Path(S_r) \cap [j-1],E}^{U_r})^\dagger \cdot (\widehat{\cO_{r,E,j}} \Pi + (\Id-\Pi)) \cdot (\SwapRecover_{\Path(S_r) \cap [j-1],E}^{U_r})\]
is equivalent to checking (and recording the outcome onto $\RegB$) whether the swap-augmented prover in $\HybExtract[j-1]$ would make the verifier accept when run on challenge $r$ when the extracted set is $E$.
    \item The unitary 
    \begin{align*}
        (\SwapRecover_{\Path(S_r) \cap [j-1],E}^{U_r})^\dagger \cdot (\widehat{\SWAP[\RegM_j, \RegM_j']}) \cdot (\widehat{\cO_{r,E,j}} \Pi + (\Id - \Pi)) \\
        \cdot (\widehat{\SWAP[\RegM_j, \RegM_j']}) \cdot (\SwapRecover_{\Path(S_r) \cap [j-1],E}^{U_r})
    \end{align*}
is equivalent to checking (and recording the outcome onto $\RegB$) whether the swap-augmented prover in $\HybExtract[j]$ would make the verifier accept when run on challenge $r$ when the extracted set is $E$.
\end{itemize}

For $\cO$ to be a valid oracle for the oracle swap binding game for the commitment at index $j$, it must be implemented as a circuit that acts only on $\RegM_j$ and other registers independent of the $j$th commitment (i.e., $\cO$ should not be defined with respect to $\RegC_j,\RegD_j$, or $\RegW_j$). An implementation of $\cO$ that naively implements all the unitaries that appear in the description of $\cO_{r,E,j}$ and $\Pi^{U_r}_{r,E,j}$ would \emph{not} satisfy this guarantee, since, e.g., $(\SwapRecover_{\Path(S_r) \cap [j],E}^{U_r})^\dagger$ in the definition $\cO_{r,E,j}$ acts on registers $\RegC_j,\RegD_j$ if $j \in \Path(S_r)$. However, we observe that to implement $\cO_{r,E,j}$, it suffices to implement $(\SwapRecover_{\Path(S_r),E}^{U_r}) \cdot (\SwapRecover_{\Path(S_r) \cap [j],E}^{U_r})^\dagger$ and its inverse, and this operation can be implemented to satisfy the syntactic requirement. In particular, $(\SwapRecover_{\Path(S_r),E}^{U_r}) \cdot (\SwapRecover_{\Path(S_r) \cap [j],E}^{U_r})^\dagger$ is equivalent to the following procedure:

\begin{enumerate}
        \item[] For each $\ell \in \Path(S_r) \backslash [j]$ in increasing order:
        \begin{enumerate}
            \item Apply $\Com^{\dagger}$ to $(\RegC_{\ell}, \RegD_{\ell})$ to obtain $(\RegM_\ell,\RegW_\ell)$.
            \item If $\ell \in E$, apply $\SWAP[\RegM_\ell,\RegM'_\ell]$.
        \end{enumerate}
    \end{enumerate}

% Roughly speaking, the unitary operation $\cO_{r,E,j}$ coherently performs the measurement $\{\Pi_{r,E}^{U_r},\Id-\Pi_{r,E}^{U_r}\}$ and records the outcome onto $\RegB$, except it expects the commitments along $\Path(S_r) \cap [j]$ to already be opened via $\SwapRecover_{\Path(S_r) \cap [j],E}$ and the commitment at index $j$ verified, so it only opens the remaining commitments along $\Path(S_r) \backslash [j]$ while swapping extracted messages back in. Then, it (coherently) measures whether the PCP check passes and all commitments along $\Path(S_r)$, except for the commitment at index $j$ (if along the path), are valid. Finally, it uncomputes. Because the operation $\cO$ can only act nontrivially on register $\RegM_j$ and other registers independent of the commitment scheme at index $j$, it is a valid oracle for the oracle swap binding game.  

% The operation $\cO$ allows the reduction in the oracle swap binding game to apply $\cO_{r,E,j}$ for any choice of $r$ and $E$. This enables the reduction to (a) run the (swap-augmented) prover on any classical challenge $r$ and (b) run $\Est$ and $\Repair$ on the prover, which requires running the (swap-augmented) prover on a superposition of challenges $r$.

At a high level, $\Reduction[j]$ works as follows:
\begin{itemize}
    \item It runs $\HybExtract[j-1]$ as usual until the first time the adversary gives an accepting response containing a decommitment to node $j$ (up to this point, there is no difference between running $\HybExtract[j-1]$ or $\HybExtract[j]$). In particular, it simply implements operations involving the projections $\Pi^{U_r}_{r,E}$ as normal.
    \item Once the adversary gives an accepting response containing a decommitment to node $j$, the reduction sends the valid commitment-decommitment pair to the oracle swap binding challenger corresponding to the commitment at node $j$. Since the challenger does not return the commitment register, the reduction can no longer implement operations involving the projections $\Pi^{U_r}_{r,E}$ properly. However, using the oracle $G_b$ defined as above, the reduction can implement operations involving a projection $\Pi^{U_r,G_b}_{r,E}$ that we define as
    \[\Pi_{r,E}^{U_r,G_b} := \bra{1}_\RegB \bra{E}_\RegE \bra{r}_\RegQ \widetilde{G}_b \ket{0}_\RegB \ket{E}_\RegE \ket{r}_\RegQ, \]
    where
    \[\widetilde{G}_b := (\SwapRecover_{\Path(S_r) \cap [j-1],E}^{U_r})^\dagger \cdot G_b \cdot (\SwapRecover_{\Path(S_r) \cap [j-1],E}^{U_r}).\]
    By following the definitions, it can be verified that $\Pi_{r,E}^{U_r,G_b}$ has exactly the same behavior as $\Pi_{r,E}^{U_r}$, except that operations requiring the committed message in the node $j$ commitment are implemented with the oracle $G_b$.
\end{itemize}

To simplify the description of $\Reduction[j]$, we introduce a ``flag'' bit $\flag \in \{0,1\}$ that is initialized to $0$ and flips to $1$ after the reduction interacts with the oracle swap binding challenger. Define
\begin{equation}
\Pi_{r,E,\flag}^{U_r,G_b} \coloneqq
    \begin{cases}
        \Pi^{U_r}_{r,E} & \text{if } \flag = 0\\
        \Pi^{U_r,G_b}_{r,E} & \text{if } \flag = 1,
    \end{cases}
\end{equation}
and let $\sfPi^{U,G_b}_{E,\flag} \coloneqq \{\Pi_{r,E,\flag}^{U_r,G_b}\}_{r \in R}$.

We also need to modify the definitions of $\SwapRecover$ and $\SwapDiff$, since the original syntax of these subroutines will not be applicable after the reduction sends commitment $\RegC_j$ to the oracle swap binding challenger. 
\begin{itemize}
    \item We define $\SwapRecover^{U_r}_{S,E,j}$, to behave the same as $\SwapRecover^{U_r}_{S,E}$ except that it does not swap out $\RegC_j$. Formally, $\SwapRecover^{U_r}_{S,E,j}$ does the following:
    \begin{itemize}
        \item[] $\underline{\SwapRecover^{U_r}_{S,E,j}}$:
            \item If $j =1$ (i.e., $j$ is the root), apply $U_r$.
            \item Else:
            \begin{enumerate}
            \item Apply $\SwapRecover_{S \cap [j-1] \setminus \{\sibling(j),\parent(j)\},E}$, where $\sibling(j)$ is the sibling and $\parent(j)$ is the parent of node $j$. 
            \item Apply $\Com_{\parent(j)}^\dagger$. If $\parent(j) \in E$, apply $\SWAP[\RegC_{\sibling(j)},\RegC_{\sibling(j)}']$. 
            \item If $\sibling(j) \in S \cap [j-1]$, apply $\Com_{\sibling(j)}^\dagger$. If $\sibling(j) \in E$, apply $\SWAP[\RegM_{\sibling(j)},\RegM_{\sibling(j)}']$.
        \end{enumerate}
    \end{itemize}
    \item Next, we define
    \begin{enumerate}
    \item[] $\underline{\SwapDiff^{U_r}_{S,E,j}}$:
    \item Apply $\SwapRecover^{U_r}_{S,E,j}$.
    \item Apply $(\SwapRecover^{U_r}_{S,E \cup S,j})^\dagger$. 
\end{enumerate}
    \item Finally, we define $\SwapDiff^{U_r}_{S,E,j,\flag}$ as
    \begin{equation}
\SwapDiff^{U_r}_{S,E,j,\flag} \coloneqq
    \begin{cases}
        \SwapDiff^{U_r}_{S,E} & \text{if } \flag = 0\\
        \SwapDiff^{U_r}_{S,E,j} & \text{if } \flag = 1.
    \end{cases}
\end{equation}
\end{itemize}

We now state the reduction $\Reduction[j]$ which plays the oracle swap-binding game and outputs a bit $b'$. The reduction has the same input and setup as $\HybExtract[j-1]$ and $\HybExtract[j]$ and uses the same values of $\varepsilon$ and $\delta$. 

\begin{itemize}
    \item[] $\underline{\Reduction[j]}$:
\begin{enumerate}
    \item Initialize $\flag = 0$.
    \item For $t = 0, \dots, T-1$:
    \begin{enumerate}
        \item Measure $\Est_{\varepsilon,\delta}^{\sfPi_{E,\flag}^{U,G_b}} \to p_t$. If $p_t < q$, abort. 
        \item Sample a random $r \from R$ and apply the measurement $\{\Pi^{U_r,G_b}_{r,E,\flag}, \Id-\Pi^{U_r,G_b}_{r,E,\flag}\}$.
        \item If the measurement rejects, set $D \coloneqq \Id - \Pi^{U_r,G_b}_{r,E,\flag}$. If the measurement accepts, set $D \coloneqq \Pi^{U_r,G_b}_{r,E,\flag}$ and perform the following steps:
        \begin{enumerate}
            \item If $j \in \Path(S_r)$ and $\flag=0$:
            \begin{enumerate}
                \item Apply $\SwapRecover^{U_r}_{\Path(S_r) \cap [j-1],E}$.
                \item Apply $\SWAP[\RegC_{j},\RegC_{j}']$ and send $\RegD_{j}, \RegC_{j}'$ to the oracle swap binding challenger. 
                \item The challenger returns $\RegD_{j}$.
                \item Update $E \gets E \cup (\Path(S_r) \cap [j])$.
                \item Apply $(\SwapRecover^{U_r}_{\Path(S_r) \cap [j-1],E,j})^\dagger$.
                \item Set $\flag = 1$.
            \end{enumerate}
            \item Otherwise:
            \begin{enumerate}
                \item Apply $\SwapDiff^{U_r}_{\Path(S_r)\cap [j],E,j,\flag}$.
                \item Update $E \gets E \cup (\Path(S_r)\cap [j])$.
            \end{enumerate}
        \end{enumerate}
        \item Apply $\Repair_{\varepsilon,\delta,p_t}^{\sfPi_{E,\flag}^{U,G_b},D}$.
    \end{enumerate}
    \item Measure $\Est_{\varepsilon,\delta}^{\sfPi_{E,\flag}^{U,G_b}} \to p_T$. Guess $b' = 0$ if $p_T \geq q$ and guess $b ' = 1$ otherwise.
\end{enumerate}
\end{itemize}

If the oracle swap binding challenger's bit is $b = 0$, then $\Pr[\Reduction[j] \rightarrow 0]$ is equal to $\Pr[\HybExtract[j-1] \text { succeeds}]$. If the challenger's bit is $b =1$, then $\Pr[\Reduction[j] \rightarrow 0]$ is equal to $\Pr[\HybExtract[j] \text { succeeds}]$. Thus, by oracle swap binding security, we have
\[
\Pr[\HybExtract[j] \text{ succeeds}] \ge \Pr[\HybExtract[j-1] \text{ succeeds}] - \negl(\lambda).
\]
Since $2^\beta = \poly(\lambda)$, it follows from~\cref{claim:bit-extractor} that
\begin{align*}
\Pr[\mathsf{Extract}^{U}_{\gamma,q,T}]
&\geq p-q-\gamma-\negl(\lambda). \qedhere
\end{align*}
\end{proof}

\paragraph{Why does $\KnowledgeExt$ produce a valid PCP?}
\cref{claim:extractor-success} shows that after running $\Extract$, the swap-augmented prover still has high success probability. To complete the proof of~\cref{theorem:kilian-security}, it remains to show that when $T$ is sampled uniformly at random from $\{1,2,\dots,\lceil 2^{\beta+3}/\gamma^2\rceil\}$ in the $\KnowledgeExt$ procedure, with noticeable probability the subroutine $\Extract^{U}_{\gamma/4,q+\gamma/4,T}$ outputs a good PCP string.\footnote{A similar claim is proved in~\cite{FOCS:CMSZ21,FOCS:LomMaSpo22}, but their argument is specialized for extracting classical proof strings and works by taking a union bound over all possible proofs. We therefore require a different proof for this final step.}

\begin{claim}
\label{claim:random-stop}
For any state $\btau_{\RegH,\RegC}$ and efficient prover unitary $U$, let $p$ denote the initial success probability, i.e., $p \coloneqq \mathbb{E}_{r \from R}[\Tr(\Pi_{r} \btau)]$ where $\Pi_r \coloneqq \Pi^{U_r}_{r,E =\emptyset}$. If $\SQSC$ is swap-binding, then for any $\gamma = 1/\poly(\lambda)$ and any $q < p - \gamma - 1/\poly(\lambda)$, with probability at least $p-q-\gamma-\negl(\lambda) \ge 1/\poly(\lambda)$ the procedure $\KnowledgeExt^{\widetilde{P}}(x,1^{\lceil 1/p \rceil},1^{\lceil 1/\gamma \rceil})$ outputs an extracted proof $\bpi$ such that
\[
\mathbb{E}_{r \from R}[\Tr(\Pi_r^{\PCP} \bpi)] \ge q-\negl(\lambda).
\]
\end{claim}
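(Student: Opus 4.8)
The plan is to combine the success-probability guarantee of \cref{claim:extractor-success} with a ``progress measure'' argument over the random choice of $T$. First I would observe that by \cref{claim:extractor-success} (applied with parameters $\gamma/4$ and $q + \gamma/4$), the run of $\Extract^{U}_{\gamma/4,q+\gamma/4,T}$ inside $\KnowledgeExt$ succeeds --- meaning it reaches the final step without aborting and the final estimate $p_T$ satisfies $p_T \geq q + \gamma/4$ --- with probability at least $p - q - \gamma - \negl(\lambda) = 1/\poly(\lambda)$, uniformly over all $T$ in the sampling range. So it suffices to show that \emph{conditioned on success}, with noticeable probability the extracted registers $(\RegM_i')_{i \in [2^{\beta+1}-1]\setminus[2^{\beta}-1]}$ (the leaves) carry a PCP $\bpi$ with $\mathbb{E}_{r}[\Tr(\Pi_r^{\PCP}\bpi)] \geq q - \negl(\lambda)$.

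The key idea for the second part is that upon a successful run, the swap-augmented prover has success probability $\approx q + \gamma/4$ with respect to the measurements $\sfPi^U_E$, where $E$ is the final extracted set. By definition of $\Pi^{U_r}_{r,E}$, running the swap-augmented prover on challenge $r$ means running $U_r$, opening the decommitments along $\Path(S_r)$, and swapping in the already-extracted messages $(\RegM_\ell')_{\ell \in E}$. I would argue that as the extraction proceeds, the set $E$ grows, and there must be a ``stable'' window --- this is where the random choice of $T$ comes in. Specifically, define $\phi(t)$ to be the number of tree nodes in $E$ after $t$ queries; $\phi$ is nondecreasing and bounded by $2^{\beta+1}-1 = \poly(\lambda)$, so for a uniformly random cutoff $T$ in a range of size $\geq 2^{\beta+3}/\gamma^2$, with probability $\geq 1 - \gamma/4$ (say) the extractor makes no ``new'' extraction in a suitably long sub-window, i.e., $\Path(S_r) \subseteq E$ already for the freshly sampled $r$'s in that window. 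When this happens, the swap-augmented prover's response on a random $r$ is, with high probability over $r$, computed \emph{entirely} from the extracted registers $(\RegM_\ell')_{\ell\in E}$ without touching the prover's live state for those positions; combined with the fact that $\Pi^{U_r}_{r,E}$ accepts with probability $\approx q+\gamma/4$, this means the reduced state on the leaf registers $(\RegM_\ell')_{\ell\in\{0,1\}^\beta}$ passes $\Pi_r^{\PCP}$ with probability $\geq q - \gamma/4 - \negl(\lambda) \geq q - \negl(\lambda)$ on average over $r$. I would make this precise by relating the ``success probability of the swap-augmented prover'' (an $\Est$-type quantity) to $\mathbb{E}_r[\Tr(\Pi_r^{\PCP}\bpi)]$ via the almost-projectivity of $\Est$ and the observation that, in the stable window, the $\SwapRecover$ unitary acts as identity on the registers holding the extracted leaves.

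The main obstacle I anticipate is making the ``stable window'' argument fully rigorous: I need to show that the number of queries $t$ at which a genuinely new node gets added to $E$ is at most $2^{\beta+1}-1$ (clear, since each such event enlarges $E$), and then that a uniformly random $T$ lands outside all the ``bad'' query indices with probability $\geq 1 - O(\gamma)$ --- here I must be careful because the bad indices are themselves random (they depend on the sampled challenges $r$), so I would condition on the trajectory of $E$ and argue that over the randomness of $T$ alone, the probability of hitting a query where the new $r$ satisfies $\Path(S_r)\not\subseteq E$ is small, using that there are at most $2^{\beta+1}-1$ such ``first-time'' queries in any trajectory. A secondary subtlety is the direction of the inequality between the $\Est$-estimated success probability and the true PCP-acceptance probability of the extracted state: I would use that $p_T \geq q+\gamma/4$ together with the $(\varepsilon,\delta)$-almost-projectivity guarantee (with $\varepsilon = \gamma/16T$, $\delta$ tiny) to conclude that an actual measurement of $\sfPi^U_E$ on the post-extraction state accepts with probability $\geq q + \gamma/4 - \varepsilon - \negl(\lambda) \geq q$, and then that in the stable window this measurement \emph{is} the PCP measurement applied to $\bpi$ (tensored with projectors checking the $\RegW_\ell = \ket{0}$ conditions, which only decrease acceptance probability), giving $\mathbb{E}_r[\Tr(\Pi_r^{\PCP}\bpi)] \geq q - \negl(\lambda)$ as desired.
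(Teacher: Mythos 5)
You have the right ingredients --- invoking \cref{claim:extractor-success}, bounding the number of fresh extractions by $O(2^\beta)$, and exploiting the random stopping time --- but the ``stable window / avoid bad indices'' framing has a genuine gap. You propose to argue that a uniformly random $T$ lands outside the at most $2^{\beta+1}-1$ query indices at which a fresh node was added to $E$. This only tells you that the particular challenge $r_T$ sampled at step $T$ did not trigger a fresh extraction; it does \emph{not} bound what the argument actually needs, namely the quantity $x(\brho_T) \coloneqq \Pr_{r \gets R}[\Pi^{U_r}_{r,E}\text{ accepts } \brho_T \text{ and } S_r \not\subseteq E]$ --- the probability that a \emph{fresh} random challenge would cause a new extraction on the stopping state. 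These two are not the same: a trajectory can go a long time with no fresh extraction while $x(\brho_t)$ stays large, simply because the realized $r_t$ happened to avoid the problematic positions. The paper avoids this by working directly in expectation. Letting $e_t$ denote the event that step $t$ produces an accepting opening with $S_{r_t}\not\subseteq E_t$, one shows
\[
\mathop{\mathbb{E}}_{T}\mathop{\mathbb{E}}_{\brho}\bigl[x(\brho_T)\bigr] \;=\; \mathop{\mathbb{E}}_{t}\,\Pr[e_t] \;\le\; \frac{2^\beta}{\lceil 2^{\beta+3}/\gamma^2\rceil}
\]
using $\sum_t \mathbf{1}_{e_t}\le 2^\beta$ (each occurrence swaps out at least one new leaf), and then Markov's inequality gives $\Pr[x(\brho_T) > \gamma/4] \le \gamma/2$. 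No ``stable window'' is ever identified, and none needs to exist pointwise.

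A secondary issue is that your last step (``in the stable window, $\SwapRecover$ acts as identity on the extracted leaves'') is heuristic; in fact $\SwapRecover$ does not act as identity but swaps $\RegM_\ell$ with $\RegM'_\ell$. What the paper uses is the concrete inequality chain
\[
\mathop{\mathbb{E}}_r[\Tr(\Pi_r^{\PCP}\bpi)] \;\ge\; \Pr_r[\Pi^{U_r}_{r,E}\text{ accepts } \brho \text{ and } S_r\subseteq E] \;=\; \mathop{\mathbb{E}}_r[\Tr(\Pi^{U_r}_{r,E}\brho)] - x(\brho) \;\ge\; q + \tfrac{\gamma}{4} - x(\brho),
\]
where the first inequality holds because when $S_r\subseteq E$ the swap places the extracted leaf contents into the message registers before $\Pi_r^{\PCP}$ is applied, so $\Pi^{U_r}_{r,E}$ is a refinement of the PCP check on $\RegM'_{\mathrm{leaves}}$. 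You should write out this chain rather than appeal to ``acts as identity.'' Your concern about relating the $\Est$ estimate to the true acceptance probability is legitimate and is resolved the same way you sketch (almost-projectivity of $\Est$), so that part is fine.
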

\begin{proof}
Let $\brho_{\RegM',\RegC,\RegH}$ be the joint state of both the prover and extractor after running $\KnowledgeExt$ (not to be confused with the state $\bpi$ output by $\KnowledgeExt$). By \cref{claim:extractor-success}, with probability at least $p-q-\gamma/2-\negl(\lambda)$ the subroutine $\Extract^{U}_{\gamma/4,q+\gamma/4,T}$ outputs success and
\[
\mathbb{E}_{r \from R}[\Tr(\Pi_{r,E}^{U_r} \brho_{\RegM',\RegC,\RegH})] \ge q+\gamma/4.
\]
We divide the $\RegM'$ register into $\RegM' = (\RegM'_{\mathrm{internal}},\RegM'_{\mathrm{leaves}})$ so that the extracted proof corresponds to $\RegM'_{\mathrm{leaves}}$.
For any state $\brho_{\RegM',\RegC,\RegH}$, the extracted proof $\bpi = \Tr_{\RegC,\RegH,\RegM'_{\mathrm{internal}}}(\brho_{\RegM',\RegC,\RegH})$ satisfies
\begin{align*}
    \mathbb{E}_{r \from R}[\Tr(\Pi_r^{\PCP} \bpi)] &= \Pr_{r \from R}[\Pi_r^{\PCP} \text{ accepts } \bpi] \\
    &\ge \Pr_{r \from R}[\Pi_r^{\PCP} \text{ accepts } \bpi \text{ and } S_r \subseteq E] \\
    &\ge \Pr_{r \from R}[\Pi_{r,E}^{U_r} \text{ accepts } \brho_{\RegM',\RegC,\RegH} \text{ and } S_r \subseteq E] \\
    &= \Pr_{r \from R}[\Pi_{r,E}^{U_r} \text{ accepts } \brho_{\RegM',\RegC,\RegH}] - \Pr_{r \from R}[\Pi_{r,E}^{U_r} \text{ accepts } \brho_{\RegM',\RegC,\RegH} \text{ and } S_r \not\subseteq E] \\
    &= \mathbb{E}_{r \from R}[\Tr(\Pi_{r,E}^{U_r} \brho_{\RegM',\RegC,\RegH})] - \Pr_{r \from R}[\{\Pi_{r,E}^{U_r},\Id-\Pi_{r,E}^{U_r}\} \text{ accepts on } \brho_{\RegM',\RegC,\RegH} \text{ and } S_r \not\subseteq E] \\
    &\ge q + \gamma/4 - \Pr_{r \from R}[\Pi_{r,E}^{U_r} \text{ accepts } \brho_{\RegM',\RegC,\RegH} \text{ and } S_r \not\subseteq E].
\end{align*}
The second inequality follows because the measurement $\{\Pi_{r,E}^{U_r},\Id-\Pi_{r,E}^{U_r}\}$ checks that both the measurement $\{\Pi_r^{\PCP}, \Id - \Pi_r^{\PCP}\}$ accepts $\bpi$ and all commitment-decommitment pairs along the root to leaf path are valid.

For any state $\brho$ on registers $(\RegM',\RegC,\RegH)$, define
\[
x(\brho) = \Pr_{r \from R}[\Pi_{r,E}^{U_r} \text{ accepts } \brho \text{ and } S_r \not\subseteq E].
\]
We will show that $\Pr_{\brho \gets \KnowledgeExt}[x(\brho) > \gamma/4] \le \gamma/2$, which will complete the proof.

Recall that $\KnowledgeExt^{\widetilde{P}}(x,1^{\lceil 1/p \rceil},1^{\lceil 1/\gamma \rceil})$ runs $\Extract_{q+\gamma/4,\gamma/4,T}$, stopping at a random time $t \from [T]$. Consider an execution of $\Extract_{q+\gamma/4,\gamma/4,T+1}$, and let $e_t$ denote the event that both $b_t=1$ (i.e., the measurement $\{\Pi_{r,E_t}^{U_r},\Id-\Pi_{r,E_t}^{U_r}\}$ accepts for the randomly chosen $r$ at step $t$) and $S_t \not\subseteq E_t$, where $S_t$ and $E_t$ are the sets $S_r$ and $E$, respectively, at step $t$. Observe that
\begin{align} \label{eq:kilian-corollary-key-observation}
\mathop{\mathbb{E}}_{\brho \gets \KnowledgeExt}[x(\brho)] &= \mathop{\mathbb{E}}_{t \from [T]} \mathop{\mathbb{E}}_{\brho \gets \Extract_{q+\gamma/4,\gamma/4,t}}[x(\brho)] \\
&= \mathop{\mathbb{E}}_{t \from [T]} \Pr_{\substack{\brho \gets \Extract_{q+\gamma/4,\gamma/4,t}\\r \from R}}[\Pi_{r,E}^{U_r} \text{ accepts } \brho \text{ and } S_r \not\subseteq E] \\
&= \mathop{\mathbb{E}}_{t \from \{2,\dots,T+1\}} \Pr[e_t]
\end{align}
% because for all $t \in [T]$, the following two events occur with equal probability
% \begin{itemize}
%     \item $\KnowledgeExt^{\widetilde{P}}(x,1^{\lceil 1/p \rceil},1^{\lceil 1/\gamma \rceil})$ stops at time $t \in [T]$, leaving a joint state $\brho$, and moreover a subsequent measurement of $\{\Pi_{r,E}^{U_r},\Id-\Pi_{r,E}^{U_r}\}$ for a random $r \gets R$ accepts.
%     \item 
% \end{itemize}

% $\KnowledgeExt^{\widetilde{P}}(x,1^{\lceil 1/p \rceil},1^{\lceil 1/\gamma \rceil})$ runs $\Extract_{q+\gamma/4,\gamma/4,T}$, stopping at a random time $t \from [T]$, and thus the event that $[\Pi_{r,E}^{U_r} \text{ accepts } \brho \text{ and } S_r \not\subseteq E \text{ for }r \gets R]$ is e be interpreted as the probability that 

% . 
Since there are only $2^\beta$ blocks in the proof and each time $e_t$ happens the extractor will swap out a new block, we can have at most $2^\beta$ events $e_t$, i.e.,
\[
\frac{1}{T} \sum_{t=2}^{T+1} \mathbf{1}_{e_t} \le \frac{2^\beta}{T},
\]
where $\mathbf{1}_{e_t}$ is the indicator for the event $e_t$.

Taking expectations and applying \cref{eq:kilian-corollary-key-observation},
\[
\mathop{\mathbb{E}}_{\brho}[x(\brho)] \le \frac{2^\beta}{T}.
\]
Finally, by Markov's inequality we have
\[
\Pr_{\brho}[x(\brho) > \gamma/4] \le \frac{2^{\beta+2}}{T \cdot \gamma} \le \frac{\gamma}{2}. \qedhere
\]
\end{proof}

\newpage

\section{Computationally sound quantum sigma protocols}
\label{sec:quantum-sigma}

In this section, we use our techniques to prove computational soundness for \emph{quantum sigma protocols} instantiated with our hiding and binding QSCs. This gives a quantum analogue of a standard template for building classical sigma protocols from any \emph{zero-knowledge PCP} (zk-PCP) and any hiding-binding commitment. While quantum sigma protocols were previously considered by Broadbent and Grilo~\cite{FOCS:BroGri20}, the results in this section differ from~\cite{FOCS:BroGri20} in several respects:
\begin{itemize}
    \item As discussed in~\cref{subsec:related}, the \cite{FOCS:BroGri20} analysis of computational soundness was incomplete because it was missing the rewinding analysis.
    \item As an additional contribution, we also place their protocol in a general framework using our new abstraction of hiding and binding QSCs. We also consider a slightly more general definition of quantum zero-knowledge PCPs than~\cite{FOCS:BroGri20}.
\end{itemize}

We begin by defining quantum zk-PCPs, a quantum analogue of classical zk-PCPs.\footnote{We refer the reader to Ishai's blog post \cite{Ishai20} for background on classical zk-PCPs.} Since we are not concerned with succinctness in this section, it suffices to consider quantum PCPs with soundness $1-1/\poly(n)$, which are known for all of $\QMA$~\cite{FOCS:BroGri20}. 
\begin{remark}
Our definition of quantum zk-PCPs differs slightly from the the definition of \emph{locally simulatable proofs} due to \cite{FOCS:BroGri20}. In particular, we require the simulator generates the mixed state corresponding to the honest PCP verifier's view, rather than (a classical description of) the reduced density matrix \emph{for all subsets} of qubits up to a certain size as in~\cite{FOCS:BroGri20}. Our definition captures both the locally simulatable proofs of~\cite{FOCS:BroGri20} as well as classical zk-PCPs such as~\cite{FOCS:GolMicWig86} 3-coloring.\footnote{The 3-coloring zk-PCP does not satisfy the~\cite{FOCS:BroGri20} definition of a locally simulatable proof, since it is not possible to generate the verifier's view for pairs of vertices that do not share an edge.}
\end{remark}

\begin{definition}[Quantum zero-knowledge PCPs]
\label{def:quantum-zk-pcp}
A quantum zero-knowledge PCP for a language $L$ is parameterized by a completeness parameter $c$, soundness $s$, proof length $m$, randomness complexity $\ell$, and query complexity $q$. We require the following properties: 
\begin{itemize}
    \item (Efficient verification) There is a classical $\poly(n)$ time procedure that takes as input $x \in \{0,1\}^n$ and $r \in \{0,1\}^{\ell}$ and outputs the description of circuit for implementing a $q$-qubit projective measurement  $\{\Pi^{\PCP}_{x,r},\Id-\Pi^{\PCP}_{x,r}\}$, which acts on a state of size $m$. 
    \item (Honest-verifier zero knowledge) Let $Q_r \subset [m]$ denote the size-$q$ subset of indices that $\{\Pi^{\PCP}_{x,r},\Id-\Pi^{\PCP}_{x,r}\}$ checks. There exists an efficient quantum algorithm $\mathsf{PCPSim}(x)$ that outputs a mixed state of the form \[\frac{1}{2^\ell}\sum_{r \in \{0,1\}^\ell} \ketbra{r} \otimes \brho_{Q_r},\]
    such that for any $x \in L$, there exists an $m$-qubit locally simulatable proof $\bpi$ satisfying completeness, i.e.,
    \[ \displaystyle \mathop{\mathbb{E}}_{r \gets \{0,1\}^{\ell}} \Tr(\Pi_{x,r}^{\PCP} \bpi) \geq c,\]
    and the following zero-knowledge property:
    \[
        \norm{ \frac{1}{2^{\ell}} \sum_{r \in \{0,1\}^{\ell}} \ketbra{r} \otimes \Tr_{\overline{Q_r}}(\bpi) - \frac{1}{2^{\ell}} \sum_{r \in \{0,1\}^{\ell}} \ketbra{r} \otimes \brho_{Q_r} }_1 = \negl(n).
    \]
    \item (Soundness) If $x \not\in L$, then for any $m$-qubit state $\bpi$,
    \[ \displaystyle \mathop{\mathbb{E}}_{r \gets \{0,1\}^{\ell}} \Tr(\Pi_{x,r}^{\PCP} \bpi) \leq s.\]
\end{itemize}
\end{definition}

\begin{lemma}[\cite{FOCS:BroGri20}]  \label{lemma:BG-simulatable-qma}
    Every language in $\QMA$ has a zero-knowledge quantum PCP with completeness $c(n) \geq 1-\negl(n)$ and soundness $s(n) \leq 1-1/\poly(n)$.  
\end{lemma}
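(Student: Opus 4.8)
The plan is to instantiate the construction of Broadbent and Grilo~\cite{FOCS:BroGri20} and check that it satisfies every clause of~\cref{def:quantum-zk-pcp} (which, as the preceding remark notes, is in some respects a \emph{weaker} requirement than their notion of locally simulatable proofs). First I would fix a convenient $\QMA$-complete problem: applying standard $\QMA$ amplification followed by the circuit-to-Hamiltonian construction, every $L\in\QMA$ reduces to a Hamiltonian $H_x = \frac{1}{N}\sum_{r\in\{0,1\}^\ell} H_{x,r}$ with $O(1)$-local projector terms $H_{x,r}$ and $N = 2^\ell = \poly(n)$, such that if $x\in L$ there is a history state with energy $\le\negl(n)$ (the propagation and input terms are exactly satisfied, and after amplification the output term contributes at most $2^{-n}/N$), while if $x\notin L$ every state has energy $\ge 1/\poly(n)$. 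Taking $\Pi^{\PCP}_{x,r}\coloneqq\Id - H_{x,r}$, letting $Q_r$ be the $q=O(1)$ qubits on which $H_{x,r}$ acts, and letting $m=\poly(n)$ be the number of qubits of the Hamiltonian, this already supplies the ``efficient verification'' syntax, and averaging the energy bound over a uniform $r\gets\{0,1\}^\ell$ gives completeness $c = 1-\negl(n)$ and soundness $s = 1-1/\poly(n)$.

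The only nontrivial point is honest-verifier zero knowledge, and this is exactly where the Broadbent--Grilo machinery (the $\QMA$-hardness of Consistency of Local Density Matrices) enters. Their result shows that the reduction above can be upgraded so that the proof $\bpi$ is an \emph{authenticated} history state: the witness qubits are protected by a quantum authentication code, each local check $\{\Pi^{\PCP}_{x,r},\Id-\Pi^{\PCP}_{x,r}\}$ now acts on a block $Q_r$ of $q=\poly(n)$ qubits, and there is an efficient classical procedure that, given $x$ and $r$, outputs a description of a density matrix $\sigma_{x,r}$ on $Q_r$ with two properties: (i) $\sigma_{x,r}$ is determined by the circuit alone and is independent of the $\QMA$ witness, since the small reduced states of the authentication code are fixed; and (ii) when $x\in L$ the honest authenticated history state $\bpi$ satisfies $\Tr_{\overline{Q_r}}(\bpi)\approx_s\sigma_{x,r}$ for every $r$, while still satisfying the completeness and soundness bounds from the previous paragraph. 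I would then simply define $\mathsf{PCPSim}(x)$ to sample a uniform $r\gets\{0,1\}^\ell$ into a register and prepare $\sigma_{x,r}$ on $Q_r$, i.e.\ to output $\frac{1}{2^\ell}\sum_r \ketbra{r}\otimes\sigma_{x,r}$; this is efficient by (i), and by (ii) it is $\negl(n)$-close in trace distance to $\frac{1}{2^\ell}\sum_r \ketbra{r}\otimes\Tr_{\overline{Q_r}}(\bpi)$, which is precisely the zero-knowledge condition of~\cref{def:quantum-zk-pcp}. Since we only need the \emph{mixed state} of the honest verifier's view rather than classical descriptions of all small reduced matrices, we do not invoke the stronger local-simulatability statement of~\cite{FOCS:BroGri20}; and, as the remark observes, a classical zk-PCP such as the \cite{FOCS:GolMicWig86} $3$-coloring PCP fits this template verbatim, with $\sigma_{x,r}$ the uniform distribution over pairs of distinct colors on a queried edge.

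The main obstacle is entirely concentrated in the cited step: designing the authentication so that it simultaneously (a) pins down the small reduced density matrices of the proof independently of the witness, (b) preserves completeness, so the authenticated history state still passes all local checks with overwhelming probability, and (c) preserves soundness, so a cheating prover cannot exploit the extra freedom the code introduces to pass checks with probability better than $s$. This is exactly the content of the Broadbent--Grilo $\QMA$-hardness theorem for Consistency of Local Density Matrices; given that we are permitted to cite it, the remaining work on our side is the bookkeeping above --- matching their verifier to our projective-measurement syntax, driving the completeness error down to $\negl(n)$ via amplification, and observing that our simulator requirement follows immediately from property (i) together with the closeness guarantee (ii).
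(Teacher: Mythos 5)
This lemma is imported from \cite{FOCS:BroGri20}; the paper gives no proof of its own, so there is no in-paper argument to compare against. Your sketch correctly reconstructs the intended instantiation: $\QMA$ amplification followed by circuit-to-Hamiltonian gives a local Hamiltonian $H_x = \frac{1}{N}\sum_r H_{x,r}$ with the required completeness/soundness gap, setting $\Pi^{\PCP}_{x,r} = \Id - H_{x,r}$ matches the paper's PCP syntax, and the Broadbent--Grilo CLDM machinery supplies exactly the witness-independence of the small reduced density matrices that you need to define $\mathsf{PCPSim}$ so that its output is $\negl(n)$-close in trace distance to the averaged honest-verifier view. You also correctly echo the remark preceding \cref{def:quantum-zk-pcp}: this paper's zk-PCP notion only asks for the honest verifier's view as a mixed state over the queried subsets, which is strictly weaker than BG's local simulatability, so the BG construction satisfies it a fortiori. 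The residual content you delegate to \cite{FOCS:BroGri20} --- that the authentication layer simultaneously pins down local views independently of the witness, preserves completeness, and preserves soundness --- is precisely the core of their theorem and is legitimately citable; for an imported lemma, this sketch is about as complete as one could reasonably expect.
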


We remark that because our definition allows mixed state PCPs, any classical zk-PCP is also a quantum zk-PCP. In particular, this means the zk-PCPs underlying the~\cite{FOCS:GolMicWig86} 3-coloring protocol and the~\cite{STOC:IKOS07} MPC-in-the-head protocols fall into our framework.

\paragraph{A generic quantum sigma protocol.} We now describe a generic template for quantum sigma protocols. The protocol is parameterized by a choice of zk-PCP, $\ZKPCP$, and quantum commitment scheme, $\QSC$, where:
\begin{itemize}
    \item $\ZKPCP$ is quantum zk-PCP for a language $L$, and
    \item $\QSC$ is a hiding-binding QSC for one-qubit messages.
\end{itemize}
For any $\ZKPCP$ and $\QSC$ satisfying the above, we denote the resulting quantum sigma protocol by $\QSigma[\ZKPCP,\QSC]$.\footnote{Broadbent and Grilo~\cite{FOCS:BroGri20} presented this protocol in the special case that the QSC was defined by the folklore construction (see \cref{sec:folklore-construction}).}

\begin{itemize}
    \item[] $\QSigma[\ZKPCP,\QSC]$:
    \begin{enumerate}
        \item[] \textbf{Prover input:} $x$ and a corresponding zk-PCP $\brho$ on $m=\poly(n)$ qubits $\bigotimes_{i=1}^{m} \RegM_i$.
        \item[] \textbf{Verifier input:} $x$.
        \item The prover applies $\Com_{\QSC}(\RegM_i) \rightarrow (\RegC_i, \RegD_i)$ and sends $\RegC_i$ for all $i \in [m]$ to the verifier.
        \item The verifier sends random coins $r$ to the prover.
        \item The prover sends $\RegD_i$ for all $i \in Q_r$ to the verifier.
        \item The verifier applies $\Com_{\QSC}^{\dagger}(\RegC_i, \RegD_i) \rightarrow (\RegM_i, \RegW_i)$ and measures $\RegW_i$ with $\{\ketbra{0},  \Id - \ketbra{0}\}$ for all $i \in Q_r$; it aborts if the outcome is not $\ketbra{0}$. Finally, the verifier measures $\{\Pi_r^{\PCP}, \Id - \Pi_r^{\PCP}\}$. It accepts if the measurement accepts.
    \end{enumerate}
\end{itemize}

\subsection{Argument of knowledge}

\begin{theorem} \label{theorem:qsigma-soundness}
    Let $\ZKPCP$ be a probabilistically checkable proof for a language $L$. Then $\QSigma[\ZKPCP, \QSC]$ is a computational (resp. statistical) proof of knowledge for $L$ with respect to the $\ZKPCP$ verifier (\cref{def:argument-of-knowledge}) if $\QSC$ satisfies computational (resp. statistical) binding. 
\end{theorem}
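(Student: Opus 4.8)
The plan is to build a knowledge extractor for $\QSigma[\ZKPCP,\QSC]$ by essentially re-running the machinery of \cref{sec:rewinding}, only in a dramatically simpler setting. Since the commitments here are to single qubits (not tree commitments to large states), there is no need for the swap-augmented-prover abstraction across many tree layers; instead, the extractor maintains a set $E \subseteq [m]$ of qubit indices it has already swapped out into extracted registers $(\RegM_i')_{i \in [m]}$, and runs the $\BitExtract$-style loop (\cref{claim:bit-extractor}): estimate the prover's success probability with $\Est$, send a random challenge $r$, apply the binary measurement checking whether the prover opens $\RegD_i$ validly for $i \in Q_r$ and passes $\{\Pi^{\PCP}_r, \Id - \Pi^{\PCP}_r\}$, and if it accepts, swap out the messages at indices in $Q_r \setminus E$ (analogues of $\SwapRecover$/$\SwapDiff$ specialized to depth-one commitments), then run $\Repair$. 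Finally, output the extracted registers $(\RegM_i')_{i \in [m]}$ as the candidate PCP $\bpi$.

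First I would set up the exact analogues of the definitions in \cref{subsec:extractor-description}: the projections $\Pi^{U_r}_{r,E}$ that run the prover unitary $U_r$, apply $\Com^\dagger$ to the opened commitment-decommitment pairs for $i \in Q_r$, check the $\RegW_i$ registers are $\ket{0}$, swap in previously-extracted messages from $(\RegM_i')_{i \in E}$, and apply $\Pi^{\PCP}_r$; and the simplified $\SwapRecover^{U_r}_{Q_r,E}$ / $\SwapDiff^{U_r}_{Q_r,E}$ unitaries. Then I would define the hybrid extractors $\HybExtract[j]$ for $j = 0,\dots,m$ that only ever swap out indices in $[j] \cap Q_r$, so $\HybExtract[0] = \BitExtract$ and $\HybExtract[m] = \Extract$. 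The core step is to show $\Pr[\HybExtract[j-1] \text{ succeeds}] \approx \Pr[\HybExtract[j] \text{ succeeds}]$ up to $\negl(\lambda)$ by a reduction to oracle swap binding (\cref{def:oracle-binding}) of the $j$-th commitment, invoking \cref{theorem:oracle-binding-security} to get this from plain swap binding. This reduction runs $\HybExtract[j-1]$ normally until the first accepting response that opens commitment $j$, at which point it hands the valid $(\RegC_j,\RegD_j)$ pair to the oracle swap binding challenger and thereafter implements all operations that would touch the committed message of node $j$ via the provided oracle $G_b$ — exactly mirroring $\Reduction[j]$ from the proof of \cref{claim:extractor-success}, but without the tree-traversal bookkeeping in $\SwapRecover^{U_r}_{S,E,j}$. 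Since $m = \poly(n)$, a union bound over the $m$ hybrids gives that $\Extract$ succeeds with probability $\geq p - q - \gamma - \negl(\lambda)$, where $p$ is the prover's acceptance probability, by \cref{claim:bit-extractor}.

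The second main step is the analogue of \cref{claim:random-stop}: showing that stopping $\Extract$ at a uniformly random time $T \gets \{1,\dots,\lceil 4m/\gamma^2 \rceil\}$ yields, with noticeable probability, an extracted $\bpi$ with $\mathbb{E}_{r}[\Tr(\Pi^{\PCP}_r \bpi)] \geq q - \negl(\lambda)$. The argument is identical in structure: $\mathbb{E}_r[\Tr(\Pi^{\PCP}_r \bpi)] \geq \mathbb{E}_r[\Tr(\Pi^{U_r}_{r,E} \brho)] - \Pr_r[\Pi^{U_r}_{r,E} \text{ accepts } \brho \text{ and } Q_r \not\subseteq E]$, and the bad term has expectation at most $m/T$ over the random stopping time because each occurrence swaps out at least one new index and there are only $m$ indices, so Markov gives the bad term $\leq \gamma/4$ except with probability $\leq \gamma/2$. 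Combining with soundness of the zk-PCP (\cref{def:quantum-zk-pcp}) and the fact that $q$ can be taken close to $p$ completes the proof; for the statistical case one simply replaces "$\approx_c$'' by "$\approx_s$'' throughout since statistical swap binding implies statistical oracle swap binding by \cref{theorem:oracle-binding-security}. The main obstacle I expect is getting the oracle $\cO_j$ and the definition of $\Pi^{U_r,G_b}_{r,E}$ exactly right so that the reduction faithfully simulates both $\HybExtract[j-1]$ (challenger bit $b=0$) and $\HybExtract[j]$ (bit $b=1$) while only ever acting on $\RegM_j$ and registers independent of the $j$-th commitment — the same syntactic subtlety handled in \cref{sec:rewinding}, though here it is considerably milder because the opening for index $j$ is a single leaf with no path of ancestor commitments to re-derive.
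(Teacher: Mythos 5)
Your proposal is correct and follows essentially the same route as the paper: the paper's proof of this theorem literally says the argument is the proof of \cref{theorem:kilian-security} with the syntactic simplifications of replacing the tree structure by $m$ flat commitments, replacing $\Path(S)$ by $S$, and collapsing $\SwapRecover^{U_r}_{S,E}$ to the one-layer ``apply $\Com^\dagger$ and swap for each $i \in S$'' procedure, which is exactly what you spell out. The only nitpick is the stopping-time bound: mirroring \cref{claim:random-stop} with $m$ leaves requires $T \gets \{1,\dots,\lceil 8m/\gamma^2 \rceil\}$ (the paper uses $2^{\beta+3}/\gamma^2$ with $2^\beta$ leaves), not $4m/\gamma^2$, but this is an immaterial constant.
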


\begin{proof}
    The proof strategy is nearly identical to the proof of~\cref{theorem:kilian-security} and we will avoid repeating it in full. The only difference arises from the fact that instead of a Merkle tree of commitments to the $\PCP$, we have independent commitments for every block of the $\ZKPCP$ proof string. Thus, we can use the same proof as~\cref{theorem:kilian-security} with the following (syntactic) simplifications:
    
    \begin{itemize}
        \item There are now $m$ explicit commitments instead of $2^{\beta+1}-1$ (implicit) commitments. Correspondingly, the $\HybExtract_j$ procedures only need to be defined for $j \in \{0,1,\dots,m\}$.
        \item For any $S \subseteq [m]$, replace $\Path(S)$ with $S$.
        \item For any $S \subseteq [m]$ and $E \subseteq [m]$, replace $\SwapRecover^{U_r}_{S,E}$ with the algorithm that applies $\Com_{\QSC}^{\dagger}$ to $(\RegC_i, \RegD_i)$ to obtain $(\RegM_i, \RegW_i)$ for all $i \in S$. If $i \in E$, apply $\SWAP[\RegM_i, \RegM'_i]$ as well. 
        \item For any $S \subseteq [m]$, $E \subseteq [m]$, $j \in [m]$, replace $\SwapRecover^{U_r}_{S,E,j}$ with $\SwapRecover^{U_r}_{S \cap [j-1], E}$ (for $\SwapRecover^{U_r}_{S,E}$ as defined in the previous bullet).\qedhere
    \end{itemize}
\end{proof}

\subsection{Zero knowledge}
\label{subsec:zk}

We now prove that our quantum sigma protocol is zero knowledge against malicious verifiers whenever the size of the challenge space $\{0,1\}^\ell$ for the underlying $\ZKPCP$ is at most $\poly(\lambda)$. This is a standard application of Watrous's zero-knowledge rewinding lemma, which we recall below. We refer the reader to \cite[Section~3]{Watrous09} for a definition of zero-knowledge against quantum attacks.\footnote{Even though Watrous's paper focuses on post-quantum security of classical protocols, Watrous's zero-knowledge definitions only refer to the admissible super-operators (CPTP maps) induced by the interaction with the prover (or the simulator) and thus apply equally well to quantum-communication protocols.} 

\begin{lemma}[\cite{Watrous09}]
\label{lemma:watrous}
Let $Q$ be a quantum circuit that acts on $n+k$ qubits, where the first $n$ qubits may take an arbitrary state $\ket{\psi}$ as input and the remaining $k$ qubits are initially set to the state $\ket*{0^k}$. For any $n$-qubit state $\ket{\psi}$, define $p(\psi) \in [0,1]$ and (normalized) states $\ket{\phi_0(\psi)},\ket{\phi_1(\psi)}$ so that
\begin{align}
    Q \ket{\psi}\ket*{0^k} = \sqrt{p(\psi)} \ket{0}\ket{\phi_0(\psi)} + \sqrt{1-p(\psi)} \ket{1}\ket{\phi_1(\psi)}\label{eq:watrous}.
\end{align}

Suppose that there exists $p_0,q \in (0,1)$ and $\varepsilon \in (0,1/2)$ such that for all $n$-qubit states $\ket{\psi}$, 
\begin{enumerate}
    \item $\abs{p(\psi) - q} < \varepsilon$,
    \item $p_0(1-p_0) \leq q(1-q)$, and
    \item $p_0 \leq p(\psi).$
\end{enumerate}
Then there exists a general quantum circuit $R$ of size
\[\abs{R} = O\left(\frac{\log(1/\varepsilon)\abs{Q}}{p_0(1-p_0)}\right) \]
such that, for every $n$-qubit state $\ket{\psi}$, the output $\brho(\psi)$ of $R$ satisfies
\[ \bra{\phi_0(\psi)} \brho(\psi) \ket{\phi_0(\psi)} \geq 1- 16\varepsilon \frac{\log^2(1/\varepsilon)}{p_0^2(1-p_0)^2}.\]
\end{lemma}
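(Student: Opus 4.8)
The plan is to construct $R$ as the \emph{Watrous amplification} circuit and analyze it one input at a time by a two‑dimensional invariant‑subspace argument, following~\cite{Watrous09}. View $Q$ as acting on the $n$‑qubit input register $\RegJ$ (holding $\ket\psi$) and the $k$‑qubit workspace register $\RegW$ (initialized to $\ket{0^k}$), with the first output qubit playing the role of a success flag. Let $\Pi_{\mathrm{ok}}\coloneqq\ketbra{0}\otimes\Id$ be the projector onto the flag reading $0$, and let $P_{\mathrm{in}}\coloneqq Q(\Id_{\RegJ}\otimes\ketbra{0^k}_{\RegW})Q^\dagger$ be the projector onto the span of honestly‑prepared states. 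Define the reflections $S_{\mathrm{ok}}\coloneqq\Id-2\Pi_{\mathrm{ok}}$ and $S_{\mathrm{in}}\coloneqq 2P_{\mathrm{in}}-\Id = Q(\Id_{\RegJ}\otimes(2\ketbra{0^k}_{\RegW}-\Id))Q^\dagger$; both are efficiently implementable given $Q,Q^\dagger$, and crucially $S_{\mathrm{in}}$ only reflects the $k$‑qubit workspace about $\ket{0^k}$ and so needs no knowledge of $\ket\psi$. Put $G\coloneqq S_{\mathrm{in}}S_{\mathrm{ok}}$. The circuit $R$ computes $Q\ket\psi\ket{0^k}$, applies $G$ exactly $t$ times, and outputs all qubits except the flag, where $t$ is a fixed positive integer depending only on $p_0$ and $\varepsilon$ (not on $\psi$) of size $\Theta(\log(1/\varepsilon)/(p_0(1-p_0)))$; hence $|R| = O(t|Q|) = O(\log(1/\varepsilon)|Q|/(p_0(1-p_0)))$ as claimed.

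For correctness, fix an arbitrary $n$‑qubit $\ket\psi$, write $p\coloneqq p(\psi)$, and set $\ket g\coloneqq\ket 0\ket{\phi_0(\psi)}$ so that $\ket v\coloneqq Q\ket\psi\ket{0^k} = \sqrt p\,\ket g+\sqrt{1-p}\,\ket 1\ket{\phi_1(\psi)}$. The subspace $\cS_\psi\coloneqq\mathrm{span}\{\ket g,\,\ket 1\ket{\phi_1(\psi)}\}$ is two‑dimensional, contains $\ket v$, and is invariant under both $S_{\mathrm{ok}}$ (since $\ket g$ and $\ket 1\ket{\phi_1(\psi)}$ are eigenvectors of the flag reflection) and $S_{\mathrm{in}}$; equivalently, by Jordan's lemma the block of the product of two reflections that contains $\ket v$ is at most two‑dimensional, and generically it is exactly $\cS_\psi$. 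Inside the real plane $\cS_\psi$, $S_{\mathrm{ok}}$ is reflection about $\mathbb R\ket g$ and $S_{\mathrm{in}}$ is reflection about $\mathbb R\ket v$, which makes angle $\phi_p\coloneqq\arccos\sqrt p$ with $\ket g$; hence $G|_{\cS_\psi}$ is rotation by $2\phi_p$, and $G^t\ket v$ makes angle $(2t+1)\phi_p$ with $\ket g$ (angles modulo $\pi$). Tracing out the flag qubit gives $\langle\phi_0(\psi)|\rho(\psi)|\phi_0(\psi)\rangle\geq\cos^2((2t+1)\phi_p)$, so the whole task reduces to choosing a single $t$ that forces $(2t+1)\phi_p$ to be within a controlled distance of a multiple of $\pi$ for \emph{every} admissible $p$.

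This reduction is where the hypotheses enter and where the real work lies. Choose $t$ so that $(2t+1)\phi_q$ lands near the appropriate multiple of $\pi$; conditions (2) and (3), namely $p_0(1-p_0)\le q(1-q)$ and $p_0\le p(\psi)$, guarantee that the conservative parameter $p_0$ is consistent with the true value $p(\psi)$, so that a $t$ of size $\Theta(\log(1/\varepsilon)/(p_0(1-p_0)))$ — deliberately larger than the Grover‑optimal $\Theta(1/\sqrt{q(1-q)})$, which is what buys robustness and produces the $\log(1/\varepsilon)$ and the power of $p_0(1-p_0)$ in the bound — still drives $(2t+1)\phi_q$ to within $O(1/\log(1/\varepsilon))$ of its target. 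Then, since $\phi_p$ is smooth in $p$ with $|\mathrm d\phi_p/\mathrm d p| = 1/(2\sqrt{p(1-p)})$, which on the window $|p-q|<\varepsilon$ is $\le (1+o(1))/(2\sqrt{p_0(1-p_0)})$ by condition (2), hypothesis (1) yields $|\phi_{p(\psi)}-\phi_q| = O(\varepsilon/\sqrt{p_0(1-p_0)})$, so the accumulated angular error over $t$ steps is $O(t\varepsilon/\sqrt{p_0(1-p_0)}) = O(\varepsilon\log(1/\varepsilon)/(p_0(1-p_0))^{3/2})$; squaring this and combining with the $O(1/\log(1/\varepsilon))$ slack from the choice of $t$ gives $\langle\phi_0(\psi)|\rho(\psi)|\phi_0(\psi)\rangle\ge 1 - 16\varepsilon\log^2(1/\varepsilon)/(p_0^2(1-p_0)^2)$, uniformly in $\ket\psi$. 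Finally one disposes of the degenerate Jordan blocks (e.g.\ $p\in\{0,1\}$, where $\ket v$ is already an eigenvector of $G$ and $\rho(\psi)$ equals $\ketbra{\phi_0(\psi)}$ or is vacuous), for which the bound holds trivially.

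The main obstacle is precisely the bookkeeping in the third paragraph: committing to a single iteration count $t$ that is (i) oblivious to $\psi$ and to the exact value of $p(\psi)$, (ii) large enough to carry out the amplification, and (iii) not so large that the $\varepsilon$‑slack in $p$ compounds past the stated fidelity deficit. Everything else is linear algebra in a fixed two‑dimensional plane. I would follow the calculation of~\cite{Watrous09} for step (iii), taking care that the sign conventions for $\Pi_{\mathrm{ok}}$, $P_{\mathrm{in}}$, and the rotation sense of $G$ are mutually consistent, and that $S_{\mathrm{in}}$ is implemented as a workspace‑only reflection so that $R$ genuinely does not depend on $\ket\psi$.
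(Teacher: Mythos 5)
You should first note that the paper does not prove this statement at all: it is imported verbatim from \cite{Watrous09} and used as a black box, so the relevant comparison is with Watrous's original proof, whose structure is genuinely different from yours — his circuit $R$ interleaves \emph{measurements} of the success qubit with the reflection steps (apply $Q$, measure; on failure apply $Q^\dagger$, phase-flip the ancilla about $\ket*{0^k}$, apply $Q$, measure again, and so on), rather than applying a coherent, measurement-free Grover iterate $G^t$ with a fixed $t$. Your route has a genuine gap at its central step: the plane $\cS_\psi=\mathrm{span}\{\ket{0}\ket{\phi_0(\psi)},\ket{1}\ket{\phi_1(\psi)}\}$ is \emph{not} invariant under $S_{\mathrm{in}}$. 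That operator reflects about the whole subspace $\{Q\ket{\chi}\ket*{0^k}\}$, and the vector orthogonal to $\ket{v}=Q\ket{\psi}\ket*{0^k}$ inside $\cS_\psi$ need not be orthogonal to that subspace, since honest states $Q\ket{\chi}\ket*{0^k}$ for $\chi\neq\psi$ generally have components in $\cS_\psi$. The correct invariant structure is the Jordan decomposition of the pair $(\Pi_{\mathrm{ok}},P_{\mathrm{in}})$, and $\ket{v}$ typically spreads over many Jordan blocks with \emph{different} principal angles; $p(\psi)$ is only a weighted average of $\cos^2$ of those angles, so $G$ does not act as a rotation by a single angle $2\phi_{p(\psi)}$ on any subspace containing $\ket{v}$, and your angle-tracking analysis does not get off the ground.

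Even granting a single-angle picture, the fixed-iteration plan cannot reach the stated fidelity: as $t$ ranges over integers, $(2t+1)\phi_q \bmod \pi$ moves in steps of $2\phi_q$, so for a generic $q$ you can only guarantee landing within about $\phi_q$ of a multiple of $\pi$ — i.e.\ fidelity about $q$ — no matter how large $t$ is; coherent amplitude amplification overshoots as easily as it undershoots, and taking $t=\Theta(\log(1/\varepsilon)/(p_0(1-p_0)))$ buys no robustness (this is exactly what the intermediate measurements fix: conditioned on a success outcome the state collapses onto $\ket{0}\ket{\phi_0(\psi)}$, and one only needs to drive the failure probability down over rounds). Moreover, Watrous uses the ``for all $\ket{\psi}$'' hypothesis in an operator form that your sketch never exploits: writing $M\coloneqq(\Id\otimes\bra*{0^k})\,Q^\dagger(\ketbra{0}\otimes\Id)Q\,(\Id\otimes\ket*{0^k})$, the identity $\bra{\psi}M\ket{\psi}=p(\psi)$ together with condition (1) for \emph{every} $\psi$ forces $\norm{M-q\,\Id}_{op}<\varepsilon$; the proof then analyzes the ideal case $M=q\,\Id$ exactly and treats the real case as an $\varepsilon$-perturbation accumulated over the $t$ rounds, which is where the $\log^2(1/\varepsilon)/(p_0^2(1-p_0)^2)$ loss comes from. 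Using hypothesis (1) only pointwise, as you do, is insufficient, and conditions (2)--(3) enter his analysis through the per-round success probability, not through a derivative bound on $\phi_p$. I recommend abandoning the coherent $G^t$ construction and following the measured rewinding procedure of \cite{Watrous09}.
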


\begin{theorem}
    Suppose $\ZKPCP$ has randomness complexity $\ell = O(\log \lambda)$. Then if $\QSC$ satisfies computational (resp. statistical) hiding, $\QSigma[\ZKPCP, \QSC]$ satisfies computational (resp. statistical) zero knowledge.
\end{theorem}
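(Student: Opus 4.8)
The plan is to follow the standard simulation strategy for three-message public-coin protocols with a small challenge space, combined with Watrous's rewinding lemma (\cref{lemma:watrous}). Fix a malicious verifier $V^*$ with auxiliary input state $\ket{\psi}$. The simulator $\mathsf{Sim}$ proceeds as follows: first, it guesses a uniformly random challenge $r^* \gets \{0,1\}^\ell$ in advance. Since $\ell = O(\log\lambda)$, the guess is correct with probability $1/2^\ell = 1/\poly(\lambda)$, which is the source of the ``rewinding'' we need. Given the guess $r^*$, the simulator runs $\mathsf{PCPSim}(x)$ to obtain the reduced density matrix $\brho_{Q_{r^*}}$ on the qubits $Q_{r^*}$, prepares honest QSC commitments $\Com_{\QSC}(\RegM_i)$ to $\brho_{Q_{r^*}}$ on the slots $i \in Q_{r^*}$, and prepares honest QSC commitments to $\ket{0}$ on all slots $i \notin Q_{r^*}$. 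It sends all commitment registers $\{\RegC_i\}_{i \in [m]}$ to $V^*$, receives $V^*$'s challenge $r$, and: if $r = r^*$, it sends the decommitments $\{\RegD_i\}_{i \in Q_{r^*}}$ (which it can do, since it committed honestly to $\brho_{Q_{r^*}}$ on exactly those slots) and outputs $V^*$'s final state; if $r \neq r^*$, it declares failure and restarts.

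The first step is to argue that the transcript conditioned on $r = r^*$ is computationally (resp.\ statistically) indistinguishable from a real interaction conditioned on $r = r^*$. This follows from two ingredients: (i) the zero-knowledge property of $\ZKPCP$, which says that $\frac{1}{2^\ell}\sum_r \ketbra{r}\otimes \Tr_{\overline{Q_r}}(\bpi)$ is negligibly close to $\frac{1}{2^\ell}\sum_r \ketbra{r}\otimes\brho_{Q_r}$, so the honest prover's message on slots $Q_{r^*}$ is close to what $\mathsf{PCPSim}$ produces; and (ii) the hiding property of $\QSC$ (\cref{def:quantum-hiding}), applied via a hybrid argument over the slots $i \notin Q_{r^*}$, to replace the real prover's commitments to $\Tr_{\overline{Q_{r^*}}(\cdot)}$-consistent states on those slots with commitments to $\ket{0}$. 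More precisely, one walks through a sequence of hybrids: the real transcript; then replace the PCP $\bpi$ with $\mathsf{PCPSim}$'s output (statistical, by ZK); then, one slot at a time, for each $i\notin Q_{r^*}$, replace the commitment to the $i$th qubit with a commitment to $\ket{0}$ (computationally, by hiding --- each such step is a direct reduction to $\texttt{HideExpt}$, where the reduction plays the role of $V^*$ to finish the experiment). Since $m=\poly(n)$, the total loss is $\negl$. This shows the conditional output distributions match.

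The second step is to assemble the full simulator via Watrous's lemma. Let $Q$ be the quantum circuit that runs the above one-shot simulation attempt and writes a bit $0$ into a flag register if $r = r^*$ (success) and $1$ otherwise (failure); then $Q\ket{\psi}\ket{0^k} = \sqrt{p(\psi)}\ket{0}\ket{\phi_0(\psi)} + \sqrt{1-p(\psi)}\ket{1}\ket{\phi_1(\psi)}$ as in \cref{eq:watrous}, where $\ket{\phi_0(\psi)}$ encodes the (conditionally-correct) simulated output. The key point is that the success probability $p(\psi)$ is \emph{essentially independent of $\ket{\psi}$}: because the commitment registers $\{\RegC_i\}$ sent to $V^*$ are (up to negligible computational/statistical distance) independent of which $r^*$ was guessed --- this is exactly the hiding property again, now used to argue that $V^*$'s view before its challenge is independent of $r^*$ --- the distribution of $V^*$'s challenge $r$ is (almost) independent of $r^*$, and hence $p(\psi) = 1/2^\ell \pm \negl$. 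So we may take $q = 1/2^\ell$, $p_0 = 1/2^{\ell+1}$, and $\varepsilon = \negl(\lambda)$ in \cref{lemma:watrous}, all of which are $1/\poly(\lambda)$ (or smaller); the lemma then yields a $\poly(\lambda)$-size circuit $R$ whose output is within $\negl(\lambda)$ of $\ket{\phi_0(\psi)}$ for every $\ket{\psi}$. Combining with the first step, $R$'s output is computationally (resp.\ statistically) indistinguishable from the real verifier's output, which is the zero-knowledge guarantee.

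\textbf{Main obstacle.} The crux is establishing that $p(\psi)$ is close to a fixed constant $q$ uniformly over all auxiliary states $\ket{\psi}$ --- this is precisely condition (1) of Watrous's lemma, and it is where hiding of $\QSC$ is doing the real work (in the computational case, one must be careful that the reduction witnessing ``$V^*$'s pre-challenge view is $r^*$-independent'' is efficient, which it is, since $V^*$ and $\mathsf{PCPSim}$ are efficient and $m,2^\ell$ are polynomial). A secondary subtlety is bookkeeping the two notions of closeness (statistical from ZK, computational from hiding) through the hybrid chain and through Watrous's lemma; since Watrous's lemma is purely information-theoretic and is applied to a fixed efficient circuit $Q$, it composes cleanly with computational indistinguishability by a standard argument. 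One should also double check the edge case $\ell = 0$ or tiny challenge spaces, but $\ell = O(\log\lambda)$ suffices for $p_0(1-p_0) = 1/\poly(\lambda)$, keeping $|R| = \poly(\lambda)$.
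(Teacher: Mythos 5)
Your proof is correct and follows essentially the same route as the paper: guess a challenge $r^*$ (the paper does this coherently by running $\mathsf{PCPSim}$ as a unitary and keeping the register $\RegR$), commit to $\mathsf{PCPSim}$'s output on $Q_{r^*}$ with junk elsewhere, abort if the verifier's challenge differs, show $p(\psi)\approx 1/2^\ell$ uniformly over $\psi$ by a reduction to QSC hiding, and amplify via Watrous's lemma with $q=1/2^\ell$, $p_0=1/2^{\ell+1}$, $\varepsilon=\negl(\lambda)$. One bookkeeping quibble: in your conditional-correctness hybrid chain the hiding steps over $\overline{Q_{r^*}}$ should come \emph{before} the ZK step, since ``replace $\bpi$ with $\mathsf{PCPSim}$'s output'' is not well-defined while the unopened (and possibly entangled) slots in $\overline{Q_{r^*}}$ are still committed to parts of $\bpi$ --- ZK only relates the reduced states $\Tr_{\overline{Q_r}}(\bpi)$ and $\brho_{Q_r}$, so one first decouples $\overline{Q_{r^*}}$ via hiding and then swaps the residual state on $Q_{r^*}$ via ZK (the paper also omits the formal details of this final step).
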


\begin{proof}
For any instance $x$ and security parameter $\lambda$, we will define a unitary $U_{x,\lambda}$ acting on the following registers:
\begin{itemize}
    \item An $m$-qubit register $\RegM = (\RegM_1,\dots,\RegM_m)$ where each $\RegM_i$ is a one-qubit register.
    \item A $dm$-qubit register $\RegW = (\RegW_1,\dots,\RegW_m)$ where each $\RegW_i$ is a $d(\lambda)$-qubit register.
    \item A pair of $\ell$-qubit registers $\RegR, \RegR'$. When we run the simulator, $\RegR$ will be a register containing uniform PCP verifier randomness $r$ and $\RegR'$ will contain the string output by the malicious verifier $\widetilde{V}$.
    \item An ancilla register $\RegS$ corresponding to the workspace of $\mathsf{PCPSim}_x$.
    \item A register $\RegV$ corresponding to the auxiliary quantum input for the malicious verifier $\widetilde{V}$.
    \item A one-qubit register $\RegB$.
\end{itemize}
All registers are initialized to $\ket{0}$ except $\RegV$, which is initialized to the auxiliary quantum input of $\widetilde{V}$. 

For a size-$q$ subset $Q_r \subset [m]$ and $q$-qubit state $\brho_{Q_r}$, let $\brho_{Q_r,\overline{Q}_r}$ denote the $m$-qubit state whose reduced density matrix on indices $Q_r$ is $\brho_{Q_r}$ and is all-zero everywhere else. We will assume that $\mathsf{PCPSim}_x$ is a unitary acting on $\RegR,\RegM,\RegS$ that, when applied to the all-zero state, produces a state whose reduced density matrix on $(\RegR,\RegM)$ has the form \[ \frac{1}{2^\ell}\sum_{r \in \{0,1\}^\ell} \ketbra{r}_{\RegR} \otimes \brho_{Q_r,\overline{Q}_r}. \]

We now define the unitary $U_{x,\lambda}$ to perform the following steps:
\begin{enumerate}
    \item Apply the unitary $\mathsf{PCPSim}_x$ to $(\RegR,\RegM,\RegS)$. 
    \item For each index $i \in [m]$, apply $\Com_{\QSC}$ to $(\RegM_i,\RegW_i)$ to obtain $(\RegC_i,\RegD_i)$.
    \item Apply the malicious verifier unitary $\widetilde{V}$ on $(\RegC_1,\dots,\RegC_m,\RegV,\RegR')$.
    \item Apply the unitary 
    \[\Id_{\RegB} \otimes \sum_{r \in \{0,1\}^\ell} \ketbra{r}_{\RegR} \otimes  \ketbra{r}_{\RegR'} + X_{\RegB} \otimes \sum_{r,s \in \{0,1\}^\ell, r\neq s} \ketbra{r}_{\RegR} \otimes \ketbra{s}_{\RegR'}, \]
    which applies the bit-flip operator $X$ to $\RegB$ controlled on the contents of $\RegR$ and $\RegR'$ being different (in the standard basis).
\end{enumerate}

Fix any $x \in L$. We will show that there exists a negligible function $\mu(\lambda)$ and a constant $\lambda_0$ such that for all $\lambda \geq \lambda_0$, the unitary $U_{x,\lambda}$ satisfies the conditions of~\cref{lemma:watrous} for $p_0 = 1/2^{\ell+1}$, $q = 1/2^\ell$, and $\varepsilon = \mu(\lambda)$ where:
\begin{itemize}
    \item $\ket{\psi}$ in~\cref{lemma:watrous} corresponds to the auxiliary input state on $\RegV$,
    \item $\ket{0^k}$ in~\cref{lemma:watrous} corresponds to the initial state on registers $(\RegM,\RegW,\RegR,\RegR',\RegS,\RegB)$,
    \item and the left-most qubit on the right-hand side of~\cref{eq:watrous} corresponds to $\RegB$.
\end{itemize}
Suppose otherwise. Then there exists a family of states $\{\ket{\psi_\lambda}_{\RegV}\}_{\lambda \in \mathbb{N}}$ and a constant $c>0$ such that for infinitely many $\lambda$, we have $\abs{p(\psi_\lambda) - 1/2^\ell} > 1/\lambda^c$. We will give a reduction breaking hiding (\cref{def:quantum-hiding}) of the $m$-qubit QSC corresponding to the $m$ different $1$-qubit QSCs in the protocol; this can be turned into a reduction breaking the hiding of one of the $1$-qubit QSCs by a standard hybrid argument. The reduction works as follows:
\begin{enumerate}
    \item Run $\mathsf{PCPSim}_x$ to obtain the mixed state \[\frac{1}{2^\ell}\sum_{r \in \{0,1\}^\ell} \ketbra{r}_{\RegR} \otimes (\brho_{Q_r,\overline{Q}_r})_{\RegM}.\]
    \item Send $(\RegM_1,\dots,\RegM_m)$ to the QSC hiding challenger.
    \item The challenger returns commitments $(\RegC_1,\dots,\RegC_m)$. Run $\widetilde{V}$ on $\RegC$ to obtain a challenge register $\RegR'$.
    \item Measure $\RegR$ and $\RegR'$ in the standard basis. If the two outcomes are different, output a random bit $b'$. If the outcome is the same, then:
    \begin{itemize}
        \item if $p(\psi_\lambda) > 1/2^\ell + 1/\lambda^c$, guess $b' = 0$, and
        \item if $p(\psi_\lambda) < 1/2^\ell - 1/\lambda^c$, guess $b' = 1$.
    \end{itemize}
\end{enumerate}
If the challenger's bit is $b = 1$ in the hiding experiment (\cref{def:quantum-hiding}), then the challenger returns commitments $(\RegC_1,\dots,\RegC_m)$ where the underlying messages are all-zero states. In this case, the probability that the reduction obtains the same challenge when it measures $\RegR$ and $\RegR'$ is $1/2^\ell$, since $\RegR$ is a uniform mixture over $r \in \{0,1\}^\ell$ independent of the malicious verifier's view.

If the challenger's bit is $b = 0$, then the probability that the reduction obtains the same challenge when it measures $\RegR$ and $\RegR'$ is $p(\psi_\lambda)$. An elementary calculation shows that the reduction guesses $b$ with probability $1/2 + O(1/\lambda^c)$, which violates the hiding of the QSC.

Thus, by applying~\cref{lemma:watrous}, we can efficiently generate a state $\brho(\psi_\lambda)$ such that $\brho(\psi_\lambda)$ is within $1-\negl(\lambda)$ trace distance from $\phi_0(\psi_\lambda)$, the (normalized) state obtained by applying $U_{x,\lambda}$ on $\ket{\psi_\lambda}_{\RegV} \ket{0}_{\RegM,\RegW,\RegR,\RegR',\RegS,\RegB}$ and post-selecting on $\RegB = 0$. With one additional step, we can generate the simulated view: controlled on the value $r$ in $\RegR'$, generate a response register $\RegZ$ containing $\RegD_i$ for each $i \in Q_r$. The resulting simulated malicious verifier view is $(\RegC_1,\dots,\RegC_m,\RegR',\RegZ,\RegV)$. We omit the formal details of the rest of the proof, which are a direct analogue of Watrous's proof~\cite{Watrous09}. In short, the hiding of the commitments together with the zero-knowledge property of the quantum zk-PCP implies that the simulated view is indistinguishable from a view in which the $m$ committed qubits are a valid zk-PCP state $\bpi$. The latter corresponds to the view of the malicious verifier in the interaction with the honest prover initialized with $\bpi$. 
\end{proof}

\newpage

\ifsubmission
\else
\section*{Acknowledgments}

We thank Prabhanjan Ananth, James Bartusek, Andrea Coladangelo, Yael Kalai, Willy Quach, Nicholas Spooner, Vinod Vaikuntanathan and Umesh Vazirani for helpful discussions. We thank Luowen Qian for his detailed feedback on an early draft of this manuscript. We thank Alex Lombardi for many insightful conversations throughout the course of this project and countless suggestions that greatly improved the paper.

\fi

\newpage

\bibliographystyle{alpha}
\bibliography{abbrev3,crypto,references}

\appendix

\newpage

\section{Removing interaction from any QSC} 
\label{sec:non-interactive}

In this section, we define the syntax for an interactive QSC and define two security definitions for interactive QSCs called ``honest swap binding'' and ``honest hiding.'' We show that interactive QSCs satisfying these security definitions can be compiled to \emph{non-interactive} QSCs that are swap-binding (\Cref{def:swap-binding}) and hiding (\Cref{def:quantum-hiding}) in \Cref{theorem:interactive-to-noninteractive}. This transformation is similar to the round-collapse compiler of~\cite{AC:Yan22}. 

We begin by defining syntax for interactive QSCs.
\begin{definition}[Interactive QSC syntax]
\label{def:interactive_qsc_syntax}
An interactive quantum state commitment $\QSC$ between a quantum sender $S$ and a quantum receiver $R$ consists of an interactive commit phase $\langle S, R \rangle_{\mathrm{Com}}$ and unitary $\Recover$.
\begin{itemize}
    \item (Commitment phase) The sender commits to a state on message register $\RegM$ by engaging in a quantum interactive protocol $\langle S, R \rangle_{\mathsf{Com}}$ with the receiver. 
    \item (Opening phase) To open the commitment, the sender engages in a quantum interactive protocol $\langle S, R \rangle_{\mathsf{Open}}$ with the receiver. At the end of the protocol, the receiver applies a binary outcome projective measurement to decide whether to accept or reject the opening. If it accepts, the receiver recovers the originally committed message on some register $\RegM$. 
\end{itemize}
\end{definition}

For completeness, we require that the map induced by performing an honest commitment followed by an honest opening is the identity map on the message space $\cM$. 

\begin{definition}[Completeness]
For any quantum state commitment $\QSC$, let $\Phi_{\QSC}: \bfS(\cM) \rightarrow \bfS(\cM)$ be the CPTP map induced by (1) performing an honest commitment to $\RegM$ (2) performing an honest opening, and (3) replacing $\RegM$ with $\ket{0}$ if the receiver rejects. We say that a QSC scheme $\QSC$ is complete if \[\norm{\Phi_{\QSC}-\Id_{\bfS(\cM)}}_{\diamond} = 0.\] 
\end{definition}
In particular, this definition implies that if the sender honestly commits to a state $\ket{\psi}$ and later opens honestly, the receiver will accept with probability $1$ and recovers $\ket{\psi}$. It is easy to check that the syntax for a non-interactive QSC (\Cref{def:noninteractive_qsc}) is a special case of an interactive QSC. 

Next, we define honest hiding and honest swap binding.

\paragraph{Honest hiding and honest swap binding.} We define a security experiment \\$\texttt{HonHideExpt}_{\QSC,A,b}(\lambda)$ parameterized by a quantum commitment scheme $\QSC$, an interactive adversary $A$, a challenge bit $b \in \{0,1\}$, and a security parameter $\lambda$.

\begin{definition}[Honest hiding] \label{def:honest-hiding}
For a quantum commitment scheme $\QSC$, an interactive adversary $A$, a challenge bit $b \in \{0,1\}$, and a security parameter $\lambda$, define a security experiment $\emph{\texttt{HonHideExpt}}_{\QSC,A,b}(\lambda)$ as follows.

\indent $\emph{\texttt{HonHideExpt}}_{\QSC,A,b}(\lambda)$:
\begin{enumerate}
    \item The adversary $A$ prepares a message $\RegM$ and sends it to the challenger. 
    \item The challenger locally executes an entire commit phase between an honest sender and an honest receiver, where the honest sender commits to $\RegM$ if $b = 0$, or $\ket{0}$ if $b = 1$. The challenger sends the internal state $\RegC$ of the honest receiver to $A$.
    \item The output of the experiment is $b' \gets A$. 
\end{enumerate}

$\QSC$ is computationally (resp. statistically) honest-hiding if there exists a negligible function $\mu(\lambda)$ such that for all polynomial-time (resp. unbounded-time) quantum interactive adversaries $A$,
\[ \Pr_{b \gets \{0,1\}} [\emph{\texttt{HonHideExpt}}_{\QSC,A,b}(\lambda)=b] \leq \frac{1}{2} + \mu(\lambda). \]
\end{definition}

\begin{definition}[Honest swap binding] \label{def:honest-binding}
For a quantum commitment scheme $\QSC$, an interactive adversary $A$, a challenge bit $b \in \{0,1\}$, and a security parameter $\lambda$, define the security experiment $\emph{\texttt{HonSwapBindExpt}}_{\QSC,A,b}(\lambda)$ as follows.

\indent $\emph{\texttt{HonSwapBindExpt}}_{\QSC,A,b}(\lambda)$:
\begin{enumerate}
    \item The adversary $A$ sends the challenger a register $\RegM$. 
    \item The challenger locally executes an entire commit phase between an honest sender and an honest receiver, where the honest sender commits to $\RegM$ if $b = 0$, or $\ket{0}$ if $b =1$. The challenger sends the internal state $\RegD$ of the honest sender to $A$.
    \item The output of the experiment is $b' \gets A$. 
\end{enumerate}

$\QSC$ is computationally (resp. statistically) honest swap binding if there exists a negligible function $\mu(\lambda)$ such that for all polynomial-time (resp. unbounded-time) quantum interactive adversaries $A$,
\[ \Pr_{b \gets \{0,1\}} [\emph{\texttt{HonSwapBindExpt}}_{\QSC,A,b}(\lambda)=b] \leq \frac{1}{2} + \mu(\lambda). \]
\end{definition}

\paragraph{Compiling an interactive QSC into a non-interactive QSC.}

Consider the following non-interactive commitment scheme, compiled from a (potentially) interactive quantum commitment scheme $\QSC$:
\begin{itemize}
    \item Suppose the commit phase of $\mathsf{QSC}$ is $k$ rounds, where in round $i \in [k]$, the receiver applies a unitary $R_i$, sends a register $\RegX_i$ to the sender, who applies a unitary $S_i$ and sends back a register $\RegY_i$.
    \item Set 
    \begin{align} 
        \label{eq:appendix-compiled-U_com}
        \Com = S_k R_k S_{k-1} R_{k-1} \cdots S_1 R_1.
    \end{align}
\end{itemize}

\begin{theorem} \label{theorem:interactive-to-noninteractive}
    Let $\QSC$ be computationally (resp. statistical) honest swap binding and statistically (resp. computationally) honest hiding. Then the non-interactive commitment scheme $\Com$ defined by \Cref{eq:appendix-compiled-U_com} is computationally (resp. statistical) swap binding and statistically (resp. computationally) hiding.
\end{theorem}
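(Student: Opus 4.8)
The plan is to show the two implications (swap binding and hiding) separately, and to observe that both follow from essentially the same principle: in the compiled unitary $\Com = S_k R_k \cdots S_1 R_1$, the honest receiver's registers and the honest sender's registers are disjoint, so running $\Com$ coherently on $\ket{\psi}_\RegM\ket{0^\lambda}_\RegW$ produces exactly the joint state that would result from an honest interactive commit phase, with $\RegC$ being the receiver's internal state and $\RegD$ being the sender's internal state. The key structural fact I would establish first is this correspondence, together with its consequence: for any state on $(\RegC,\RegD)$ that lies in the image of $\Pi = \Com(\Id_\RegM\otimes\ketbra{0}_\RegW)\Com^\dagger$, applying $\Com^\dagger$ recovers a (message, $0^\lambda$) pair, and re-applying $\Com$ re-runs the honest commit phase on that message. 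This is where completeness of the interactive scheme is used.

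For \textbf{hiding}, I would argue as follows. By the hiding-binding-duality-style argument already invoked in \Cref{property:hiding-binding-duality} (specifically the observation from \cite[Lemma 14]{EC:Unruh16} that sending an invalid $(\RegC,\RegD)$ pair cannot help, and sending a valid one is equivalent to sending $\RegM$ in the $\Com$ basis), it suffices to show that $\Com$ is hiding against an adversary that sends a plain message register $\RegM$. Given such an adversary $A$ for the non-interactive hiding game, I build an adversary $A'$ for $\texttt{HonHideExpt}$: $A'$ receives $\RegM$ from $A$, forwards it to the honest-hiding challenger, receives back the honest receiver's internal state $\RegC$, and hands $\RegC$ to $A$. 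Since the honest receiver's internal state after an honest commit phase is exactly the $\RegC$ register of $\Com\ket{\psi}_\RegM\ket{0}_\RegW$ (in the $b=0$ case) or of $\Com\ket{0}_\RegM\ket{0}_\RegW$ (in the $b=1$ case), $A'$ perfectly simulates the non-interactive hiding game, so $A'$ wins with the same advantage. This gives the contrapositive.

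For \textbf{swap binding}, the reduction is to $\texttt{HonSwapBindExpt}$ but needs slightly more care because the swap-binding challenger receives $(\RegC,\RegD)$ from a \emph{malicious} sender rather than a plain message. Given a swap-binding adversary $A$ (sending $(\RegC,\RegD)$), I construct $A'$ for honest swap binding: $A'$ internally runs $A$ to get $(\RegC,\RegD)$, applies $\Com^\dagger$ and measures $\RegW$; if it rejects, $A'$ aborts and outputs a random bit (matching the non-interactive game's abort behavior); if it accepts, $A'$ now holds a valid message $\ket{\phi}$ on $\RegM$, which it sends to the honest-swap-binding challenger. The challenger honestly commits to $\ket{\phi}$ (if $b=0$) or to $\ket{0}$ (if $b=1$) and returns the honest sender's internal state $\RegD$. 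Because the compiled $\Com$ makes the honest sender's internal state coincide with the $\RegD$ register of $\Com$ applied to the appropriate (message, $0^\lambda$) input, this is precisely the distribution $A$ expects in the $b=0$ resp. $b=1$ world of $\texttt{SwapBindExpt}$ — in the $b=1$ world, recall that re-committing to $\ket{0}$ after swapping out $\RegM$ is exactly what the non-interactive challenger does. So $A'$ relays $\RegD$ to $A$ and outputs $A$'s guess, winning with the same advantage (the only loss is the event that $A$ sends an invalid $(\RegC,\RegD)$, which contributes equally to both worlds and hence does not affect the advantage). Efficiency is preserved in the computational case since $\Com,\Com^\dagger$ and the $\RegW$ measurement are all polynomial-time.

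The main obstacle I anticipate is purely bookkeeping: carefully matching up the registers so that ``the honest receiver's internal state'' and ``the honest sender's internal state'' after the interactive commit phase are literally the $\RegC$ and $\RegD$ subregisters produced by $\Com$, and verifying that the abort/random-bit conventions in $\texttt{SwapBindExpt}$, $\texttt{HonSwapBindExpt}$, and the hiding games line up so that no advantage is lost or spuriously gained. There is no deep technical difficulty — the interactive-to-non-interactive collapse is essentially the known trick of \cite{AC:Yan22}, the only new ingredient being that we must also track the message-swap in the $b=1$ world, which goes through because the swap happens on $\RegM$ after applying $\Com^\dagger$, entirely within the reduction's local view before it ever contacts the honest-binding challenger.
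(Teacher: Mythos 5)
Your proposal is correct and takes essentially the same route as the paper: both reduce to the observation that, after the completeness-preserving collapse $\Com = S_k R_k \cdots S_1 R_1$, the $\RegC$ (resp.\ $\RegD$) register of $\Com\ket{\psi}\ket{0}_\RegW$ is literally the honest receiver's (resp.\ sender's) internal state at the end of the interactive commit phase, and both invoke the hiding-binding duality / \cite[Lemma~14]{EC:Unruh16} observation to convert a swap-binding adversary that sends $(\RegC,\RegD)$ into one that sends $\RegM$ in the clear, which then matches the honest-swap-binding game exactly. One small simplification you could make: for the hiding direction the duality step is unnecessary, because the non-interactive hiding game ($\texttt{HideExpt}$, \cref{def:quantum-hiding}) already has the adversary sending a plain message register $\RegM$ rather than a $(\RegC,\RegD)$ pair, so the game is syntactically identical to $\texttt{HonHideExpt}$ with no reduction needed.
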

\begin{proof}
    The honest-hiding experiment for the interactive scheme is equivalent to the hiding experiment for the non-interactive scheme from the adversary's point of view because the challenger locally executes the entire commit phase. 
    
    Similarly, by \Cref{property:hiding-binding-duality}, the swap binding experiment for non-interactive commitments can be equivalently phrased by the adversary sending a message register $\RegM$, and receiving an honest decommitment to it in register $\RegD$. Thus, the honest-binding experiment for the interactive QSC is equivalent to the swap binding experiment for the non-interactive QSC. 
\end{proof}

\newpage

\section{Equivalences between binding definitions}
\label{sec:equivalences}

\subsection{Pauli binding}
\label{subsec:pauli-binding}

In this subsection, we present an alternative binding definition for QSCs we call \emph{Pauli binding}, which applies to non-interactive and interactive QSCs (\Cref{def:interactive_qsc_syntax}). Pauli binding is a generalization of sum-binding (see~\cite{EC:Unruh16}) in which the adversary picks an arbitrary Pauli operator $P$, receives a random bit $b$ from the challenger specifying either the $+1$ or $-1$ eigenspace of $P$, and wins if it can open to a message (that is measured to be) in that eigenspace. We will prove that this notion is equivalent to swap binding for non-interactive QSCs.

\begin{definition}[Pauli binding] \label{def:pauli-binding}
For a quantum commitment scheme $\QSC$, an interactive adversary $A$, and a security parameter $\lambda$, define a security experiment \\$\emph{\texttt{PauliBindExpt}}_{\QSC,A}(\lambda)$ as follows.

\indent $\emph{\texttt{PauliBindExpt}}_{\QSC,A}(\lambda)$:
\begin{enumerate}
    \item The adversary $A$ (acting as a malicious sender) engages in the commit phase of $\QSC$ with the challenger (acting as the receiver). Then, $A$ sends the description of an $n$-qubit Pauli operator $P$ to the challenger. 
    \item The challenger samples a random $b \gets \{-1,1\}$ and sends $b$ to the adversary. 
    \item $A$ engages in the opening phase of $\QSC$ with the challenger. If the challenger rejects the opening as invalid, then the experiment aborts and outputs $\fail$. 
    \item Next, the challenger measures $\{P^+, P^-\}$ on the opened message register $\RegM$. It outputs $\win$ if the outcome is $b$, and otherwise outputs $\fail$.
\end{enumerate}

$\QSC$ is computationally (resp. statistically) Pauli binding if there exists a negligible function $\mu(\lambda)$ such that for all polynomial-time (resp. unbounded-time) quantum interactive adversaries $A$,
\[ \Pr [\emph{\texttt{PauliBindExpt}}_{\QSC,A}(\lambda) \rightarrow \win] \leq \frac{1}{2} + \mu(\lambda). \]
\end{definition}

\begin{remark}
    Strictly speaking, Pauli binding assumes the messages have ``qubit structure,'' i.e., the dimension of the Hilbert space must be a power of $2$. It is possible to generalize this definition to qudits where $d$ is $\poly(\lambda)$, but it is not immediately obvious how to extend this definition to message spaces of arbitrary dimension. Thus, we only show an equivalence with swap-binding when the message space has qubit structure. We note that the swap binding definition is agnostic to the dimension of the message space.
\end{remark}

\begin{theorem}
    A non-interactive quantum state commitment scheme $\Com$ is swap binding if and only if it is Pauli binding. 
\end{theorem}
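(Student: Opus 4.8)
The plan is to prove both directions of the equivalence separately, using the hiding-binding duality framework and the fact that a valid commitment-decommitment pair $(\RegC,\RegD)$ is equivalent to the adversary simply handing over the committed message $\RegM$ in the $\Com$ basis (as in \cref{property:hiding-binding-duality}). In particular, since the challenger's verification measurement in both games fully collapses the last $\lambda$ qubits, it loses no generality to imagine that in both $\texttt{PauliBindExpt}$ and $\texttt{BindExpt}$ the adversary commits honestly to some state on $\RegM$ (possibly entangled with a side register $\RegR$) and then interacts with the challenger who has access to $\RegM$.

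For the direction \emph{swap binding $\Rightarrow$ Pauli binding}: suppose $A$ wins $\texttt{PauliBindExpt}$ with probability $1/2 + \varepsilon$. First I would use the consequence of swap binding sketched in \cref{subsubsec:consequences}: by a hybrid argument, swap binding implies that replacing the committed message $\RegM$ with $X^r Z^s \RegM$ for uniformly random $r,s$ (equivalently, with the maximally mixed state) is undetectable to the adversary even given $\RegD$ back. The key observation is that in $\texttt{PauliBindExpt}$, after the adversary commits and announces $P$, the challenger gives back the bit $b$ and the adversary opens; the opened message is then measured with $\{P^+,P^-\}$. I would build a swap-binding distinguisher that runs $A$ internally: it receives $A$'s commitment, and here is where the oracle swap binding machinery (\cref{theorem:oracle-binding-security}) becomes useful, since the reduction needs to apply the Pauli correction / measurement \emph{in the $\Com^\dagger$ basis on the committed message register} without holding $\RegC$. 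Concretely, in the $b=0$ world of swap binding the committed message is untouched, so $A$'s opening lands in the correct eigenspace with probability $1/2+\varepsilon$; in the $b=1$ world the message has been swapped for $\ket{0}$, which is independent of $b$, so $A$ succeeds with probability exactly $1/2$. The difference $\varepsilon$ is thus a swap-binding (really oracle-swap-binding) distinguishing advantage, giving $\varepsilon = \negl(\lambda)$.

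For the direction \emph{Pauli binding $\Rightarrow$ swap binding}: suppose $A$ wins $\texttt{BindExpt}$ with advantage $\varepsilon$, i.e., $A$ distinguishes whether the decommitment $\RegD$ it receives back corresponds to the original committed message or to $\ket{0}$ swapped in. By \cref{property:hiding-binding-duality} and the argument of \cite[Lemma 14]{EC:Unruh16}, this is the same as $A$ distinguishing an honest decommitment to its chosen message from an honest decommitment to $\ket{0}$. Now I would observe that any such distinguisher can be turned into a predictor for some Pauli observable: writing the distinguishing advantage in terms of $\Tr(R(\brho_0 - \brho_1))$ where $\brho_0,\brho_1$ are the two decommitment states on $(\RegD,\RegR)$, and expanding $R$ in the Pauli basis, there must exist a Pauli $P$ on the joint register witnessing a $1/\poly$ bias — but one has to be careful that $P$ must act only on the \emph{message} register $\RegM$ (in the $\Com^\dagger$ basis) to fit the Pauli binding game, not on $\RegD$ or $\RegR$. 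This is the step I expect to be the main obstacle: bridging from an arbitrary distinguisher on $\RegD$ to a Pauli measurement on $\RegM$. The resolution is again the duality: since $\brho_0$ and $\brho_1$ differ precisely by a swap of $\RegM$ with $\ket{0}$ \emph{before} reapplying $\Com$, the only difference visible in $\RegD$ is ``screened'' through $\Com$, and one can push the distinguishing task back onto $\RegM$ itself, where it becomes: distinguish the adversary's committed state from $\ket{0}$ given only side information — which is exactly a mapping/measurement task on $\RegM$ that a suitable Pauli eigenspace measurement detects (invoking the distinguishing-to-mapping equivalence, \cref{lemma:duality}, and the fact that detecting a fixed operation on a provided state is equivalent to a projective measurement). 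Once $P$ is identified, the Pauli-binding adversary commits exactly as $A$ does, announces $P$, receives $b$, opens exactly as $A$ would open in the $b=0$ branch of $\texttt{BindExpt}$, and the challenger's $\{P^+,P^-\}$ measurement outputs $b$ with probability $1/2 + \Omega(\varepsilon/\mathrm{poly})$, contradicting Pauli binding. I would conclude by noting that the qubit-structure hypothesis (flagged in the remark) is used exactly here, since the Pauli basis expansion requires the Hilbert space dimension to be a power of $2$.
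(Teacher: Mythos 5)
Your proposal splits into two directions, as does the paper's, and you correctly flag both the central role of the distinguish-to-map duality (Lemma 4.3) and the qubit-structure hypothesis. However, the routes you take diverge from the paper's in ways that leave real gaps.

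For \emph{Pauli binding} $\Rightarrow$ \emph{swap binding}, you identify ``bridging from an arbitrary distinguisher on $\RegD$ to a Pauli measurement on $\RegM$'' as the main obstacle, and you are right that it is; but the resolution you offer (expand the distinguishing observable $R$ in the Pauli basis on $(\RegD,\RegR)$, then ``push'' this back onto $\RegM$ via the duality) does not work as stated. A heavy Pauli component of $R$ lives on $(\RegD,\RegR)$, and conjugating through $\Com$ does not turn it into a Pauli on $\RegM$. The paper's key idea, which your proposal is missing, is to insert an intermediate hybrid in which the \emph{challenger} applies a uniformly random $n$-qubit Pauli $P$ to $\RegM$ before recommitting. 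This is sandwiched between the $b=0$ and $b=1$ worlds of $\texttt{BindExpt}$, so the adversary distinguishes one adjacent pair with advantage $\varepsilon/2$. Because that pair differs only by whether $\widehat{P}$ was applied for a \emph{random} $P$, $\mathbb{E}_P[\mathsf{DistAdv}_P] \ge \varepsilon/2$, and Lemma 4.3(ii) plus Jensen give $\mathbb{E}_P[\mathsf{MapAdv}_P] \ge (\varepsilon/2)^2$. The Pauli-binding adversary then simply \emph{samples} a random $P$, checks whether the commitment lies in the relevant eigenspace, and uses the mapping unitary $\Id-2\widetilde{\Pi}$ to steer between eigenspaces based on the challenge bit. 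There is no need to ``identify'' a special $P$; averaging does the work.

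For \emph{swap binding} $\Rightarrow$ \emph{Pauli binding}, your route through oracle swap binding (Theorem 5.3) is a heavier hammer than the paper uses, and it carries subtleties you have not handled: the swap-binding challenger verifies the commitment \emph{before} the simulated Pauli challenge bit is revealed (whereas the Pauli-binding challenger verifies only at opening time), and in the $b_{\mathrm{swap}}=1$ world the oracle applies $\mathcal{O}$ to $\RegM'$ (which holds the \emph{original} message), so your claim that the simulated win probability is ``exactly $1/2$'' is really ``at most $\Pr[\Pi\ \mathrm{accepts}]\cdot\tfrac{1}{2}$'' and needs an argument. The paper avoids all of this with a self-contained algebraic calculation: purifying the Pauli-binding adversary as $U_{\pm}$ applied to $\ket{\psi}$, setting $U=U_-U_+^\dagger$, and bounding $\norm{\widehat{P^-}\widehat{\ketbra{0}}_\RegW U \widehat{P^+}\widehat{\ketbra{0}}_\RegW \ket{\phi}}^2 \ge (2\varepsilon/3)^2$ through a short triangle-inequality chain, then feeding this mapping advantage into Lemma 4.3(i) and the same three-hybrid ($H_0$/$H_1$/$H_2$) structure. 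This is simpler, tighter, and does not invoke Theorem 5.3 at all. Your direction here is plausibly repairable, but as written it omits the care the paper takes and invokes machinery the paper shows is unnecessary.
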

\begin{proof}
    We first show that if $\Com$ is Pauli binding, then it is also swap binding. Suppose that an adversary $A$ distinguishes between the $b=0$ and $b=1$ worlds of the swap binding game with advantage $\varepsilon$. Consider the following three hybrids:
    \begin{itemize}
        \item $H_0$: In this hybrid, an adversary $A$ sends a commitment-decommitment pair $(\RegC, \RegD)$ to the challenger. The challenger applies $\Com^{\dagger}$, measures $\{\ketbra{0}, \Id - \ketbra{0}\}$ on register $\RegW$ (and aborts if the measurement rejects), applies $\Com$, and sends $\RegD$ to the adversary. 
        \item $H_1$: This hybrid is the same as $H_0$, except before applying $\Com$, the challenger applies a uniformly random $n$-qubit Pauli $P$ to $\RegM$ (thus making it maximally mixed).
        \item $H_2$: This hybrid is the same as $H_0$, except before applying $\Com$, the challenger replaces the contents of $\RegM$ with $\ket{0}$, i.e., initializes ancillary register $\RegM'$ to $\ket{0}$ and applies $\SWAP[\RegM, \RegM']$. 
    \end{itemize}
    Hybrids $H_0$ and $H_2$ are exactly the $b=0$ and $b=1$ worlds of the swap binding experiment for $\Com$, so $A$ distinguishes either $H_0$ and $H_1$ or $H_1$ and $H_2$ with advantage $\varepsilon/2$. Let's assume that $A$ distinguishes $H_0$ and $H_1$ with $\varepsilon/2$ advantage as the proof for $H_1$ and $H_2$ follows almost identically. 
    
    We purify the actions of the adversary $A$, so that right after sending $(\RegC, \RegD)$ to the challenger, $A$ and the challenger jointly hold a pure quantum state $\ket{\psi}_{\RegC \RegD \RegR}$, where $\RegR$ is held by $A$. Then after receiving $\RegD$ back from the challenger, $A$ measures a projector $\widetilde{\Pi}$ on $(\RegD, \RegR)$ to determine its output for the swap binding game. Recall that for any operator $O$, we define the notation $\widehat{O} \coloneqq (\Com) O (\Com^{\dagger})$. For any Pauli $P$ acting on register $\RegM$ define the quantities
    \[
        \mathsf{DistAdv}_P \coloneqq \abs{ \norm{\widetilde{\Pi}  \widehat{P} \widehat{\ketbra{0}}_{\RegW} \ket{\psi}}^2 - \norm{\widetilde{\Pi} \widehat{\ketbra{0}}_{\RegW} \ket{\psi}}^2 } 
    \]
    and 
    \[
        \mathsf{MapAdv}_{P} \coloneqq \norm{\widehat{P^-} \widehat{\ketbra{0}}_{\RegW} (\Id - 2 \widetilde{\Pi}) \widehat{P^+} \widehat{\ketbra{0}}_{\RegW}  \ket{\psi}}^2 .
    \]
    Then
    \begin{align*}
        \mathbb{E}_{P \gets \{I,X,Y,Z\}^{\otimes n}}[ \mathsf{MapAdv}_P ] &\geq \mathbb{E}_{P} \left[ (\mathsf{DistAdv}_P)^2 \right] \\
        &\geq \left( \mathbb{E}_{P} \left[ \mathsf{DistAdv}_P \right] \right)^2 \\
        &= (\varepsilon/2)^2,
    \end{align*}
    where the first inequality is by \cref{lemma:duality}, \cref{item:dist-to-map}, the second inequality is Jensen's inequality, and the last equality is because $A$ distinguishes $H_0$ and $H_1$ with advantage $\varepsilon/2$. 
    
    Then consider the Pauli binding adversary $A'$ that does the following:
    \begin{enumerate}
        \item Receive commitment-decommitment registers $(\RegC, \RegD)$ from $A$.
        \item Sample a random $n$-qubit Pauli $P$ and measure $\{\widehat{P^+} \widehat{\ketbra{0}}_{\RegW}, \Id - \widehat{P^+} \widehat{\ketbra{0}}_{\RegW}\}$ on $(\RegC, \RegD)$. Then:
            \begin{itemize}
                \item If the measurement accepts, send register $\RegC$ and Pauli $P$ to the challenger. Receive $b \in \{-1,1\}$ from the challenger. If $b=-1$, then apply $(\Id - 2\widetilde{\Pi})$, which can be done by coherently applying $A$, applying $Z$ to its output, and applying $A$ in reverse. If $b=1$, do nothing to the state. Then, send register $\RegD$ to the challenger.
                \item If the measurement rejects, prepare another commitment to a $+1$ eigenstate of $P$, i.e., initialize $\RegM'$ to a $+1$ eigenstate of $P$ and $\RegW'$ to $\ket{0}$, apply $\Com$ to $(\RegM', \RegW')$ (resulting in $(\RegC', \RegD')$), and send $\RegC'$ and $P$ to the challenger. Receive $b$ from the challenger. Then, send decommitment register $\RegD'$ to the challenger. 
            \end{itemize}
    \end{enumerate}
    
    We analyze the success probability of $A'$ in the Pauli binding game. Let $\win$ indicate the event in which $A'$ wins the Pauli binding game and let $\accept$ indicate the event that the measurement $\{\widehat{P^+} \widehat{\ketbra{0}}_{\RegW}, \Id - \widehat{P^+} \widehat{\ketbra{0}}_{\RegW}\}$ accepts. Then
    \begin{align*}
        \Pr[\win] &= \Pr[\win \wedge \accept \wedge (b=1)] + \Pr[\win \wedge \accept \wedge (b=-1)] + \Pr[\win \wedge \bar{\accept}] \\
        &\geq \frac{1}{2} \Pr[\accept] + \frac{1}{2} \mathbb{E}[\mathsf{MapAdv}_P] +  \frac{1}{2} (1 - \Pr[\accept]) \\
        &\geq \frac{1}{2} + \frac{\varepsilon^2}{8} .
    \end{align*}
    Thus $A'$ wins the Pauli binding game with noticeable advantage.

    We now show that swap binding implies Pauli binding. Let $A$ be an adversary that wins the Pauli binding game for $\Com$ with probability $1/2 + \varepsilon$ using Pauli operator $P$. Let us purify its actions so that it sends the $\RegC$ register of a pure quantum state $\ket{\psi}_{\RegC \RegD \RegR}$ to the challenger, and applies unitary operations $U_{-}$ and $U_{+}$ upon receiving $b=-1$ and $b=1$, respectively. Then letting $U \coloneqq U_{-}U_{+}^{\dagger}$ and $\ket{\phi} \coloneqq U_+ \ket{\psi}$, adversary $A$ winning the Pauli binding game with probability $1/2 + \varepsilon$ is equivalent to:
    \begin{align}
        1 + 2\varepsilon &\leq \norm{\widehat{P^+} \widehat{\ketbra{0}}_{\RegW} U_{+} \ket{\psi}}^2 + \norm{\widehat{P^-} \widehat{\ketbra{0}}_{\RegW} U_{-} \ket{\psi}}^2 \\
        &= \norm{\widehat{P^+} \widehat{\ketbra{0}}_{\RegW} \ket{\phi}}^2 + \norm{\widehat{P^-} \widehat{\ketbra{0}}_{\RegW} U \ket{\phi}}^2 . \label{eq:bounding-alpha-a-b}
    \end{align}
    
    Let us define (sub-normalized) quantum states
    \begin{itemize}
        \item $\ket{\alpha} \coloneqq \widehat{P^+} \widehat{\ketbra{0}}_{\RegW} \ket{\phi}$.
        \item $\ket{a} \coloneqq \widehat{P^-} \widehat{\ketbra{0}}_{\RegW} U \widehat{P^+} \widehat{\ketbra{0}}_{\RegW}  \ket{\phi}$.
        \item $\ket{b} \coloneqq \widehat{P^-} \widehat{\ketbra{0}}_{\RegW} U (\Id - \widehat{P^+} \widehat{\ketbra{0}}_{\RegW})  \ket{\phi}$ .
        \item $\ket{c} \coloneqq (\Id - \widehat{P^+} \widehat{\ketbra{0}}_{\RegW})  \ket{\phi}$
    \end{itemize}
    Then 
    \begin{align*}
        1 + 2\varepsilon &\leq \norm{\ket{\alpha}}^2 + \norm{\ket{a} + \ket{b}}^2 \\
        &\leq \norm{\ket{\alpha}}^2 + (\norm{\ket{a}} + \norm{\ket{b}})^2 \\
        &= \norm{\ket{\alpha}}^2 + \norm{\ket{a}}^2 + 2 \norm{\ket{a}} \norm{\ket{b}} + \norm{\ket{b}}^2 \\ 
        &\leq \norm{\ket{\alpha}}^2 + \norm{\ket{a}}^2 + 2 \norm{\ket{a}} \norm{\ket{b}} + \norm{\ket{c}}^2 \\
        &\leq 1 + \norm{\ket{a}}^2 + 2 \norm{\ket{a}} \norm{\ket{b}} \\
        &\leq 1 + \norm{\ket{a}}^2 + 2 \norm{\ket{a}} \\
        &\leq 1 + 3 \norm{\ket{a}} ,
    \end{align*}
    where the first inequality is a restatement of \Cref{eq:bounding-alpha-a-b}, the second inequality is triangle inequality, the third inequality is because $\norm{\ket{b}} \leq \norm{\ket{c}}$, the fourth inequality is because $\ket{\alpha} \perp \ket{c}$ and $\norm{\ket{\alpha} + \ket{c}}^2 \leq 1$, the fifth inequality is because $\norm{\ket{b}} \leq 1$, and the last inequality is because $\norm{\ket{a}} \leq 1$. 
    
    Thus
    \begin{align*}
        (2\varepsilon/3)^2 \leq \norm{\ket{a}}^2 = \norm{ \widehat{P^-} \widehat{\ketbra{0}}_{\RegW} U \widehat{P^+} \widehat{\ketbra{0}}_{\RegW}  \ket{\phi} }^2.
    \end{align*}
    By \cref{lemma:duality}, \cref{item:map-to-dist}, if we define $\Pi \coloneqq \ctl_{\RegB} \mh U \ketbra{+}_{\RegB} \ctl_{\RegB} \mh U^{\dagger}$ and $\ket{\gamma} \coloneqq \ctl_{\RegB} \mh U (\ket{+}_{\RegB} \otimes (\widehat{P^+} \widehat{\ketbra{0}}_{\RegW}) \ket{\phi} )$, then
    \begin{align}
        \label{eq:dist-advantage-P}
        \abs{\norm{\Pi \widehat{\ketbra{0}}_{\RegW} \ket{\gamma}}^2 - \norm{\Pi \widehat{P} \widehat{\ketbra{0}}_{\RegW} \ket{\gamma}}^2 } \geq 2\varepsilon^2 / 9.
    \end{align}
    In words, the measurement $\{\Pi, \Id-\Pi\}$ distinguishes whether $\widehat{P}$ is applied or not to $\widehat{\ketbra{0}}_{\RegW} \ket{\gamma}$. 
    
    Then consider the following three hybrids:
    \begin{itemize}
        \item $H_0$: In this hybrid, the adversary sends commitment and decommitment registers $(\RegC, \RegD)$ to the challenger. The challenger measures $\{\widehat{\ketbra{0}}_{\RegW}, \Id - \widehat{\ketbra{0}}_{\RegW}\}$. If the measurement rejects, then abort. Otherwise, send $\RegD$ back to the adversary.
        \item $H_1$: This hybrid is the same as $H_0$, except before sending $\RegD$ back to the adversary, the challenger initializes a new register $\RegE$ (of the same dimension as $\RegM$) to $\ket{0}$, then applies $\widehat{\SWAP}[\RegM, \RegE]$.
        \item $H_2$: This is the same as $H_0$, except before sending $\RegD$ back to the adversary, the challenger applies $\widehat{P}$ to $(\RegC, \RegD)$.
    \end{itemize}
    
    \Cref{eq:dist-advantage-P} implies that the adversary $A'$ that does the following achieves a distinguishing advantage $2\varepsilon^2 / 9$ between hybrids $H_0$ and $H_2$:
    \begin{enumerate}
        \item Prepare the state $\widehat{\ketbra{0}}_{\RegW} \ket{\gamma}$ from $\ket{\psi}_{\RegC \RegD \RegR}$ by applying $U_{+}$, then measuring $\{\widehat{P^+} \widehat{\ketbra{0}}_{\RegW}, \Id - \widehat{P^+} \widehat{\ketbra{0}}_{\RegW}\}$, initializing a qubit register $\RegB$ to $\ket{+}$, applying $\ctl_{\RegB} \mh U$, and measuring $\{\widehat{\ketbra{0}}_{\RegW}, \Id - \widehat{\ketbra{0}}_{\RegW}\}$. 
        \begin{itemize}
            \item If both measurements accept, then the resulting state is (normalized) $\ket{\gamma}$. Send $\RegC$ and $\RegD$ registers to the challenger, and receive $\RegD$ back from the challenger. Then output the result of measuring $\{\Pi, \Id - \Pi\}$.
            \item If either of the measurements reject, just output a random bit at the end of the hybrids. That is, prepare any valid commitment and decommitment registers $(\RegC', \RegD')$ by initializing new registers $(\RegM', \RegW')$ to $\ket{0}$ and applying $\Com$ to the registers. Send $(\RegC', \RegD')$ to the challenger, receive $\RegD'$, then output a random bit. 
        \end{itemize}
    \end{enumerate}
    
    Thus $A'$ achieves a distinguishing advantage $\varepsilon^2 / 9$ between hybrids $H_0$ and $H_1$ or $H_1$ and $H_2$. If $A'$ distinguishes $H_0$ and $H_1$, then it wins the swap binding game with the same advantage by definition of the swap binding game. If $A'$ distinguishes $H_1$ and $H_2$, then a modified $A'$ that applies $\widehat{P}$ to the commitment-decommitment registers right before sending them to the challenger will win the swap binding game with the same advantage.

\end{proof}

\begin{remark}
    We remark that a nearly identical proof to the above can be used to show that \emph{collapse-binding} for non-interactive QBCs is equivalent to a restricted version of Pauli binding where the adversary is only allowed to send Pauli operators from $\{\Id, Z\}^{\otimes n}$. A very similar definitional equivalence (for collapse-binding) was also proved in a concurrent and independent work of~\cite{EPRINT:DalSpo22}. \cite{EPRINT:DalSpo22} show that collapse binding is equivalent to ``chosen-bit binding,'' which corresponds to a version of our Pauli binding definition where the operators are from the set $\{Z_1,\dots,Z_n\}$. 
\end{remark}

\subsection{A statistical extraction-based definition}

In the classical setting, a defining feature of statistically binding commitments is that the commitment string information-theoretically fixes the committed message. Equivalently, there is an inefficient procedure to \emph{extract} the committed message from the commitment (up to negligible error). In the quantum setting, \cite{C:AnaQiaYue22} proposed defining statistical binding for quantum commitments to \emph{classical messages} (QBCs) in a similar fashion: a QBC is statistically binding if, given the commitment register $\RegC$, it is possible to extract the committed message bit (in an appropriate sense).

In this subsection, we present (a) a statistical extraction definition for commitments to \emph{quantum messages} and (b) a proof that this definition is equivalent to statistical swap binding. We emphasize that this alternative characterization of swap binding only applies in the \emph{statistical} binding setting. To state the statistical extraction definition, recall the following notation, where $\Com$ is the commitment unitary for a non-interactive quantum state commitment: 
\begin{itemize}
    \item $\widehat{\SWAP[\RegM, \RegM']} \coloneqq (\Com) \SWAP[\RegM, \RegM'] (\Com^\dagger)$ and
    \item $\Pi \coloneqq (\Com) (\Id_{\RegM} \otimes \ketbra{0}_{\RegW}) (\Com^\dagger)$.
\end{itemize}

\begin{definition}[Statistical extractability]
\label{def:statistical-binding-extraction}
A non-interactive quantum commitment scheme $\QSC$ with commitment unitary $\Com$ is \emph{statistically extractable} if there exists a CPTP map $\Ext : (\RegC,\RegM') \to (\RegC,\RegM')$ such that, for all states $\brho \in \bfD(\cC \otimes \cD \otimes \cH)$ and all unitaries $U$ acting only on $(\RegD,\RegH)$,
\begin{align*}
\norm{(\Pi_{(\RegC,\RegD)} \circ U_{(\RegD,\RegH)} \circ \Ext_{(\RegC,\RegM')})(\brho \otimes \ketbra{0}_{\RegM'}) - (\widehat{\SWAP[\RegM,\RegM']} \circ \Pi_{(\RegC,\RegD)} \circ U_{(\RegD,\RegH)})(\brho \otimes \ketbra{0}_{\RegM'})}_1 \\
\le \negl(\lambda).
\end{align*}
\end{definition}

The fact that statistical extractability is equivalent to statistical binding follows from the ``information-disturbance tradeoff'' \cite[Theorem 3]{KSW08}, restated in \Cref{importedtheorem:ksw}. To state the theorem, we require the notion of a \emph{complementary channel}. For any CPTP map $\Phi : \bfS(\cX) \rightarrow \bfS(\cY)$, Stinespring's dilation thoerem \cite{Stinespring1955PositiveFO} implies that there exists a unitary $U$ that maps registers $(\RegX,\RegW)$ to $(\RegY,\RegZ)$ (where $\RegW$ and $\RegZ$ are defined implicitly by $U$) such that $\Phi(\brho) = \Tr_{\RegZ}(U(\brho \otimes \ketbra{0}_{\RegW}))$\footnote{Here we use the notation that $U(\brho) = U \brho U^\dagger$ for a unitary $U$ because the proof of the main theorem in this subsection is stated in terms of quantum channels.}. We say that $\Phi^c$ is a \emph{complementary channel} to $\Phi$ if there exists a unitary $U$ such that $\Phi^c(\rho) = \Tr_{\RegY}(U(\brho \otimes \ketbra{0}_{\RegW}))$ and $\Phi(\rho) = \Tr_{\RegZ}(U(\brho \otimes \ketbra{0}_{\RegW}))$.

% For any choice of quantum channel $\Phi$ and corresponding dilation unitary $U$, we define the complementary channel to $\Phi$ with respect to $U$ to be $\Phi^c$ which has the action of $\Phi^c(\brho) = \Tr_{\RegY}(U(\brho \otimes \ketbra{0}_{\RegW})U^\dagger)$. We say that $\Phi^c$ is a complementary channel to $\Phi$ if there exists a dilation unitary $U$ such that $\Phi^c$ is the complementary channel to $\Phi$ with respect to $U$. 

\begin{theorem}[Information-disturbance tradeoff {\cite[Theorem 3]{KSW08}}]
\label{importedtheorem:ksw}
Let $\cM \otimes \cA \otimes \cB$ be a finite-dimensional Hilbert space. Suppose that $T : \cM \to \cA$ is a quantum channel with complementary channel $T^c : \cM \to \cB$. Then there exists a decoding channel $\Dec : \cA \to \cM$ such that
\[
\norm{\Dec \circ T - \Id_M}_{\diamond} \le \min_{\bsigma \in \bfD(\cB)} 2 \norm{T^c - S_{\bsigma}}_{\diamond}
\]
where $S_{\bsigma}$ is the completely depolarizing channel that always outputs $\bsigma \in \bfD(\cB)$.
\end{theorem}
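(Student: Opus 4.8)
The plan is to deduce the tradeoff from the \emph{continuity of the Stinespring dilation}, which is the main technical tool of \cite{KSW08}: for any two channels $\Phi_1,\Phi_2:\cX\to\cY$ represented by Stinespring isometries $V_1,V_2:\cX\to\cY\otimes\cZ$ into a common (large enough) environment $\cZ$, the diamond distance $\norm{\Phi_1-\Phi_2}_{\diamond}$ and the quantity $\inf_{U}\norm{(\Id_\cY\otimes U)V_1-V_2}_{\mathrm{op}}$ (infimum over unitaries $U$ on $\cZ$) are equivalent up to universal constants. I would use this to compare $T^c$ with the depolarizing channel $S_{\bsigma}$, and then ``trace out the other leg'' of the dilation to convert a good matching of $T^c$-dilations into a good recovery map for $T$. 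Set $\eta\coloneqq\min_{\bsigma}\norm{T^c-S_{\bsigma}}_{\diamond}$ and fix an optimal $\bsigma$. By the definition of complementary channel there is a Stinespring isometry $W:\cM\to\cA\otimes\cB$ dilating $T$ and $T^c$ simultaneously, i.e.\ $T(\brho)=\Tr_\cB(W\brho W^\dagger)$ and $T^c(\brho)=\Tr_\cA(W\brho W^\dagger)$, where $\cB$ is the environment of $T$ and $\cA$ the environment of $T^c$.

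Next I would write down an explicit Stinespring dilation $W':\cM\to\cB\otimes\cA$ of $S_{\bsigma}$ whose $\cB$-complement is manifestly an isometric copy of the identity: fix a purification $\ket{\bsigma}\in\cB\otimes\cE_0$ of $\bsigma$, embed $\cE_0$ together with a copy of $\cM$ inside $\cA$ (possible since $\cA$ is unconstrained), and set $W'$ so that $S_{\bsigma}(\brho)=\Tr_\cA(W'\brho W'^\dagger)=\Tr(\brho)\,\bsigma$ while $\Tr_\cB(W'\brho W'^\dagger)=R_0\brho R_0^\dagger$ for a fixed isometry $R_0:\cM\to\cA$. Applying the Stinespring-continuity estimate to the pair $T^c,S_{\bsigma}:\cM\to\cB$ with these two dilations, there is a unitary $U$ on $\cA$ with $\norm{(\Id_\cB\otimes U)W-W'}_{\mathrm{op}}$ controlled by $\eta$ (the precise constant is exactly the sharp form of the estimate in \cite{KSW08}, and is what produces the factor $2$ in the statement).

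Then I would push this through the complementary leg. Since conjugation by two isometries that are $\varepsilon$-close in operator norm differ by at most $2\varepsilon$ in trace norm on any state --- and this is stable under tensoring with a reference system and under convex combinations --- the channels $\mathrm{Ad}_{(\Id_\cB\otimes U)W}$ and $\mathrm{Ad}_{W'}$ are $O(\eta)$-close in diamond norm. Applying $\Tr_\cB$ (a contraction for $\norm{\cdot}_1$, hence for $\norm{\cdot}_{\diamond}$) to both gives
\[
\norm{\big(\mathrm{Ad}_{U}\circ T\big)(\cdot)-R_0(\cdot)R_0^\dagger}_{\diamond}\le 2\eta .
\]
Taking $\Dec$ to be the channel that first undoes $R_0$ (extend the partial isometry $R_0^\dagger(\cdot)R_0$ to a genuine channel by sending the orthogonal complement of $\operatorname{Im} R_0$ to $\ket{0}$) and then applies $U^\dagger$, we obtain $\norm{\Dec\circ T-\Id_\cM}_{\diamond}\le 2\eta=2\min_{\bsigma}\norm{T^c-S_{\bsigma}}_{\diamond}$, as desired.

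The main obstacle is the bookkeeping of the two roles of the register $\cA$: it is the \emph{output} of $T$ but the \emph{environment} of $T^c$, so the unitary $U$ that Stinespring continuity produces on the environment of $T^c$ is simultaneously a change of basis on the output of $T$, and one has to check it can be absorbed cleanly into $\Dec$ without interfering with the $\Tr_\cB$ step. Relatedly, building $W'$ so that its $\cB$-complement is \emph{exactly} $R_0(\cdot)R_0^\dagger$ requires a dimension count forcing $\cA$ to carry both a copy of $\cM$ and the purifying space of $\bsigma$ (this is why the ambient space is $\cM\otimes\cA\otimes\cB$ with $\cA,\cB$ only required large enough). A clean alternative that avoids invoking the general continuity statement is to apply it directly to $(\Id_\cR\otimes W)\ket{\phi}_{\cR\cM}$ for a maximally entangled reference $\ket{\phi}$: ``$T^c$ close to $S_{\bsigma}$'' says the reduced state on $\cR\cB$ is close to $\tfrac{\Id_\cR}{d}\otimes\bsigma$, Uhlmann's theorem then supplies the recovery isometry on $\cA$, and Fuchs--van de Graaf converts fidelities back to the diamond-norm bound.
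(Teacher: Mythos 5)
First, a framing point: the paper does not prove \cref{importedtheorem:ksw} at all --- it is an \emph{imported} theorem, cited to \cite{KSW08} --- so there is no in-paper proof to compare against. Your sketch is essentially a reconstruction of the original KSW argument (continuity of Stinespring plus the complementary-channel bookkeeping, with the Uhlmann/Fuchs--van de Graaf route you mention at the end being the standard ``clean'' variant), so the approach is the right one.

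The genuine gap is in the quantitative step you wave through. The Stinespring continuity theorem of \cite{KSW08} is a two-sided sandwich: with $\beta \coloneqq \inf_U \norm{(\Id_{\cB}\otimes U)W - W'}_{op}$ it gives $\beta^2 \le \norm{T^c - S_{\bsigma}}_{\diamond} \le 2\beta$. So from $\eta \coloneqq \min_{\bsigma}\norm{T^c - S_{\bsigma}}_{\diamond}$ you only get $\beta \le \sqrt{\eta}$, not $\beta \lesssim \eta$ as your parenthetical ``controlled by $\eta$ \dots is what produces the factor $2$'' asserts. Feeding $\beta \le \sqrt{\eta}$ through your (correct) estimate $\norm{V_1 \brho V_1^\dagger - V_2 \brho V_2^\dagger}_1 \le 2\norm{V_1 - V_2}_{op}$ and the contractivity of $\Tr_{\cB}$ yields $\norm{\Dec\circ T - \Id_{\cM}}_{\diamond} \le 2\sqrt{\eta}$, which is in fact the form of Theorem~3 in \cite{KSW08}; the linear dependence cannot be recovered from the continuity theorem, because the square-root loss in $\beta \le \sqrt{\eta}$ is genuinely attained even against constant channels (e.g.\ $T^c = (1-\delta)S_{\ketbra{0}} + \delta\,\Delta$ with $\Delta$ the completely dephasing channel has $\norm{T^c - S_{\ketbra{0}}}_{\diamond} = O(\delta)$ but dilation distance $\Theta(\sqrt{\delta})$). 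Your Uhlmann-based alternative has the same feature: trace distance to fidelity is free, but converting the Uhlmann fidelity of the recovered state back to a diamond-norm bound via Fuchs--van de Graaf reintroduces the square root. So what your argument actually proves is the bound with $\sqrt{\min_{\bsigma}\norm{T^c - S_{\bsigma}}_{\diamond}}$ on the right-hand side (the statement as printed here, with linear dependence, appears to be a loosened transcription of KSW). For the way the theorem is used in this paper --- statistical swap binding versus statistical extractability, where $\eta = \negl(\lambda)$ and $\sqrt{\negl(\lambda)} = \negl(\lambda)$ --- the square-root form is entirely sufficient, but your write-up should state it and not claim the linear constant. Your remaining bookkeeping (building $W'$ so that its $\cB$-complement is $R_0(\cdot)R_0^\dagger$, and absorbing the environment unitary $U$ on $\cA$ into $\Dec$) is fine as described.
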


\begin{theorem}
A non-interactive quantum commitment scheme $\QSC$ is statistically swap-binding if and only if it is statistically extractable.
\end{theorem}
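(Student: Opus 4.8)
The plan is to prove both directions of the equivalence between statistical swap-binding (\cref{def:swap-binding}) and statistical extractability (\cref{def:statistical-binding-extraction}) by packaging the swap-binding experiment as a pair of quantum channels acting on the committed message and applying the information-disturbance tradeoff (\cref{importedtheorem:ksw}). The key observation is that in the non-interactive setting, by the hiding-binding duality (\cref{property:hiding-binding-duality}), sending a valid commitment-decommitment pair $(\RegC,\RegD)$ is the same as sending the message $\RegM$ in the $\Com$ basis; so the swap-binding game is really about whether the channel ``commit honestly, hand the adversary $\RegD$, keep $\RegC$'' leaks the message. I would define the CPTP map $T$ that takes a message in $\RegM$, applies $\Com$ with fresh $\RegW = \ket{0}$, and outputs the decommitment register $\RegD$ (tracing out $\RegC$); the complementary channel $T^c$ outputs $\RegC$ (tracing out $\RegD$). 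The honest commitment is purified by $\Com\ket{\psi}_\RegM\ket{0}_\RegW$, so $T$ and $T^c$ are genuinely complementary in the Stinespring sense.

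For the forward direction (swap-binding $\Rightarrow$ extractable): statistical swap-binding says that, even given arbitrary later processing of $\RegD$, the adversary cannot tell whether the message was left in place or swapped out. I would first argue that this implies $T^c$ is close (in diamond norm) to a constant channel $S_\bsigma$ — intuitively, if $\RegC$ depended non-trivially on the message, an unbounded adversary holding $\RegC$ could distinguish, contradicting binding (here one uses that swap-binding is equivalent to the dual hiding statement via \cref{property:hiding-binding-duality}, and that statistical indistinguishability of the commitment register from a fixed state is exactly $\|T^c - S_\bsigma\|_\diamond = \negl(\lambda)$ for the right $\bsigma$). Then \cref{importedtheorem:ksw} gives a decoding channel $\Dec : \RegD \to \RegM$ with $\|\Dec \circ T - \Id_\RegM\|_\diamond = \negl(\lambda)$. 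The extractor $\Ext$ is then built from $\Dec$: given the commitment register $\RegC$, it first runs the recovery of the message from $(\RegC, \text{honest }\RegD)$ coherently — but since it does not hold $\RegD$, it instead uses $\Dec$ applied in the ``inverse'' direction, i.e., $\Ext$ regenerates a decommitment-consistent state on $\RegM'$ by decoding. The slightly delicate point is matching the precise form in \cref{def:statistical-binding-extraction}, which quantifies over all $\brho$ on $(\RegC,\RegD,\RegH)$ and all unitaries $U$ on $(\RegD,\RegH)$: I would reduce this to the channel statement by noting that $U$ and $\RegH$ only touch $\RegD$ and the adversary's side, never $\RegC$, so the $\Ext$ acting on $(\RegC,\RegM')$ commutes past $U$, and the error bound is preserved under tensoring with the identity on $(\RegD,\RegH)$ (this is exactly why the diamond norm, rather than a single-state bound, is the right tool).

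For the converse (extractable $\Rightarrow$ swap-binding): given $\Ext$ as in \cref{def:statistical-binding-extraction}, I would run the swap-binding experiment and insert $\Ext$. In the $b=1$ world the challenger applies $\widehat{\SWAP[\RegM,\RegM']}$ after projecting onto valid openings; the defining inequality of $\Ext$ says precisely that this is $\negl(\lambda)$-close (in trace distance, uniformly over the adversary's state and post-processing) to first running $\Ext$ on $(\RegC,\RegM')$ and then proceeding. But $\Ext$ acts only on $\RegC$ and an internal $\RegM'$ the adversary never sees — so from the adversary's point of view, inserting $\Ext$ is invisible, hence the $b=0$ and $b=1$ worlds are statistically indistinguishable, which is statistical swap-binding. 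This direction is essentially a direct unfolding of the definition.

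The main obstacle I anticipate is the forward direction, specifically the bookkeeping needed to go from ``$T^c$ is close to a constant channel'' (which is what swap-binding gives, via duality) to the exact operational form of \cref{def:statistical-binding-extraction} with its universal quantifier over $U$ on $(\RegD,\RegH)$ and the requirement that $\Ext$ reproduce $\widehat{\SWAP[\RegM,\RegM']}$ conjugated by $\Pi$. One has to be careful that the decoding channel from \cref{importedtheorem:ksw} is used in the correct ``slot'' — $T$ maps $\RegM \to \RegD$, but the extractor needs to produce the swapped-out message on $\RegM'$ from $\RegC$ alone — so I expect to need an intermediate step expressing $\widehat{\SWAP}$ in the $\Com^\dagger$ basis and checking that decoding-then-re-encoding on the $\RegC$ side implements the desired map up to the projector $\Pi$. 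The diamond-norm machinery should make all the ``tensor with identity on the adversary's registers'' steps go through cleanly, but I would want to double-check the constant factors (the factor of $2$ in \cref{importedtheorem:ksw} and the factor of $2$ losses in \cref{property:hiding-binding-duality}-style arguments) all collapse to $\negl(\lambda)$, which they do since negligible times any polynomial is still negligible.
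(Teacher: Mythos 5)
Your converse direction is fine and matches the paper's: insert $\Ext$ after the validity check, observe that $\Ext$ touches only $(\RegC,\RegM')$ which the adversary never sees, and conclude the $b=0$ and $b=1$ worlds are statistically close. That part is essentially the paper's argument.

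The forward direction has a genuine conceptual gap. You define $T:\RegM\to\RegD$ (tracing out $\RegC$) and $T^c:\RegM\to\RegC$ (tracing out $\RegD$), and then claim that statistical swap-binding implies $T^c$ is close to a constant channel, with the justification ``if $\RegC$ depended non-trivially on the message, an unbounded adversary holding $\RegC$ could distinguish, contradicting binding.'' This is not what binding says. In the swap-binding game the challenger \emph{keeps} $\RegC$; the adversary only gets $\RegD$ back. Binding is precisely the statement that $\RegD$ alone hides the message, i.e., that $T$ (not $T^c$) is close to constant. If $\RegC$ were also close to constant the scheme would be statistically \emph{hiding}, and a scheme that is simultaneously statistically binding and statistically hiding cannot exist. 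The appeal to \cref{property:hiding-binding-duality} does not rescue this: the duality says binding of $\Com$ equals hiding of the \emph{dual} scheme (with the roles of $\RegC$ and $\RegD$ swapped), which is again the statement that $\RegD$, not $\RegC$, is uninformative.

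Because of this, your application of \cref{importedtheorem:ksw} lands the decoder on the wrong register: with your $T$ and $T^c$, and with (correctly) $T$ close to constant, KSW hands you $\Dec:\RegC\to\RegM$ by decoding from the complementary channel's output---but your proposal applies it to yield $\Dec:\RegD\to\RegM$, and then tries to patch this by running $\Dec$ ``in the inverse direction,'' which is not a meaningful operation on a CPTP map. You noticed the mismatch (``the extractor needs to produce the swapped-out message on $\RegM'$ from $\RegC$ alone'') but the proposed fix does not close it. The clean resolution is to set up the channels the other way around, as the paper does: take $T:\RegM\to\RegC$, $T(\brho)=\Tr_\RegD(\Com(\brho\otimes\ketbra{0}_\RegW))$, so that $T^c:\RegM\to\RegD$, observe that swap-binding says $T^c(\brho)\approx_s T^c(\ketbra{0}_\RegM)$, and then KSW directly gives $\Dec:\RegC\to\RegM$ acting on the register $\Ext$ actually has access to. The extractor is then built by conjugating $\SWAP[\RegM,\RegM']$ by a Stinespring dilation of $\Dec$; verifying the quantified form of \cref{def:statistical-binding-extraction} uses the diamond-norm bound and the approximate commutation of $\Ext$ with $\Pi$, as you anticipated.
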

\begin{proof}
Suppose that $\QSC$ is statistically extractable. Then invoking \Cref{def:statistical-binding-extraction} with $U = \Id$ and initial state $\Pi \brho \Pi$,
\[
    \norm{(\Pi_{(\RegC,\RegD)} \circ \Ext_{(\RegC,\RegM')} \circ \Pi_{(\RegC,\RegD)})(\brho \otimes \ketbra{0}_{\RegM'}) - (\widehat{\SWAP[\RegM,\RegM']} \circ \Pi_{(\RegC,\RegD)})(\brho \otimes \ketbra{0}_{\RegM'})}_1 \le \negl(\lambda).
\]
Since $\Pi_{(\RegC,\RegD)}$ is a projection, it follows that
\[
    \norm{(\Ext_{(\RegC,\RegM')} \circ \Pi_{(\RegC,\RegD)})(\brho \otimes \ketbra{0}_{\RegM'}) - (\widehat{\SWAP[\RegM,\RegM']} \circ \Pi_{(\RegC,\RegD)})(\brho \otimes \ketbra{0}_{\RegM'})}_1 \le \negl(\lambda).
\]
In particular, in the $b=1$ world of the swap-binding security experiment (\Cref{def:swap-binding}) the adversary receives the state
\begin{align*}
& \Tr_{(\RegC,\RegM')}[(\widehat{\SWAP[\RegM,\RegM']} \circ \Pi_{(\RegC,\RegD)})(\brho \otimes \ketbra{0}_{\RegM'})] \\
&\approx_s \Tr_{(\RegC,\RegM')}[(\Ext_{(\RegC,\RegM')} \circ \Pi_{(\RegC,\RegD)})(\brho \otimes \ketbra{0}_{\RegM'})] \\
&= \Tr_{(\RegC,\RegM')}[\Pi_{(\RegC,\RegD)}(\brho \otimes \ketbra{0}_{\RegM'})].
\end{align*}
Since $\Tr_{(\RegC,\RegM')}[\Pi_{(\RegC,\RegD)}(\brho \otimes \ketbra{0}_{\RegM'})]$ is the state the adversary receives in the $b=0$ world, swap-binding follows.

For the other direction, suppose that $\QSC$ is statistically swap-binding. Let $T : \RegM \to \RegC$ be defined by
\[
T(\brho) = \Tr_\RegD(\Com(\brho \otimes \ketbra{0}_\RegW)).
\]
The complementary channel is
\[
T^c(\brho) = \Tr_\RegC(\Com(\brho \otimes \ketbra{0}_\RegW)),
\]
and since $\QSC$ is statistically swap-binding we have
\[
T^c(\brho) \approx_s T^c(\ketbra{0}_\RegM)
\]
for all $\brho \in \bfD(\cM)$. Applying \Cref{importedtheorem:ksw} with $\bsigma = T^c(\ketbra{0}_\RegM)$ yields a decoding channel $\Dec : \RegC \to \RegM$ such that
\[
\norm{\Dec \circ T - \Id_M}_{\diamond} \le \negl(\lambda).
\]
Let $\widetilde{\Dec}$ be a Stinespring dilation of $\Dec$, i.e., $\widetilde{\Dec} : (\RegC, \RegE) \to (\RegM, \RegF)$ is a unitary such that $\Tr_{\RegF}(\widetilde{\Dec}(\btau \otimes \ketbra{0}_\RegE)) = \Dec(\btau)$. We can now define our extractor $\Ext : (\RegC,\RegM') \to (\RegC,\RegM')$ for $\btau \in \bfD(\cC)$ as
\[
\Ext(\btau) = \Tr_\RegE[(\widetilde{\Dec}^\dagger \circ \SWAP[\RegM,\RegM'] \circ \widetilde{\Dec})(\btau \otimes \ketbra{0}_{(\RegE,\RegM')})].
\]
Since $\norm{\Ext \circ \Pi - \Pi \circ \Ext}_{\diamond} \le \negl(\lambda)$, it follows that
\begin{align*}
& (\Pi_{(\RegC,\RegD)} \circ U_{(\RegD,\RegH)} \circ \Ext_{(\RegC,\RegM')})(\btau \otimes \ketbra{0}_{\RegM'}) \\
&\approx_s (\Ext_{(\RegC,\RegM')} \circ \Pi_{(\RegC,\RegD)} \circ U_{(\RegD,\RegH)})(\btau \otimes \ketbra{0}_{\RegM'}) \\
&= \Tr_\RegE[(\widetilde{\Dec}^\dagger \circ \SWAP[\RegM,\RegM'] \circ \widetilde{\Dec} \circ \Pi_{(\RegC,\RegD)} \circ U_{(\RegD,\RegH)})(\btau \otimes \ketbra{0}_{\RegM'} \otimes \ketbra{0}_\RegE)] \\
& \approx_s (\widehat{\SWAP[\RegM,\RegM']} \circ \Pi_{(\RegC,\RegD)} \circ U_{(\RegD,\RegH)})(\btau \otimes \ketbra{0}_{\RegM'}). \qedhere
\end{align*}
\end{proof}

\newpage

\section{Quantum encryption with short keys from PRUs}
\label{sec:pru-expansion}
Pseudo-random unitaries (PRUs) are families of efficient unitaries that are indistinguishable from Haar random unitaries under black-box access \cite{C:JiLiuSon18}. In \cref{theorem:pru-expansion}, we show that the Hilbert space of a PRU can be expanded at the cost of restricting the security to a single use --- effectively converting a pseudo-superpolynomial-design to a pseudo-1-design on a larger space.

The construction is based on the Schur transform $\USch$ \cite{Har05}, which maps between the standard representation of $(\mathbb{C}^p)^{\otimes \ell}$ and the ``Schur-Weyl basis,'' in which both $\ell$-fold products of unitaries $U^{\otimes \ell}$ and permutations act naturally. In particular, the Schur transform maps between the symmetric subspace on $\ell$ qu$p$its and an $\binom{\ell+p-1}{\ell}$-dimensional Hilbert space on $\left\lceil \log_2 \binom{\ell+p-1}{\ell} \right\rceil$ qubits \cite{Har05}.

Given a PRU family $\{U_k\}_{k \in \{0,1\}^{d(\lambda)}}$ acting on $p$ dimensions, we build a one-time quantum encryption scheme $\{\Expand(U_k, \ell)\}_{k \in \{0,1\}^{d(\lambda)}}$ with $\binom{\ell+p-1}{\ell}$-dimensional messages $\ket{\psi}$ as follows:
\begin{enumerate}
    \item Initialize the appropriate registers to select the symmetric subspace in the Schur basis and the remaining registers to $\ket{\psi}$. In the notation of \cite{Har05}, this is $\ket{\Lambda=0, p_\Lambda=0, q_\Lambda=\psi}$.
    \item Apply $\USch (U_k)^{\otimes \ell} \USch^\dagger$ to $\ket{\Lambda=0, p_\Lambda=0, q_\Lambda=\psi}$.
    \item Return the $q_\Lambda$ register as the output of $\Expand(U_k, \ell)$.
\end{enumerate}

\begin{theorem}[PRU Expansion]
\label{theorem:pru-expansion}
If $\{U_k\}_{k \in \{0,1\}^{d(\lambda)}}$ is a PRU family on $p$ dimensions, then $\{\Expand(U_k, \ell)\}_{k \in \{0,1\}^{d(\lambda)}}$ is a secure one-time quantum encryption scheme with messages of dimension $\binom{\ell+p-1}{\ell} \ge \left(1+\frac{p-1}{\ell}\right)^\ell$ for any $\ell = \poly(\lambda)$.
\end{theorem}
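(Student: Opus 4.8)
\textbf{Proof proposal for Theorem~\ref{theorem:pru-expansion}.} The plan is to reduce one-time security of $\Expand(U_k,\ell)$ directly to the (many-time) security of the underlying PRU family, using the Schur transform to convert the action of $U_k^{\otimes \ell}$ on the symmetric subspace into the action of a Haar-random unitary on a message space of dimension $\binom{\ell+p-1}{\ell}$. Concretely, I would first recall from~\cite{Har05} that $\USch$ maps $(\mathbb{C}^p)^{\otimes \ell}$ into the Schur–Weyl decomposition $\bigoplus_\Lambda \cQ_\Lambda \otimes \cP_\Lambda$, where $U^{\otimes \ell}$ acts as $\bigoplus_\Lambda q_\Lambda(U) \otimes \Id_{\cP_\Lambda}$ for irreducible representations $q_\Lambda$. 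The trivial Young diagram $\Lambda = 0$ corresponds to the symmetric subspace, on which $q_0(U)$ is the $\binom{\ell+p-1}{\ell}$-dimensional symmetric-power representation $\mathrm{Sym}^\ell(U)$, and $\cP_0$ is one-dimensional. So $\Expand(U_k,\ell)$ is exactly the unitary $V_k := \mathrm{Sym}^\ell(U_k)$ acting on the $q_0$ register (identified with $\lceil \log_2\binom{\ell+p-1}{\ell}\rceil$ qubits via $\USch$), and the dimension bound $\binom{\ell+p-1}{\ell} \ge (1 + (p-1)/\ell)^\ell$ is a routine estimate.

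The core of the argument is then a hybrid: if a QPT adversary $A$ distinguishes the $b=0$ and $b=1$ worlds of $\texttt{QEncExpt}$ for $\{V_k\}$ with non-negligible advantage, I build a distinguisher $B$ for the PRU family. The distinguisher $B$ receives black-box access to a unitary $\bfU$ that is either a random $U_k$ or a Haar-random unitary on $\mathbb{C}^p$. It runs $A$ to obtain a message register $\RegM$, embeds it (with or without first swapping in $\ket 0$, according to a guess of $b$ — actually $B$ plays the full experiment) into $\ket{\Lambda = 0, p_\Lambda = 0, q_\Lambda = \psi}$ via $\USch^\dagger$, applies $\bfU^{\otimes \ell}$ by making $\ell$ black-box queries, applies $\USch$, returns the $q_\Lambda$ register to $A$, and outputs $A$'s guess. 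When $\bfU = U_k$ this is exactly $\texttt{QEncExpt}_{\{V_k\}}$; the claim is that when $\bfU$ is Haar-random, both worlds $b=0$ and $b=1$ produce a state that is within negligible trace distance of a fixed state independent of the message, so $A$'s advantage collapses to negligible. This uses the fact that the expected channel $\mathbb{E}_{\bfU \gets \mathrm{Haar}}[\mathrm{Sym}^\ell(\bfU)(\cdot)\mathrm{Sym}^\ell(\bfU)^\dagger]$ is the completely depolarizing channel on the $\binom{\ell+p-1}{\ell}$-dimensional symmetric-power irrep — this is Schur's lemma, since $\mathrm{Sym}^\ell$ is an irreducible representation of $\mathrm{U}(p)$. (One must be a little careful: the adversary $A$ also holds side-information entangled with $\RegM$, so I would phrase this as: $\mathrm{Sym}^\ell(\bfU) \otimes \Id_{\RegR}$ applied to any joint state, averaged over Haar $\bfU$, yields $(\Id/D) \otimes \Tr_{\RegM}(\cdot)$ where $D = \binom{\ell+p-1}{\ell}$, again by Schur's lemma applied to the irrep $\mathrm{Sym}^\ell \otimes \mathbf{1}$.) Combining the two hybrid steps gives that $A$'s advantage in the real experiment is negligibly close to its advantage against Haar, which is $0$, contradicting the assumption; hence $\{\Expand(U_k,\ell)\}$ is one-time secure.

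The main obstacle I anticipate is purely bookkeeping rather than conceptual: making the black-box reduction $B$ legitimate. A PRU distinguisher is given oracle access to a single unitary on $\mathbb{C}^p$ and may query it polynomially many times; here $B$ needs exactly $\ell = \poly(\lambda)$ applications of $\bfU$ (to implement $\bfU^{\otimes \ell}$ on the $\ell$ tensor factors), and must implement $\USch$ and $\USch^\dagger$, which are efficient by~\cite{Har05}. So $B$ is QPT and makes polynomially many queries, as required. A secondary subtlety is that $\Expand(U_k,\ell)$ as written takes a message of dimension $\binom{\ell+p-1}{\ell}$, which is not a power of two in general; the quantum encryption syntax (\cref{def:one-time-quantum-encryption}) is phrased for qubit registers, so I would either note that we can restrict to the largest power-of-two-dimensional subspace of the symmetric power (losing at most a factor $2$ in dimension, still $\ge \ell\cdot n$ qubits for $\ell \ge 2$ constant as used in~\cref{claim:oracle-separation}), or simply state the theorem for qudit message spaces and observe the claim~\cref{claim:oracle-separation} only needs the dimension to exceed $2^{\ell n}$. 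Everything else — the Schur–Weyl correspondence, the Haar-averaging via Schur's lemma, the triangle inequality across the two hybrids — is routine.
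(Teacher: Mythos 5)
Your proposal is correct and follows the same route as the paper's proof: both invoke PRU security to replace $U_k^{\otimes\ell}$ with $U^{\otimes\ell}$ for Haar-random $U$, then apply Schur's lemma to the irreducible symmetric-power representation $\mathrm{Sym}^\ell$ of $\mathrm{U}(p)$ to conclude that the Haar-twirled state on the $\binom{\ell+p-1}{\ell}$-dimensional message space is maximally mixed, independent of the message. You are in fact slightly more careful than the paper, which works with the pure state $\rho = \ketbra{\Lambda=0, p_\Lambda=0, q_\Lambda=\psi}$ without addressing entanglement with the adversary's side register $\RegR$; your parenthetical noting that Schur's lemma gives $\mathbb{E}_{\bfU}\bigl[(\mathrm{Sym}^\ell(\bfU) \otimes \Id_\RegR)(\cdot)(\mathrm{Sym}^\ell(\bfU)^\dagger \otimes \Id_\RegR)\bigr] = (\Id_\RegM/D) \otimes \Tr_\RegM(\cdot)$ supplies the step needed when $\RegM$ is entangled with $\RegR$.
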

\begin{proof}
Since the symmetric subspace is invariant under $(U_k)^{\otimes \ell}$, the auxiliary registers $\Lambda$, $p_\Lambda$ are unaffected by $\USch (U_k)^{\otimes \ell} \USch^\dagger$. Letting $\rho$ be the state $\ket{\Lambda=0, p_\Lambda=0, q_\Lambda=\psi}$,
\begin{align*}
    \USch (U_k)^{\otimes \ell} \USch^\dagger \rho \USch (U_k^\dagger)^{\otimes \ell} \USch^\dagger &\approx_c \USch \left(\int_{U \from \mu(\mathbb{C}^p)} (U)^{\otimes \ell} \USch^\dagger \rho \USch (U^\dagger)^{\otimes \ell} \ dU\right) \USch^\dagger \\
    &= \binom{\ell+p-1}{\ell}^{-1} \ketbra{\Lambda=0, p_\Lambda=0}.
\end{align*}
The first line follows from the security of the PRU family $\{U_k\}_{k \in \{0,1\}^{d(\lambda)}}$. The equality follows from Schur's lemma, since the symmetric subspace on $(\mathbb{C}^p)^{\otimes \ell}$ is an irreducible representation of the unitary group under the action $U \mapsto U^{\otimes \ell}$ \cite{Har13}. Schur's lemma tells us that the integral is maximally mixed over the symmetric subspace, and in the Schur-Weyl basis the maximally mixed state over the symmetric subspace is the state where $q_\Lambda$ is random and $\Lambda=0, p_\Lambda=0$.
\end{proof}

Similarly, we note that a many-time secure PRS family $\{\ket{\phi_k}\}_{k \in \{0,1\}^{d(\lambda)}}$ on $p$ dimensions may be converted to a one-time secure PRS family on $\binom{\ell+p-1}{\ell}$ dimensions via $\Tr_{\Lambda,p_\Lambda}(\USch (\ketbra{\phi_k})^{\otimes \ell} \USch^\dagger)$.

\begin{theorem}[PRS Expansion]
If $\{\ket{\phi_k}\}_{k \in \{0,1\}^{d(\lambda)}}$ is an $\ell$-time secure PRS family on $p$ dimensions, then $\{\Expand(\ket{\phi_k}, \ell)\}_{k \in \{0,1\}^{d(\lambda)}}$ is a one-time secure PRS family on $\binom{\ell+p-1}{\ell} \ge \left(1+\frac{p-1}{\ell}\right)^\ell$ dimensions.
\end{theorem}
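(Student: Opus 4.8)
The plan is to mirror the proof of \cref{theorem:pru-expansion} almost verbatim, replacing the unitary $U_k$ by the state $\ket{\phi_k}$ and the action $U \mapsto U^{\otimes \ell}$ by the map that prepares $\ell$ copies. First I would recall that an $\ell$-time secure PRS family $\{\ket{\phi_k}\}_k$ has the property that $(\ketbra{\phi_k})^{\otimes \ell}$ is computationally indistinguishable from $(\ketbra{\phi})^{\otimes \ell}$ for a Haar random $\ket{\phi} \from \mu(\mathbb{C}^p)$, given only black-box (sampling) access. Then $\Expand(\ket{\phi_k},\ell)$ is defined to be $\Tr_{\Lambda,p_\Lambda}(\USch (\ketbra{\phi_k})^{\otimes \ell} \USch^\dagger)$, i.e., apply the Schur transform to the $\ell$ copies and trace out the $\Lambda$ and $p_\Lambda$ registers, keeping only the $q_\Lambda$ register of dimension $\binom{\ell + p - 1}{\ell}$.

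The key steps, in order, are: (1) invoke $\ell$-time PRS security to replace $(\ketbra{\phi_k})^{\otimes \ell}$ with $\int_{\ket{\phi} \from \mu(\mathbb{C}^p)} (\ketbra{\phi})^{\otimes \ell}\, d\phi$, incurring only a negligible computational error; (2) observe that this Haar average is exactly the maximally mixed state on the symmetric subspace of $(\mathbb{C}^p)^{\otimes \ell}$ (this is the standard fact that the symmetric subspace is spanned by $\{\ket{\phi}^{\otimes\ell}\}$ and the projector onto it equals $\binom{\ell+p-1}{\ell}\int \ketbra{\phi}^{\otimes\ell}d\phi$); (3) apply $\USch$ and use that in the Schur--Weyl basis the maximally mixed state on the symmetric subspace is $\binom{\ell+p-1}{\ell}^{-1}\ketbra{\Lambda = 0, p_\Lambda = 0} \otimes \Id_{q_\Lambda}$ up to normalization — more precisely, the $q_\Lambda$ register carries the maximally mixed state while $\Lambda = 0$, $p_\Lambda = 0$ are fixed; (4) trace out $\Lambda$ and $p_\Lambda$ to conclude that $\Expand(\ket{\phi_k},\ell)$ is computationally indistinguishable from the maximally mixed state on the $\binom{\ell+p-1}{\ell}$-dimensional $q_\Lambda$ register, which (by definition) is a Haar random state's reduced behavior — but since the output is a full state on $q_\Lambda$ and we want PRS-ness, I would note that the maximally mixed state is precisely $\int_{\ket{\theta}\from \mu(q_\Lambda)}\ketbra{\theta}\,d\theta$, so indistinguishability from maximally mixed is indistinguishability from a single Haar random state on the larger space, establishing one-time PRS security. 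The dimension bound $\binom{\ell+p-1}{\ell} \ge (1 + (p-1)/\ell)^\ell$ is the same elementary inequality used in \cref{theorem:pru-expansion}.

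The main obstacle — really the only subtle point — is making precise the claim in step (3) that the reduced state on $q_\Lambda$ of the Schur-transformed maximally-mixed-symmetric state is maximally mixed, and that this is what PRS security demands of the output. For a genuine Haar random state $\ket{\theta}$ on $q_\Lambda$ alone, a single copy looks maximally mixed, so one-time security only requires the adversary cannot distinguish $\Expand(\ket{\phi_k},\ell)$ from maximally mixed; this is exactly what steps (1)--(4) give. I would also need to be slightly careful that the indistinguishability in step (1) is preserved under the efficient post-processing $\USch(\cdot)\USch^\dagger$ followed by the partial trace — but this is immediate since $\USch$ is efficient \cite{Har05} and partial trace is trivially efficient, so composing with an efficient distinguisher yields an efficient distinguisher for the original PRS game. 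No genuinely new idea beyond the proof of \cref{theorem:pru-expansion} is needed; the statement is included for completeness.
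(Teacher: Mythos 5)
Your proof is correct and follows exactly the approach the paper intends: the paper gives no explicit proof for the PRS version, stating only that it works ``similarly'' to \cref{theorem:pru-expansion}, and your argument---invoke $\ell$-time PRS security to replace $\Exp_k[(\ketbra{\phi_k})^{\otimes\ell}]$ with the Haar average, identify that average as the maximally mixed state on the symmetric subspace (equivalently $\ketbra{\Lambda=0,p_\Lambda=0}\otimes\Id_{q_\Lambda}/\binom{\ell+p-1}{\ell}$ in the Schur--Weyl basis), and observe that indistinguishability survives the efficient post-processing $\Tr_{\Lambda,p_\Lambda}(\USch(\cdot)\USch^\dagger)$---is precisely that adaptation. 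You also correctly isolate the one subtle point, namely that one-time PRS security on the $q_\Lambda$ register amounts to indistinguishability of the single-copy density matrix from maximally mixed, which is what the chain of reductions delivers.
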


\end{document}